\title{Dichotomies for \#CSP on graphs that forbid a clique as a minor} %TODO Please add
\author{Boning Meng\footnote{First author.}}{Key Laboratory of System Software (Chinese Academy of Sciences) and State Key Laboratory of Computer Science, Institute of Software, Chinese Academy of Sciences; University of Chinese Academy of Sciences, Beijing 100080, China}{mengbn@ios.ac.cn}{https://orcid.org/0009-0006-0088-1639}{}
\author{Yicheng Pan\footnote{Corresponding author.}}{Beihang University, Beijing 100191, China}{yichengp@buaa.edu.cn}{}{}
\authorrunning{Boning Meng and Yicheng Pan} %TODO mandatory. First: Use abbreviated first/middle names. Second (only in severe cases): Use first author plus 'et al.'
\keywords{Graph Minors, Computational Complexity, Counting } %TODO mandatory; please add comma-separated list of keywords
\begin{document}

\maketitle

%TODO mandatory: add short abstract of the document
\begin{abstract}
We prove complexity dichotomies for \#CSP problems (not necessarily symmetric) with Boolean domain and complex range on several typical minor-closed graph classes. These dichotomies give a complete characterization of the complexity of \#CSP on graph classes that forbid a complete graph as a minor. In particular, we also demonstrate that, whether the maximum degree of vertices is bounded may influence the complexity on specific minor-closed graph classes, and this phenomenon has never been observed in the previous related studies. Furthermore, our proofs integrate the properties of each graph class with the techniques from counting complexity, and develop a systematic approach for analyzing the complexity of \#CSP on these graph classes.
\end{abstract}

\section{Introduction}

       In this article, we study the counting constraint satisfaction problem (denoted as $\text{\#CSP}$)  with Boolean domain and complex range 
    on several typical minor-closed graph classes\footnote{In this article, we always restrict ourselves to counting problems with Boolean domain and complex range, which means that each variable can only take value in $\{0,1\}$ and each signature has a range over $\mathbb{C}$.}. $\text{\#CSP}$ is considered as one of the most typical framework in counting complexity, as it is capable of expressing a substantial number of counting problems, such as counting the number of vertex covers in a given graph (\#VC) and counting the number of solutions of a CNF formula (\#SAT). In 2014, Cai, Lu and Xia gave a complete complexity dichotomy for $\text{\#CSP}$ on general graphs \cite{cai2014complexity}, and in 2017 Cai and Fu gave a dichotomy for $\text{\#CSP}$ on planar graphs \cite{cai2017holographicuni}. In the latter work, a new tractable case emerges on planar graphs, which is closely related to the problem of counting perfect matchings. %See Section \ref{prePM} for detailed definition of counting perfect matchings.
       
       Counting perfect matchings (denoted as $\text{\#PM}$) is the first natural counting problem discovered to be $\text{\#P}$-hard on general graphs and polynomial-time computable on planar graphs. The complexity of $\text{\#PM}$ has also been studied on several minor-closed graph classes. In 2022, Thilikos and Wiederrecht have presented a complete dichotomy for $\text{\#PM}$ on minor-closed graph classes \cite{thilikos2022killing}, which inspires us to give dichotomies for $\text{\#CSP}$ on these typical graph classes as well. 
       %figured out the necessary and sufficient condition for $\mathcal{G}$ such that $\text{\#PM}[\mathcal{G}]$ is computable in polynomial time, assuming $\text{P}\neq\text{\#P}$ \cite{thilikos2022killing}. This provides a complete dichotomy for $\text{\#PM}$ on minor-closed graph classes. See Section \ref{prePM} for more details.
        
        This article presents either a polynomial-time algorithm or $\text{\#P}$-hardness for $\text{\#CSP}$ problems on different graph classes. Combining these results together, we give a dichotomy for both $\text{\#CSP}$ and bounded degree $\text{\#CSP}$ over different function sets on graph classes that forbid a complete graph as a minor.

        \subsection{The study of \#PM}\label{PMstudy}
        %In this subsection, we introduce the study of \#PM, which plays a very important role in our study. See subsection \ref{prePM} for a detailed definition.%On a graph $G=(V,E)$, a \textit{matching} is an edge set $M\subseteq E$, such that no two edges in $M$ shares a common end vertex. Besides, if the vertices that $M$ contains are exactly $V(M)=V$, then $M$ is a \textit{perfect matching} of $G$.
        %i.e. $\forall e,f\in M, e\cap f=\emptyset$.
        We start with the introduction of the study of counting perfect matchings, which serves as a background in our study.  %Counting perfect matchings in a graph $G$, or \#PM($G$) for short, is very significant in counting problems. 
        For a graph $G=(V,E)$, a \textit{matching} is an edge set $M\subseteq E$ such that no pair of edges in $M$ shares a common vertex.  Besides, if the vertices that $M$ contains are exactly $V(M)=V$, then $M$ is denoted as a \textit{perfect matching} of $G$.
        %i.e. $\forall e,f\in M, e\cap f=\emptyset$.
        
        An instance of \#PM is a graph $G=(V,E)$ with weighted edges $w:E\to \mathbb{C}$. The weight of a matching $M$ is $w(M)=\prod_{e\in M}w(e)$. The output of the instance is the sum of the weights of all the perfect matchings in $G$:
        $$\text{\#PM}(G)=\sum_{M: M\text{ is a perfect matching of }G}w(M)$$
        When $w(e)=1$ for each $e\in E$, the output of the instance is just the number of perfect matchings in the graph, and we denote this kind of problems as standard \#PM.
            
            \#PM is motivated by the dimer problem in statistical physics \cite{kasteleyn1967graph,kasteleyn1961statistics,kasteleyn1963dimer,temperley1961dimer}, and two fundamental results emerge from this study. The first breakthrough occurred in 1961, when a polynomial time algorithm for $\text{\#PM}$ on planar graphs was developed by Kasteleyn, Temperley and Fisher \cite{kasteleyn1961statistics,temperley1961dimer}, known as the FKT algorithm. The second significant advancement occurred in 1979, when Valiant defined the complexity class $\text{\#P}$ and proved that $\text{\#PM}$ on general graphs is $\text{\#P}$-hard \cite{valiant1979complexity}. %cite 2

        The study of $\text{\#PM}$ has attracted considerable interest from two distinct perspectives. One is concerned with the complexity of $\text{\#PM}$ on different graph classes, whereas the other is focused on  the investigation of counting problems that are $\text{\#P}$-hard in general but polynomial-time computable on planar graphs.
        
        Significant progress has been made in both perspectives, and we introduce them in Sections \ref{PMintro} and \ref{countingintro} respectively. In Section \ref{mainresult}, we present our main theorem, which can be seen as an extension to both perspectives. We present the organization of this article briefly in Section \ref{organization}. 
        
\subsection{\#PM on different minor-closed graph classes}\label{PMintro}
  
        To better introduce our study, we begin with some necessary definitions. \textit{Contracting an edge $(u,v)$ in} $G$ means replacing $u,v$ with a new vertex $w$ which is adjacent to all the neighbours of $u,v$. A graph $H$ is a \textit{minor} of $G$ if $H$ can be obtained from $G$ by repeatedly deleting vertices, deleting edges or contracting edges.
        
        A graph class is a (possibly infinite) set of graphs. 
        The graph class of all planar graphs is denoted by $\mathcal{PL}$. 
        An apex graph is defined as a graph that can be embedded on the plane after the removal of a single vertex, which is referred to as the apex vertex of the graph.
        The graph class of all the apex graphs is denoted by $\mathcal{PLA}$.
        %We give some notations for the graph class appear in this article. We use $\mathcal{PL}$ to denote the graph class of planar graphs. An \textit{apex graph} $G$ is a graph that can be embedded on the plane after removing a single vertex, which is denoted as the \textit{apex vertex} of $G$. We use $\mathcal{PLA}$ to denote the graph class of all apex graphs. 
        
        A graph class $\mathcal{C}$ is said to be \textit{minor-closed} if it is closed under taking minors. If $\mathcal{C}$ is the graph class that forbids a finite set $\mathcal{G}$ of graphs as minors, then $\mathcal{G}$ is the \textit{forbidden minor set} of $\mathcal{C}$, denoted as $\mathcal{C}=fb(\mathcal{G})$.        
        The following theorem shows that the forbidden minor set is a very effective way to characterize graph classes.
     \begin{theorem}[\cite{robertson2004graph}]
     If $\mathcal{C}$ is a minor-closed graph class, then it has a forbidden minor set. 
     \label{thmminor}
     \end{theorem}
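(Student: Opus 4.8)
The plan is to deduce this from the Graph Minor Theorem of Robertson and Seymour, which asserts that finite graphs are well-quasi-ordered by the minor relation; equivalently, there is no infinite antichain of finite graphs under ``is a minor of''. Granting that deep input, everything else is a short bookkeeping argument, so the genuine work is entirely contained in the cited theorem \cite{robertson2004graph}.

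First I would record the dual form of minor-closedness: if $\mathcal{C}$ is minor-closed, then its complement $\overline{\mathcal{C}}$ among all finite graphs is \emph{upward} closed under the minor order, i.e.\ whenever $G\in\overline{\mathcal{C}}$ and $G$ is a minor of $H$ we must have $H\in\overline{\mathcal{C}}$, since otherwise $H\in\mathcal{C}$ together with closure would force $G\in\mathcal{C}$. Next I would define $\mathcal{G}$ to be the set of minor-minimal elements of $\overline{\mathcal{C}}$ and verify that $\mathcal{C}=fb(\mathcal{G})$. If $H\in\mathcal{C}$, then by closure $H$ has no minor in $\overline{\mathcal{C}}$, hence none in $\mathcal{G}$. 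Conversely, if $H\notin\mathcal{C}$, then $H\in\overline{\mathcal{C}}$, so among the finitely many minors of $H$ that lie in $\overline{\mathcal{C}}$ I may pick one that is minimal under the minor order; it lies in $\mathcal{G}$ and is a minor of $H$, so $H$ is excluded by $\mathcal{G}$.

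It remains to see that $\mathcal{G}$ is finite, which is exactly what ``forbidden minor set'' demands in the paper's terminology. Here I would observe that $\mathcal{G}$ is an antichain: two distinct minimal elements cannot be comparable, for if $G_1$ were a proper minor of $G_2$ then $G_2$ would not be minimal in $\overline{\mathcal{C}}$. An infinite $\mathcal{G}$ would therefore be an infinite antichain under the minor order, contradicting the Graph Minor Theorem; hence $\mathcal{G}$ is finite. The only real obstacle is that theorem itself — the multi-paper Robertson--Seymour well-quasi-ordering program — which I would invoke rather than reprove; the surrounding steps (passing to the complement, extracting minimal obstructions, and the antichain observation) are routine.
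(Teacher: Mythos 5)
Your derivation is correct and is exactly the standard argument: the paper cites Theorem~\ref{thmminor} without proof, since it is a known corollary of the Robertson--Seymour well-quasi-ordering theorem, and the steps you supply (upward closure of $\overline{\mathcal{C}}$, taking the minor-minimal obstructions, observing they form an antichain, and invoking well-quasi-ordering for finiteness) are precisely the canonical route the citation intends. No gap; the only deep input is the cited Graph Minor Theorem, which you rightly invoke rather than reprove.
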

        
        We also introduce some notations. We denote $\text{\#PM}$ on the graph class $\mathcal{C}$ as $\text{\#PM}\langle \mathcal{C}\rangle$, and if $\mathcal{C}=fb(\mathcal{G})$, we also denote $\text{\#PM}\langle \mathcal{C}\rangle$ as $\text{\#PM}[\mathcal{G}]$. Further, if $G$ is the only element in $\mathcal{G}$, we denote the problem as $\text{\#PM}[G]$ for convenience\footnote{These notations also applies to other counting problems.}. Using these notations, the results presented in Section \ref{PMstudy} from \cite{kasteleyn1961statistics,temperley1961dimer} and \cite{valiant1979complexity} can be restated as follows. 
            $\text{\#PM}\langle \mathcal{PL}\rangle$, or $\text{\#PM}[\{K_5,K_{3,3}\}]$ can be computed in polynomial time; $\text{\#PM}$, or $\text{\#PM}[\emptyset]$ is \#P-hard.
        
        %The complexity of $\text{\#PM}$ has been well studied in graph classes that forbids a set of minors. 
        Several polynomial-time algorithms on other minor-closed graph classes were developed after the FKT algorithm. A polynomial-time algorithm for $\text{\#PM}$ on graphs that can be embedded on a surface of constant Euler genus was developed in 2000 by Galluccio, Loebl and Tesler \cite{galluccio1999theory,tesler2000matchings}, and in 2015 another algorithm without the use of orientations was given by Curticapean and Xia \cite{curticapean2015parameterizing}. In 1989, $\text{\#PM}[K_{3,3}]$ was proved to be computable in polynomial time by Vazirani \cite{vazirani1989nc}. In 2014, a polynomial-time algorithm was developed for $\text{\#PM}[K_5]$ by Straub, Thierauf and Wagner \cite{straub2016counting}, and independently $\text{\#PM}[G]$, where $G$ is a single crossing graph, was proved to be computable in polynomial time as well by Curticapean \cite{curticapean2014counting}. Later in 2019, an NC algorithm for $\text{\#PM}[G]$ was developed by Eppstein and Vazirani for any single crossing graph $G$ \cite{eppstein2019nc}.
        
        Conversely, $\text{\#PM}[K_8]$ is proved to be $\text{\#P}$-hard by Curticapean and Xia in 2022 \cite{curticapean2022parameterizing}, which is the first hardness result on graph classes that forbids a fixed minor set.
        %a number of hardness results were also proved after Valient's hardness result. $\text{\#PM}\langle R_3\rangle$ is proved to be $\text{\#P}$-hard in 2007 by Xia, Zhang and Zhao, where $R_3$ denotes the set of 3-regular graphs \cite{xia2007computational}. Furthermore, $\text{\#PM}[K_8]$ is proved to be $\text{\#P}$-hard by Curticapean and Xia in 2022 \cite{curticapean2022parameterizing}. 
        Later in 2022, Thilikos and Wiederrecht figured out the necessary and sufficient condition for $\mathcal{G}$ such that $\text{\#PM}[\mathcal{G}]$ is computable in polynomial time assuming $\text{P}\neq\text{\#P}$, by combining their algorithm with the hardness result for $\text{\#PM}[K_8]$ \cite{thilikos2022killing}. This provides a complete dichotomy for $\text{\#PM}$ on minor-closed graph classes. In \cite{thilikos2022killing}, the authors define the concept of shallow vortex grids, and present the dichotomy based on this concept.

        \begin{definition}[{Shallow vortex grid}{\cite{thilikos2022killing}}]
    The shallow vortex grid of order $k$ is the graph $H_k=(V_k,E_k)$ where $V_k=\{(i,j)|1\le i \le k,1\le j \le 2k\}$ and $E_k=E_{k1}\cup E_{k2}\cup E_{k3}$, where $E_{k1}=\{((i,j),(i+1,j))|1\le i \le k-1,1\le j \le 2k\}$, $E_{k2}=\{((i,j),(i,j+1))|1\le i \le k,1\le j \le 2k-1\}\cup\{((i,2k),(i,1))|1\le i \le k\}$ and $E_{k3}=\{((1,j),(1,j+2))|1\le j \le 2k-2\}\cup\{((1,2k),(1,2)),((1,2k-1),(1,1))\}$.  
    
    The graph class $\mathcal{H}$ is defined as $\{H|H\text{ is a minor of } H_k \text{ for some }k\in \mathbb{N}^+\}$.
    \label{svg}
\end{definition}
       \begin{figure}
            \centering
            \includegraphics[height=0.2\textheight]{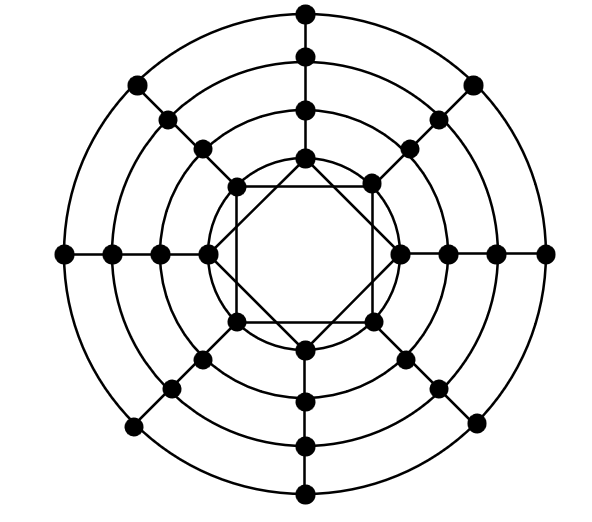}
            \caption{The shallow vortex grid of order 4, denoted as $H_4$}
            \label{fig:H4}
        \end{figure}
        
        See Figure \ref{fig:H4} for an example of shallow vortex grid. The dichotomy is stated as follows.
            \begin{theorem}[\cite{thilikos2022killing}]
            Let $\mathcal{G}$ be a finite set of graphs. If $\mathcal{G}\cap\mathcal{H}\neq \emptyset$, then  $\#PM[\mathcal{G}]$ can be solved in polynomial time; otherwise it is $\text{\#P}$-hard.
            \label{PMalg}
            \end{theorem}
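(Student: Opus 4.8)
The plan is first to translate the hypothesis into a structural statement about the minor-closed class $fb(\mathcal{G})$. Using only transitivity of the minor relation one checks that $\mathcal{G}\cap\mathcal{H}\neq\emptyset$ holds exactly when some shallow vortex grid $H_k$ fails to lie in $fb(\mathcal{G})$ --- if $G\in\mathcal{G}$ is a minor of $H_k$ then $H_k\notin fb(\mathcal{G})$, and conversely a forbidden minor occurring inside $H_k$ is itself a minor of $H_k$ and hence lies in $\mathcal{H}$ --- and, dually, that $\mathcal{G}\cap\mathcal{H}=\emptyset$ holds exactly when $\mathcal{H}\subseteq fb(\mathcal{G})$. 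Since every finite $\mathcal{G}$ falls into exactly one of these cases, it suffices to prove: (A) if $fb(\mathcal{G})\subseteq fb(\{H_k\})$ for some fixed $k$ then $\text{\#PM}[\mathcal{G}]$ is polynomial-time computable; and (B) if $\mathcal{H}\subseteq fb(\mathcal{G})$ then $\text{\#PM}[\mathcal{G}]$ is $\text{\#P}$-hard. (By Theorem~\ref{thmminor} the class $\mathcal{H}$ has a finite forbidden minor set, necessarily disjoint from $\mathcal{H}$, so case (B) already contains $\text{\#PM}\langle\mathcal{H}\rangle$ as its base instance.)

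\textbf{Tractability (case A).} Since $\text{\#PM}$ on a subclass reduces trivially to $\text{\#PM}$ on a superclass, it is enough to solve $\text{\#PM}\langle fb(\{H_k\})\rangle$ in polynomial time for each fixed $k$. I would do this in three steps. First, establish a structure theorem: every graph with no $H_k$-minor admits a tree decomposition of bounded adhesion (bounded shared vertex sets between adjacent bags) in which each torso becomes, after deleting a bounded set of apex vertices, embeddable in a surface of bounded Euler genus together with a bounded number of vortices of bounded depth; one wants a refinement of the graph-minor structure theorem tailored to the near-planar obstruction $H_k$, so that the genus, the number of vortices and their depth are small absolute constants --- morally a single shallow vortex on the sphere. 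Second, give a base algorithm for such an ``almost-embeddable'' torso: brute force over the $2^{O(1)}$ matching patterns on the apex vertices, run a dynamic program along each bounded-depth (hence bounded-pathwidth) vortex, and on the residual surface part invoke the FKT algorithm~\cite{kasteleyn1961statistics,temperley1961dimer}, or for positive genus the orientation-free method of Curticapean and Xia~\cite{curticapean2015parameterizing} (or that of Galluccio, Loebl and Tesler~\cite{galluccio1999theory,tesler2000matchings}). Third, glue the torsos: root the tree decomposition and, for each node and each of the $2^{O(1)}$ subsets $S$ of its adhesion set to the parent, compute via the base algorithm --- fed the already-tabulated values at the children --- the weighted number of matchings of the part of $G$ in that subtree that saturate every vertex except exactly those of $S$; the value at the root for $S=\emptyset$ is $\text{\#PM}(G)$.

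\textbf{Hardness (case B).} Since $\mathcal{H}\subseteq fb(\mathcal{G})$, every instance of $\text{\#PM}\langle\mathcal{H}\rangle$ is an instance of $\text{\#PM}[\mathcal{G}]$ with the same value, so it suffices to prove $\text{\#PM}\langle\mathcal{H}\rangle$ is $\text{\#P}$-hard. I would give a polynomial-time reduction from $\text{\#PM}$ on general graphs, which is $\text{\#P}$-hard by Valiant~\cite{valiant1979complexity}, realising every component inside $\mathcal{H}$: the $k\times 2k$ grid part of $H_k$ supplies a planar routing fabric of ``wires'' together with vertex and edge gadgets, while the extra edges $E_{k3}$ supply the one ingredient a planar fabric lacks, namely a crossover gadget that lets two wires cross while altering $\text{\#PM}$ only in a controlled, recoverable way (a matchgate-style construction). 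Iterating crossovers embeds an arbitrary host graph into a minor of some $H_{k'}$, and a polynomially bounded set of $\text{\#PM}$ evaluations on such minors then recovers the target value by interpolation. Concretely this amounts to confining, to $\mathcal{H}$, the hardness machinery that Curticapean and Xia developed to prove $\text{\#PM}[K_8]$ is $\text{\#P}$-hard~\cite{curticapean2022parameterizing}.

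\textbf{The main obstacle.} The two substantial points are precisely the ones that pin down shallow vortex grids as the \emph{exact} threshold. On the tractable side it is making the Pfaffian/FKT computation cooperate with the vortices: a bounded-depth vortex destroys planarity only locally and thinly, and one must simultaneously prove the specialised structure theorem with small enough constants and show the base algorithm copes with the vortex attachments --- together these must account for \emph{every} $H_k$-minor-free graph. On the hard side it is the dual difficulty: the crossover gadget and the whole reduction must never leave $\mathcal{H}$, which is feasible only because the shallow vortex grid is engineered to be just barely rich enough for it. Verifying that the tractable regime of (A) and the hard regime of (B) are genuinely complementary --- reduced in the first step to the classification of $\mathcal{G}$, but with its real content in the two constructions above --- is the heart of the theorem.
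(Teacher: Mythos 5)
This theorem is not proved in the paper at all: it is imported verbatim from ``Killing a vortex'' \cite{thilikos2022killing}, and the only traces of its proof that the paper uses are the structural facts $\textsf{svg}\sim\textsf{p}_{ga3}$ (Theorem \ref{thmmtotw}, Definition \ref{ga3}) on the algorithmic side and the ring-blowup/outside-crossing machinery of \cite{curticapean2022parameterizing} on the hardness side. Measured against that, your opening reduction is correct ($\mathcal{G}\cap\mathcal{H}\neq\emptyset$ iff $fb(\mathcal{G})\subseteq fb(\{H_k\})$ for some $k$; $\mathcal{G}\cap\mathcal{H}=\emptyset$ iff $\mathcal{H}\subseteq fb(\mathcal{G})$), and your hardness sketch is in the right spirit: one does reduce from general $\text{\#PM}$ by pushing all crossings into a single annular layer around a planar drawing (the ring blowup), realizes each crossing by a weighted planar crossover gadget, removes the $-1$ weights by interpolation, and checks that the resulting graphs are minors of shallow vortex grids --- exactly the Curticapean--Xia machinery confined to $\mathcal{H}$, as you say.

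The genuine gap is in case (A), in the sentence proposing to ``run a dynamic program along each bounded-depth (hence bounded-pathwidth) vortex, and on the residual surface part invoke the FKT algorithm.'' This base algorithm cannot exist as described: $K_8$-minor-free graphs already admit tree decompositions into torsos that are genus-, apex- and vortex-bounded in precisely your sense, yet $\text{\#PM}[K_8]$ is $\text{\#P}$-hard \cite{curticapean2022parameterizing}, so a routine that handles a bounded-depth vortex attached to a Pfaffian-computable surface part would prove too much. The obstruction is that the vortex is attached along a face with unboundedly many attachment vertices, so its combinatorial DP does not compose with the global determinant computation on the surface part. The actual content of \cite{thilikos2022killing} --- and the reason for its title --- is the opposite of what you assume: excluding a shallow vortex grid forces a structure theorem with \emph{no} vortices at all (each torso is surface-embeddable after deleting a bounded apex set, with adhesion sets of size at most $3$ bounding faces; cf.\ Definition \ref{ga3} and the Remark following it). The dynamic program then only ever inserts bounded-arity virtual vertices into faces of a genus-bounded embedding, which is why the gluing preserves solvability. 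Without that vortex-free structure theorem your tractability argument does not go through, and with your version of the base algorithm the claimed dichotomy threshold would be wrong.
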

        
\subsection{Different counting problems on planar graphs}\label{countingintro}
  In several frameworks for counting problems, a number of cases that are $\text{\#P}$-hard in general have been found to be polynomial-time computable on planar graphs after \#PM \cite{vertigan2005computational,cai2017holographic,cai2012spin,cai2012gadgets,guo2020complexity,cai2017holographicuni,cai2021new,cai2022fkt,cai2023complexity,cai2023planar}. Most of these cases are related to the FKT algorithm, whereas some of them have been proved not able to be subsumed by the FKT algorithm \cite{cai2021new,cai2022fkt}.
  
  Two frameworks of counting problems, namely $\text{\#CSP}$ problem and $\text{Holant}$ problem, are of great significance in the field of counting complexity, as they are capable of expressing a wide range of problems. For example, problems in \cite{vertigan2005computational,cai2017holographic,cai2012spin,cai2012gadgets} can be expressed in the form of a $\text{\#CSP}$ problem and characterized by the result in \cite{guo2020complexity}. Furthermore, $\text{\#CSP}$ problem and $\text{\#PM}$ problem can be expressed in the form of $\text{Holant}$ problem. We present an informal definition of $\text{\#CSP}$ and $\text{Holant}$ in this section. See Definition \ref{defCSP} and \ref{defHol} in Section \ref{preCSP} for details. 
  \begin{definition}[$\text{\#CSP}$ and $\text{Holant}$]
      $\text{\#CSP}$ is defined by a signature (function) set $\mathcal{F}$, denoted as $\text{\#CSP}(\mathcal{F})$. An instance (input) of $\text{\#CSP}(\mathcal{F})$ is a bipartite graph $G=(U,V,E)$, with each $u\in U$ assigned a signature from $\mathcal{F}$ and each $v\in V$ representing a Boolean variable. The output is the sum of the product of the values of all the signatures, calculated over all possible assignments to the variables.

      The definition of $\text{Holant}$ is identical to that of $\text{\#CSP}$, with the exception that the degree of each $v\in V$ must be exactly 2.
  \end{definition}
     Two key aspects of the complexity of $\text{\#CSP}$ and $\text{Holant}$ problems are often considered in their study. %In order to prevent ambiguity, we list out four key aspects that are often considered when studying the complexity of $\text{\#CSP}$ and $\text{Holant}$ problem:
  \begin{enumerate}
  	%\item The domain of each variable. In $\text{\#CSP}$ or $\text{Holant}$ problem, each variable can take value over a domain $[q]$ of size $q$, denoted as $\text{\#CSP}_q(\mathcal{F})$ or $\text{Holant}_q(\mathcal{F})$ respectively. When $q=2$ the value of each variable is restricted to the Boolean domain $\{0,1\}$, and we simply denote the problem as $\text{\#CSP}(\mathcal{F})$ or $\text{Holant}(\mathcal{F})$. 
  	%\item The range of each signature. It can be over $\{0,1\}$, $\mathcal{R}$ or $\mathcal{C}$. The case in which the range is over $\mathcal{C}$ is often considered to be the most general case.
  	\item The properties of the signature set $\mathcal{F}$. The study of \#CSP and Holant problems typically assumes that the signature set $\mathcal{F}$ is finite. Usually, it also commences with symmetric $\mathcal{F}$, which means that the value of each signature in $\mathcal{F}$ only depends on the number of $1$s in the input. This kind of problem is denoted as sym-$\text{\#CSP}$ or sym-$\text{Holant}$. It is also sometimes beneficial to consider that $\mathcal{F}$ contains certain kinds of signatures \cite{cai2009holant,cai2011dichotomy,cai2018dichotomy,backens2017complete}, which is not the focus of this article.
   %Usually we would assume $\mathcal{F}$ is finite and all signatures in $\mathcal{F}$ are symmetric. This setting is usually considered before the general case. 
   %The study of $\text{\#CSP}$ and $\text{Holant}$ problems typically commences with $\mathcal{F}$ exhibiting certain properties. For example, usually we would assume $\mathcal{F}$ to be finite, and sometimes it is helpful to consider $\mathcal{F}$ contains certain kinds of signatures \cite{cai2009holant,cai2011dichotomy,cai2018dichotomy,backens2017complete}. Further, the case when all signatures in $\mathcal{F}$ are symmetric is usually considered before the general case, denoted as sym-$\text{\#CSP}$ or sym-$\text{Holant}$ respectively as mentioned before.%Although we want to decide the complexity of  $\text{\#CSP}(\mathcal{F})$ or $\text{Holant}(\mathcal{F})$ for arbitrary $\mathcal{F}$, the study typically commences with $\mathcal{F}$ exhibiting certain properties. Usually we would assume $\mathcal{F}$ to be finite, but in some cases, it is unavoidable to consider the infinite cases. Besides, sometimes it is helpful to consider $\mathcal{F}$ contains certain kinds of signatures, such as all of the unary signatures \cite{cai2009holant,cai2011dichotomy}, or the pinning signatures \cite{cai2018dichotomy,backens2017complete}. Besides, the case when all signatures in $\mathcal{F}$ are symmetric is usually considered before the general case, denoted as sym-$\text{\#CSP}$ or sym-$\text{Holant}$ respectively.
  	\item The graph class that the underlying graph $G$ is restricted to. In previous research, \#CSP and Holant problems have been studied on general graphs and planar graphs. This article focuses on these problems on minor-free graphs. In addition, whether the maximum degree of $G$ is bounded also plays an important role in this study.
  \end{enumerate}
    
    Several dichotomies have been demonstrated on general graphs and planar graphs in the study of $\text{\#CSP}$ and $\text{Holant}$ problems. Please refer to Table \ref{dicbef} for corresponding references and the position of our results.

    % Please add the following required packages to your document preamble:
% \usepackage{multirow}
\begin{table}[]
\centering
\begin{tabular}{|l|l|l|l|}
\hline
           & General graphs         & Planar graphs & Minor free graphs \\ \hline
\#PM       & \cite{valiant1979complexity} & \cite{kasteleyn1961statistics,temperley1961dimer} &\cite{thilikos2022killing,curticapean2022parameterizing}                   \\ \hline
sym-\#CSP  & \multirow{2}{*}{\cite{cai2014complexity}} & \cite{guo2020complexity}& \multirow{2}{*}{\textbf{Our results}}          \\ \cline{1-1} \cline{3-3} 
\#CSP      &                       & \cite{cai2017holographicuni}&                  \\ \hline
sym-Holant & \cite{cai2013complete} & \cite{cai2022fkt}         &      open             \\ \hline
Holant     &      open            &    open    &     open       \\ \hline
\end{tabular}
\caption{References related to \#PM and dichotomies of \#CSP and Holant problems on sorts of graphs. Our results fill the blank of $\text{\#CSP}$ on minor-free graphs.}
\label{dicbef}
\end{table}

%Several dichotomies have been demonstrated in the study of $\text{\#CSP}$ and $\text{Holant}$ problem.
  \subsection{Our results and approaches}\label{mainresult}
    %In this article, we focus on symmetric $\text{\#CSP}$ problem with Boolean domain and complex range on minor-closed graph classes.  Our result extends their results to more graph classes
   
    %Our study settles the $\text{\#CSP}$ problem on several minor-closed graph classes that have been identified as typical and pivotal scenarios in the previous research on \#PM \cite{curticapean2014counting,curticapean2022parameterizing,thilikos2022killing}. In view of the absence of several key concepts, we have decided to defer the presentation of the detailed version of our findings until the conclusion section of the paper. The main findings of this study present dichotomies on these minor-closed graph classes, as detailed in Theorem \ref{allthm2} in Section \ref{conclusion}.      Nevertheless, we are pleased to present a weaker but more comprehensive version of these findings herewith.
Our study settles the $\text{\#CSP}$ problem on several minor-closed graph classes that have been identified as typical and pivotal scenarios in the previous research on \#PM \cite{curticapean2014counting,curticapean2022parameterizing,thilikos2022killing}. The main findings of this study include several dichotomies on these minor-closed graph classes. This is the first counting complexity dichotomy to consider both signature set and graph class at the same time, thereby bridging the gap between the study of complexity classification on signature sets and that on graph classes. Here, due to the lack of introduction of several key concepts, we present a weak but more comprehensible version of our findings (Theorem \ref{allthm}) first, and then a strong version (Theorem \ref{allthm2}, Section \ref{sec:preprocessing}) that we actually prove after all necessary definitions have been provided.
   
\begin{theorem}
 For $\text{\#CSP}$ on Boolean domain with complex range:
      \begin{enumerate}
          \item $\text{\#CSP}(\mathcal{F})[K_4]$ can be computed in linear time for arbitrary finite $\mathcal{F}$; 
          \item $\text{\#CSP}(\mathcal{F})[K_5]$ is computable in polynomial time if $\text{\#CSP}(\mathcal{F})\langle \mathcal{PL}\rangle$ is polynomial-time computable; otherwise it is $\text{\#P}$-hard. 
          \item $\text{\#CSP}(\mathcal{F})[K_6]$ is computable in polynomial time if $\text{\#CSP}(\mathcal{F})$ is polynomial-time computable; otherwise it is $\text{\#P}$-hard. 
          \item For any constant $k\in \mathbb{N}^+$, maximum degree $k$ $\text{\#CSP}(\mathcal{F})[ K_7 ]$ is computable in polynomial time if $\text{\#CSP}(\mathcal{F})\langle \mathcal{PL}\rangle$ is polynomial-time computable; otherwise it is $\text{\#P}$-hard. 
          \item For for any integer $k\ge 3$, maximum degree $k$ $\text{\#CSP}(\mathcal{F})[ K_8 ]$ is computable in polynomial time if $\text{\#CSP}(\mathcal{F})$ is polynomial-time computable; otherwise it is $\text{\#P}$-hard. 
      \end{enumerate}
      \label{allthm}
  \end{theorem}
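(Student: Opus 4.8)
The plan is to prove all five items through a common two-sided strategy: an algorithmic (tractability) direction that exploits the structure theory of $K_t$-minor-free graphs, and a hardness direction that works by gadget constructions which stay inside the relevant minor-closed class. For the algorithmic direction, the key tool is that each graph class $fb(K_t)$ for small $t$ has a well-understood structure: $fb(K_4)$ consists of graphs of treewidth at most $2$ (series-parallel graphs), $fb(K_5)$ is characterized by Wagner's theorem as graphs built by clique-sums of at most $3$ vertices from planar graphs and the Wagner graph $V_8$, and $fb(K_6)$ by the Jørgensen-type structure built from clique-sums from apex graphs. First I would prove item~(1): since an instance of $\text{\#CSP}(\mathcal{F})$ lives on a bipartite graph $G=(U,V,E)$, and since adding the variable side cannot increase the ``complexity'' of the minor structure too drastically, one shows that forbidding $K_4$ as a minor of the instance graph forces bounded treewidth; then a standard dynamic-programming / tensor-network contraction over a tree decomposition of width $O(1)$ evaluates $\text{\#CSP}(\mathcal{F})$ in linear time for \emph{any} finite $\mathcal{F}$, with no condition on $\mathcal{F}$ at all. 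This is the base case and the easiest.

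For items~(2) and~(3) the algorithmic direction is the heart of the matter: given an instance on a $K_5$-minor-free (resp.\ $K_6$-minor-free) graph, decompose it via clique-sums into planar (resp.\ apex, plus the bounded-size exceptional pieces $V_8$) pieces glued along separators of size at most $3$ (resp.\ bounded size). Because the glue sets are of bounded size, one can ``cut'' the instance at each small separator, enumerate the $O(1)$ many Boolean assignments to the separator vertices, and recursively evaluate each piece; a piece that is planar is handled by the planar $\text{\#CSP}(\mathcal{F})$ oracle (hence the hypothesis ``$\text{\#CSP}(\mathcal{F})\langle\mathcal{PL}\rangle$ is polynomial-time computable'' in item~(2)), a piece that is a bounded-size exceptional graph is handled by brute force, and for item~(3) an apex piece reduces to a general $\text{\#CSP}(\mathcal{F})$ instance after removing the single apex vertex (so the hypothesis is ``$\text{\#CSP}(\mathcal{F})$ is polynomial-time computable''). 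One must be careful that the bipartite incidence structure of the $\text{\#CSP}$ instance interacts correctly with the clique-sum decomposition, e.g.\ by attaching each variable vertex to a canonical piece and treating shared variables as part of the separator; I would make this precise with a lemma saying that if $G$ is the incidence graph of the instance and $G$ is $K_t$-minor-free then the constraint hypergraph admits a tree-of-pieces decomposition of the above form.

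For the hardness directions (the ``otherwise'' clauses) and for items~(4) and~(5), the strategy is to show the converses by realizing \emph{every} planar (resp.\ general) $\text{\#CSP}(\mathcal{F})$ instance as a $\text{\#CSP}(\mathcal{F})$ instance whose incidence graph is $K_7$-minor-free (resp.\ $K_8$-minor-free) \emph{and} of bounded degree. The degree bound is exactly where items~(4) and~(5) differ from (2) and (3): to keep the underlying graph $K_7$-minor-free (resp.\ $K_8$-minor-free) rather than $K_5$- (resp.\ $K_6$-) minor-free, one is allowed to use a bit more ``room'' in each gadget, but one must cap the degree so that the gadgets cannot be amalgamated into a large clique minor; conversely this room is \emph{necessary} because with unbounded degree a planar (resp.\ general) instance could itself contain a $K_7$ (resp.\ $K_8$) minor. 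Concretely I would take a known planar (resp.\ general) gadget construction witnessing $\text{\#P}$-hardness of $\text{\#CSP}(\mathcal{F})$, and replace high-degree variable vertices by bounded-degree ``expander-free'' trees or cycles of equality gadgets, verifying via Wagner-type excluded-minor criteria (or the Robertson--Seymour structure theorem applied to the explicit small cliques) that the resulting graph has no $K_7$ (resp.\ $K_8$) minor. The main obstacle I anticipate is precisely this last verification: proving that the degree-bounded gadget amalgamations \emph{cannot} create a $K_7$ or $K_8$ minor, since clique minors can be assembled from many small, individually-innocuous pieces connected through the shared variables, so the argument must track global connectivity (branch sets) across the whole reduction rather than locally. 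Handling the interaction between the bounded-degree requirement and the need to still route enough ``signal'' through the gadget to preserve the $\text{\#P}$-hardness reduction is the delicate balance that the proof must strike.
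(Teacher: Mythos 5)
Your plan for item~(1) matches the paper's (bounded treewidth plus standard DP, cf.\ Theorem~\ref{incidence}), but the proposal goes off the rails precisely where the paper's novelty lies, namely at item~(3) and in the algorithmic halves of items~(2) and~(4).

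The central error is your treatment of item~(3). You claim that ``an apex piece reduces to a general $\text{\#CSP}(\mathcal{F})$ instance after removing the single apex vertex.'' With \emph{unbounded} degree this cannot work: the apex can be a variable incident to $\Theta(|G|)$ constraints, or a constraint vertex, and ``removing'' it requires summing over an exponential number of partial assignments to the edges incident to it. Moreover, even in the heuristic form you describe, removing the apex from an apex piece leaves a \emph{planar} piece, so your hypothesis would have to be ``$\text{\#CSP}(\mathcal{F})\langle\mathcal{PL}\rangle$ is tractable,'' contradicting the stated condition in item~(3). The actual content of item~(3) is a \emph{hardness} result: there exist $\mathcal{F}$ for which planar $\text{\#CSP}(\mathcal{F})$ is tractable (namely $\mathcal{F}\subseteq\widehat{\mathscr{M}_P}$ but $\mathcal{F}\nsubseteq\mathscr{A},\mathscr{P}$) yet $\text{\#CSP}(\mathcal{F})\langle\mathcal{PLA}\rangle$ is $\text{\#P}$-hard. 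The paper proves this by reducing counting matchings on planar cubic graphs to $\text{Holant}(\mathcal{F}\mid\widehat{\mathcal{EQ}})$ on apex graphs, the single apex absorbing the parity constraint. The whole point of the $K_6$/$K_7$ contrast in the theorem is that \emph{bounding the degree flips the answer}: item~(4) has a nontrivial tractable regime (matching planar $\text{\#CSP}$) exactly because with bounded degree one can afford to enumerate the edges incident to a bounded-size apex set, whereas item~(3) cannot. Your proposal does not register this phenomenon at all.

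Relatedly, you treat items~(4) and~(5) as purely hardness reductions, but item~(4) requires an algorithm: bounded-degree $\text{\#CSP}(\mathcal{F})[K_7]$ must be shown polynomial-time computable whenever $\mathcal{F}\subseteq\widehat{\mathscr{M}_P}$. The paper gets this by combining the Thilikos--Wiederrecht ``killing a vortex'' structure (Theorems~\ref{thmmtotw}, \ref{K7trde}), the holographic transformation sending $\text{\#CSP}(\mathcal{F})$ to $\text{Holant}(\widehat{\mathcal{F}}\mid\widehat{\mathcal{EQ}})$ where all signatures become permutable matchgates, and a path-gadget / boundary-mapping recursion over the tree decomposition. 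Your proposal never invokes holographic transformations or matchgates, which are the engine of every tractability proof here; and your clique-sum/oracle recursion for item~(2) leaves open both what to do when separators contain constraint (rather than variable) vertices and whether the pinned signature set remains in the tractable regime --- the paper avoids these issues by working at the Holant/matchgate level rather than querying a planar-$\text{\#CSP}$ oracle as a black box. Finally, you also miss the dichotomy criterion $\mathcal{G}\cap\mathcal{H}\ne\emptyset$ vs.\ $\mathcal{G}\cap\mathcal{H}=\emptyset$ coming from the shallow vortex grid (Definition~\ref{svg}, Theorem~\ref{PMalg}); this, not a Wagner-style excluded-minor case analysis, is what separates $K_7$ from $K_8$ ($K_7\in\mathcal{H}$, $K_8\notin\mathcal{H}$), and the item~(5) hardness is routed through the ring blowup class $\mathcal{R}\subseteq fb(\mathcal{G})$ rather than through an ad hoc verification that bounded-degree gadgets cannot create a clique minor.
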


  % Please add the following required packages to your document preamble:
% \usepackage{multirow}
\begin{table}[]
\centering
\begin{tabular}{|l|llllll|}
\hline
    & \multicolumn{1}{l|}{$K_4$}                  & \multicolumn{1}{l|}{$K_5,K_{3,3}$} & \multicolumn{1}{l|}{$K_5$} & \multicolumn{1}{l|}{$K_6$} & \multicolumn{1}{l|}{$K_7$} & $K_8$     \\ \hline
$\mathcal{F}\subseteq \mathscr{A}$                                                                & \multicolumn{6}{l|}{\multirow{2}{*}{$\text{P}$ \cite{cai2014complexity}}}                                                                                                                                         \\ \cline{1-1}
$\mathcal{F}\subseteq \mathscr{P}$                                                                & \multicolumn{6}{l|}{}                                                                                                                                                               \\ \hline
$\mathcal{F}\subseteq \widehat{\mathscr{M}_P}$, and $\mathcal{F}\nsubseteq \mathscr{A},\mathscr{P}$ & \multicolumn{1}{l|}{\multirow{2}{*}{$\text{P}$ (1)}} & \multicolumn{1}{l|}{$\text{P}$ \cite{cai2017holographicuni,meng2025matchgate}}         & \multicolumn{1}{l|}{$\text{P}$ (2)} & \multicolumn{2}{l|}{\#P,P (3),(4)}                            & $\text{\#P}$ (5) \\ \cline{1-1} \cline{3-7} 
Otherwise                                                                                         & \multicolumn{1}{l|}{}                       & \multicolumn{5}{l|}{$\text{\#P}$ \cite{cai2017holographicuni,meng2025matchgate}}                                                                                                        \\ \hline
\end{tabular}
\caption{A summary of Theorem  \ref{allthm}. Each row denotes a certain case for $\mathcal{F}$. Each column denotes the forbidden minors of the graph class that the underlying graph is restricted to. The symbol ``$\text{P}$'' denotes the corresponding problem is polynomial-time computable, while ``$\text{\#P}$'' denotes the corresponding problem is $\text{\#P}$-hard.  In particular, ``\#P,P'' denotes the corresponding problem is $\text{\#P}$-hard in general, but has a polynomial-time algorithm when the degree of each vertex is upper bounded. We use (1)-(5) to indicate the conclusions corresponding to the five statements in Theorem  \ref{allthm}.}
\label{alltab}
\end{table}

Please refer to Section \ref{sec:precspdicho} and Section \ref{sec:preprocessing} for the detailed condition of being polynomial-time computable.
We summarize the results of Theorem \ref{allthm} in Table \ref{alltab}. Please refer to Definitions \ref{defA}, \ref{defP}, \ref{def:MP} and Theorem \ref{thmHT} for corresponding definitions of $\mathscr{A},\mathscr{P}$ and $\widehat{\mathscr{M}_P}$. We also remark that each statement in Theorem \ref{allthm} can be viewed as a corollary of the corresponding statement in Theorem \ref{allthm2}, which is the complete version of our results. 
    
    It is a bit surprising that an unbounded maximum degree of vertices will turn over the complexity of \#CSP problem on certain graph classes. This phenomenon has never been observed in the previous study of \#PM or \#CSP problem. Here, we explain this phenomenon informally. When the maximum degree is not bounded, we may not enumerate all the valid assignments of edges incident to a single vertex, as the number of these assignments can be exponential in $|G|$.
    Consequently, the deletion of a single vertex from $G$, as performed in the algorithm described in "Killing a vortex" \cite{thilikos2022killing}, is no longer an available approach, since it would result in an exponential increase in the time complexity. 
    This means that for \#CSP problem in which the maximum degree is unbounded, after ``killing a vortex'', we must further ``kill an apex'' (Definition \ref{ga3}), to obtain a polynomial-time algorithm.
     %We note that Theorem \ref{allthm2} is non-trivial in 2 ways. Firstly, It is surprising that 

    We prove Theorem \ref{allthm} (and Theorem \ref{allthm2}) in Sections \ref{secK5}-\ref{secK6}. Basically, the proof consists of three parts. The first part includes the first statement in Theorem \ref{allthm} (and also in Theorem \ref{allthm2}), which corresponds to the column ``$K_4$'' in Table \ref{alltab}. In fact, this statement can be straightforwardly proved by the result in \cite{ganian2018sum}, which is stated as Theorem \ref{incidence}. 
    
    In the second and third parts, with four existing dichotomies for \#CSP, it is sufficient for us to focus on the case when \#CSP$(\mathcal{F})$ is hard in general but polynomial-time computable on planar graphs, which corresponds to the row ``$\mathcal{F}\subseteq \widehat{\mathscr{M}_P}$, and $\mathcal{F}\nsubseteq \mathscr{A},\mathscr{P}$'' in Table \ref{alltab}. In this case, we transform \#CSP$(\mathcal{F})$ into Holant$(\widehat{\mathcal{F}}|\widehat{\mathcal{EQ}})$ under a holographic transformation described in Theorem \ref{thmHT}. After the transformation, each signature in $\widehat{\mathcal{F}}\cup \widehat{\mathcal{EQ}}$ can be expressed by a matchgate, and we have $\widehat{\mathcal{F}}\nsubseteq \mathscr{A}$. 
    
    We also remark that Theorem \ref{allthm} (and Theorem \ref{allthm2}) still holds even when signatures in $\mathcal{F}$ are not restricted to be symmetric. To achieve this, we refer to the results in \cite{meng2025matchgate}, in which the properties of permutable matchgate signatures are characterized in detail. Using these properties, we are able to transform a \#CSP problem into a sym-\#CSP problem in the algorithm part, and reduce a sym-\#CSP problem to the the \#CSP problem in the hardness reductions. %we define the concept of permutable matchgate signatures, denoted as the set $\mathscr{M}_P$, which is related to a tractable case on some minor-closed graph classes. The properties of permutable matchgate signatures are then characterized in detail.  We give a sufficient and necessary condition for determining whether a matchgate signature retains its property under a specific variable permutation. Using this condition, we then demonstrate that each permutable matchgate signature can be realized by symmetric matchgate signatures in a star-like way. In addition, for each permutable matchgate signature that may lead to \#P-hardness, we also demonstrate that it can realize a symmetric matchgate signature that may lead to \#P-hardness, which serves as a pivotal step in our reduction.
    
  The second part includes the second and the fourth statements of Theorem \ref{allthm} (and also of Theorem \ref{allthm2}). In this part, we extend two of the algorithms for \#PM \cite{curticapean2014counting,thilikos2022killing} to an algorithm for Holant$(\mathcal{F})$ and another one for bounded degree Holant$(\mathcal{F})$ on certain graphs, where each signature in $\mathcal{F}$ is a permutable matchgate signature. In each algorithm, for a given graph $G$ with each vertex assigned a signature from $\mathcal{F}$, we construct a specific tree decomposition $(T,\beta)$ of $G$, and set a node in the tree decomposition as the root. This step derives from $G$ a tree-like structure that consists of $|T|$ bags, and we seek to compute the value of $G$ from the leaves to the root recursively. 
  
  It is noteworthy that in the two algorithms in \cite{curticapean2014counting,thilikos2022killing}, a necessary condition for this recursive computation is that the size of each intersection of two adjacent bags is at most 3. Consequently, if we replace each vertex with a 
 corresponding matchgate naively, this size requirement would be destroyed. Also, if the replacement is performed at the beginning of the algorithm, the minor-forbidden property may be violated. To overcome this difficulty, we introduce the \emph{path gadget} to replace each vertex appearing in multiple bags, and appropriately modify the tree decomposition. This allows us to compute the corresponding value of a single bag in the graph without violating the size requirement. Then, we record the corresponding value of each bag as either a \emph{representative signature} or a \emph{boundary mapping}, and transform the effect of child bags into a single introduced vertex in the parent bag. We also demonstrate that this transformation would not change the value of $G$ as well as the properties that hold for each bag. Finally, we prove that these algorithms can output the correct value by recursions.

    The third part contains the third and the fifth statements. In this part, we prove \#P-hardness for Holant$(\mathcal{F}|\widehat{\mathcal{EQ}})$ and bounded degree Holant$(\mathcal{F}|\widehat{\mathcal{EQ}})$ on certain graph classes, which are the bipartite form of Holant problems, to obtain the desired \#P-hardness. Here, $\mathcal{F}\subseteq \mathscr{M}_P$,$\mathcal{F}\nsubseteq \mathscr{A}$, and $\widehat{\mathcal{EQ}}$ is a specific set of signatures which will be defined in Section \ref{preliminareis}. By the characterization of permutable matchgate signatures, there exists $f\in \mathcal{F}$ satisfying $f\in \mathscr{M}_P-\mathscr{A}$, and therefore a symmetric $g\in \mathscr{M}_P-\mathscr{A}$ can be realized.
    
    Then we start our reduction with the fact that, counting perfect matchings on 3-regular graphs and counting matchings on planar 3-regular graphs are \#P-hard. We reduce each problem to a corresponding intermediate problem in which the underlying graph comes into the graph class that Holant$(\mathcal{F}|\widehat{\mathcal{EQ}})$ or bounded degree Holant$(\mathcal{F}|\widehat{\mathcal{EQ}})$ is also restricted to. We then simulate each signature in the intermediate problem with signatures in $\{g\}\cup \widehat{\mathcal{EQ}}$ in a planar and bipartite manner by gadget construction and polynomial interpolation. After the simulation, we demonstrate that the underlying graph of each obtained instance of Holant$(\mathcal{F}|\widehat{\mathcal{EQ}})$ or bounded degree Holant$(\mathcal{F}|\widehat{\mathcal{EQ}})$ remains in the origin graph classes, which yields the \#P-hardness results.
    
    Besides, it should be noted that the majority of the results presented in the second and third parts are not straightforward. This is due to the necessity of classifying corresponding signatures into multiple categories and analyzing each category individually in the proofs. Furthermore, the construction of the gadgets in the reduction process is challenging. In most reductions, the gadget construction must be done in a bipartite, planar manner, and sometimes with the restriction that the degree of each vertex is bounded by 3. 
    
    %Again, we remark that the dichotomy for Pl-$\#R_D$-CSP for any integer $D\ge 3$ itself is also a new dichotomy that we have proved in this article, though it is not presented in the main theorem.In addition, the sufficient and necessary condition for determining whether a matchgate signature retains its property under a specific variable permutation is a novel discovery regarding the new property of the matchgate signature.
%In fact, the conclusion is more comprehensive than that in Theorem \ref{allthm}. 

%\subsection{Related work}\label{relatedwork}

\subsection{Organization}\label{organization}
This paper is organized in order of dependency. 
In Section \ref{preliminareis}, we introduce the preliminaries needed in our proof. 
In Section \ref{sec:preprocessing}, we present the strong version of our main results. %define the permutable matchgate signatures, make preliminary preparations for the subsequent sections, and present an outline of the proofs to be presented.
%In Section \ref{secK4}, we prove the first statement in Theorem \ref{allthm} and \ref{allthm2}.
%In Section \ref{secPMG}, we characterize the permutable matchgate signatures. %We give a sufficient and necessary condition for determining whether a matchgate signature retains its property under a specific variable permutation. Using this condition, we then demonstrate that each permutable matchgate signature can be realized by symmetric matchgate signatures in a star-like way. In addition, for each permutable matchgate signature that may lead to \#P-hardness, we also demonstrate that it can realize a symmetric matchgate signature that may lead to \#P-hardness, which serves as a pivotal step in our reduction. 
%With the results in Section \ref{secPMG}, 
We prove the second to the fifth statements in Theorem \ref{allthm} and \ref{allthm2} in Sections \ref{secK5}, \ref{secK6}, \ref{secK7}, \ref{secK8}, respectively.
In Section \ref{conclusion}, we conclude our paper.%, and discuss some open problems.
\section{Preliminaries}\label{preliminareis}

\subsection{Tree decompositions and graph parameters}

\subsubsection{Tree decompositions}
Tree decompositions are closely related to minor theory and can often be used for algorithm design.
        \begin{definition}[{Tree decomposition}\cite{ROBERTSON1986309}]
            A tree decomposition $(T,\beta)$ of a graph $G=(V,E)$ consists of a tree $T$ and a mapping $\beta:V(T)\to2^{V(G)}$ that maps every node in $T$ to a subset of $V(G)$ such that:
            \begin{itemize}
                \item $\bigcup_{t\in T}\beta(t)=V(G)$
                \item $\forall uv\in E, \exists t\in T$ such that $u,v\in \beta(t)$
                \item $\forall v\in V$, $\{t\in T|v\in \beta(t)\}$ induces a subtree in $T$.
            \end{itemize}
            \label{deftw}
        \end{definition}
        The \textit{width} of a tree decomposition $(T,\beta)$ is $\max_{t\in T} |\beta(t)|-1$, and the \textit{treewidth} of a graph $G=(V,E)$, denoted as $\textsf{tw}(G)$, is the minimum width over all possible tree decompositions of $G$. For each $k\in \mathbb{N}^+$, the graph class of all graphs with treewidth no greater than $k$ is defined to be $\mathcal{TW}_k=\{G|\textsf{tw}(G)\le k\}$. For $t\in T$, the \textit{torso} of $G$ at $t$ is $G_t=(\beta(t),E_t)$, where $E_t=\{uv\in E|u,v\in \beta(t)\}\cup \{uv|u,v\in \beta(t), \exists dt\in T, u,v\in \beta(d)\}$.
        
        A tree decomposition is called a \textit{normal tree decomposition} if for each $dt\in T$, neither $\beta(t)\subseteq \beta(d)$ nor $\beta(d)\subseteq \beta(t)$ holds.
        
        \begin{lemma}
        A normal tree decomposition can be obtained from an arbitrary tree composition $(T,\beta)$ in $O(|T|^2|G|)$ time.
        %If $s,t\in T$ and $\beta(s)=\beta(t)$, then for $d\in T$ on the path between $s,t$, $\beta(t)\subseteq \beta(d)$ by the third property in the definition. Furthermore, i
            \label{lemnormaltree}
        \end{lemma}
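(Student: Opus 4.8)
The plan is to reach a normal tree decomposition by repeatedly destroying the ``bad'' adjacencies: an edge $dt\in E(T)$ is bad exactly when $\beta(t)\subseteq\beta(d)$ or $\beta(d)\subseteq\beta(t)$, and these are precisely the configurations forbidden by normality. So I would iterate the following step until it no longer applies: scan the edges of the current decomposition tree; as soon as a bad edge $dt$ is found, with say $\beta(t)\subseteq\beta(d)$, \emph{contract it in $T$} (delete the node $t$, reattach every remaining neighbour of $t$ to $d$, and keep the bag $\beta(d)$, which equals $\beta(d)\cup\beta(t)$, on the merged node). When no bad edge survives, the pair is by definition a normal tree decomposition, provided I can argue that each contraction keeps it a tree decomposition of $G$ and that the process is fast enough.

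For correctness it suffices to check that one such contraction preserves the three axioms of Definition~\ref{deftw}. Contracting an edge of a tree yields a tree. Since $\beta(t)\subseteq\beta(d)$, deleting $t$ changes neither $\bigcup_s\beta(s)$ nor the set of pairs $uv\in E(G)$ jointly covered by some bag, so the first two axioms are immediate. The remaining axiom — that for each $v\in V(G)$ the set $S_v=\{s:v\in\beta(s)\}$ induces a subtree — is the only delicate point. If $t\notin S_v$, then (using that a subtree of a tree is closed under internal paths) $t$ lies on no path between two members of $S_v$, so the image of $S_v$ stays connected after contracting $dt$. If $t\in S_v$, then $v\in\beta(t)\subseteq\beta(d)$ forces $d\in S_v$ as well, and contracting an edge with both endpoints inside a subtree leaves it connected. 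Each contraction strictly decreases $|V(T)|$, so the procedure terminates.

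For the running time I would precompute, once, the characteristic vector over $V(G)$ of every bag, so that testing whether an edge is bad costs $O(|G|)$; no vector ever needs updating, because a merged node simply inherits an already-computed bag. One pass over the at most $|T|-1$ edges of the current tree therefore costs $O(|T|\,|G|)$, and each pass either certifies normality or triggers a contraction that permanently reduces the node count. Hence there are at most $|T|$ productive passes plus one final confirming pass, giving the claimed $O(|T|^2|G|)$ bound. The main obstacle is really just the connectivity argument above: one must exploit that $S_v$ is not merely connected but a genuine subtree, so that the contracted edge can never short‑circuit a path it should not — everything else is routine bookkeeping.
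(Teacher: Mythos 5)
Your proposal is correct and follows essentially the same approach as the paper: repeatedly contract any edge $dt$ with $\beta(t)\subseteq\beta(d)$ into a single node with bag $\beta(d)$, observing that each contraction preserves the tree‑decomposition axioms and strictly shrinks $T$, for a total of $O(|T|^2|G|)$ time. You supply a more detailed verification of the connectivity axiom, which the paper states as ``it can be verified,'' but the method and the complexity accounting are the same.
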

        
        \begin{proof}
            If $dt\in T$ and $\beta(t)\subseteq \beta(d)$, by contracting $dt$ to $w$ we get another tree $T'$ . We also define a mapping $\beta':V(T')\to 2^{V(G)}$ such that $\beta'(w)=\beta(d)$ and $\beta'(s)=\beta(s)$ when $s\neq w$. It can be verified that $(T',\beta')$ also forms a tree decomposition. Deciding whether there exists such $dt$ requires $O(|T||G|)$ time while contracting $dt$ requires $O(1)$ time. Since the number of edges in the tree is bounded by $|T|$, we can obtain a normal tree decomposition in $O(|T|^2|G|)$ time by recursions. 
        \end{proof}
        
        The size of a normal tree decomposition is bounded:
        
        \begin{lemma}
        For a normal tree decomposition $(T,\beta)$ of $G$, $|T|\le |G|$. 
            \label{lemsizebound}
        \end{lemma}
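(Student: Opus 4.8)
The plan is to root $T$ and build an injection from the node set of $T$ into $V(G)$; since any such injection immediately gives $|T| \le |G|$, this suffices. Concretely, I would fix an arbitrary root $r \in V(T)$ and, for a non-root node $t$, write $p(t)$ for its parent. We may assume $V(G) \ne \emptyset$, so that $\beta(r) \ne \emptyset$: indeed, if $|T| = 1$ then $\beta(r) = V(G) \ne \emptyset$, while if $|T| \ge 2$ then $r$ has a neighbour $d$, and $\beta(r) = \emptyset$ would give $\beta(r) \subseteq \beta(d)$, contradicting normality.

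Next, for the root choose any $v_r \in \beta(r)$, and for each non-root $t$ use normality --- which forbids $\beta(t) \subseteq \beta(p(t))$ --- to choose some $v_t \in \beta(t) \setminus \beta(p(t))$. The key claim is that $t \mapsto v_t$ is injective. I would prove this via the connectivity axiom of tree decompositions: for a vertex $v \in V(G)$, the set $T_v := \{s \in V(T) : v \in \beta(s)\}$ induces a subtree of $T$ and therefore has a unique node of minimum depth (call it the \emph{top} of $T_v$). I claim $t$ is the top of $T_{v_t}$ for every $t$: this is immediate if $t = r$, and if $t \ne r$ then $v_t \notin \beta(p(t))$ means $p(t) \notin T_{v_t}$, so the connected set $T_{v_t}$ --- which contains $t$ but not its parent --- is confined to the subtree of $T$ rooted at $t$, making $t$ its top. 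Consequently $v_t = v_s$ would force $t$ and $s$ to be the top of the same subtree $T_{v_t} = T_{v_s}$, hence $t = s$.

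This yields an injection $V(T) \hookrightarrow V(G)$ and therefore $|T| \le |G|$. I do not foresee a genuine obstacle here; the only points that need a moment's care are ensuring the root bag is nonempty (handled above using normality when $|T| \ge 2$) and the harmless convention that $G$ is nonempty.
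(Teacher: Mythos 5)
Your proof is correct and rests on the same underlying idea as the paper's: normality plus the connectivity axiom force each tree node to own a vertex of $G$ that no other node can claim. The paper realizes this by inductively peeling off a leaf $l$ (picking $v_l \in \beta(l)\setminus\beta(t)$ for its unique neighbour $t$, which by connectivity appears in no other bag) and shrinking both $T$ and $G$, whereas you root $T$ once and build the injection $t \mapsto v_t$ globally, verifying injectivity via the "top of $T_{v_t}$" observation; this is a cleaner, non-inductive packaging of the same argument.
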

        \begin{proof}
            For a leaf $l\in T$, there is an edge $lt\in T$. Since $(T,\beta)$ is a normal tree decomposition, $\beta(l)\nsubseteq \beta(t)$ and there is at least one vertex $v_l\in \beta(l)$ such that for each $d\in T-l$, $v_l\notin \beta(d)$. After deleting $l$ from $T$, $(T-l,\beta)$ is also a normal tree decomposition of $G'\subseteq G$ satisfying $V(G')\subseteq V(G)-v_l$. This procedure can be done recursively with at least one vertex in $G$ deleted in each round, which yields $|T|\le |G|$.
        \end{proof}
        \iffalse
       We now consider the treewidth of a graph. Though deciding the treewidth of a graph is NP-hard \cite{Arnborg1987ComplexityOF}, when the treewidth $\textsf{tw}(G)$ of a graph $G$ is given, a tree decomposition with width at most $\textsf{tw}(G)$ can be found in linear time.
    
    \begin{theorem} \cite{bodlaender1993linear} \label{thmbounded-tw-alg}
        Given a graph $G$ and a constant $k$, there is an algorithm that can either report $\textsf{tw}(G)> k$, or give a tree decomposition of $G$ with width at most $k$ in $O(f(k)|G|)$ time, where $f$ is a computable function.
    \end{theorem}

    Theorem \ref{thmbounded-tw-alg} has many applications. For example, a tree decomposition of $G$ with bounded width can help us find a vertex separator with bounded size that separate $G$ into 2 parts in a balanced way. For $G=(V,E)$ and $U\subseteq V$, if $G-U$ consists of two disconnected parts $A,B$ satisfying $|A|,|B|\leq\frac{2}{3}|G|$, then we denote $U$ as a \textit{balanced vertex separator} of $G$. 
        
    \begin{lemma}\cite[Theorem 17]{BODLAENDER19981}
        Given a tree decomposition $(T, \beta)$ of $G$, a node $t \in V(T)$ can be found in $O(|G|)$ time such that $\beta(t)$ is a balanced vertex separator of $G$. 
    \label{nodesp}
    \end{lemma}
   \fi
\subsubsection{Graph parameters}
        
        A \textit{graph parameter} is a mapping which maps a graph to a non-negative integer. For two graph parameters $\textsf{p}$ and $\textsf{p'}$, we define $\textsf{p}\preceq \textsf{p'}$ to mean that there exists a computable function $f:\mathbb{N}\to\mathbb{N}$ such that for each graph $G$, $\textsf{p}(G)\le f(\textsf{p'}(G))$. We also say that $\textsf{p}$ and $\textsf{p'}$ are \textit{asymptotically equivalent} if $\textsf{p}\preceq \textsf{p'}$ and $\textsf{p'}\preceq \textsf{p}$, denoted as $\textsf{p}\sim \textsf{p'}$. 
        
         In the following, we define several graph parameters related to minors and tree decompositions, as presented in reference \cite{thilikos2022killing}. We confine ourselves to a self-contained fragment of their definitions and make some modifications in order to provide a more intelligible explanation. 
        
        The \textit{Hadwiger number} of a graph $G=(V,E)$, denoted as $\textsf{hwg}(G)$, is the maximum $t$ such that $G$ has a $K_t$ minor. Similarly, the maximum $t$ such that $G$ has a $H_t$ minor can be denoted as $\textsf{svg}(G)$. Here, $K_t$ denotes the clique with $t$ vertices while $H_t$ is defined in Definition \ref{svg}. 
        We use $\textsf{genus}(G)$ to denote the minimum Euler-genus of a surface where $G$ can be embedded on without crossing and $\textsf{apex}(G)$ to denote the minimum size of $U\subseteq V$ such that $G-U$ is planar. Furthermore, a series of graph parameters can be defined based on tree decompositions.
        
        \begin{definition}
           % For a graph $G=(V,E)$, $\textsf{p}_{ga3}(G)$ is the minimum $k$ such that there is a tree decomposition $(T,\beta)$ of $G$, and for every $t\in T$, if the size of the torso $|G_t|>k$, then $G_t$ must satisfy the following condition:
           For a graph $G=(V,E)$ and a tree decomposition $(T,\beta)$ of $G$, the $ga3$-width of $(T,\beta)$ is the minimum $k$ such that for every $t\in T$, if the size of the torso $|G_t|>k$, then $G_t$ must satisfy the following conditions:
            \begin{itemize}
                \item There is an apex set $A_t\subseteq V(G_t)$, a surface $\Sigma_t$ such that $G_t-A_t$ can be embedded on $\Sigma_t$ without crossing;
                \item (g) $\Sigma_t$ has Euler-genus at most $k$;
                \item (a) $|A_t|\le k$;
                \item (3) For any $dt\in T$, $\beta(d)\cap\beta(t)$ can only avoid $A_t$ with at most 3 vertices, which means $|\beta(d)\cap \beta(t)-A_t|\le 3$. Besides, if $|\beta(d)\cap \beta(t)-A_t|=3$, then $\beta(d)\cap \beta(t)-A_t$ must bound a face in the embedding.
            \end{itemize}
            We use $\textsf{p}_{ga3}(G)$ to denote the minimum $ga3$-width over all possible tree decompositions of $G$.
            \label{ga3}
        \end{definition}
        Definition \ref{ga3} can also define 7 other parameters by independently replacing $g,a,3$ with $-,-,+$. By replacing $g$ and/or $a$ with $-$, we set the $k$ in the corresponding condition to $0$. By replacing $3$ with $+$, we delete the condition $(3)$. To illustrate, $\textsf{p}_{ga+}=k$ signifies the existence of a tree decomposition such that every torso of it can be embedded on a surface with Euler-genus at most $k$ after deleting at most $k$ vertices. In contrast, $\textsf{p}_{--3}=k$ denotes the necessity for a tree decomposition whose torsos with size greater than $k$ are planar and satisfy the (3) condition. 
        
   \begin{remark}
       The definitions presented here differ in two aspects from those in ``Killing a vortex'' \cite{thilikos2022killing}. Firstly, the concept of "vortex" is omitted, as it has already been "killed" by \cite{thilikos2022killing}, which means that it is no longer necessary to consider this concept in this article. Secondly, the (3) condition is introduced into the definitions, as it plays a pivotal role in the algorithms in the following sections. For illustration, the parameter $\textsf{p}_{ga+}$ in this article is equivalent to the parameter "$\textsf{p}_{-ga}$" in \cite{thilikos2022killing}. 
       %The definitions here differ from those in \cite{thilikos2022killing} in two aspects. First, the concept of "vortice" is omitted, since it has already been "killed" by \cite{thilikos2022killing}, which means we no longer need to consider this concept in this article. Second, the (3) condition is added to the definitions, since it plays a very important role in the algorithm. For example, $\textsf{p}_{ga-}$ in this article is equivalent to the "$\textsf{p}_{-ga}$" in \cite{thilikos2022killing}. 
   \end{remark}
   There are some trivial relations between these parameters:
   \begin{lemma}
       For all $i\in\{g,-\},j\in\{a,-\},k\in\{3,+\}$,
       $$\textsf{p}_{ij+}\le \textsf{p}_{ij3},\textsf{p}_{iak}\le \textsf{p}_{i-k},\textsf{p}_{gjk}\le \textsf{p}_{-jk}$$
       \label{lattice}
   \end{lemma}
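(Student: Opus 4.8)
The plan is to prove all three inequalities at once by a \emph{witness-reuse} argument carried out for a single fixed tree decomposition, and then to take minima. Fix a graph $G$ and a tree decomposition $(T,\beta)$ of $G$. For a symbol string $xyz$ with $x\in\{g,-\}$, $y\in\{a,-\}$, $z\in\{3,+\}$, write $w_{xyz}(T,\beta)$ for the $xyz$-width of $(T,\beta)$ as in Definition \ref{ga3}, under the conventions fixed there: $x=-$ forces the surface $\Sigma_t$ to have Euler-genus $0$ (i.e.\ to be planar), $y=-$ forces the apex set to satisfy $|A_t|=0$, and $z=+$ deletes the (3) condition. Since $\textsf{p}_{xyz}(G)=\min_{(T,\beta)}w_{xyz}(T,\beta)$, it suffices to establish, for every $(T,\beta)$, the pointwise inequalities
$$w_{ij+}(T,\beta)\le w_{ij3}(T,\beta),\qquad w_{iak}(T,\beta)\le w_{i-k}(T,\beta),\qquad w_{gjk}(T,\beta)\le w_{-jk}(T,\beta),$$
since minimizing both sides over all tree decompositions then preserves each inequality.

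For each of the three, I would show that any integer $k$ that witnesses the right-hand width also witnesses the left-hand width, using the same data. Suppose $k$ is such that every torso $G_t$ with $|G_t|>k$ meets the right-hand requirement, via some apex set $A_t\subseteq V(G_t)$ and surface $\Sigma_t$. I claim the same $A_t$ and $\Sigma_t$ meet the left-hand requirement for the same $k$: in the first inequality the right side additionally imposes the (3) condition, which the left side simply omits; in the second the right side forces $A_t=\emptyset$, and $\emptyset$ trivially satisfies $|A_t|\le k$ while every other clause — including the (3) clause, in which ``$-A_t$'' is then vacuous — reads identically; in the third the right side forces $\Sigma_t$ to be planar, and a planar surface has Euler-genus $0\le k$. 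Hence the left-hand width of $(T,\beta)$ is at most $k$; taking $k=w_{\mathrm{RHS}}(T,\beta)$ gives the pointwise bound, and minimizing over $(T,\beta)$ proves the lemma. (No infinite-value edge cases arise, since $w_{xyz}(T,\beta)\le|G|$ always: for $k\ge|G|$ no torso is large enough to trigger the conditions.)

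I do not expect any genuine obstacle: the statement is purely definitional, and the only care needed is bookkeeping — verifying that the $-$/$+$ substitutions interact with Definition \ref{ga3} exactly as claimed (in particular that replacing $a$ by $-$ and then re-choosing $A_t=\emptyset$ leaves the (3) clause unchanged), and that the two remaining coordinates, which are not modified in a given inequality, never enter the argument and so may range freely. I would spell this out once and note it covers all the symbol combinations asserted in the lemma.
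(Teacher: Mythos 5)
Your proof is correct, and since the paper states Lemma \ref{lattice} without proof (introducing it merely as ``trivial relations between these parameters''), your witness-reuse argument is exactly the expected one: fix $(T,\beta)$, observe that replacing $3$ by $+$ deletes a constraint, replacing $-$ by $a$ relaxes $|A_t|=0$ to $|A_t|\le k$ (with $A_t=\emptyset$ remaining a valid choice that leaves the (3) clause unchanged), and replacing $-$ by $g$ relaxes genus $0$ to genus $\le k$; then minimize over $(T,\beta)$. The bookkeeping you flag — that the unmodified coordinates are inert, and that the $a\mapsto -$ substitution interacts correctly with condition (3) via the choice $A_t=\emptyset$ — is indeed the only point requiring care, and you handle it correctly.
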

   A number of significant theorems and algorithms pertaining to these parameters have also been identified:
   \begin{theorem}[\cite{thilikos2022killing}]
       $\textsf{svg} \preceq \textsf{p}_{ga+},\textsf{p}_{ga3}\preceq \textsf{svg}$.
       \label{thmmtotw}
   \end{theorem}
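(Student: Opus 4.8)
The statement packages two separate facts, and I would prove them independently; combined with Lemma~\ref{lattice} they also yield $\textsf{svg}\sim\textsf{p}_{ga+}\sim\textsf{p}_{ga3}$, but only the two displayed relations are needed later.

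\emph{The inequality $\textsf{svg}\preceq\textsf{p}_{ga+}$.} This is the ``obstruction half'' of the dichotomy of \cite{thilikos2022killing}, and the plan is to reduce it to that source. First one checks that $\textsf{p}_{ga+}$ is minor-monotone up to a function: given a tree decomposition of $G$ witnessing $\textsf{p}_{ga+}(G)\le k$ and a minor $H\preceq G$, the induced tree decomposition of $H$ has every torso obtained from a torso of $G$ by vertex/edge deletions and edge contractions, none of which raises the Euler genus and under each of which an apex set of the same size survives, while torsos that shrink below $k$ vertices become vacuous; hence $\textsf{p}_{ga+}(H)\le k$. It then suffices to argue that $\textsf{p}_{ga+}$ of the shallow vortex grid $H_k$ of Definition~\ref{svg} tends to infinity with $k$, since then $H_k\preceq G$ forces $\textsf{p}_{ga+}(G)$ to be large, i.e.\ $\textsf{svg}(G)$ is bounded in terms of $\textsf{p}_{ga+}(G)$. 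Unboundedness of $\textsf{p}_{ga+}(H_k)$ is exactly what \cite{thilikos2022killing} establishes: $H_k$ has treewidth $\Theta(k)$, so any tree decomposition of it either has all torsos small (hence $k$ must itself be $\Omega(\textsf{tw}(H_k))$) or has a large torso carrying essentially the whole cylindrical grid together with the $\Theta(k)$ distance-$2$ chords along its boundary cycle, and such a torso is neither of bounded Euler genus nor planarizable by a bounded apex set. By the reindexing noted in the Remark following Definition~\ref{ga3}, $\textsf{p}_{ga+}$ is the parameter for which \cite{thilikos2022killing} proves this, so I would invoke their result directly.

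\emph{The inequality $\textsf{p}_{ga3}\preceq\textsf{svg}$.} The starting point is the ``structure half'' of \cite{thilikos2022killing}: a graph $G$ with $\textsf{svg}(G)\le t$ has a tree decomposition $(T,\beta)$ in which every torso $G_t$ with $|G_t|$ large admits an apex set $A_t$ of size $\le k:=f(t)$ and a surface $\Sigma_t$ of Euler genus $\le k$ with $G_t-A_t$ embedded on $\Sigma_t$ — this is $\textsf{p}_{ga+}\preceq\textsf{svg}$, i.e.\ the statement with condition $(3)$ deleted. The remaining work, which is the only part not already in \cite{thilikos2022killing}, is to upgrade such a decomposition to satisfy condition $(3)$, at the cost of replacing $k$ by a larger function of $t$. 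Note first that for any edge $dt\in T$ the set $S:=\beta(d)\cap\beta(t)$ is a clique in $G_t$ by the definition of torso, so $S':=S\setminus A_t$ is a clique of $G_t-A_t$ embedded on $\Sigma_t$; as a clique embedded on a surface of Euler genus $\le k$ has $O(\sqrt{k})$ vertices, $|S'|=O(\sqrt k)$. The edge $dt$ violates $(3)$ precisely when $|S'|\ge 4$, or $|S'|=3$ but $S'$ does not bound a face of $\Sigma_t$; to repair it I would move the offending vertices of $S'$ into the apex set $A_t$. The subtle point is that a single torso may have many neighbours in $T$, so one cannot naively add the union of all bad sets $S'$ to $A_t$ without destroying the bound on $|A_t|$; instead I would insert a cleaning step showing that per torso only $O_k(1)$ adhesions can be bad — intuitively, two neighbours whose adhesions meet $\Sigma_t$ in $\ge 4$ common vertices, or in non-facial triangles lying ``in essentially the same place'' of the embedding, can be merged or re-routed using the bounded-genus structure — so that $O_k(1)$ new apices suffice. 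A verification of the bullets of Definition~\ref{ga3} for the modified decomposition then gives $\textsf{p}_{ga3}(G)=O_k(1)$, hence $\textsf{p}_{ga3}\preceq\textsf{svg}$.

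I expect the bookkeeping in the second inequality to be the main obstacle: enforcing condition $(3)$ along \emph{all} edges of $T$ simultaneously while keeping every apex set bounded amounts precisely to proving that each torso has only boundedly many ``bad'' adhesions, and this is where the geometry of the bounded-genus embedding must be used carefully. Everything else is either a citation to \cite{thilikos2022killing} or a routine check against Definitions~\ref{deftw} and~\ref{ga3}.
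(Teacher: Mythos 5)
Your first paragraph, establishing $\textsf{svg}\preceq\textsf{p}_{ga+}$ via minor-monotonicity of $\textsf{p}_{ga+}$ and unboundedness of $\textsf{p}_{ga+}(H_k)$, is fine and matches the intended invocation of the obstruction half of \cite{thilikos2022killing}.

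The second paragraph, however, rests on a premise the paper does not share and that I believe is wrong: that condition $(3)$ is ``the only part not already in \cite{thilikos2022killing}'' and must be installed by a post-hoc repair of an apex--genus decomposition. The paper cites $\textsf{p}_{ga3}\preceq\textsf{svg}$ directly from \cite{thilikos2022killing}, and Theorem~\ref{K7trde}, also quoted from that source, already outputs a decomposition of bounded \emph{$ga3$-width}, i.e.\ with condition $(3)$ built in. This is not an accident: the local-structure constructions underlying such theorems produce adhesions to the surface part that are at most three vertices and facial by the way torsos are carved out along flat triangles, so the $(3)$ clause is a free by-product of the construction rather than a property one has to retrofit. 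The Remark after Definition~\ref{ga3} says the condition is ``introduced into the definitions'' only because the authors chose to record it in the parameter name; it is not a new structural demand.

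Beyond the attribution issue, your proposed repair has a genuine gap. You move vertices of bad adhesions into $A_t$ and assert that ``per torso only $O_k(1)$ adhesions can be bad,'' but you give no argument for this and it is not clear why it should hold: a single torso can have arbitrarily many children in $T$, and a priori arbitrarily many of the corresponding adhesion triangles could fail to bound faces (e.g.\ separating triangles in a planar torso), so the apex set would blow up. Merging or re-routing ``adhesions lying in essentially the same place'' is exactly the kind of delicate surgery that the structure-theorem proof in \cite{thilikos2022killing} performs internally; recreating it from scratch here is a much harder task than a ``routine check'' and you do not carry it out. The correct and short route is to observe that the decomposition delivered by the cited structure theorem already satisfies $(3)$ and cite accordingly, as the paper does.
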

   By combining Lemma \ref{lattice} with Theorem \ref{thmmtotw}, we obtain the following corollary:
   \begin{corollary}
       $\textsf{svg} \sim \textsf{p}_{ga+}\sim \textsf{p}_{ga3}$.
   \end{corollary}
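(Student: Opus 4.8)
The plan is to obtain the statement purely by chaining the bounds already in hand, so the argument is short. First I would record the routine fact that $\preceq$ is transitive and that a pointwise inequality is a special case of it. If $\textsf{p}\preceq\textsf{p}'$ is witnessed by a computable $f$ and $\textsf{p}'\preceq\textsf{p}''$ by a computable $g$, then for every graph $G$ we have $\textsf{p}(G)\le f(\textsf{p}'(G))$ and $\textsf{p}'(G)\le g(\textsf{p}''(G))$; replacing $f$ by its running maximum $\tilde f(n)=\max_{m\le n}f(m)$ (still computable, and nondecreasing with $\tilde f\ge f$) gives $\textsf{p}(G)\le \tilde f(g(\textsf{p}''(G)))$, so $\textsf{p}\preceq\textsf{p}''$ via the computable function $\tilde f\circ g$. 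Also, whenever $\textsf{p}\le\textsf{p}'$ holds pointwise, $\textsf{p}\preceq\textsf{p}'$ holds with $f$ the identity.

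Next I would instantiate Lemma~\ref{lattice} at $i=g$, $j=a$ to get the pointwise inequality $\textsf{p}_{ga+}\le\textsf{p}_{ga3}$, hence $\textsf{p}_{ga+}\preceq\textsf{p}_{ga3}$. Combining this with the two one-directional relations of Theorem~\ref{thmmtotw}, namely $\textsf{svg}\preceq\textsf{p}_{ga+}$ and $\textsf{p}_{ga3}\preceq\textsf{svg}$, transitivity of $\preceq$ closes the three parameters into a cycle:
$$\textsf{svg}\preceq\textsf{p}_{ga+}\preceq\textsf{p}_{ga3}\preceq\textsf{svg}.$$
Reading this cycle starting from each of the three positions yields $\textsf{p}_{ga+}\preceq\textsf{svg}$, $\textsf{svg}\preceq\textsf{p}_{ga3}$, and the reverse directions are already listed, so each of the three parameters is $\preceq$ each of the others. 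By the definition of $\sim$ this gives $\textsf{svg}\sim\textsf{p}_{ga+}$, $\textsf{svg}\sim\textsf{p}_{ga3}$, and $\textsf{p}_{ga+}\sim\textsf{p}_{ga3}$, which is exactly the claimed equivalence.

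There is essentially no obstacle in this corollary: all of the mathematical content sits in Theorem~\ref{thmmtotw} (and thereby in \cite{thilikos2022killing}), while the corollary is just the bookkeeping observation that those two asymmetric bounds, together with the trivial lattice inequality $\textsf{p}_{ga+}\le\textsf{p}_{ga3}$, are enough to force mutual domination. The only point requiring a moment's care is that the definition of $\preceq$ does not demand the witnessing function be nondecreasing, so when composing witnesses one should first pass to running maxima as above — a standard and harmless adjustment.
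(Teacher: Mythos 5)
Your argument is correct and is exactly the (implicit) route the paper takes: instantiate Lemma~\ref{lattice} at $i=g$, $j=a$, $k\in\{3,+\}$ to get the pointwise bound $\textsf{p}_{ga+}\le\textsf{p}_{ga3}$, then chain it with the two halves of Theorem~\ref{thmmtotw} into a $\preceq$-cycle and read off mutual domination. The running-maximum remark for composing witnesses is a correct technical hygiene point (it justifies transitivity of $\preceq$ for non-monotone witnesses), though it is the kind of step the paper silently takes for granted.
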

   Furthermore, for a graph $G$, if $\textsf{svg}(G)$ is given, then the corresponding tree decomposition with bounded $ga3$-width can be constructed in polynomial time:
   \begin{theorem}[\cite{thilikos2022killing}]
       For a graph $H$ which is a minor of $H_k$ and a graph $G$, in $O(f(k)\cdot|G|^3)$ time where $f(k)$ is some computable function, we can find either the fact that $H$ is a minor of $G$, or alternatively a tree decomposition $(T,\beta)$ of $G$ whose $ga3$-width is less than $c(k)$, where $c(k)$ is some computable function that depends only on $k$. 
    \label{K7trde}
   \end{theorem}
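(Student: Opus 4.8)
The plan is to assemble the statement from three ingredients, all available through the machinery of \cite{thilikos2022killing} together with standard minor‑testing results: (i) the constructive structure theorem for graphs excluding a fixed shallow vortex grid as a minor, which is the algorithmic content behind the direction $\textsf{p}_{ga3}\preceq\textsf{svg}$ of Theorem \ref{thmmtotw}; (ii) the $O(|G|^3)$‑time algorithm for deciding whether a fixed graph is a minor of $G$ and, if so, producing a minor model; and (iii) the routine observation that a minor model of $H_k$ in $G$ can be converted into a minor model of any minor $H$ of $H_k$ in $G$. The algorithm we obtain will, in a single pass, either certify $H\preceq G$ or output the desired decomposition.

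Concretely, since $H$ is a minor of $H_k$ we have $|V(H)|\le|V(H_k)|=2k^2$ (Definition \ref{svg}), so any running time of the form $O(g(|V(H)|)\cdot|G|^3)$ is of the form $O(f(k)\cdot|G|^3)$ for a computable $f$. First I would invoke the constructive form of the direction $\textsf{p}_{ga3}\preceq\textsf{svg}$ of Theorem \ref{thmmtotw}: given $G$ it returns, in time $O(f_1(k)\cdot|G|^3)$, either a minor model of $H_k$ in $G$ (so $\textsf{svg}(G)\ge k$ by definition), or a tree decomposition $(T,\beta)$ of $G$ of $ga3$‑width at most $c_1(k)$, where $c_1$ is the computable function witnessing $\textsf{p}_{ga3}\preceq\textsf{svg}$. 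In the first case, fix once and for all (it depends only on $k$) a minor model of $H$ inside $H_k$; then, taking for each $v\in V(H)$ the union of the branch sets of the returned $H_k$‑model over the vertices of $H_k$ lying in the branch set of $v$ in the fixed $H$‑in‑$H_k$ model, one gets a minor model of $H$ in $G$, which we output. In the second case we output $(T,\beta)$. Setting $c(k)=c_1(k)$ and absorbing constants into $f$ completes the argument; a preliminary direct call to the cubic‑time minor test for $H\preceq G$ is possible but unnecessary.

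The substance therefore lies entirely in the constructive structure theorem of \cite{thilikos2022killing}, and that is where I expect the main obstacle. Its non‑algorithmic content is that a graph with no $H_k$‑minor has a tree decomposition in which every torso of size exceeding $c_1(k)$ becomes, after deleting at most $c_1(k)$ apex vertices, a graph embeddable with no crossing in a surface of Euler‑genus at most $c_1(k)$ --- that is, bounded $\textsf{p}_{ga+}$‑width, the part already isolated in \cite{thilikos2022killing}. The delicate extra is the $(3)$ condition of Definition \ref{ga3}: for each adhesion $\beta(d)\cap\beta(t)$ of a large torso one needs $|\beta(d)\cap\beta(t)-A_t|\le 3$, with the triple bounding a face when equality holds. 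Either this is automatic in the decompositions produced by \cite{thilikos2022killing}, or one must enforce it by a local refinement of each large torso: reroute the subgraph of $G$ hanging off an oversized adhesion through a bounded number of newly designated apex vertices, re‑split the decomposition tree accordingly, and check that the face‑bounding property survives. The care required is that a single torso can have many neighbours, so one cannot naively absorb all adhesion vertices into the apex set; the genus, the apex budget, and the adhesion sizes must all stay bounded by a function of $k$ simultaneously, and the whole refinement must run within $O(f(k)\cdot|G|^3)$ time. Once this is in place the two‑branch algorithm above is immediate.
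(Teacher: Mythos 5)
The paper does not prove this theorem; Theorem \ref{K7trde} is cited directly from \cite{thilikos2022killing}, so there is no in-paper proof to compare against. Your reconstruction is sound in outline: run the constructive structure theorem from \cite{thilikos2022killing}; if it reports an $H_k$-minor model of $G$, compose it with a fixed $H$-in-$H_k$ model to certify that $H$ is a minor of $G$; otherwise output the returned tree decomposition. The composition of minor models and the absorption of $|V(H)|\le 2k^2$ into $f(k)$ are both fine.

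The one real gap is that you leave the $(3)$ condition of Definition \ref{ga3} as an unresolved alternative (``Either this is automatic \dots or one must enforce it by a local refinement''). That condition is exactly the distance between the paper's $\textsf{p}_{ga3}$ and what \cite{thilikos2022killing} formally calls $\textsf{p}_{-ga}$ (see the Remark after Definition \ref{ga3}), so a proof has to commit. It is in fact automatic: the almost-embeddable decompositions produced by \cite{thilikos2022killing} carry, from the Graph Minors Structure Theorem and its algorithmic forms, the property that adhesions of a large torso, once the apex set is removed, have size at most $3$ and, when the size is exactly $3$, bound a face of the surface embedding --- the same face-adhesion property that underlies the $\textsf{p}_{--3}$-bounded decompositions of Theorem \ref{K5trde}. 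Your fallback of locally rerouting oversized adhesions through newly designated apex vertices is not a workable substitute as sketched: rerouting can create new large adhesions in neighbouring torsos and gives no a priori bound on the resulting apex count, so there is no reason the refinement terminates with genus, apex budget, and adhesion size all simultaneously bounded by a function of $k$. You should instead assert the face-adhesion property of the \cite{thilikos2022killing} decomposition directly and drop the speculative refinement branch.
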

     A \textit{single crossing graph} is defined as a graph that can be embedded on the plane with at most one crossing. Such graphs exhibit the following property:
     \begin{theorem}[\cite{curticapean2014counting}]
   For a single crossing graph $H$,  there exists a constant $c$ such that for all graphs $G\in fb(\{H\})$, $\textsf{p}_{--3}(G)\le c$. Furthermore, a tree decomposition with $--3$-width less than $c$ can be found in $O(|G|^4)$. 
   \label{K5trde}
       %For a graph $H$ that can be embedded on the plane with at most one crossing, which can be referred to as a single crossing graph, there exists a constant $c$ such that for all graphs $G\in fb(\{H\})$, $\textsf{p}_{--3}(G)\le c$. 
   \end{theorem}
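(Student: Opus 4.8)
The plan is to obtain this as a consequence of the structure theorem for graphs that exclude a single-crossing graph as a minor. Recall (Robertson--Seymour \cite{robertson2004graph}, in the form used by Curticapean \cite{curticapean2014counting}) that for every single-crossing graph $H$ there is a constant $c_H$ such that every $G\in fb(\{H\})$ can be built from pieces, each of which is either planar or of treewidth at most $c_H$, by repeatedly taking clique-sums of order at most $3$; moreover one may assume that whenever a piece is summed to a planar piece along three vertices, those three vertices form a triangle bounding a face of that planar piece in an appropriate embedding. The first step is simply to invoke this decomposition.

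The second step is to read off a tree decomposition from the clique-sum structure. The pieces, together with the adjacency ``$P$ was directly summed to $Q$'', form a tree $T_0$; letting $\beta_0$ send each node to the vertex set of its piece (viewed inside $V(G)$) gives a tree decomposition $(T_0,\beta_0)$ of $G$ in which every intersection $\beta_0(d)\cap\beta_0(t)$ is a clique on at most $3$ vertices, and in each planar piece the intersections of size exactly $3$ bound faces. The torsos $G_t$ are then the pieces with their shared separators completed to cliques; this does not spoil planarity of the planar torsos (the separators there are already facial triangles) nor the treewidth of the bounded-treewidth torsos (up to enlarging $c_H$ by an additive constant).

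The third step handles the bounded-treewidth pieces. For each torso $G_t$ of treewidth at most $c_H$ I would take an optimal-width tree decomposition of $G_t$ (width at most $c_H$) and reconnect the $T_0$-neighbours of $t$ to bags containing the relevant separator --- possible because each separator is a clique and hence sits inside a single bag of any tree decomposition of $G_t$; one then checks the three tree-decomposition axioms for the assembled tree, which is routine. After this refinement every torso is either a planar piece or has at most $c_H+1$ vertices, so setting $c$ to be $c_H$ plus a small constant, every torso with more than $c$ vertices is a planar piece, hence planar and, by the second step, satisfies the $(3)$ condition; thus the resulting tree decomposition has $--3$-width less than $c$, giving $\textsf{p}_{--3}(G)\le c$. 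For the running time: the clique-sum decomposition of the first step is computable in $O(|G|^4)$ time \cite{curticapean2014counting}; the refinements in the third step cost, per piece, linear time in the size of the piece (a bounded-treewidth graph admits an optimal tree decomposition in linear time), the pieces have total size $O(|G|)$, and the remaining bookkeeping is linear. Hence a tree decomposition of $--3$-width less than $c$ is produced in $O(|G|^4)$ time.

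The step I expect to be the main obstacle is verifying the $(3)$ condition, i.e.\ that the $3$-vertex intersections of adjacent bags bound faces in the planar torsos. The bare statement that single-crossing-minor-free graphs are clique-sums of order $\le 3$ of planar and bounded-treewidth graphs only controls the \emph{order} of the sums; one really needs the sharper form in which sums onto a planar part are along facial triangles, and one must check that when several pieces attach to the same planar piece the corresponding triangles can be made facial in a single embedding (possible after re-embedding or splitting the planar piece, but this has to be argued without breaking the tree structure or creating an $H$-minor). A secondary point is confirming that completing the separators to cliques keeps the bounded-treewidth torsos of bounded width and the planar torsos planar --- again relying on the facial-triangle property.
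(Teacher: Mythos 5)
The paper does not prove Theorem~\ref{K5trde}; it is imported directly from Curticapean~\cite{curticapean2014counting}, so there is no in-paper argument to compare against. Your reconstruction is nonetheless the standard derivation: apply the Robertson--Seymour structure theorem for single-crossing-minor-free graphs (clique-sums of order at most $3$ of planar pieces and pieces of treewidth at most $c_H$), read a tree decomposition off the sum tree with the pieces as torsos, refine each bounded-treewidth torso by its own width-$c_H$ tree decomposition (the $3$-clique separators live inside a single bag, so reattachment is safe), and conclude that every torso of size exceeding the resulting constant is planar. You also correctly isolate the one step that is not automatic, namely the facial-triangle form of the $3$-separators: the bare structure theorem only bounds the \emph{order} of the sums, and one must additionally split any planar piece along an internal separating triangle so that, after finitely many such splits, every $3$-separator on a planar piece bounds a face of a single embedding. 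That splitting refines the sum tree, keeps the pieces planar, adds no new minor, and terminates, so it is sound --- but you are right that it is the part a full write-up must spell out, and you are right that it is where a naive reading of the structure theorem falls short. Your running-time accounting ($O(|G|^4)$ for computing the clique-sum decomposition, linear work for the bounded-treewidth refinements and bookkeeping) matches the bound claimed. One minor attribution point: the single-crossing structure theorem is an earlier, more specific Robertson--Seymour result than the Graph Minors~XX paper~\cite{robertson2004graph} that you cite; the latter is used in this paper only for Theorem~\ref{thmminor}.
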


\subsection{Counting problems}\label{preCSP}
    %This article studies counting problems over a Boolean domain, where variables can only take values in $\{0,1\}$. 
      For a string $\alpha=\alpha_1\dots\alpha_k \in \{0,1\}^k$, the \textit{Hamming weight} of $\alpha$ is the number of $1$s in $\alpha$, denoted as $HW(\alpha)$. We use $\overline{\alpha}$ to denote the string that differs from $\alpha$ at every bit, which means $\alpha_i+\overline{\alpha}_i=1$ for each $1\le i\le k$. A \textit{signature}, or a \textit{constraint function}, is a function $f:\{0,1\}^k \to\mathbb{C}$, where $k$ is the \textit{arity} of $f$.  A signature $f$ is said to be \textit{symmetric} if the value of $f$ depends only on the Hamming weight of the input string. A symmetric signature $f$ of arity $k$ can be denoted as $[f_0,f_1,...,f_k]_k$, or simply $[f_0,f_1,...,f_k]$ when $k$ is clear from the context. Here, for $0\le i\le k$, $f_i$ is the value of $f$ when the Hamming weight of the input string is $i$. For $c\in \mathbb{C}$, we also use the notation $c[f_0,f_1,...,f_k]$ to denote the signature $[cf_0,cf_1,...,cf_k]$. We use $\leq_T$ and $\equiv_T$ to respectively denote polynomial-time Turing reduction and equivalence. 
     
     We denote by $f^{x_i=c}$ the signature that pins the $i$th variable to $c\in\{0,1\}$:
     
     $$f^{x_i=c}(x_1,...,x_{i-1},x_{i+1},...,x_{k})=f(x_1,...,x_{i-1},c,x_{i+1},...,x_{k})$$
     
     For a string $\alpha\in \{0,1\}^q, q\le k$, we also define $f^\alpha=f^{x_1=\alpha_1,x_2=\alpha_2,...,x_q=\alpha_q}$, where $\alpha_i$ is the $i$th bit of $\alpha$ for $1\le i\le q$. That is, $f^\alpha$ is obtained from $f$ by pinning the first $q$ bits of $f$ to $\alpha_1,...,\alpha_q$. Equivalently speaking, $\alpha$ can be seen as an assignment to the $q$ variables $\alpha:\{1,\dots,q\}\to \{0,1\}$. In the case of a slight overuse, the notation $f^\alpha$ is sometimes employed to indicate the signature by pinning specific $q$ bits of $f$ to $\alpha_1,...,\alpha_q$ when the $q$ bits are clear from the context. 
     
     In this article, we mainly focus on two frameworks of counting problems: the counting constraint satisfaction problem and the Holant problem.
     
     \subsubsection{\#CSP problems}
       A \textit{counting constraint satisfaction problem} $\text{\#CSP}(\mathcal{F})$ \cite{creignou2001complexity} requires the value of an instance, which is the sum of the values over all configurations. Here, $\mathcal{F}$ is a fixed and finite set of signatures. An instance of $\text{\#CSP}(\mathcal{F})$ is specified as follows:
      
      \begin{definition}
            An \textit{instance $I$ of $\text{\#CSP}(\mathcal{F})$} has $n$ variables and $m$ signatures from $\mathcal{F}$ depending on these variables. The value of the instance then can be written as
            
        $$Z(I)=\sum_{(x_1,...,x_n)\in \{0,1\}^n}\prod_{1\le i\le m} f_i(x_{i_1},...,x_{i_k})$$
        where $f_1,\dots,f_m$ are signatures in $I$ and $f_i$ depends on $x_{i_1},...,x_{i_k}$ for each $1\le i\le m$.
          \label{defCSP}
      \end{definition}
       %A \textit{configuration} of $I$ is an assignment of the values of all the variables. \textit{The value of a configuration} is the product of the values of all the signatures in the instance under the assignment.
        The \textit{underlying graph} of a $\text{\#CSP}(\mathcal{F})$ instance $I$ is a bipartite graph $G=(U,V,E)$, where for every constraint $f$ there is a $u_f\in U$, for every variable $x$ there is a $v_x\in V$, and $(u_f,v_x)\in E$ if and only if $f$ depends on $x$. We remark that the treewidth of $G$ is also denoted as the  incidence treewidth of $I$, and in this article when we refer to treewidth, we always mean incidence treewidth\footnote{In contrast,  a graph $G'=(V,E)$  can also be obtained from a $\text{\#CSP}(\mathcal{F})$ instance $I$, where for every variable $x$ there is a $v_x\in V$, and $(v_x,v_y)\in E$ if and only if they appear together in the scope of a constraint. The treewidth of such $G'$ is denoted as the primal treewidth of $I$, which is of independent interest and will not be used in this article.}. Sometimes we also denote the value $Z(I)$ as $Z(G)$ for convenience. If each constraint function in $\mathcal{F}$ is restricted to be symmetric, we denote this kind of problem as symmetric \#CSP, or sym-\#CSP for short. 
        
        If we restrict the maximum degree of the vertices in $U\cup V$ to be no more than a constant $k$, we denote this kind of problem as \textit{bounded degree} \#CSP, or \textit{maximum degree} $k$ \#CSP.  If we restrict the maximum degree of the vertices in $V$ to be no more than a constant $D$, this kind of problem is denoted as $\#R_D$-CSP in \cite{cai2014complexity}.
        As $\mathcal{F}$ is finite, the maximum arity of signatures in $\mathcal{F}$ is also finite, and we denote this integer as $k$. Consequently, the maximum degree of the vertices in $U$ is no more than $k$ as well. As result, by these definitions we have the following lemma.
        \begin{lemma}
        Suppose the maximum arity of signatures in $\mathcal{F}$ is $k$. For each integer $D\ge1$,
        
        $$\text{maximum degree } D \text{ \#CSP}(\mathcal{F})\le_T \#R_D\text{-CSP}(\mathcal{F})$$
        $$\#R_D\text{-CSP}(\mathcal{F}) \le_T \text{maximum degree } \max{(k,D)} \text{ \#CSP}(\mathcal{F})$$
            \label{RD=bounded}
        \end{lemma}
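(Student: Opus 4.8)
The plan is to show that both inequalities are witnessed by the identity map on instances once the definitions are unwound, so that no gadget construction is needed at all. Recall that the underlying graph of any $\text{\#CSP}(\mathcal{F})$ instance is a bipartite graph $G=(U,V,E)$ in which $U$ indexes the constraints and $V$ indexes the variables; the degree of a variable vertex $v_x\in V$ equals the number of constraints in whose scope $x$ occurs, whereas the degree of a constraint vertex $u_f\in U$ equals the number of variables that $f$ depends on, and is therefore at most the arity of $f$, hence at most $k$ for every instance over $\mathcal{F}$.

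First I would prove $\text{maximum degree } D \text{ \#CSP}(\mathcal{F})\le_T \#R_D\text{-CSP}(\mathcal{F})$. Given an arbitrary instance $I$ of maximum degree $D$ $\text{\#CSP}(\mathcal{F})$, every vertex of $U\cup V$ has degree at most $D$ by definition; in particular every variable vertex of $V$ has degree at most $D$, so $I$ is itself a legal instance of $\#R_D\text{-CSP}(\mathcal{F})$ with the same value $Z(I)$, and a single oracle call returns the answer. Next I would prove $\#R_D\text{-CSP}(\mathcal{F}) \le_T \text{maximum degree } \max(k,D) \text{ \#CSP}(\mathcal{F})$. Given an arbitrary instance $I$ of $\#R_D\text{-CSP}(\mathcal{F})$, every variable vertex of $V$ has degree at most $D$ by definition, and by the observation above every constraint vertex of $U$ has degree at most $k$; hence every vertex of $U\cup V$ has degree at most $\max(k,D)$, so $I$ is itself a legal instance of maximum degree $\max(k,D)$ $\text{\#CSP}(\mathcal{F})$ with the same value, and again one oracle call suffices. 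Both reductions are computable in linear time, make exactly one query, and are value-preserving, so the statements in fact hold with $\le_T$ strengthened to the corresponding many-one reductions.

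I expect no real obstacle here: the lemma is pure bookkeeping about which side of the bipartition has its degree constrained. The only point worth recording is that constraint vertices are automatically of degree at most $k$ whenever $\mathcal{F}$ is finite (which is exactly what makes $k$ well defined), and this is precisely why the second bound must allow the slack $\max(k,D)$ rather than just $D$.
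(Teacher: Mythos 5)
Your proof is correct and is essentially the same argument the paper uses: the paper gives no explicit proof but states immediately before the lemma that constraint vertices automatically have degree at most $k$ since $\mathcal{F}$ is finite, and that the lemma then follows from the definitions. Your identity-map reductions, with the observation that maximum-degree #CSP bounds both sides of the bipartition while $\#R_D$-CSP bounds only the variable side, are exactly the bookkeeping the paper intends.
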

        
        \subsubsection{Holant problems}
        A \textit{Holant problem} $\text{Holant}(\mathcal{F})$ can be seen as a $\text{\#CSP}(\mathcal{F})$ problem with the restriction that all the variables must appear exactly twice. 
        
        \begin{definition}
        An instance of $\text{Holant}(\mathcal{F})$ has an underlying graph $G=(V,E)$. Each vertex $v\in V$ is assigned a signature from $\mathcal{F}$ and each edge in $E$ represents a variable.  Here, $\mathcal{F}$ is a fixed set of signatures and usually finite. The signature assigned to the vertex $v$ is denoted as $f_v$. An assignment of $E$ is a mapping $\sigma:E\to \{0,1\}$, which can also be expressed as an assignment string $\sigma\in \{0,1\}^{|E|}$, and the value of the assignment is defined as
         $$\omega(\sigma)=\prod_{v\in V} f_v(\sigma)$$
        where $f_v(\sigma)=f_v(\sigma({e_{v_1}}),...,\sigma({e_{v_k}}))$ and $v$ is incident to $e_{v_1},...,e_{v_k}$.
        \par The output of the instance, or the value of $G$, is the sum of the values of all possible assignments of $E$, denoted as:
        $$Z(G)=\sum_{\sigma\in \{0,1\}^{|E|}}\omega(\sigma)$$ 
            \label{defHol}
        \end{definition}
        
        Similarly, if each signature in $\mathcal{F}$ is restricted to be symmetric, we denote this kind of problems as symmetric Holant, or sym-Holant for short. If we restrict the maximum degree of the vertices in $V$ to be no more than a constant $k$, we denote this kind of problems as \textit{bounded degree }Holant or \textit{maximum degree }$k$ Holant. 
        
        Furthermore, we use $\text{Holant}(\mathcal{F}_1|\mathcal{F}_2)$ represents $\text{Holant}(\mathcal{F}_1\cup \mathcal{F}_2)$ with the restriction that the underlying graph $G=(U,V,E)$ is bipartite, and each vertex $u\in U$ is assigned a signature from $\mathcal{F}_1$ while each vertex $v\in V$ is assigned a signature from $\mathcal{F}_2$.
         We denote by $\mathcal{EQ}$ the set of all equality functions. In other words, $\mathcal{EQ}=\{=_k|k\ge1\}$ where $=_k$ is the signature $[1,0,...,0,1]_k$. We also denote $\{=_k|1\le k\le D\}$ by $\mathcal{EQ}_{\le D}$ for each integer $D\ge 1$.  
         By definition, we have the following lemma. 
         \begin{lemma}
         Let $\mathcal{C}$ be an arbitrary graph class, $\mathcal{F}$ be an arbitrary signature set and $D\ge 1$ be an integer. Then,
         
             $$\text{\#CSP}(\mathcal{F})\langle \mathcal{C}\rangle\equiv_T\text{Holant}(\mathcal{F}|\mathcal{EQ})\langle \mathcal{C}\rangle$$ 
              $$\#R_D\text{-CSP}(\mathcal{F})\langle \mathcal{C}\rangle\equiv_T\text{Holant}(\mathcal{F}|\mathcal{EQ}_{\le D})\langle \mathcal{C}\rangle$$ 
             \label{lemcsp=hol}
         \end{lemma}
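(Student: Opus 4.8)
The plan is to prove the two equivalences by exhibiting reductions in both directions, and to check in each direction that the graph-class restriction $\langle\mathcal{C}\rangle$ is respected, since $\mathcal{C}$ is an arbitrary (not necessarily minor-closed) graph class. I will focus on the first statement; the second is the same argument with the extra bookkeeping that no equality of arity exceeding $D$ is ever used.

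\textbf{From \#CSP to bipartite Holant.} Given an instance $I$ of $\text{\#CSP}(\mathcal{F})$ with underlying bipartite graph $G=(U,V,E)$, I would construct a $\text{Holant}(\mathcal{F}|\mathcal{EQ})$ instance on the \emph{same} graph $G$: keep the signature $f_u\in\mathcal{F}$ on each $u\in U$, and assign to each variable vertex $v\in V$ the equality signature $=_{\deg(v)}\in\mathcal{EQ}$. An assignment $\sigma:E\to\{0,1\}$ contributes a nonzero product only when $\sigma$ is constant on the edges incident to each $v$, i.e.\ only when $\sigma$ factors through a variable assignment $(x_1,\dots,x_n)\in\{0,1\}^n$; and for such $\sigma$ the product of the $=_{\deg(v)}$ values is $1$, while the product of the $f_u$ values is exactly $\prod_i f_i(x_{i_1},\dots,x_{i_k})$. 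Hence $Z(G)$ as a Holant instance equals $Z(I)$, and since the underlying graph is literally $G$, it stays in $\mathcal{C}$. Conversely, an instance of $\text{Holant}(\mathcal{F}|\mathcal{EQ})\langle\mathcal{C}\rangle$ on a bipartite graph $G=(U,V,E)$, with $\mathcal{F}$-signatures on $U$ and equalities on $V$, is already in the required form: reading each $v\in V$ as a Boolean variable and each $u\in U$ as a constraint gives a $\text{\#CSP}(\mathcal{F})$ instance with the same underlying graph and, by the same collapse of the equality constraints, the same value. Both directions are value-preserving identities on the same graph, which gives $\equiv_T$ (indeed a much stronger equivalence).

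\textbf{The bounded-degree version.} For $\#R_D\text{-CSP}(\mathcal{F})\langle\mathcal{C}\rangle\equiv_T\text{Holant}(\mathcal{F}|\mathcal{EQ}_{\le D})\langle\mathcal{C}\rangle$ the construction is identical, with the single observation that in the \#CSP instance every variable vertex has degree at most $D$, so the equality placed on it has arity at most $D$, i.e.\ lies in $\mathcal{EQ}_{\le D}$; and conversely a Holant instance over $\mathcal{EQ}_{\le D}$ has all $V$-vertices of degree at most $D$, so the resulting \#CSP instance is an $\#R_D$-instance. The underlying graph is unchanged in both directions, so membership in $\mathcal{C}$ is preserved.

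\textbf{Main obstacle.} There is essentially no obstacle here: the lemma is a definitional unfolding, and the only thing to be careful about is that the reduction does \emph{not} modify the underlying graph at all (it only relabels vertices as constraints/variables and fills in the forced equality signatures), so the restriction to $\mathcal{C}$ is trivially respected in both directions — this is the one point where a naive reduction that, say, split high-degree equality vertices into binary-equality trees would fail, because that could take the graph out of $\mathcal{C}$. Keeping the graph fixed is what makes the equivalence hold for an arbitrary graph class.
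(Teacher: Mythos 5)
Your proof is correct and matches the paper's intent exactly: the paper states this lemma as immediate from the definitions (``By definition, we have the following lemma''), and your write-up simply spells out that definitional unfolding — same underlying graph, variable vertices reinterpreted as equality signatures, with the degree/arity bookkeeping for the $\#R_D$ version. Your remark that keeping the graph literally unchanged is what makes the equivalence valid for an arbitrary $\mathcal{C}$ is the right thing to emphasize.
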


         Besides, standard \#PM is exactly $\text{Holant}(\{[0,1,0,...,0]_k|k\in \mathbb{N}^+\})$. Suppose that $G=(V,E)$ is a graph with each vertex of degree $k$ assigned $[0,1,0,...,0]_k$. For each  assignment $\sigma$ of $E$, $\omega(\sigma)$ can only be either 0 or 1 since the value of each signature can only provide a multiplier with value 0 or 1. Furthermore, $\omega(\sigma)=1$ if and only if all edges assigned the value 1 form a perfect matching of $G$. Consequently, we have \#PM$(G)=Z(G)$. In addition, by exchanging 0 and 1, standard \#PM can also be expressed as $\text{Holant}(\{[0,...,0,1,0]_k|k\in \mathbb{N}^+\})$. Similarly, \#PM is exactly $\text{Holant}(\{[0,...,0,1,0]_k|k\in \mathbb{N}^+\}|\{[1,0,c]|c\in \mathbb{C}\})$ since $\{[1,0,c]|c\in \mathbb{C}\}$ is capable of expressing the weight of each edge.

      Suppose that $E'\subseteq E$, $\tau$ is an assignment of $E'$ and $\sigma$ is an assignment of $E$. If for any $e\in E'$, $\sigma(e)=\tau(e)$, then we say $\sigma$ is an extension of $\tau$, denoted by $\sigma\rhd \tau$. This notation can also be extended to strings, as an assignment can be seen as an input string for the instance. The value of $\tau$ is defined as:
      
        $$\omega(\tau)=\sum_{\sigma\rhd \tau}\omega(\sigma)$$
        We also define an instance of $\text{Holant}(\mathcal{F})$ with underlying graph $G^\tau=(V,E-E')$. For each vertex $v\in G^\tau$, if $f_v$ is assigned to $v$ in $G$, then the signature $f_v^{\tau_v}$ is assigned to $v$ in $G^\tau$, where $\tau_v$ is the assignment of all edges incident to $v$ and belong to $E'$, satisfying $\tau\rhd \tau_v$ . By definition, we have:
        
        $$Z(G^\tau)=\omega(\tau)$$
       The value of $G$ can also be written as follows:
       
        $$Z(G)=\sum_{\sigma\in \{0,1\}^{|E|}}\omega(\sigma)=\sum_{\tau\in \{0,1\}^{|E'|}}\sum_{\sigma\rhd \tau}\omega(\sigma)=\sum_{\tau\in \{0,1\}^{|E'|}}Z(G^{\tau})$$
        
        This form implies the following lemma:
        \begin{lemma}
            Suppose that $G=(V,E)$ is an instance of $\text{Holant}(\mathcal{F})$ and each vertex $v\in V$ is assigned a signature from $\mathcal{F}$. Let $E'\subseteq E$ with $|E'|\le c$, where $c$ is a constant. If for arbitrary $\tau\in \{0,1\}^{|E'|}$, $Z(G^{\tau})$ can be computed in polynomial time, then $Z(G)$ can be computed in polynomial time.
            \label{exhaustsum}
        \end{lemma}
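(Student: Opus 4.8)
The plan is to reduce the statement directly to the identity established immediately above it, namely
$$Z(G)=\sum_{\sigma\in \{0,1\}^{|E|}}\omega(\sigma)=\sum_{\tau\in \{0,1\}^{|E'|}}\sum_{\sigma\rhd \tau}\omega(\sigma)=\sum_{\tau\in \{0,1\}^{|E'|}}Z(G^{\tau}).$$
So the entire argument is that a sum of polynomial-time-computable quantities indexed by a \emph{constant-size} set is again polynomial-time computable. First I would observe that because $|E'|\le c$ and $c$ is a fixed constant, the index set $\{0,1\}^{|E'|}$ has size at most $2^{c}$, i.e.\ $O(1)$ many terms, independent of $|G|$.

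Next I would check that, for each fixed $\tau$, the instance $G^{\tau}$ together with its signature assignment can be built from $G$ in polynomial time: by definition $G^{\tau}=(V,E\setminus E')$, so one deletes the (at most $c$) edges of $E'$, and for each vertex $v$ one replaces the assigned signature $f_v$ by its pinning $f_v^{\tau_v}$, where $\tau_v$ is the restriction of $\tau$ to the edges of $E'$ incident to $v$. Since $\mathcal{F}$ is finite, every signature in $\mathcal{F}$ has bounded arity, so each pinning is computed in $O(1)$ time, and the whole construction takes $O(|G|)$ time per choice of $\tau$. Moreover $|G^{\tau}|\le |G|$, so the polynomial-time bound assumed in the hypothesis for computing $Z(G^{\tau})$ is a polynomial in $|G|$, uniformly over the finitely many $\tau$.

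Finally I would conclude: enumerate all $\tau\in\{0,1\}^{|E'|}$ (at most $2^{c}$ of them), for each one construct $G^{\tau}$ and compute $Z(G^{\tau})$ in polynomial time by hypothesis, and output the sum $\sum_{\tau}Z(G^{\tau})=Z(G)$. A constant number of additions of polynomially-computed numbers keeps the total running time polynomial in $|G|$, which proves the claim.

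\textbf{Expected main obstacle.} There is essentially no real obstacle here; the only point requiring a line of care is that the polynomial-time hypothesis on $Z(G^{\tau})$ is applied uniformly across the $O(1)$ values of $\tau$ and that each $G^{\tau}$ has size bounded by $|G|$, so that the overall bound is a single polynomial in $|G|$ rather than a $\tau$-dependent one.
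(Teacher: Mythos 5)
Your proposal is correct and follows exactly the route the paper takes: the paper derives the identity $Z(G)=\sum_{\tau\in\{0,1\}^{|E'|}}Z(G^{\tau})$ and then simply remarks that ``this form implies the following lemma,'' leaving the routine details implicit. Your write-up just fleshes out those details (constant-size index set, polynomial-time construction of each $G^{\tau}$, uniform polynomial bound), all of which are exactly what is needed and nothing more.
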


\subsection{Reduction methods}
This section presents three kinds of common methods used in reductions in counting complexity.

\subsubsection{Constructing gadgets}

  A \textit{gadget} of $\text{Holant}(\mathcal{F})$ has an underlying graph $GG=(V,E,D)$, where $E$ is the set of normal edges and $D$ is the set of edges with only one endpoint, called \textit{dangling edges} \footnote{In order to differentiate from the notation of a graph, we use two capital letters to represent a gadget.}. Each vertex in $GG$ is still assigned a signature from $\mathcal{F}$. A signature $f$ of arity $|D|$ is said to be \textit{realized} by $GG=(V,E,D)$, if for each assignment $\alpha : D \to \{0,1\}$, $f(\alpha)=Z(GG^{\alpha})$. In this case, we also say $f$ can be realized by $\mathcal{F}$. Similarly to Lemma \ref{exhaustsum} we have:
        
        \begin{lemma}
            Suppose $c$ is a constant and $GG=(V,E,D)$ is a gadget of $\text{Holant}(\mathcal{F})$ with $|D|\le c$. If for arbitrary  $\tau : D \to \{0,1\}$, $Z(GG^{\tau})$ can be computed in polynomial time, then the signature realized by $GG$ can be computed in polynomial time.
            \label{exhaustgad}
        \end{lemma}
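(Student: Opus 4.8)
The plan is to mirror the derivation of Lemma~\ref{exhaustsum} (the ``$Z(G)=\sum_\tau Z(G^\tau)$'' identity) but at the level of a gadget rather than a closed instance. Recall that a gadget $GG=(V,E,D)$ realizes a signature $f$ of arity $|D|$ whose value on an assignment $\alpha:D\to\{0,1\}$ is $f(\alpha)=Z(GG^\alpha)$, where $GG^\alpha$ is the Holant instance obtained by pinning the dangling edges according to $\alpha$. So computing ``the signature realized by $GG$'' just means computing the $2^{|D|}$ numbers $Z(GG^\alpha)$ over all $\alpha:D\to\{0,1\}$, and arranging them in a table.

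First I would fix an arbitrary assignment $\alpha:D\to\{0,1\}$ and consider the instance $GG^\alpha$, whose underlying graph is $(V,E)$ (all dangling edges having been pinned). Now I would apply the same ``exhaustive summation over a bounded edge set'' trick as in Lemma~\ref{exhaustsum}: pick any $E'\subseteq E$ with $|E'|\le c$. By definition of the value of a partial assignment, $Z(GG^\alpha)=\sum_{\tau':\,E'\to\{0,1\}} Z((GG^\alpha)^{\tau'})$. Each inner term $Z((GG^\alpha)^{\tau'})$ equals $Z(GG^{\tau})$ where $\tau:D\cup E'\to\{0,1\}$ is the joint assignment agreeing with $\alpha$ on $D$ and with $\tau'$ on $E'$; and $GG^{\tau}$ is exactly the instance formed by pinning \emph{all} of $D\cup E'$, i.e.\ a Holant instance with no dangling edges. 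The hypothesis is precisely that $Z(GG^{\tau})$ is polynomial-time computable for every such $\tau$ (the lemma as stated phrases this for $\tau:D\to\{0,1\}$, but since $|D|+|E'|$ is still bounded by a constant, the hypothesis in its natural reading covers these joint pinnings too; alternatively one takes $E'=\emptyset$ and the statement is immediate). Summing the (constantly many) values $Z(GG^{\tau})$ gives $Z(GG^\alpha)=f(\alpha)$ in polynomial time.

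Finally, since $|D|\le c$ is a constant, there are at most $2^c$ assignments $\alpha:D\to\{0,1\}$; iterating the previous paragraph over all of them produces the full table of values of $f$ in polynomial time, which is what it means to compute the signature realized by $GG$. I do not anticipate a genuine obstacle here — the statement is essentially a repackaging of Lemma~\ref{exhaustsum} with the roles of ``closed instance'' and ``gadget'' exchanged. The only point requiring a word of care is the interplay between the pinning notation $GG^\alpha$ for dangling edges and the partial-assignment value $\omega(\tau)=\sum_{\sigma\rhd\tau}\omega(\sigma)$ for internal edges: one must check that pinning dangling edges and then pinning internal edges agrees with pinning their union at once, which is routine from the definitions of $f_v^{\tau_v}$ and of $G^\tau$ given just before Lemma~\ref{exhaustsum}.
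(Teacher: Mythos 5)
Your proof is correct and lands on the same direct argument the paper invokes (the paper states this lemma with only the remark ``Similarly to Lemma~\ref{exhaustsum} we have:''): since $|D|\le c$, one enumerates all $2^{|D|}\le 2^c$ assignments $\alpha:D\to\{0,1\}$, uses the hypothesis to compute $f(\alpha)=Z(GG^\alpha)$ in polynomial time for each, and tabulates. The detour through an internal edge set $E'$ is unnecessary and, as phrased, not quite licensed --- the hypothesis speaks only of assignments of $D$, not of $D\cup E'$ --- but your own fallback $E'=\emptyset$ is exactly the right (and cleanest) way to state the argument.
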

        By constructing gadgets with existing signatures, we are able to realize desired signatures.

        \begin{lemma}
            If $f$ can be realized by $\mathcal{F}$, then $$\text{Holant}(\mathcal{F})\equiv_T\text{Holant}(\mathcal{F}\cup\{f\})$$  
        \end{lemma}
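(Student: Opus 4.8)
The plan is to prove both directions of the Turing equivalence $\text{Holant}(\mathcal{F})\equiv_T\text{Holant}(\mathcal{F}\cup\{f\})$ separately. The direction $\text{Holant}(\mathcal{F})\le_T\text{Holant}(\mathcal{F}\cup\{f\})$ is immediate, since any instance of $\text{Holant}(\mathcal{F})$ is already an instance of $\text{Holant}(\mathcal{F}\cup\{f\})$ in which $f$ happens not to be used; no oracle calls beyond the trivial one are needed. So the content is in the reverse direction $\text{Holant}(\mathcal{F}\cup\{f\})\le_T\text{Holant}(\mathcal{F})$.

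For that direction, let $G=(V,E)$ be an instance of $\text{Holant}(\mathcal{F}\cup\{f\})$, and let $v_1,\dots,v_r$ be the vertices of $G$ assigned the signature $f$ (each of arity $|D|$, where $GG=(V_G,E_G,D)$ is the gadget realizing $f$). The idea is to replace each such $v_i$ by a private copy of the gadget $GG$: delete $v_i$, add a disjoint copy of $(V_G,E_G)$ with its vertices carrying their signatures from $\mathcal{F}$, and identify the $|D|$ dangling edges of the copy with the $|D|$ edges of $G$ that were incident to $v_i$ (in a fixed correspondence with the arguments of $f$). Call the resulting graph $G'$; every vertex of $G'$ now carries a signature from $\mathcal{F}$, so $G'$ is a legitimate instance of $\text{Holant}(\mathcal{F})$, and $|G'|$ is linear in $|G|$ since $|D|$ and $|GG|$ are constants. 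I would then argue $Z(G')=Z(G)$ by factoring the sum over assignments of $E(G')$: first sum over the internal edges of each gadget copy with the boundary (dangling) edges fixed, which by the defining property of the gadget ($f(\alpha)=Z(GG^{\alpha})$ from Lemma~\ref{exhaustgad}'s setup) reproduces exactly the factor $f(\alpha)$ contributed by $v_i$ in $G$; the remaining sum over the shared edges is then identical in the two instances. Hence one oracle call to $\text{Holant}(\mathcal{F})$ on $G'$ returns $Z(G)$, giving the reduction.

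Replacing all $r$ copies at once requires a short induction (or a direct bookkeeping argument) showing the substitutions are independent: because distinct gadget copies share no internal edges, the partial sums over their interiors factor, and one can peel them off one at a time. This is the step I expect to be the only mildly delicate point — making precise that the assignment string of $G'$ decomposes as (shared edges of $G$) $\times$ (interiors of the $r$ copies), and that reindexing the product $\prod_v f_v$ over this decomposition is legitimate. Everything else is routine: the construction is explicit and polynomial-size, and correctness follows by unwinding the definitions of $Z$ and of ``realized by $GG$''. I would present the single-vertex case carefully and then remark that iterating it $r$ times (each step strictly decreasing the number of $f$-vertices) completes the proof, with the total blow-up still linear in $|G|$.
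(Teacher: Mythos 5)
The paper states this lemma without proof, treating it as a standard fact in the Holant/\#CSP literature, so there is no paper argument to compare against. Your proposed proof is the standard gadget-substitution argument and is correct: the trivial direction is containment, and for the nontrivial direction $\text{Holant}(\mathcal{F}\cup\{f\})\le_T\text{Holant}(\mathcal{F})$, replacing each $f$-vertex with a private copy of the realizing gadget and factoring the sum over the (disjoint) gadget interiors, using $f(\alpha)=Z(GG^{\alpha})$, gives $Z(G')=Z(G)$ in one oracle call with only linear blow-up. The "mildly delicate" factoring step you flag is exactly the right thing to spell out, and the one-vertex-at-a-time induction you sketch handles it cleanly.
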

        
        Also, we present some derivative concepts related to the concept of a gadget. A \textit{left-side gadget} of $\text{Holant}(\mathcal{F}_1|\mathcal{F}_2)$ has a bipartite underlying graph $GG=(U,V,E,D)$, where each vertex $u\in U$ is assigned a signature from $\mathcal{F}_1$, each vertex $v\in V$ is assigned a signature from $\mathcal{F}_2$, $E$ is the set of normal edges and $D$ is the set of dangling edges. Furthermore, the endpoint of each dangling edge must belong to $U$. It is easy to verify that, if $f$ can be realized by $GG$, then $\text{Holant}(\mathcal{F}_1|\mathcal{F}_2)\equiv_T\text{Holant}(\mathcal{F}_1\cup\{f\}|\mathcal{F}_2)$. The \textit{right-side gadget} is defined similarly except that the endpoint of each dangling edge must belong to $V$.
        
        Besides, for a graph $G=(V,E)$ and $V'\subseteq V$, a \textit{gadget induced by} $V'$ is a gadget $GG=(V',E_{V'},D_{V'})$, where $E_{V'}=\{uv\in E|u,v\in {V'}\}$ and $D_{V'}$ contains a dangling edge connecting to $u\in {V'}$ for each edge in $\{uv\in E|u\in {V'},v\in V-{V'}\}$.  

        \subsubsection{Polynomial interpolation}

 Another crucial technique for proving $\text{\#P}$-hardness of a problem is polynomial interpolation. This method usually involves the construction of a series of gadgets, which are used to simulate a specific signature. For example, we can prove the following lemma through this method:
        \begin{lemma}
            Suppose that there is a sequence of signatures $f_1,f_2,...,f_k,...$, where $f_k=[1,0,f(k)]$ and $f:\mathbb{N}^+\to \mathbb{C}$ is a computable function satisfying $f(i)\neq f(j)$ whenever $i\neq j$. If for each $k\in \mathbb{N}^+$, $f_k$ can be realized by a gadget $GG_k$ of $\text{Holant}(\mathcal{F})$ of size $poly(k)$, then for any $c\in\mathbb{C}$, 
            
            $$\text{Holant}(\mathcal{F}\cup\{[1,0,c]\})\le_T\text{Holant}(\mathcal{F})$$
        \label{interp}
        \end{lemma}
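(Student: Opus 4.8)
The plan is to use a standard polynomial interpolation argument via a linear recurrence. First I would fix an instance $\Omega$ of $\text{Holant}(\mathcal{F}\cup\{[1,0,c]\})$ and let $m$ denote the number of vertices assigned the signature $[1,0,c]$; I would assume these all have arity $2$ (the general case is analogous, or one notes $[1,0,c]$ is symmetric of arity $2$ by the hypothesis $f_k=[1,0,f(k)]$). For each $k\in\mathbb{N}^+$, let $\Omega_k$ be the instance of $\text{Holant}(\mathcal{F})$ obtained from $\Omega$ by replacing every occurrence of the vertex labelled $[1,0,c]$ with a copy of the gadget $GG_k$ realizing $f_k=[1,0,f(k)]$. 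Since $|GG_k|=poly(k)$ and $m\le|\Omega|$, each $\Omega_k$ has size $poly(k,|\Omega|)$ and can be written down in polynomial time, so a single oracle call to $\text{Holant}(\mathcal{F})$ computes $Z(\Omega_k)$ in polynomial time.

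Next I would set up the interpolation. Classify each assignment $\sigma$ of the edges of $\Omega$ by the number $j$ of the $m$ distinguished arity-$2$ vertices whose two incident edges both receive the value $1$ under $\sigma$ (at such a vertex the signature $[1,0,c]$ contributes the factor $c$, while at the remaining $m-j$ distinguished vertices whose incident edges are both $0$ it contributes $1$; any assignment giving a distinguished vertex a $01$ or $10$ pattern contributes $0$ and may be discarded). Grouping assignments by this count, write $Z(\Omega)=\sum_{j=0}^{m}c^{j}a_j$, where $a_j$ is the sum, over all surviving assignments with exactly $j$ of the distinguished vertices in the ``all-ones'' state, of the product of the remaining signature values (with the $[1,0,c]$ factors stripped off). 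The point is that the same coefficients $a_j$ govern $\Omega_k$: replacing the distinguished vertex by $GG_k$ multiplies the corresponding local contribution by $f(k)$ instead of $c$, so
\begin{equation}
Z(\Omega_k)=\sum_{j=0}^{m}f(k)^{j}a_j.
\end{equation}

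Now I would invoke the Vandermonde structure. Take $k=1,2,\dots,m+1$; since $f$ is injective, the values $f(1),\dots,f(m+1)$ are pairwise distinct, so the $(m+1)\times(m+1)$ matrix $\bigl(f(k)^{j}\bigr)_{1\le k\le m+1,\,0\le j\le m}$ is a nonsingular Vandermonde matrix. Having computed $Z(\Omega_1),\dots,Z(\Omega_{m+1})$ by $m+1$ oracle calls, I solve the resulting linear system in polynomial time to recover all the $a_j$, and then output $\sum_{j=0}^m c^j a_j = Z(\Omega)$. This gives the desired polynomial-time Turing reduction $\text{Holant}(\mathcal{F}\cup\{[1,0,c]\})\le_T\text{Holant}(\mathcal{F})$.

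\textbf{Main obstacle.} The genuinely delicate point is not the linear algebra but the bookkeeping that $Z(\Omega_k)$ really is governed by the \emph{same} coefficients $a_j$ as $Z(\Omega)$ — i.e. that substituting the gadget $GG_k$ for a degree-$2$ vertex behaves multiplicatively and is compatible with the decomposition by local state. This is where one uses the realization property $f_k(\alpha)=Z(GG_k^{\alpha})$ together with the ``extension''/$Z(G^\tau)$ bookkeeping from the paragraph preceding Lemma~\ref{exhaustsum}: the contribution of each fixed boundary pattern on the two dangling edges of $GG_k$ equals the value of $f_k$ at that pattern, so summing over boundary patterns exactly reproduces the $[1,0,f(k)]$-vertex, uniformly across all global assignments. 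Once that is spelled out, the distinctness of the $f(k)$ and nonsingularity of the Vandermonde matrix finish the argument; the size bound $|GG_k|=poly(k)$ is what keeps everything polynomial-time.
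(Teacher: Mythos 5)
Your proof is correct and follows essentially the same route as the paper's: replace each $[1,0,c]$ vertex with the gadget $GG_k$, observe that $Z(\Omega_k)=p(f(k))$ for a fixed polynomial $p$ of degree at most $m$, and invert the Vandermonde system arising from the distinctness of the $f(k)$. You are somewhat more explicit in deriving the coefficients $a_j$ from the local $01/10$-killing structure of $[1,0,c]$ (and you correctly take $m+1$ sample points where the paper loosely writes $n$), but there is no substantive difference in approach.
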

        \begin{proof}
            For any instance $G$ of $\text{Holant}(\mathcal{F}\cup\{[1,0,c]\})$, suppose that the signature $[1,0,c]$ appears $n$ times in $G$. For an instance $G_x$ obtained by replacing every $[1,0,c]$ signature in $G$ with $[1,0,x]$, where $x$ can be seen as a variable, $p(x)=Z(G_x)$ forms a polynomial of $x$ of degree at most $n$. We now construct an instance $G_k$ of  $\text{Holant}(\mathcal{F})$ for each $1\le k\le n$ by replacing every $[1,0,c]$ signature in $G$ with a gadget $GG_k$. As $GG_k$ is of size $poly(k)$, the size of $G_k$ is of size $poly(n)$ for each $k$. $Z(G_k)$ can be computed for each $1\le k\le n$ by using the oracle that computes every instance in $\text{Holant}(\mathcal{F})$. By definition, $Z(G_k)=p(f(k))$ for each $1\le k\le n$, and since for each $1\le i< j\le n, f(i)\neq f(j)$, the coefficient matrix is Vandermonde and we can solve the coefficients of $p(x)$ using these equations. As $Z(G)=p(c)$, $Z(G)$ can also be computed.
        \end{proof}
        % \begin{lemma}
         %   Suppose that there is a sequence of signatures $f_1,f_2,...,f_k,...$, where $f_k=[1,0,f(k)]$ and $f:\mathbb{N}^+\to \mathbb{C}$ is a computable function satisfying $f(i)\neq f(j)$ whenever $i\neq j$. If for each $k\in \mathbb{N}^+$, $f_k$ can be realized by a gadget $GG_k$ of $\text{Holant}(\mathcal{F})$ of size $poly(k)$,  then for any $c\in\mathbb{C}$, 
          %  $$\text{Holant}(\mathcal{F}\cup\{[1,0,c]\})\le_T\text{Holant}(\mathcal{F})$$
       % \label{interp}
        %\end{lemma}
       % \begin{proof}
        %    For any instance $G$ of $\text{Holant}(\mathcal{F}\cup\{[1,0,c]\})$, suppose that the signature $[1,0,c]$ appears $n$ times in $G$. We construct an instance $G_k$ of  $\text{Holant}(\mathcal{F})$ for each $1\le k\le n$ by replacing every $[1,0,c]$ signature in $G$ with a gadget $GG_k$. As $GG_k$ is of size $poly(k)$, the size of $G_k$ is of size $poly(n)$ for each $k$. $Z(G_k)$ can be computed for each $1\le k\le n$ by using the oracle that computes every instance in $\text{Holant}(\mathcal{F})$. We also construct an instance $G_x$ by replacing every $[1,0,c]$ signature in $G$ with $[1,0,x]$ where $x$ can be seen as a variable. $p(x)=Z(G_x)$ forms a polynomial of $x$ of degree at most $n$. By definition, $Z(G_k)=p(f(k))$ for each $1\le k\le n$, and since for each $1\le i< j\le n, f(i)\neq f(j)$, the coefficient matrix is Vandermonde and we can solve the coefficients of $p(x)$ using these equations. As $Z(G)=p(c)$, $Z(G)$ can also be computed.
       % \end{proof}
       \begin{corollary}\label{corointerp}
           Suppose that $a\in \mathbb{C}$ is not a root of unity. If $[1,0,a]$ can be realized by a gadget $GG_1$ of $\text{Holant}(\mathcal{F})$,  then for any $c\in\mathbb{C}$, 
           
            $$\text{Holant}(\mathcal{F}\cup\{[1,0,c]\})\le_T\text{Holant}(\mathcal{F})$$
       \end{corollary}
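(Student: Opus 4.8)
The plan is to deduce Corollary~\ref{corointerp} from Lemma~\ref{interp} by constructing, from the single gadget $GG_1$ realizing $[1,0,a]$, an infinite family of gadgets $GG_k$ realizing signatures $[1,0,a^{e_k}]$ for a strictly increasing sequence of exponents $e_k$, and then checking that the hypotheses of Lemma~\ref{interp} are met. The natural construction is iterated composition: define $GG_k$ by connecting $k$ disjoint copies of $GG_1$ in series, identifying the two dangling edges so that the resulting gadget again has exactly two dangling edges. Concretely, if $GG_1$ has dangling edges $d_1,d_2$, then in the chain we merge the $d_2$ of the $i$th copy with the $d_1$ of the $(i+1)$st copy into an ordinary edge, leaving $d_1$ of the first copy and $d_2$ of the last copy as the two dangling edges of $GG_k$.

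The key computation is that series composition of $[1,0,a]$ with itself multiplies the ``$a$'' entry: evaluating the composed gadget on a boundary assignment $(x,y)\in\{0,1\}^2$ sums over the internal edge value $z\in\{0,1\}$ the product $[1,0,a](x,z)\cdot[1,0,a](z,y)$, which is $1$ if $x=y=0$, equals $a$ if $x=y=1$ (only $z=1$ contributes, giving $a\cdot a$... wait — more carefully, $[1,0,a]$ as a $2\times 2$ matrix is $\mathrm{diag}(1,a)$, and matrix product gives $\mathrm{diag}(1,a^2)$), and $0$ otherwise. Hence the $k$-fold chain realizes exactly $[1,0,a^k]$, so we set $f(k)=a^k$ and $e_k=k$. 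Since $a$ is not a root of unity (and $a\neq 0$, as $a=0$ would make $a$ a degenerate non-unit but then $[1,0,a]$ is already in a trivial family — one should note $a\neq0$ is needed and holds since $0$ is conventionally not counted, or handle it separately), the powers $a^i$ are pairwise distinct: $a^i=a^j$ with $i<j$ would force $a^{j-i}=1$, contradicting that $a$ is not a root of unity. The function $f(k)=a^k$ is clearly computable. Finally, $GG_k$ is the union of $k$ copies of the fixed finite gadget $GG_1$, so $|GG_k|=k\cdot|GG_1|=O(k)$, which is $\mathrm{poly}(k)$.

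With these facts in hand, Lemma~\ref{interp} applies verbatim to the sequence $f_k=[1,0,a^k]$ and yields $\text{Holant}(\mathcal{F}\cup\{[1,0,c]\})\le_T\text{Holant}(\mathcal{F})$ for every $c\in\mathbb{C}$, as desired. I do not expect a serious obstacle here; the only point requiring a little care is the bookkeeping of the series composition — verifying that after merging dangling edges the result is still a legitimate gadget of $\text{Holant}(\mathcal{F})$ with exactly two dangling edges, and that the partition-function value of the chain indeed equals the product of the corresponding $2\times 2$ diagonal matrices (equivalently, that no parallel edges or self-loops are accidentally created, which can be avoided by subdividing if necessary or by noting $GG_1$ can be assumed to have its two dangling edges at distinct vertices, or handled by the standard convention). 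The root-of-unity hypothesis is used in exactly one place, to guarantee the distinctness $f(i)\neq f(j)$ that Lemma~\ref{interp} requires.
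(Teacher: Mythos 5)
Your proposal is correct and matches the paper's proof: the paper likewise builds $GG_k$ by chaining $k$ copies of $GG_1$ (phrased there as a path $LP_k$ of vertices carrying $[1,0,a]$, each then replaced by $GG_1$), obtaining $[1,0,a^k]$ with pairwise-distinct entries and invoking Lemma~\ref{interp}. The side remark you raise about $a=0$ is a fair observation about a degenerate corner of the hypothesis ``not a root of unity,'' but it affects the paper's argument equally and is outside the intended use of the corollary.
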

       \begin{proof}
           Suppose $LP_k$ is a path of length $k$ and has a dangling edge on each end. If each vertex in $LP_k$ is assigned a $[1,0,a]$ signature, $LP_k$ becomes a gadget of signature $[1,0,a^k]$. Since $a$ is not a root of unity, for each $i,j\in \mathbb{N}^+$ where $i\neq j$, we have $a^i\neq a^j$. Thus by replacing each vertex in $LP_k$ with the gadget $GG_1$, we obtain $GG_k$ satisfying the condition in Lemma \ref{interp}.
       \end{proof}
        \par In the rest part of this article, we sometimes omit the proof of polynomial interpolation if the proof is similar to Lemma \ref{interp}. Instead, we will provide a succinct overview of the proof and highlight the particulars that we should pay attention to.  

\subsubsection{Holographic Transformation}
     Let $T$ be a binary signature, and we denote the two dangling edges corresponding to the input variables of it as a left edge and a right edge. Its value then can be written as a matrix $T=\begin{pmatrix}
          t_{00}&t_{01}\\
          t_{10}&t_{11}
      \end{pmatrix}$, where $t_{ij}$ is the value of $T$ when the value of left edge is $i$ and that of the right edge is $j$.

    This notation is conducive to the efficient calculation of the gadget's value. Let us consider two binary signatures, $T$ and $P$, with the right edge of $T$ connected to the left edge of $P$. $T$ and $P$ now form a binary gadget. Subsequently, it can be demonstrated that the value of the resulting gadget is precisely $TP$, which represents the matrix multiplication of $T$ and $P$.
      
      For a signature $f$ of arity $n$ and a binary signature $T$, we use $Tf/fT$ to denote the signature ``$f$ transformed by $T$'', which is a signature of arity $n$ obtained by connecting the right/left edge of $T$ to every dangling edge of $f$. For a set $\mathcal{F}$ of signatures, we also define $T\mathcal{F}=\{Tf|f\in \mathcal{F}\}$. Similarly we define $\mathcal{F}T$. The following theorem demonstrates the relationship between the initial and transformed problems: 
      \begin{theorem}[{Holographic Transformation}{\cite{valiant2008holographic,cai2007valiant}}]
          $Holant(\mathcal{F}|\mathcal{G}) \equiv_T Holant(\mathcal{F}T^{-1}|T\mathcal{G})$
          \label{thmHT}
      \end{theorem}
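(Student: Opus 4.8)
The plan is to prove the stronger statement that a single invertible transformation of instances, computable in linear time, preserves the value $Z$ exactly; polynomial-time Turing equivalence (in both directions) then follows at once. Everything rests on one computation, which I would record first: if a binary signature $A$ has its right edge identified with the left edge of a binary signature $B$, the resulting binary gadget realizes the matrix product $AB$; more generally, attaching the right edge of a fresh copy of $T$ to each of the $k$ dangling edges of a signature $h$ of arity $k$ realizes $(Th)(x_1,\dots,x_k)=\sum_{y_1,\dots,y_k}\bigl(\prod_{i=1}^{k}T_{x_i y_i}\bigr)\,h(y_1,\dots,y_k)$, the tensor action of $T^{\otimes k}$, and symmetrically for $hT$ using the left edge. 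In particular the two-vertex gadget on $T^{-1}$ and $T$ realizes $T^{-1}T=I$, which is the binary equality $=_2$; and since replacing a subgraph by the signature it realizes never changes $Z$, such a gadget may be spliced onto, or contracted off, any edge freely.

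Next I would perform the transformation. Take an instance $\Omega$ of $\text{Holant}(\mathcal{F}|\mathcal{G})$ with bipartite underlying graph $G=(U,V,E)$, each $u\in U$ carrying $f_u\in\mathcal{F}$ and each $v\in V$ carrying $g_v\in\mathcal{G}$, and subdivide every edge $e\in E$ (with endpoints $u_e\in U$, $v_e\in V$) into a path $u_e - a_e - b_e - v_e$, putting $T^{-1}$ on $a_e$ with $(u_e,a_e)$ its left edge and $(a_e,b_e)$ its right edge, and $T$ on $b_e$ with $(a_e,b_e)$ its left edge and $(b_e,v_e)$ its right edge. Summing out the variable on each $(a_e,b_e)$ contracts the $a_e$--$b_e$ gadget to $=_2$, so the subdivided instance still has value $Z(\Omega)$. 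Now regroup the summation in the opposite order: for each $u\in U$, summing over the variables on the edges $(u,a_e)$ against $f_u$ merges $u$ with its $a_e$'s into one vertex carrying $f_uT^{-1}\in\mathcal{F}T^{-1}$ (left edge of each $T^{-1}$ toward $f_u$, right edge left free); for each $v\in V$, summing over the variables on the edges $(b_e,v)$ against $g_v$ merges $v$ with its $b_e$'s into one vertex carrying $Tg_v\in T\mathcal{G}$ (right edge of each $T$ toward $g_v$, left edge left free). The surviving variables are precisely the $a_e$--$b_e$ edges, which form a bipartite graph between the $\mathcal{F}T^{-1}$-vertices and the $T\mathcal{G}$-vertices, i.e.\ an instance $\Omega'$ of $\text{Holant}(\mathcal{F}T^{-1}|T\mathcal{G})$ with $Z(\Omega')=Z(\Omega)$. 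For the converse, run this backwards: given an instance of $\text{Holant}(\mathcal{F}T^{-1}|T\mathcal{G})$, un-merge each vertex into its defining gadget, observe that every original edge has become a right-of-$T^{-1}$ to left-of-$T$ join realizing $=_2$, contract those, and obtain an instance of $\text{Holant}(\mathcal{F}|\mathcal{G})$ of equal value. Both directions are linear-time, giving $\text{Holant}(\mathcal{F}|\mathcal{G})\equiv_T\text{Holant}(\mathcal{F}T^{-1}|T\mathcal{G})$.

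The only genuinely delicate point, and hence the main obstacle, is the bookkeeping of the left-edge/right-edge conventions, i.e.\ keeping $Tf$ and $fT$ straight throughout. One must check that the merge on the $U$-side yields $f_uT^{-1}$ and the merge on the $V$-side yields $Tg_v$ --- not some transpose of these --- so that the orientations of all the $a_e$--$b_e$ edges are mutually consistent and the regrouped instance genuinely lies in $\text{Holant}(\mathcal{F}T^{-1}|T\mathcal{G})$ rather than a variant of it. Once the indices are aligned with the definitions of $Tf$ and $fT$ given above, the identity $\sum_k(T^{-1})_{ik}T_{kj}=\delta_{ij}$ forces the two $Z$-values to coincide term by term, and no further estimates are needed.
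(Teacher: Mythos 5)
Your proof is correct and is essentially the standard argument of Valiant and Cai--Lu for the Holant theorem: insert a $T^{-1}T$ pair realizing $=_2$ on every edge, then regroup the sums to absorb the $T^{-1}$ copies into the $\mathcal{F}$-side and the $T$ copies into the $\mathcal{G}$-side, with the left/right-edge conventions checked so that one genuinely obtains $f_uT^{-1}$ and $Tg_v$. The paper itself states Theorem~\ref{thmHT} as a cited preliminary from \cite{valiant2008holographic,cai2007valiant} without reproducing a proof, so there is no in-paper argument to compare against; your write-out matches the proof those references give.
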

      Let $H_2=\begin{pmatrix}
          1&1\\
          1&-1
      \end{pmatrix}$. For a set of signatures $\mathcal{F}$, we use $\widehat{\mathcal{F}}$ to denote $H_2\mathcal{F}$. As $H_2^{-1}=\frac{1}{2}H_2$, by Theorem \ref{thmHT} we have:
      
      $$Holant(\mathcal{F}|\mathcal{G}) \equiv_T Holant(\widehat{\mathcal{F}}|\widehat{\mathcal{G}})$$

     We additionally present the following fact as a lemma for future reference.
     \begin{lemma}
         For each $k\ge 1$, $\widehat{=_k}=[1,0,1,0,1,0,\dots]_k$. For example, $\widehat{=_1}=[1,0], \widehat{=_2}=[1,0,1], \widehat{=_3}=[1,0,1,0]$. Consequently, $\widehat{\mathcal{EQ}}= \{[1,0,1,0,1,0,\dots]_k| k\ge 1\}$ and for an integer $D\ge 1$, $\widehat{\mathcal{EQ}_{\le D}}= \{[1,0,1,0,1,0,\dots]_k| 1\le k\le D\}$.
         \label{transEQ}
     \end{lemma}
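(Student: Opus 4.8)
The plan is to unwind the definition of the holographic transformation $\widehat{\cdot} = H_2(\cdot)$ directly on the equality signature. Recall that $\widehat{f} = H_2 f$ is the arity-$k$ signature obtained by attaching the right edge of $H_2$ to each of the $k$ dangling edges of $f$; spelled out, for an input $\alpha \in \{0,1\}^k$,
$$\widehat{=_k}(\alpha) = \sum_{\beta \in \{0,1\}^k} \Bigl(\prod_{i=1}^k (H_2)_{\alpha_i,\beta_i}\Bigr)\, (=_k)(\beta),$$
where $(H_2)_{a,b}$ is the entry of $H_2$ in row $a$ and column $b$. First I would use $(=_k) = [1,0,\dots,0,1]_k$ to collapse this sum to its only two nonzero terms, namely $\beta = 0^k$ and $\beta = 1^k$.

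The key step is the elementary observation that $(H_2)_{a,0} = 1$ for both $a\in\{0,1\}$, whereas $(H_2)_{a,1} = (-1)^a$. Substituting,
$$\widehat{=_k}(\alpha) = \prod_{i=1}^k (H_2)_{\alpha_i,0} + \prod_{i=1}^k (H_2)_{\alpha_i,1} = 1 + (-1)^{HW(\alpha)},$$
which equals $2$ when $HW(\alpha)$ is even and $0$ when it is odd. Hence $\widehat{=_k}$ depends only on $HW(\alpha)$, i.e.\ it is symmetric, and in the $[\,\cdot\,]_k$ notation it is $2[1,0,1,0,1,0,\dots]_k$. Since a signature and any nonzero scalar multiple of it are interchangeable in Holant problems (this is the same convention already used when the displayed equivalence $Holant(\mathcal{F}|\mathcal{G})\equiv_T Holant(\widehat{\mathcal{F}}|\widehat{\mathcal{G}})$ is read off from Theorem~\ref{thmHT} together with $H_2^{-1}=\tfrac{1}{2}H_2$), this yields $\widehat{=_k} = [1,0,1,0,1,0,\dots]_k$; specializing to $k=1,2,3$ recovers the three stated examples.

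The two ``consequently'' assertions then need nothing further: $\widehat{\cdot}$ acts on a set signature by signature, so $\widehat{\mathcal{EQ}} = \{\widehat{=_k} \mid k \ge 1\} = \{[1,0,1,0,1,0,\dots]_k \mid k\ge 1\}$, and restricting the index range to $1 \le k \le D$ gives the statement for $\widehat{\mathcal{EQ}_{\le D}}$. I do not expect a genuine obstacle here, as the claim is essentially a one-line tensor identity; the only points meriting a sentence of care are matching the combinatorial phrase ``attach $H_2$ to every dangling edge'' with the index-sum formula above (equivalently, using $H_2^{\otimes k}(u^{\otimes k}) = (H_2 u)^{\otimes k}$, linearity, and $(=_k) = e_0^{\otimes k} + e_1^{\otimes k}$), and being explicit that the global scalar $2$ (or a power of $\sqrt{2}$ under a normalized Hadamard convention) is harmless.
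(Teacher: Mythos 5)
Your computation is correct and is the natural verification of the lemma, which the paper itself leaves unproved. One small point worth making explicit: with the paper's unnormalized $H_2$, the literal result is $\widehat{=_k}(\alpha) = 1 + (-1)^{HW(\alpha)} \in \{0,2\}$, so $\widehat{=_k} = 2[1,0,1,0,\dots]_k$ rather than $[1,0,1,0,\dots]_k$ on the nose; you rightly observe that the global scalar is harmless (it is the same normalization slack already absorbed when the paper passes from $\mathcal{F}H_2^{-1}$ to $\widehat{\mathcal{F}}$ using $H_2^{-1}=\tfrac12 H_2$), and this matches how the lemma is actually used downstream.
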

\subsection{The complexity of \#CSP}\label{sec:precspdicho}
        We begin with the introduction of some special signature sets.
        \subsubsection{Tractable signature sets}
        \begin{definition}
        A signature $f$ with arity $k$ is of \textit{affine type} if it has the form:
        
        $$\lambda\chi_{AX=b}\cdot \mathfrak{i}^{\sum_{1\le i\le n}a_i{x_i}^2+2\sum_{1\le i< j\le n} b_{ij}x_ix_j}$$
        where $\mathfrak{i}$ denotes the imaginary unit, $a_i,b_{ij}\in\{0,1\},\lambda\in \mathbb{C}$, $AX=b$ is a system of linear equations on $\mathbb{Z}_2$ and
        \begin{equation}
            \chi_{AX=b}=\begin{cases}
                1, & \text{if } AX=b; \\
                0, & \text{otherwise}.\notag
            \end{cases}
        \end{equation}
        \par $\mathscr{A}$ denotes the set of all the signatures of affine type.
        \label{defA}
    \end{definition}
       \begin{definition}
        A signature $f$ is an $\mathcal{E}$\textit{-signature} if it has value 0 except for 2 complementary supports. A signature $f$ with arity $k$ is of \textit{product type} if it can be expressed as a tensor product of $\mathcal{E}$-signatures.
       
        \par $\mathscr{P}$ denotes the set of all the signatures of product type.
         \label{defP}
    \end{definition}
\subsubsection{Permutable matchgate signatures}
Permutable matchgate signatures are highly related to the tractable cases in \#CSP$\langle \mathcal{PL}\rangle$. 
 \begin{definition}
          A matchgate is a planar graph $G=(V,E)$ with weighted edges $w:E\to \mathbb{C}$, and together with some external nodes $U\subseteq V$ on its outer face labeled by $\{1,2,...,|U|\}$ in a clockwise order. The signature $f$ of a matchgate $G$ is a Boolean signature of arity $|U|$ and for each $\alpha\in \{0,1\}^{|U|}$,
          
          $$f(\alpha)=\#PM(G-X),$$
          where $X\subseteq U$ and a vertex in $U$ with label $i$ belongs to $X$ if and only if the $i$th bit of $\alpha$ is 1.
          %In other words, if each vertex in $G$ of degree $k$ is assigned $[0,1,0,...,0]_k$, $f(\alpha)=Z(G^\alpha)=\#PM(G-X)$.
          \par A signature $f$ is a matchgate signature if it is the signature of some matchgate.
           $\mathscr{M}$ denotes all the matchgate signatures.
          \label{defMG}
      \end{definition}

\begin{definition}
    Suppose $f$ is a signature of arity $n$. For a permutation $\pi\in S_n$, we use $f_\pi$ to denote the signature
    
$$f_\pi(x_1,\dots,x_n)=f(\pi(x_1),\dots,\pi(x_n))$$
If for each $\pi\in S_n$, $f_\pi$ is a matchgate signature, we say $f$ is a permutable matchgate signature.

We use $\mathscr{M}_{P}$ to denote the set of all the permutable matchgate signatures.
\label{def:MP}
\end{definition}

The following properties can be easily verified.
  \begin{lemma}\label{lemPin}
        If $f$ is a permutable matchgate signature of arity $k$, then for arbitrary $1\le p\le k$ and $\alpha\in \{0,1\}^p$, $f^\alpha$ is also a permutable matchgate signature.
    \end{lemma}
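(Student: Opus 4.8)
The plan is to reduce the claim to the two defining closure properties of $\mathscr{M}_P$: that the underlying class $\mathscr{M}$ of matchgate signatures is closed under pinning a variable to a constant, and that $\mathscr{M}_P$ is by definition closed under permuting arguments. First I would observe that it suffices to treat the case $p=1$, i.e.\ pinning a single variable; the general statement then follows by an immediate induction, since once we know $f^{\alpha_1}$ is a permutable matchgate signature of arity $k-1$, we may apply the $p=1$ case again to pin $\alpha_2$, and so on. So fix $c\in\{0,1\}$ and consider $g=f^{x_1=c}$, a signature of arity $k-1$; I want to show $g_\pi\in\mathscr{M}$ for every $\pi\in S_{k-1}$.

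The key step is to express the pinned-and-permuted signature $g_\pi$ as the pinning of a suitable permutation of $f$ itself. Given $\pi\in S_{k-1}$, extend it to $\pi'\in S_k$ by letting $\pi'$ fix the first coordinate and act as $\pi$ on the remaining $k-1$ coordinates (after relabelling). A direct unwinding of the definitions of $f_{\pi'}$ and of the pinning operation $(\cdot)^{x_1=c}$ shows that $(f_{\pi'})^{x_1=c}$ and $g_\pi=(f^{x_1=c})_\pi$ are the same function of $x_2,\dots,x_k$: both evaluate $f$ at the argument tuple obtained by placing $c$ in the first slot and distributing $x_2,\dots,x_k$ into the remaining slots according to $\pi$. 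Since $f\in\mathscr{M}_P$, the signature $f_{\pi'}$ is a matchgate signature; and a matchgate signature remains a matchgate signature after pinning one external node to a constant — pinning the node to $1$ corresponds to forcing that vertex to be matched outside, i.e.\ simply deleting it from $U$ (equivalently from the graph), while pinning to $0$ corresponds to requiring it to be unmatched within the matchgate, which is achieved by deleting that vertex (and incident edges) from $G$ while keeping the remaining external nodes, a standard and easily checked operation on matchgates. Hence $g_\pi=(f_{\pi'})^{x_1=c}\in\mathscr{M}$.

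Since $\pi\in S_{k-1}$ was arbitrary, $g=f^{x_1=c}$ is a permutable matchgate signature, which completes the $p=1$ case and, by the induction sketched above, the lemma. The only point requiring a little care — and the one I would write out explicitly — is the bookkeeping for the index correspondence between $\pi$ on $\{1,\dots,k-1\}$ and its extension $\pi'$ on $\{1,\dots,k\}$, together with the verification that pinning commutes with this relabelled permutation; everything else is either definitional or the folklore fact that $\mathscr{M}$ is closed under setting an external node to a fixed Boolean value. This is genuinely routine, so in the paper it is reasonable to state it as ``easily verified'' as the excerpt does, but the reduction to permutation-plus-pinning of $f$ is the substantive idea.
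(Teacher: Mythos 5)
The paper omits a proof of this lemma, stating only that the property ``can be easily verified,'' so there is no written argument to compare against. Your reduction is sound and is presumably exactly the verification the authors have in mind: reduce by induction to pinning a single coordinate, establish the identity $(f^{x_1=c})_\pi=(f_{\pi'})^{x_1=c}$ for the extension $\pi'\in S_k$ of $\pi\in S_{k-1}$ that fixes the first slot, note that $f_{\pi'}\in\mathscr{M}$ because $f\in\mathscr{M}_P$, and conclude from the closure of $\mathscr{M}$ under pinning a single external node. You correctly flag the index bookkeeping as the only nontrivial check.

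There is, however, a concrete slip in your description of the closure-under-pinning fact. With the paper's convention (Definition \ref{defMG}), $f(\alpha)=\#PM(G-X)$ where a labelled external vertex belongs to $X$ iff the corresponding bit of $\alpha$ equals $1$. Pinning $x_i=1$ therefore always removes that external node, realized by deleting it from $G$ and from $U$, as you say. Pinning $x_i=0$ means the vertex is never removed and must be covered by every contributing perfect matching; the correct operation is to keep the vertex (and its incident weighted edges) in $G$ but remove it from the set $U$ of external nodes, turning it into an internal vertex. You instead describe pinning to $0$ as ``deleting that vertex (and incident edges) from $G$,'' which is the same operation you correctly gave for pinning to $1$, so as written the two cases would produce the same signature, which cannot be right; the phrase ``requiring it to be unmatched within the matchgate'' is likewise inverted. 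The folklore closure fact you invoke is nonetheless true in both cases---the remaining external nodes stay on the outer face in clockwise order---so the overall structure of your argument stands, but the $c=0$ case should be restated.
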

\begin{lemma}
        A symmetric matchgate signature is also a permutable matchgate signature.
    \end{lemma}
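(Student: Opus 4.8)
The plan is to prove that every symmetric matchgate signature is permutable, i.e. that applying an arbitrary permutation $\pi \in S_n$ to the arguments of a symmetric signature $f$ still yields a matchgate signature. First I would observe the trivial but essential fact: if $f$ is symmetric then $f_\pi = f$ for every $\pi \in S_n$, because $f(x_1,\dots,x_n)$ depends only on the Hamming weight of the input, and permuting the bits of a Boolean string does not change its Hamming weight. So $f_\pi(x_1,\dots,x_n) = f(\pi(x_1),\dots,\pi(x_n)) = f(x_1,\dots,x_n) = f(x)$ for all $x$, hence $f_\pi = f$ as signatures.

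Given this, the statement reduces to: $f$ itself is a matchgate signature. But that is precisely the hypothesis — "a symmetric matchgate signature" by definition (Definition \ref{defMG} together with the notion of symmetry) means a signature $f$ that is both symmetric and equal to the signature of some matchgate, so $f \in \mathscr{M}$. Combining the two observations: since $f_\pi = f \in \mathscr{M}$ for every $\pi \in S_n$, the definition of $\mathscr{M}_P$ (Definition \ref{def:MP}) is met, and therefore $f \in \mathscr{M}_P$. This completes the argument.

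There is essentially no main obstacle here; the lemma is a one-line consequence of unwinding the definitions. The only point deserving a moment of care is making the identification "$f_\pi = f$" at the level of functions rather than merely "$f_\pi$ and $f$ agree on Hamming-weight classes" — but since every Boolean string lies in exactly one Hamming-weight class and $\pi$ is a bijection on coordinate positions, agreement is genuinely pointwise on all of $\{0,1\}^n$, so the two signatures are literally the same function. I would state the proof in two sentences: permuting arguments of a symmetric signature leaves it unchanged, so $f_\pi = f$ is a matchgate signature for all $\pi$, whence $f$ is a permutable matchgate signature.
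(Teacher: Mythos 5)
Your argument is correct and is essentially the one the paper has in mind (the paper simply asserts this lemma among the properties that ``can be easily verified'' and gives no proof). Since $f$ is symmetric, its value depends only on the Hamming weight of the input, which is invariant under any permutation of coordinates, so $f_\pi = f$ pointwise for every $\pi \in S_n$; as $f$ itself is a matchgate signature by hypothesis, each $f_\pi$ is one too, and Definition~\ref{def:MP} is satisfied.
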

    Matchgate signatures can be simulated in polynomial size.
      \begin{lemma}[\cite{cai2013matchgates}]
          A matchgate signature of arity $k$ can be realized by a matchgate with at most $O(k^4)$ vertices, which can be constructed in $O(k^4)$ time.
          \label{lemMG}
      \end{lemma}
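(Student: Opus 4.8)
The statement (Lemma~\ref{lemMG}) asserts that any matchgate signature of arity $k$ can be realized by a matchgate on $O(k^4)$ vertices, constructible in $O(k^4)$ time. My plan is to reduce the claim to the structural characterization of matchgate signatures via the \emph{matchgate identities} (a.k.a. Grassmann--Pl\"ucker relations) together with the classical \emph{realization theorem} for matchgates over the complex field. First I would recall that $f\in\mathscr{M}$ iff its signature vector, viewed as an entry-indexed table, satisfies the parity condition (support confined to even or to odd Hamming weights) and all matchgate identities; this is the standard Cai--Lu characterization. The point is not to re-prove that characterization but to extract from its \emph{proof} an explicit, polynomially bounded gadget.

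The main construction step is to exhibit a ``universal'' planar skeleton --- for instance the grid-like matchgate or the standard arity-$k$ matchgate built from a linear chain of $O(k)$ crossing-free $2$-input/$2$-output tiles plus $O(k)$ parallel ``transmission'' lines, giving $O(k^2)$ vertices for the routing backbone --- and then to argue that the free edge weights in this skeleton can be chosen to produce the desired $f$. The subtlety, and where the $O(k^4)$ rather than $O(k^2)$ bound enters, is that forcing a prescribed matchgate signature may require \emph{auxiliary} ``character'' or ``basis'' matchgates to be wired in: in the Cai--Lu decomposition one writes $f$ as a combination realized by attaching small constant-size gadgets at each of the $k$ external nodes and at each of the $O(k^2)$ pairs of nodes (to install the quadratic ``crossover'' contributions), and each such attachment costs $O(k)$ vertices because a planar wire may have to be routed around the outer face. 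Multiplying $O(k^2)$ attachment sites by $O(k)$ routing cost by an $O(k)$ overhead for maintaining planarity and the clockwise labeling of $U$ yields the stated $O(k^4)$ bound; the construction is manifestly local, so the time to write it down is linear in its size, i.e.\ $O(k^4)$.

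The hard part will be bookkeeping the planarity and the outer-face constraint simultaneously: every external node must remain on the outer face in clockwise order $1,\dots,k$, yet the internal gadgets that install pairwise quadratic terms inherently want to connect nodes $i$ and $j$ that are far apart on the boundary cycle, and the only planar way to do this is to route a channel through the interior, which forces the other external nodes' connections to be pushed outward --- a potential quadratic blowup per site. I would control this by processing the pairs in a fixed order (say, lexicographically by $(i,j)$) and reusing a single ``bus'' of parallel channels, so that the total routing is $O(k^2)$ channels of length $O(k)$ each, i.e.\ $O(k^3)$, and only the final normalization/weight-assignment pass (which may need to subdivide edges to realize arbitrary complex weights as products along a path when a direct weighted edge is disallowed) adds the last factor of $k$. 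Throughout I would invoke, without reproving, the facts already established in the literature and cited here --- the matchgate characterization theorem and the fact that crossover gadgets of constant size exist --- so that the novelty is purely the size accounting; indeed, since this lemma is quoted from \cite{cai2013matchgates}, the expected proof in the paper is simply a pointer to that reference, and the above is the argument one would reconstruct if a self-contained proof were demanded.
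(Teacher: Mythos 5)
The paper gives no proof of Lemma~\ref{lemMG} at all: it is imported verbatim as a black-box citation to \cite{cai2013matchgates}, so there is no internal argument to compare yours against. Judged on its own terms, your reconstruction is in the right neighbourhood but has two genuine gaps. First, the central step is missing: you defer entirely to ``extracting a gadget from the proof of the characterization,'' but the characterization (parity condition plus the matchgate identities) is a purely algebraic statement, and the bridge from it to a construction is the Pfaffian/character-matrix realization --- one shows that any signature satisfying these conditions arises as the collection of Pfaffian minors of a single antisymmetric matrix of order $O(k)$, which can then be realized as a weighted \emph{complete} graph on $O(k)$ nodes with the external nodes in convex position on the outer face. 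Without stating this, your ``universal skeleton with free edge weights'' has no argument that the weights can actually be solved for to hit an arbitrary $f$; that is the entire content of the realizability theorem, not a bookkeeping detail.

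Second, your size accounting is wrong in a way that happens to land on the right exponent. The $O(k^4)$ does not come from multiplying $O(k^2)$ attachment sites by an $O(k)$ routing cost by an $O(k)$ planarity overhead; it comes from the fact that a straight-line drawing of a complete graph on $O(k)$ vertices has $\Theta(k^4)$ edge crossings, each of which is replaced by a constant-size planar crossover matchgate. Your final ``factor of $k$ for subdividing edges to realize arbitrary complex weights'' is also spurious: the paper's Definition~\ref{defMG} already allows arbitrary weights $w:E\to\mathbb{C}$ on matchgate edges, so no subdivision is needed. Since the lemma is used here purely as a cited primitive, the honest options are either to cite it as the paper does, or to give the Pfaffian-plus-crossover argument explicitly; the intermediate sketch you propose would not survive as a proof.
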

  
    If a signature $f$ takes only non-zero values on supports with even/odd Hamming weight, we say $f$ is of \textit{even/odd parity}. Furthermore, a signature $f$ is said to satisfy the \textit{parity condition} if it is of either even or odd parity. It can be easily verified that the parity condition is closed under tensor production and taking self-loops, and thus closed under gadget construction. In other words, if $f$ is the signature of the gadget $GG$, and each vertex in $GG$ is assigned a signature satisfying the parity condition, then $f$ also satisfies the parity condition. 
    
    Since graphs with an odd number of vertices have no perfect matching, all matchgate signatures satisfy the parity condition. Furthermore, by the construction in \cite[Proposition 6.1 and 6.2]{valiant2008holographic}, we have the following lemma:

\begin{lemma}
    If a signature $f$ of arity $k\le 3$ satisfy the parity condition, then $f$ is a permutable matchgate signature.
    \label{MG<3}
\end{lemma}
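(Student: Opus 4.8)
The plan is to reduce the whole statement to one fact — that in arity at most $3$ a signature is a matchgate signature precisely when it satisfies the parity condition — and then obtain permutability essentially for free.

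First I would record the elementary observation that the parity condition is invariant under permuting arguments: a permutation of the inputs of $f$ leaves the Hamming weight of every assignment unchanged, so $f$ has even (respectively odd) parity if and only if $f_\pi$ does, for every $\pi\in S_k$. Consequently, once we know that every arity-$\le 3$ signature with the parity condition is a matchgate signature, we may apply this to each $f_\pi$ separately and conclude from Definition~\ref{def:MP} that $f\in\mathscr{M}_P$. So the real content is the non-permutable assertion ``parity $\Rightarrow$ matchgate signature'' in arity $\le 3$.

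For that assertion I would invoke the matchgate-identity characterization of matchgate signatures underlying \cite[Propositions~6.1 and~6.2]{valiant2008holographic}: a signature $f$ of arity $k$ is a matchgate signature iff it satisfies the parity condition together with the matchgate identities, which are bilinear relations indexed by pairs $\alpha,\beta\in\{0,1\}^k$ and involving only the coordinates on which $\alpha$ and $\beta$ differ. The point is that such an identity is automatically satisfied once $|\alpha\oplus\beta|\le 3$: when $\alpha,\beta$ differ in $0$ or $1$ positions the identity is empty or is exactly an instance of the parity condition; when they differ in $2$ positions its two terms are literally equal with opposite signs; and when they differ in $3$ positions every term contains a factor at which $f$ is forced to vanish by parity. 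Since $|\alpha\oplus\beta|\le k\le 3$, all matchgate identities hold and $f$ is a matchgate signature. Alternatively, one can bypass the identities and verify the claim by a short case analysis: there are only finitely many ``support patterns'' of even or odd parity in arities $1,2,3$, and for each of them a small weighted planar matchgate — using pendant edges and short paths to carry arbitrary edge weights — realizes any choice of the non-fixed values; this is precisely the explicit route of \cite[Propositions~6.1 and~6.2]{valiant2008holographic}.

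Combining the two parts yields the lemma: an arity-$\le 3$ signature satisfying the parity condition, together with all of its permutations, is a matchgate signature, so $f$ is a permutable matchgate signature. The main obstacle is the arity-$\le 3$ step, i.e.\ checking that no matchgate identity is active below arity $4$ (or, in the explicit approach, actually producing the matchgates); passing from there to permutability via permutation-invariance of the parity condition is immediate. I would also remark that $k\le 3$ is sharp, since $=_4=[1,0,0,0,1]$ satisfies the parity condition yet is not a matchgate signature.
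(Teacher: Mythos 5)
Your proof is correct, and it actually contains two routes: a primary argument via the matchgate identities and a secondary ``alternative'' that directly cites the explicit matchgate constructions of \cite[Propositions~6.1 and~6.2]{valiant2008holographic}. The paper takes exactly this latter, shorter route: it states the lemma after a single sentence pointing to those two propositions, with no further argument, and in particular leaves implicit both the passage from ``matchgate signature'' to ``permutable matchgate signature'' and the verification that the constructed matchgates cover all parity-satisfying signatures of arity at most~$3$. Your primary route is genuinely different and arguably more informative: instead of exhibiting matchgates, you invoke the matchgate-identity characterization and observe that every identity indexed by patterns differing in at most three positions is automatically satisfied once the parity condition holds (your case analysis for $\ell=0,1,2,3$ is correct, and your sharpness example $=_4=[1,0,0,0,1]$ does indeed violate an MGI, e.g.\ for $\alpha=1000,\beta=0111$). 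This buys a clean, construction-free explanation of \emph{why} arity~$3$ is the threshold, at the cost of having to quote the MGI machinery, which the paper never introduces. Your opening reduction -- that the parity condition is permutation-invariant, so permutability follows for free once the non-permutable ``parity $\Rightarrow$ matchgate'' claim is established -- is also a tidy way to handle a point the paper glosses over; an equally short alternative would be to note that for $k\le3$ every permutation of the external-node labels is a rotation or reflection of the cyclic order, hence realizable by re-embedding the same planar matchgate.
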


    Permutable matchgate signatures are also characterized in detail in \cite{meng2025matchgate}. A \textit{star gadget} is a gadget $ST$ formed by a signature $h$ of arity $n$, and $n$ binary signatures $b_1,\dots,b_n$ where for each $1\le j\le n$, $b_j$ is connected to the $j$th variable of $h$. $h$ is denoted as the \textit{central signature} of $ST$ and $b_j$ is denoted as the \textit{$j$th edge signature}. Figure \ref{fig:stargad} presents the star gadget.
\begin{figure}
            \centering
            \includegraphics[height=0.2\textheight]{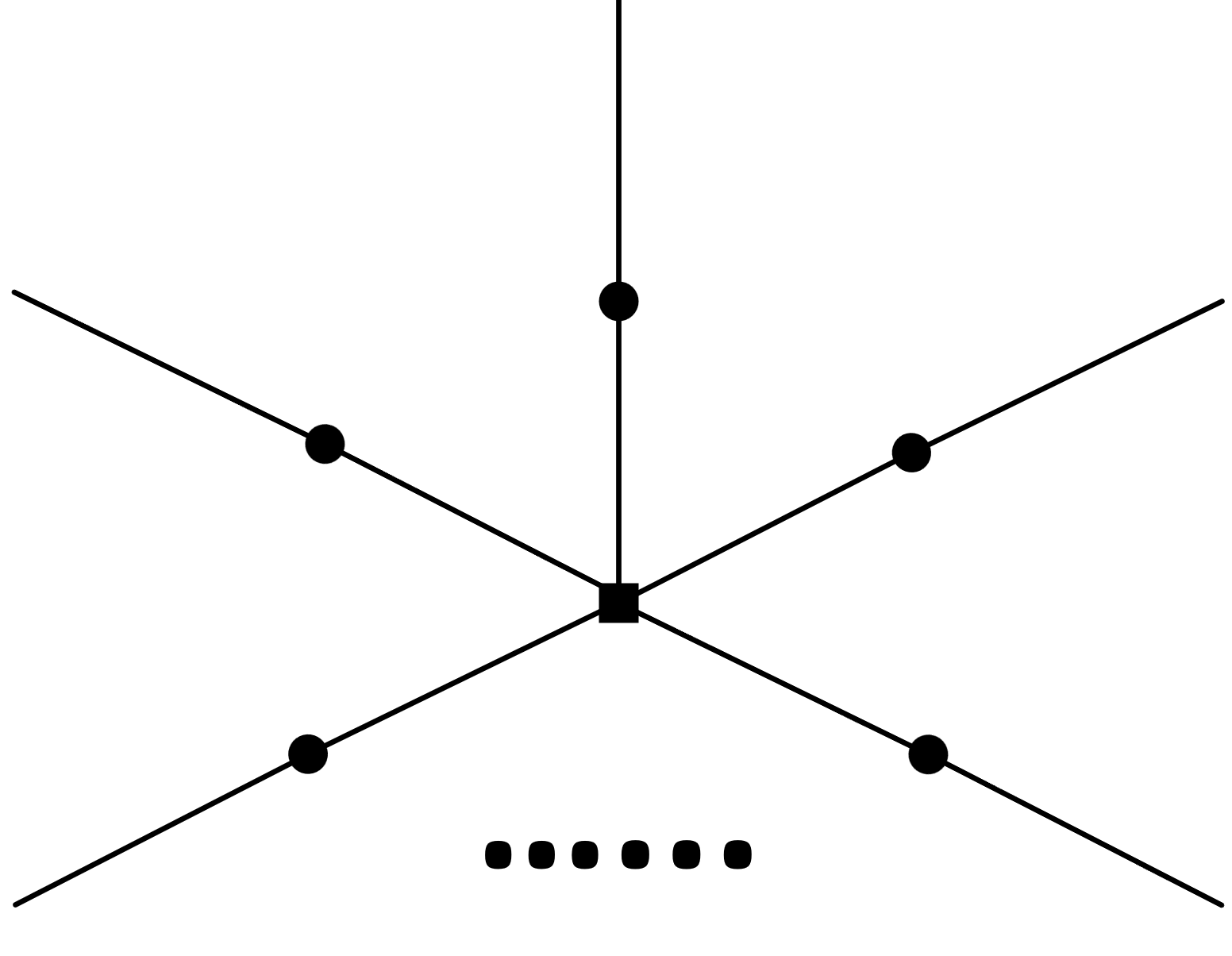}
            \caption{The star gadget. Each edge incident to the vertex of degree $k$, represented by a square, is also incident to a vertex of degree 2 represented by a circle.}
            \label{fig:stargad}
        \end{figure}
 \begin{lemma}\label{lem:SymGadRealize}
    Each permutable matchgate signature $f$ of arity $n$ can be realized by a star gadget, where the center signature is a symmetric matchgate signature and each edge signature is a binary matchgate signature. 
\end{lemma}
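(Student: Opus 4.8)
The plan is to exhibit, for an arbitrary permutable matchgate signature $f$ of arity $n$, a concrete star gadget whose central signature is symmetric and matchgate, and whose edge signatures are binary matchgate signatures, and to verify that this gadget realizes $f$. The starting point is the observation that the star gadget in Figure \ref{fig:stargad} is itself a left-right symmetric construction: permuting the $n$ legs of the star permutes both the legs of the central signature $h$ and the assignment of the edge signatures $b_1,\dots,b_n$. So if we want the gadget's signature to be permutation-invariant-compatible with an arbitrary $f\in\mathscr{M}_P$, it is natural to make $h$ symmetric and to take all edge signatures equal, $b_1=\dots=b_n=:b$; the only remaining freedom is the choice of the symmetric matchgate signature $h$ of arity $n$ and the binary matchgate signature $b$. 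Equivalently, by the discussion of transformations in Theorem \ref{thmHT}, $b$ acts as a $2\times 2$ matrix applied to every leg of $h$, so we are looking for a symmetric matchgate signature $h$ and an invertible (or possibly singular, as a limit) matrix $B$ with $h$ transformed by $B$ equal to $f$.

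First I would reduce to the symmetric case on both ends. Since $f$ is a permutable matchgate signature, $f$ itself is a matchgate signature; but more importantly, I want to use the structure theory from \cite{meng2025matchgate} that underlies Definition \ref{def:MP}. The key algebraic fact I would invoke is that a permutable matchgate signature, being invariant (up to the matchgate parity/Grassmann-Plücker constraints) under all coordinate permutations, must decompose through a symmetric signature after a single common binary transformation on all legs — this is exactly the kind of normal-form statement that the detailed characterization of $\mathscr{M}_P$ in \cite{meng2025matchgate} provides. Concretely: one shows there is a symmetric matchgate signature $h$ and a binary matchgate signature $b$ such that $f = $ (the star gadget with center $h$ and every edge signature $b$). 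For the case where $f$ is already symmetric this is trivial ($h=f$, $b=\widehat{=_2}$-type identity, i.e. the degree-$2$ equality acting as identity); the content is handling the non-symmetric permutable signatures, which are governed by a short list of possible binary "twists" $b$.

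The key steps, in order, would be: (i) recall from \cite{meng2025matchgate} the explicit classification of permutable matchgate signatures, which expresses each $f\in\mathscr{M}_P$ via a symmetric "core" plus a binary leg-transformation; (ii) check that each binary transformation arising in that classification is itself realizable as a binary \emph{matchgate} signature (this is immediate from Lemma \ref{MG<3}, since any binary signature satisfying the parity condition is a permutable matchgate signature, and binary matchgate signatures are exactly the parity-satisfying binary signatures); (iii) check that the "core" symmetric signature is a genuine symmetric \emph{matchgate} signature — here one uses that $f$ being a matchgate signature and the transformation $b$ being an invertible matchgate transformation forces the core to be a matchgate signature as well, together with the fact that applying an invertible binary transformation to a symmetric signature keeps it symmetric only in controlled ways, so the core can be taken symmetric; and (iv) assemble the star gadget with this center and these edge signatures and verify, by expanding $Z(ST^\alpha)$ over assignments to the internal edges, that its signature equals $f$ on every input $\alpha\in\{0,1\}^n$.

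The main obstacle I expect is step (iii): showing that the symmetric core obtained after pulling off the common binary leg-transformation is actually a \emph{matchgate} signature, not merely a symmetric signature. A symmetric signature need not be a matchgate signature (the matchgate identities are nontrivial constraints), so one cannot simply "symmetrize" $f$ and hope the result is a matchgate signature; the argument must track the matchgate (Grassmann–Plücker) identities through the binary transformation $b$. If $b$ is invertible this is manageable because invertible binary transformations preserve the matchgate property (a matchgate transformed by an invertible matrix whose corresponding binary signature is a matchgate signature remains a matchgate signature, as one can absorb $b$ into the matchgate graph via a length-one path gadget). The genuinely delicate case is when the relevant transformation is \emph{singular} — then one must instead argue directly, via the explicit finite list of non-symmetric permutable signatures from \cite{meng2025matchgate}, that each admits such a star-gadget realization by inspection. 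This is why I would lean heavily on the detailed characterization in \cite{meng2025matchgate} rather than attempt a transformation-only proof; the bookkeeping of which symmetric matchgate cores and which binary matchgate edge signatures occur is exactly what that reference supplies.
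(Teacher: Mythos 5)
The paper does not prove this lemma; it cites it from \cite{meng2025matchgate}, so there is no internal proof to compare against. Evaluating your proposal on its own merits, there is a genuine gap at its very first step.

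You declare that ``it is natural to make $h$ symmetric and to take all edge signatures equal, $b_1=\dots=b_n=:b$'', and you build the rest of the argument on that normal form. But this restriction is too strong: a star gadget with a symmetric center $h$ and all edge signatures equal to the same $b$ necessarily produces a \emph{symmetric} output. Indeed, for any permutation $\pi\in S_n$,
\begin{align}
f(x_{\pi(1)},\dots,x_{\pi(n)}) &= \sum_{y_1,\dots,y_n}\prod_{j=1}^n b(x_{\pi(j)},y_j)\,h(y_1,\dots,y_n)\notag\\
&= \sum_{z_1,\dots,z_n}\prod_{j=1}^n b(x_j,z_j)\,h(z_{\pi(1)},\dots,z_{\pi(n)}) = f(x_1,\dots,x_n),\notag
\end{align}
where the middle equality substitutes $z_j=y_{\pi^{-1}(j)}$ and the last uses the symmetry of $h$. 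However, a permutable matchgate signature need not be symmetric. The defining condition (Definition \ref{def:MP}) is that $f_\pi$ is a \emph{matchgate signature} for every $\pi$, not that $f_\pi=f$. A concrete counterexample to your normal form: the arity-$2$ signature $f$ with $f(0,1)=1$, $f(1,0)=2$, $f(0,0)=f(1,1)=0$ satisfies the parity condition, hence is a permutable matchgate signature by Lemma \ref{MG<3}, but it is not symmetric; no star gadget of the form you propose can produce it. With distinct edge signatures it can be realized — take $h=[0,1,0]$, $b_1$ with $b_1(0,0)=1, b_1(1,1)=2$, $b_2=[1,0,1]$ — which illustrates that the freedom to choose different $b_j$'s is essential and is exactly how the asymmetry of $f$ is encoded. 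Your proposal therefore has to be reworked from the start: the reduction cannot be to a single common binary transformation, and the appeal to the classification in \cite{meng2025matchgate} must keep track of $n$ generally distinct edge signatures, one per leg, rather than a single ``twist'' $b$.
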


\begin{lemma}\label{lem:AsymtoSym}
    For each signature $F\in \mathscr{M}_P-\mathscr{A}$ of arity $n$, a symmetric signature $g\in \mathscr{M}_P-\mathscr{A}$ can be realized by $\{F\}\mid\{[1,0],[1,0,1],[1,0,1,0]\}$ as a planar left-side gadget.
\end{lemma}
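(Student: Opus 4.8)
The plan is to start from Lemma~\ref{lem:SymGadRealize}: since $F\in\mathscr{M}_P$, it can be realized by a star gadget whose central signature $h$ is a symmetric matchgate signature and whose edge signatures $b_1,\dots,b_n$ are binary matchgate signatures. The whole star gadget is planar (a star is planar, and binary signatures on the rays keep it planar), and with the equality signatures $[1,0],[1,0,1],[1,0,1,0]$ on the ``$V$'' side available, this realization can be arranged as a planar left-side gadget of $\{F\}\mid\{[1,0],[1,0,1],[1,0,1,0]\}$: each ray of the star passes through a degree-$2$ vertex carrying $\widehat{=_2}=[1,0,1]$ (which acts as the identity on $\{0,1\}$), so each $b_j$ sitting on the $U$ side is legitimate, and the central vertex with $h$ is also on the $U$ side. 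So first I would reduce the problem to the symmetric case: it suffices to extract from the symmetric matchgate signature $h$ (together with binary matchgate signatures and the allowed $V$-side signatures) a symmetric $g\in\mathscr{M}_P-\mathscr{A}$.

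Next I would argue that $h\notin\mathscr{A}$, or more precisely that the data of the star gadget forces some symmetric non-affine matchgate signature to appear. The key point is that $\mathscr{A}$ is closed under the operations used to build a star gadget from symmetric $h$ and binary $b_j$'s only if each of those pieces is in $\mathscr{A}$: if every $b_j\in\mathscr{A}$ and $h\in\mathscr{A}$, then $F$, being the signature of the gadget, would also lie in $\mathscr{A}$ (affine signatures are closed under gadget construction — tensor products, connecting edges, self-loops — which is a standard fact, e.g.\ from the $\text{\#CSP}$ dichotomy machinery). Since $F\notin\mathscr{A}$ by hypothesis, either some $b_j\notin\mathscr{A}$ or $h\notin\mathscr{A}$. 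In the first case, a binary matchgate signature $b_j\notin\mathscr{A}$ is itself symmetric (binary signatures of even or odd parity — and matchgate signatures satisfy the parity condition — are symmetric: even parity gives $[b_0,0,b_2]$, odd parity gives $[0,b_1,0]$), so we may take $g=b_j$. In the second case $h$ is already a symmetric matchgate signature not in $\mathscr{A}$, and we take $g=h$.

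The remaining step is bookkeeping: verify that whichever of $h$ or $b_j$ we select is realizable as a planar left-side gadget of $\{F\}\mid\{[1,0],[1,0,1],[1,0,1,0]\}$, i.e.\ that we can ``extract'' a single internal vertex's signature from the star gadget by pinning or by connecting the other rays appropriately. Concretely, to isolate $h$ one feeds each other-ray edge-signature $b_i$ through and then terminates it, using $[1,0]$ (the arity-$1$ equality, i.e.\ pinning to $0$) or binary matchgate signatures composed with $[1,0]$, to turn $b_i$ into a constant or a unary signature; since $b_i$ is a binary matchgate signature this collapses the $i$-th ray to a scalar or a unary factor, and after rescaling we are left with $h$ (possibly with a few variables pinned, which by Lemma~\ref{lemPin} and symmetry keeps us inside symmetric $\mathscr{M}_P$, and one must just check non-affineness is preserved for a suitable choice — here one uses that $\mathscr{A}$ is closed under pinning, so if the result were affine for all pinnings then $h$ would be affine). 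To isolate a bad $b_j$, one instead pins the central rays so that $h$ contributes a nonzero scalar and only the $j$-th ray survives.

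The main obstacle I anticipate is the non-affineness bookkeeping in the second step: knowing $F\notin\mathscr{A}$ tells us the \emph{collection} $\{h,b_1,\dots,b_n\}$ is not entirely inside $\mathscr{A}$, but to name a single symmetric witness $g\notin\mathscr{A}$ that is moreover \emph{realizable} by $\{F\}$ (not just by its internal parts) requires showing the extraction procedure above preserves non-affineness for at least one choice of pinnings — equivalently, that if every signature obtainable from $F$ by pinning and composing with binary matchgate signatures and $\widehat{\mathcal{EQ}_{\le 3}}$ were affine, then $F$ itself would be affine. This follows from closure of $\mathscr{A}$ under these operations together with the fact that the original star-gadget decomposition reconstructs $F$ from exactly such pieces, but spelling it out cleanly is where the real work lies; the planarity and bipartiteness constraints, by contrast, are immediate from the star shape and the insertion of $[1,0,1]$ vertices on the rays.
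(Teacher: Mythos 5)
The paper does not actually prove Lemma~\ref{lem:AsymtoSym}; it is imported as a black box from \cite{meng2025matchgate}, together with the star--gadget decomposition of Lemma~\ref{lem:SymGadRealize}. So there is no in-paper proof to compare against, and your proposal has to stand on its own.

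Read that way, there is a concrete, load-bearing error in your Case~B. You assert that every binary matchgate signature is symmetric because ``even parity gives $[b_0,0,b_2]$, odd parity gives $[0,b_1,0]$.'' The even-parity claim is fine, but the odd-parity claim is false: an odd-parity binary matchgate signature has support on $\{01,10\}$ and the two values need not coincide. Concretely, a path on $3$ vertices with the two endpoints as external nodes and edge weights $c$ and $d$ realizes the matchgate signature with matrix $\begin{pmatrix}0&c\\d&0\end{pmatrix}$, which for $c\neq d$ is not symmetric, and for $c/d$ not a fourth root of unity is not affine either. So when the non-affine piece of the star gadget is such an edge signature $b_j$, you cannot simply declare $g=b_j$; you would have to build a \emph{symmetric} non-affine signature out of it, and the obvious move $b_j b_j^{\mathsf T}=[c^2,0,d^2]$ fails exactly when $c/d$ is an eighth root but not a fourth root of unity. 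That subcase is not handled.

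Beyond that error, the step you explicitly flag as ``where the real work lies'' is genuinely unresolved, and the sketch of how to do it overreaches the resources of the lemma. The right-hand side only supplies $[1,0]$, $[1,0,1]$, $[1,0,1,0]$; you cannot freely attach ``binary matchgate signatures composed with $[1,0]$'' except to the extent that they occur as rays of the star inside $F$ itself. Pinning external variables of $F$ does not hand you $h$ or $b_j$ --- it hands you $h$ contracted against unary signatures $b_i^{x=0}$, plus the surviving ray --- and it is not automatic that some such contraction stays outside $\mathscr{A}$. Closure of $\mathscr{A}$ under pinning tells you that if every result of pinning were affine, the pinnings of $F$ would all be affine, but that does not contradict $F\notin\mathscr{A}$ (an arity-$n$ signature can be non-affine while all its proper pinnings are affine). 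So the ``if every extractable piece were affine then $F$ would be affine'' argument, as stated, does not close. What you would actually need, and what is presumably established in \cite{meng2025matchgate}, is a structural statement about permutable matchgates guaranteeing that a symmetric non-affine signature of some bounded arity is realizable from $F$ together with $\widehat{\mathcal{EQ}_{\le 3}}$ in a planar bipartite way; the star-gadget decomposition is a plausible starting point, but the non-affineness bookkeeping and the symmetry repair for odd-parity edge signatures are both missing.
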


\subsubsection{Previous dichotomies for \#CSP}

      The existing dichotomies for \#CSP can be stated as follows:  
\begin{theorem}[\cite{cai2014complexity}]
        If $\mathcal{F}\subseteq \mathscr{A}$ or $\mathcal{F}\subseteq \mathscr{P}$, $\text{\#CSP}(\mathcal{F})$ is computable in polynomial time; otherwise it is $\text{\#P}$-hard.
        \label{genCSP}
    \end{theorem}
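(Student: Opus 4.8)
The plan is to prove Theorem~\ref{genCSP} by the standard reduction-to-harmonic-analysis / Holant-framework argument that underlies the Cai--Lu--Xia dichotomy, though here we need only sketch it since we are quoting \cite{cai2014complexity}. The tractability direction is the easy half: if $\mathcal{F}\subseteq \mathscr{A}$, then by Definition~\ref{defA} every signature is an affine indicator weighted by a quadratic phase $\ii^{Q(x)}$; summing $Z(I)=\sum_{x}\prod_i f_i$ over $\{0,1\}^n$ collapses to a Gauss-sum computation over $\mathbb{Z}_2$ that can be evaluated in polynomial time by bringing the combined system of linear constraints into echelon form and then evaluating the resulting quadratic Gauss sum. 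If instead $\mathcal{F}\subseteq \mathscr{P}$, then by Definition~\ref{defP} each $f_i$ factors as a tensor product of $\mathcal{E}$-signatures (each supported on two complementary strings), so the partition function factors across connected components of a derived constraint graph and is again polynomial-time computable. Both facts are routine and I would simply cite them.

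The hardness direction is the substantive part. First I would reduce to the case where $\mathcal{F}$ consists of a single signature $f\notin\mathscr{A}\cup\mathscr{P}$, using the fact that a finite set not contained in $\mathscr{A}$ and not contained in $\mathscr{P}$ must contain some $f$ that individually fails at least one of the two memberships; a short combinatorial argument (or the pinning/gadget lemmas in the preliminaries) lets us realize a single ``bad'' signature. Next, using $\text{\#CSP}(\mathcal{F})\equiv_T\text{Holant}(\mathcal{F}\mid\mathcal{EQ})$ from Lemma~\ref{lemcsp=hol}, I would pass to the Holant world, where the equality signatures of all arities are available for free. The heart of the proof is then a case analysis on the structure of $f$: one shows that together with $\mathcal{EQ}$ one can interpolate and gadget-construct a small ``universal'' gadget --- classically a binary signature such as $[1,0,x]$ for generic $x$ via Lemma~\ref{interp} and Corollary~\ref{corointerp}, or a ternary signature --- from which \#P-hardness of a known-hard base problem (e.g.\ counting independent sets / \#SAT, or evaluating the permanent) follows. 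The argument proceeds by induction on arity: pinning variables (the $f^\alpha$ operation, closed within neither $\mathscr{A}$ nor $\mathscr{P}$ in the relevant cases by a Lemma~\ref{lemPin}-style closure statement) reduces a high-arity violating signature to a low-arity one, and the base cases of arity $\le 3$ are dispatched by a finite computation identifying exactly which symmetric and asymmetric signatures yield hardness.

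The main obstacle, as in \cite{cai2014complexity}, is the base-case classification of low-arity signatures together with the interpolation step: one must show that whenever $f\notin\mathscr{A}\cup\mathscr{P}$, the gadgets built from $f$ and $\mathcal{EQ}$ generate a signature lying outside every tractable class, and that the interpolation matrices arising (Vandermonde-type, as in the proof of Lemma~\ref{interp}) are nonsingular --- which can fail when eigenvalue ratios are roots of unity, forcing auxiliary constructions to break the degeneracy. A second delicate point is ensuring the reduction does not accidentally stay inside $\mathscr{P}$: the product-type obstruction requires a ``connectivity'' gadget (a signature that is genuinely entangled across its variables) to be realizable, and verifying this for every non-product $f$ is the technically heaviest lemma. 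Since the present paper only invokes Theorem~\ref{genCSP} as a black box to cover the rows $\mathcal{F}\subseteq\mathscr{A}$ and $\mathcal{F}\subseteq\mathscr{P}$ of Table~\ref{alltab}, I would present only this outline and defer the full argument to \cite{cai2014complexity}.
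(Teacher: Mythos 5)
The paper presents Theorem~\ref{genCSP} as a cited result from \cite{cai2014complexity} and gives no proof of its own, so your decision to sketch the argument and then defer to that reference is exactly what the paper does. Your outline is a fair high-level account of the Cai--Lu--Xia proof (tractability via Gauss sums and tensor factorization, hardness via pinning, gadgets, and Vandermonde interpolation), with only a minor imprecision: from $\mathcal{F}\nsubseteq\mathscr{A}$ and $\mathcal{F}\nsubseteq\mathscr{P}$ one cannot in general extract a \emph{single} $f\notin\mathscr{A}\cup\mathscr{P}$, only possibly two distinct witnesses, so the reduction must handle that pair jointly --- but since you are quoting the theorem rather than reproving it, this does not affect the correctness of your approach.
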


\begin{theorem}[\cite{cai2014complexity}]
        Suppose $D\ge 3$ is an integer. If $\mathcal{F}\subseteq \mathscr{A}$ or $\mathcal{F}\subseteq \mathscr{P}$, $\#R_D\text{-CSP}(\mathcal{F})$ is computable in polynomial time; otherwise it is $\text{\#P}$-hard.
        \label{genRDCSP}
    \end{theorem}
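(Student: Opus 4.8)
The tractable direction is immediate: if $\mathcal{F}\subseteq\mathscr{A}$ or $\mathcal{F}\subseteq\mathscr{P}$ then $\text{\#CSP}(\mathcal{F})$ is polynomial-time computable by Theorem~\ref{genCSP}, and $\#R_D\text{-CSP}(\mathcal{F})$ is merely $\text{\#CSP}(\mathcal{F})$ restricted to instances whose variable-vertices have degree at most $D$, so the same algorithm works. Hence the content is the hardness direction; assume $\mathcal{F}\nsubseteq\mathscr{A}$ and $\mathcal{F}\nsubseteq\mathscr{P}$. First I would reduce the goal to a single value of $D$: every instance of $\#R_3\text{-CSP}(\mathcal{F})$ is also an instance of $\#R_D\text{-CSP}(\mathcal{F})$ whenever $D\ge 3$, so $\#R_3\text{-CSP}(\mathcal{F})\le_T\#R_D\text{-CSP}(\mathcal{F})$, and it suffices to prove that $\#R_3\text{-CSP}(\mathcal{F})$ is $\text{\#P}$-hard; by Lemma~\ref{lemcsp=hol} this is the same as showing that $\text{Holant}(\mathcal{F}\mid\mathcal{EQ}_{\le 3})$ is $\text{\#P}$-hard.

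The plan is to reduce the unbounded problem $\text{\#CSP}(\mathcal{F})\equiv_T\text{Holant}(\mathcal{F}\mid\mathcal{EQ})$, which is $\text{\#P}$-hard by Theorem~\ref{genCSP}, to $\text{Holant}(\mathcal{F}\mid\mathcal{EQ}_{\le 3})$ by a degree-reduction (variable-splitting) gadget: a variable that occurs $k$ times is replaced by $k$ fresh variables of bounded degree, tied together so that they are forced to take a common value. Equivalently, it suffices to realize $=_k$ for every $k$ as a right-side gadget over $\mathcal{F}\mid\mathcal{EQ}_{\le 3}$. Since $=_1,=_2,=_3$ are available for free, one can try to assemble $=_k$ from a path, or a balanced binary tree, of copies of $=_3$; the only difficulty is that consecutive copies must be glued to one another, and because the underlying graph has to stay bipartite with the equalities on one side, such a gluing requires a binary \emph{wire} signature $[1,0,1]$ on the $\mathcal{F}$-side, realized as a left-side gadget over $\mathcal{F}\mid\mathcal{EQ}_{\le 3}$.

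Producing this wire is the heart of the argument, and I would do it in two steps. Step one is a bounded-degree pinning lemma: starting from a signature of $\mathcal{F}$ that genuinely entangles its variables — one exists because $\mathcal{F}\nsubseteq\mathscr{P}$ — together with $=_1,=_2,=_3$, construct a polynomial-size family of degree-bounded gadgets and apply polynomial interpolation (Lemma~\ref{interp}, Corollary~\ref{corointerp}) to realize the pinning signatures $[1,0]$ and $[0,1]$. The care needed here is that the usual way of implementing pinning — attaching many parallel copies of an auxiliary gadget to one variable — blows up the variable degree, so the interpolation family must instead be laid out as paths or trees of bounded-degree pieces. Step two: with pinning available, restrict a higher-arity signature of $\mathcal{F}$ to a binary signature $T$ and run a short case analysis on $T$. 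If $T$ is $[1,0,1]$ up to a scalar we are done; otherwise use $\mathcal{F}\nsubseteq\mathscr{A}$ to exclude the remaining degenerate possibilities, and either realize $[1,0,1]$ from $T$ directly, or pass through a holographic change of basis (Theorem~\ref{thmHT}) under which $T$ becomes a wire and $\mathcal{EQ}_{\le 3}$ is sent to a signature set still interreducible with $\mathcal{EQ}_{\le 3}$, so that the chaining argument carries over to the transformed problem.

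Once the wire, and hence $=_k$ for all $k$, is available, we obtain $\text{Holant}(\mathcal{F}\mid\mathcal{EQ})\le_T\text{Holant}(\mathcal{F}\mid\mathcal{EQ}_{\le 3})$, so $\#R_3\text{-CSP}(\mathcal{F})$, and therefore $\#R_D\text{-CSP}(\mathcal{F})$ for every $D\ge 3$, is $\text{\#P}$-hard. The step I expect to be the real obstacle is precisely this bounded-degree pinning lemma and the accompanying wire case analysis: realizing $[1,0]$, $[0,1]$ and $[1,0,1]$ is routine when variable degrees may be unbounded, but forcing every gadget to have variable-degree at most $3$ makes the construction considerably more delicate, and it is here that both hypotheses $\mathcal{F}\nsubseteq\mathscr{A}$ and $\mathcal{F}\nsubseteq\mathscr{P}$ are used.
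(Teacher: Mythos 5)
This theorem is cited from Cai, Lu and Xia~\cite{cai2014complexity} and is not re-proved in the present paper, so there is no proof here to compare against; I am therefore judging your sketch on its own terms.

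Your high-level skeleton is the standard one and is the right shape: reduce the degree-bounded problem to the unbounded dichotomy by simulating $=_k$ as a right-side gadget over $\mathcal{F}\mid\mathcal{EQ}_{\le 3}$, and recognize that this needs a left-side wire. You also correctly identify where the real work lies. But the two load-bearing steps are, as written, not proofs. For the bounded-degree pinning lemma, ``lay the interpolation family out as paths or trees'' is the conclusion one wants, not an argument: the standard pinning proof for $\text{\#CSP}$ (e.g.\ the Dyer--Goldberg style dichotomy-of-dichotomies, or the gadget families used in \cite{cai2014complexity}) attaches auxiliary structure to a \emph{single} variable, precisely the move that breaks the degree bound, and showing that pinning survives the $D=3$ restriction is a separate theorem in that paper, not a routine adaptation. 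Your sketch does not say what replaces that construction.

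The wire step has a more concrete flaw. You propose, if the pinned-down binary $T$ is not already $[1,0,1]$, to pass to a holographic transformation $M$ under which $T$ becomes a wire and claim that $M\mathcal{EQ}_{\le 3}$ is ``still interreducible with $\mathcal{EQ}_{\le 3}$''. That is not true in general: Theorem~\ref{thmHT} sends $\mathcal{EQ}_{\le 3}$ to $M\mathcal{EQ}_{\le 3}$, and $M\cdot(=_3)$ is an equality only for a very restricted class of $M$ (scalings of the identity or bit-flip), which is precisely the class that does \emph{not} help you turn a generic $T$ into $[1,0,1]$. Once the right side no longer consists of equalities, the chaining argument you rely on (that $=_3$ forces all three incident edges equal) simply does not apply. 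So this branch of the case analysis is circular: you would need exactly the kind of equality-simulation the transformed problem was supposed to give you. A correct treatment here requires either a direct gadget construction producing $[a,0,c]$ with $a,c\neq 0$ from $T$ and $\mathcal{EQ}_{\le 3}$ (handling the weighted-equality case by a further interpolation), or an interpolation argument on powers of $T$ using its eigenstructure, with $\mathcal{F}\nsubseteq\mathscr{A}$ used to rule out the root-of-unity eigenvalue ratio case; this is the content of the relevant lemmas in~\cite{cai2014complexity} and cannot be replaced by the basis-change remark.
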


    \begin{theorem}[\cite{cai2017holographicuni,meng2025matchgate}]\label{plCSP}
        If $\mathcal{F}\subseteq \mathscr{A}$ or $\mathcal{F}\subseteq \mathscr{P}$ or $\mathcal{F}\subseteq \widehat{\mathscr{M}_P}$, $\text{\#CSP}(\mathcal{F})\langle \mathcal{PL}\rangle$ is computable in polynomial time; otherwise it is $\text{\#P}$-hard.    
    \end{theorem}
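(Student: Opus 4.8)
The plan is to handle the three tractable classes separately and to obtain hardness by a reduction from the non-planar problem, which is already settled by Theorem~\ref{genCSP}.

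\emph{Tractability.} If $\mathcal{F}\subseteq\mathscr{A}$ or $\mathcal{F}\subseteq\mathscr{P}$, then $\text{\#CSP}(\mathcal{F})$ is polynomial-time computable on all graphs by Theorem~\ref{genCSP}, hence in particular on $\mathcal{PL}$. So assume $\mathcal{F}\subseteq\widehat{\mathscr{M}_P}$. By Lemma~\ref{lemcsp=hol} and the $H_2$-holographic transformation (Theorem~\ref{thmHT}), $\text{\#CSP}(\mathcal{F})\langle\mathcal{PL}\rangle\equiv_T\text{Holant}(\mathcal{F}|\mathcal{EQ})\langle\mathcal{PL}\rangle\equiv_T\text{Holant}(\widehat{\mathcal{F}}|\widehat{\mathcal{EQ}})\langle\mathcal{PL}\rangle$, with $\widehat{\mathcal{F}}\subseteq\mathscr{M}_P$. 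By Lemma~\ref{transEQ} every signature of $\widehat{\mathcal{EQ}}$ is symmetric and satisfies the parity condition; those of arity at most $3$ lie in $\mathscr{M}_P$ by Lemma~\ref{MG<3}, and the higher-arity ones are realized by planar trees of $\widehat{=_3}$'s (the $H_2$-image of the usual realization of $=_k$ from $=_3$), hence are realizable by matchgates as well, matchgate signatures being closed under planar gadget composition. Given a planar instance $G$, I would substitute for every vertex a polynomial-size matchgate realizing its signature (Lemma~\ref{lemMG}), choosing --- and here \emph{permutability} is essential --- one whose external nodes occur in the cyclic order forced by the embedding of $G$ at that vertex. The outcome is a planar weighted graph whose weighted perfect-matching sum equals $Z(G)$ by Valiant's matchgate/Holant theorem \cite{valiant2008holographic,cai2007valiant}, and this is evaluated in polynomial time by the FKT algorithm \cite{kasteleyn1961statistics,temperley1961dimer}.

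\emph{Hardness.} Suppose $\mathcal{F}\nsubseteq\mathscr{A}$, $\mathcal{F}\nsubseteq\mathscr{P}$ and $\mathcal{F}\nsubseteq\widehat{\mathscr{M}_P}$. By Theorem~\ref{genCSP}, $\text{\#CSP}(\mathcal{F})$ --- equivalently $\text{Holant}(\widehat{\mathcal{F}}|\widehat{\mathcal{EQ}})$ on general graphs --- is $\text{\#P}$-hard, so it is enough to reduce the general problem to its planar restriction. The standard route is to realize, as a planar bipartite gadget over $\widehat{\mathcal{F}}\cup\widehat{\mathcal{EQ}}$, a \emph{crossover} (enough structure to route two wires past each other without interaction); substituting crossovers at the crossings of an arbitrary drawing of a general instance then yields a planar instance of the same value. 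Fixing $\widehat{f}\in\widehat{\mathcal{F}}\setminus\mathscr{M}_P$, the plan is to use the fact that $\widehat{f}$ violates a matchgate identity in order to manufacture --- via $\widehat{\mathcal{EQ}}$, gadget constructions, and polynomial interpolation (Lemma~\ref{interp}, Corollary~\ref{corointerp}) --- first a symmetric signature outside $\mathscr{M}_P$ and then the desired planar crossover.

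\emph{Reduction to the symmetric case, and the main obstacle.} For symmetric $\mathcal{F}$, both directions are exactly \cite{cai2017holographicuni}. For general $\mathcal{F}$, I would reduce to that case: in the tractable direction, Lemma~\ref{lem:SymGadRealize} rewrites each permutable matchgate signature as a star gadget with a symmetric matchgate center, so the matchgate substitution above is unaffected; in the hardness direction, a star-gadget decomposition together with Lemma~\ref{lem:AsymtoSym}-type realizations isolates, from any $\widehat{f}\notin\mathscr{M}_P$, a symmetric signature to which \cite{cai2017holographicuni} applies --- this is the programme carried out in \cite{meng2025matchgate}. The step I expect to be hardest is the planar crossover construction: it requires a detailed case analysis of the ways a signature can fail the matchgate identities, and every gadget in the chain must be assembled in a planar and bipartite manner so that the instance finally obtained genuinely belongs to $\mathcal{PL}$.
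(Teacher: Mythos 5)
This theorem is imported from \cite{cai2017holographicuni,meng2025matchgate}; the paper states it in Section~\ref{sec:precspdicho} as a ``previous dichotomy'' and never proves it, so there is no in-paper proof for your sketch to be compared against.

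Assessing the sketch on its own terms: the tractability part is sound and matches the known approach --- reduce to $\text{Holant}(\widehat{\mathcal{F}}|\widehat{\mathcal{EQ}})\langle\mathcal{PL}\rangle$, substitute for each vertex a matchgate whose external nodes appear in the cyclic order dictated by the planar embedding (which is exactly where permutability is used), and run FKT on the resulting planar weighted graph; your observation that each $\widehat{=_k}=[1,0,1,0,\dots]_k$ is a symmetric, hence permutable, matchgate signature is correct. The hardness part, though, is only a plan: it rests entirely on the claim that a planar bipartite crossover gadget is realizable from $\widehat{\mathcal{F}}\cup\widehat{\mathcal{EQ}}$ whenever $\mathcal{F}$ escapes all three tractable classes, and you neither establish that claim nor carry out the case analysis it would require. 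That realizability claim is the crux of the entire theorem --- a crossover must \emph{fail} to exist precisely on the planar-tractable side, so showing it exists on the hard side is tantamount to proving the dichotomy --- and the cited works settle it through an extensive, multi-case argument (for symmetric signatures in \cite{cai2017holographicuni}, extended to arbitrary signatures in \cite{meng2025matchgate}) that your sketch gestures at but does not reproduce. Since the present paper treats the theorem as a black box, none of this affects the paper's own correctness; it just means your attempt should not be read as a proof of the imported result.
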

\begin{theorem}[\cite{meng2025matchgate}]\label{plRDCSP}
    Suppose $D\ge3$ is an integer. If $\mathcal{F}\subseteq \mathscr{A}$ or $\mathcal{F}\subseteq \mathscr{P}$ or $\mathcal{F}\subseteq \widehat{\mathscr{M}_P}$, $\#R_D$-CSP$(\mathcal{F})\langle \mathcal{PL}\rangle$ is computable in polynomial time; otherwise it is $\text{\#P}$-hard.
\end{theorem}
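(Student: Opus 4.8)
The tractability direction needs nothing new: every instance of $\#R_D\text{-CSP}(\mathcal{F})\langle\mathcal{PL}\rangle$ is in particular a planar $\text{\#CSP}(\mathcal{F})$ instance, so whenever $\mathcal{F}\subseteq\mathscr{A}$, $\mathcal{F}\subseteq\mathscr{P}$, or $\mathcal{F}\subseteq\widehat{\mathscr{M}_P}$ the polynomial-time algorithm of Theorem \ref{plCSP} already solves it. Everything therefore reduces to proving $\text{\#P}$-hardness of $\#R_D\text{-CSP}(\mathcal{F})\langle\mathcal{PL}\rangle$ for $D\ge 3$ under the assumption $\mathcal{F}\nsubseteq\mathscr{A}$, $\mathcal{F}\nsubseteq\mathscr{P}$, $\mathcal{F}\nsubseteq\widehat{\mathscr{M}_P}$.

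The first step I would take is to pass to the Holant side. By Lemma \ref{lemcsp=hol}, $\#R_D\text{-CSP}(\mathcal{F})\langle\mathcal{PL}\rangle\equiv_T\text{Holant}(\mathcal{F}\mid\mathcal{EQ}_{\le D})\langle\mathcal{PL}\rangle$, and applying the Hadamard holographic transformation (Theorem \ref{thmHT}) together with Lemma \ref{transEQ} this is $\equiv_T\text{Holant}(\widehat{\mathcal{F}}\mid\widehat{\mathcal{EQ}_{\le D}})\langle\mathcal{PL}\rangle$, where for $D\ge 3$ the right-hand side contains the signatures $[1,0]$, $[1,0,1]$ and $[1,0,1,0]$ of Lemma \ref{lem:AsymtoSym}. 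The hypotheses become $\widehat{\mathcal{F}}\nsubseteq\mathscr{M}_P$ (equivalent to $\mathcal{F}\nsubseteq\widehat{\mathscr{M}_P}$, since $\mathscr{M}_P$ is closed under scalars), $\widehat{\mathcal{F}}\nsubseteq\mathscr{A}$ (as $\mathscr{A}$ is invariant under $H_2$), and $\mathcal{F}\nsubseteq\mathscr{P}$. So it suffices to prove that planar, bipartite, variable-degree-$\le 3$ Holant over any such $\widehat{\mathcal{F}}\cup\{[1,0],[1,0,1],[1,0,1,0]\}$ is $\text{\#P}$-hard.

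The plan from here is to mirror the proof of the unbounded planar \#CSP dichotomy while keeping every gadget degree-bounded. Using $[1,0]$ to perform pinnings (Lemma \ref{lemPin}) together with the structure theory of permutable matchgate signatures (Lemmas \ref{lem:SymGadRealize} and \ref{lem:AsymtoSym}), I would first realize, by a planar left-side gadget of bounded degree, a \emph{symmetric} signature $g$ that is not a matchgate signature; by Lemma \ref{MG<3} such a $g$ necessarily has arity at least $4$, and a finite case analysis on symmetric non-matchgate signatures isolates the few ``normal forms'' one has to handle, reducing the asymmetric problem to the symmetric problem $\text{Holant}(\{g\}\cup\{[1,0],[1,0,1],[1,0,1,0]\})$. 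Then, with $g$ and these small parity signatures available, I would reduce \emph{from} a problem already known to be $\text{\#P}$-hard on bounded-degree planar graphs — for instance counting matchings on $3$-regular planar graphs, as invoked elsewhere in this paper — simulating each vertex signature of the source instance by a planar, bipartite gadget over $\{g\}\cup\{[1,0],[1,0,1],[1,0,1,0]\}$ and manufacturing the auxiliary binary signatures by polynomial interpolation (Lemma \ref{interp}, Corollary \ref{corointerp}). Because the source instance has degree $3$ and each gadget has constant size, the resulting instances stay in $\mathcal{PL}$ and respect the variable-degree bound $D$.

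I expect the main obstacle to be exactly the triple constraint ``planar $+$ bipartite $+$ variable-degree $\le D$'' imposed on \emph{every} gadget. In the bipartite form $\text{Holant}(\widehat{\mathcal{F}}\mid\widehat{\mathcal{EQ}_{\le D}})$ two equality-type vertices can never be adjacent, so one cannot chain copies of $[1,0,1,0]$ to build a higher-arity equality; the whole reduction must be engineered so that an equality of arity exceeding $D$ is never needed, which is precisely why the reduction has to start from a bounded-degree source problem and route all ``gluing'' through $\widehat{\mathcal{F}}$-vertices rather than through large equalities. A second, more bookkeeping-heavy obstacle is carrying out the case analysis on the symmetric non-matchgate signature $g$ and exhibiting in each case a gadget that is simultaneously planar, bipartite and of bounded degree, together with the use of Lemmas \ref{lem:SymGadRealize}–\ref{lem:AsymtoSym} to move between the asymmetric and symmetric settings without breaking planarity or the degree bound.
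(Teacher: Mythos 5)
This theorem is not proved in the paper at all: it is imported verbatim as a black-box result of \cite{meng2025matchgate}, so there is no in-paper proof to compare your sketch against. Your tractability direction is correct and needs nothing beyond Theorem~\ref{plCSP}, as you say.

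The hardness sketch, however, has a concrete gap: you invoke Lemmas~\ref{lem:SymGadRealize} and~\ref{lem:AsymtoSym} to ``realize, by a planar left-side gadget, a symmetric signature $g$ that is not a matchgate signature.'' Both of those lemmas take as \emph{hypothesis} that the input signature lies in $\mathscr{M}_P$ --- Lemma~\ref{lem:SymGadRealize} is a structure theorem for permutable matchgate signatures, and Lemma~\ref{lem:AsymtoSym} explicitly starts from $F\in\mathscr{M}_P-\mathscr{A}$. In the hardness regime of Theorem~\ref{plRDCSP} you are by assumption in the complementary case $\widehat{\mathcal{F}}\nsubseteq\mathscr{M}_P$, so the witness signature you must work from is precisely \emph{not} a permutable matchgate signature and neither lemma applies. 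These two lemmas are the engine of the paper's own hardness arguments (Lemma~\ref{K8hard}, Lemma~\ref{thmPLA}), but there the standing assumption is $\mathcal{F}\subseteq\mathscr{M}_P$, the opposite of what you have here. Relatedly, your claim that ``a finite case analysis on symmetric non-matchgate signatures isolates the few normal forms'' is not available: the short lists the paper quotes (Lemma~\ref{lemMGform} for symmetric matchgate signatures, Lemma~\ref{M-Aform} for symmetric $\mathscr{M}-\mathscr{A}$ signatures) exist because those classes are rigid; symmetric signatures \emph{outside} $\mathscr{M}$ admit no such catalogue, and the actual planar hardness proofs \cite{cai2017holographicuni,guo2020complexity,meng2025matchgate} instead split on several algebraic witness conditions (failure of affine structure, failure of $\mathscr{P}$-decomposability, failure of the matchgate identities), possibly witnessed by \emph{different} members of $\mathcal{F}$, each handled by its own gadget or interpolation. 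Your observation that every auxiliary gadget must be simultaneously planar, bipartite and degree-$\le D$ correctly identifies the main engineering constraint, but the tools you reach for to meet it are the wrong ones for this case.
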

 At last we present an algorithm result for \#CSP with bounded treewidth from \cite{ganian2018sum}. In the following theorem, \textsf{dom} denotes the size of the domain, \textsf{sup} denotes the maximum size of support in $\mathcal{F}$, $\textsf{tw}^*$ denotes the maximum incidence treewidth and $|I|$ denotes the size of the instance. Furthermore, \#CSPD is a generalized version of \#CSP.

    \begin{theorem}[\cite{ganian2018sum}]
     \#CSPD can be solved in time $(\textsf{dom} +\textsf{sup}+1)^{O(\textsf{tw}^*)}|I|$.
    \label{incidence}
\end{theorem}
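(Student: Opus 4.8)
The plan is to prove Theorem~\ref{incidence} by dynamic programming over a tree decomposition of the incidence graph. First I would compute, in time that is linear in $|I|$ for fixed $\textsf{tw}^*$, a tree decomposition of the incidence graph of $I$ of width $O(\textsf{tw}^*)$ --- for instance by running Bodlaender's linear-time algorithm, or (more simply) a constant-factor linear-time approximation iterated over doubling guesses for the width --- and then convert it into a \emph{nice} tree decomposition: rooted at a node $r$ with $\beta(r)=\emptyset$, with every node being a leaf, an introduce-vertex, a forget-vertex, or a join node. For a node $t$ write $T_t$ for the subtree below $t$ and $V_t=\bigcup_{t'\in T_t}\beta(t')$, and in each bag separate the \emph{variable-vertices} from the \emph{constraint-vertices}.

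The dynamic-programming record at a node $t$ is a pair $(\rho,\delta)$, where $\rho$ assigns a domain value to each variable-vertex of $\beta(t)$ and $\delta$ assigns to each constraint-vertex $u_f\in\beta(t)$ one of the at most $\textsf{sup}$ tuples in the support of $f$ (a ``committed'' locally-satisfying assignment to the scope of $f$). Since $|\beta(t)|\le O(\textsf{tw}^*)$, there are at most $(\textsf{dom}+\textsf{sup}+1)^{O(\textsf{tw}^*)}$ records, and to each we attach the value in $\mathbb{C}$ equal to the sum, over all assignments of $V_t$ consistent with $(\rho,\delta)$, of the product of the weights $f(\delta(u_f))$ over the constraint-vertices already forgotten within $T_t$. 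The transitions are the standard ones: at an introduce-variable node we branch over the new value and discard records that are inconsistent with some committed tuple $\delta(u_f)$ whose scope contains the new variable; at an introduce-constraint node we branch over the $\le\textsf{sup}$ support tuples of $f$ and discard those inconsistent with $\rho$ on the scope-variables currently in the bag; at a forget-variable node we sum out the variable; at a forget-constraint node we sum out the constraint \emph{and} multiply its value by $f(\delta(u_f))$; at a join node we pair records of the two children agreeing on both $\rho$ and $\delta$ and multiply their values. The value attached to the (unique, empty) record at $r$ is $Z(I)$. The generalization to \#CSPD follows by routine modifications, interpreting ``support'' as the set of tuples on which a factor deviates from its default value and carrying the default contributions along; this does not change the governing parameters.

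The one genuinely non-routine point --- and the intended main obstacle --- is the interaction between constraint-vertices and the structure of an \emph{incidence} tree decomposition: we must guarantee that, for every constraint-vertex $u_f$, each variable $v$ in the scope of $f$ is checked against the committed tuple $\delta(u_f)$ somewhere in $T_t$ before $u_f$ is forgotten, and that the weight $f(\delta(u_f))$ is counted exactly once. For the first, observe that $u_f$ and $v$ are adjacent in the incidence graph, so some bag contains both; since the bags containing $u_f$ and the bags containing $v$ each induce a connected subtree, such a common bag must be a descendant of the (unique) forget node of $u_f$, and moreover one of $u_f,v$ is already present when the other is introduced, so the consistency check is performed at the later of the two introduce nodes. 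For the second, it suffices that $f(\delta(u_f))$ is multiplied in only at the forget node of $u_f$, which is unique even in the presence of join nodes; hence join nodes require only multiplication of values, with no division or inclusion--exclusion. Finally, with $O(\textsf{tw}^*\cdot|I|)$ nodes in the decomposition and per-node work $(\textsf{dom}+\textsf{sup}+1)^{O(\textsf{tw}^*)}$ (join nodes handled by sorting or hashing the record tables), the total running time is $(\textsf{dom}+\textsf{sup}+1)^{O(\textsf{tw}^*)}|I|$, as claimed.
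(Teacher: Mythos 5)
The paper does not prove Theorem~\ref{incidence}; it is imported verbatim from \cite{ganian2018sum}, so there is no in-paper argument to compare against. Your reconstruction is the standard dynamic program over a nice tree decomposition of the incidence graph, and it is correct: each record pairs a domain assignment to the variable-vertices of the bag with a committed support tuple (or the default, in the \#CSPD setting) for each constraint-vertex; the constraint's weight is multiplied in exactly once at its unique forget node; join nodes only intersect and multiply; and incidence adjacency guarantees that every scope-variable shares a bag with its constraint, so consistency is enforced at an appropriate introduce node before anything is forgotten.

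Two minor cautions. The $+1$ in $\textsf{dom}+\textsf{sup}+1$ is load-bearing for \#CSPD: a constraint-vertex must be allowed to commit to the \emph{default} option as a $(\textsf{sup}+1)$-st alternative, giving $\textsf{dom}^{a}\cdot(\textsf{sup}+1)^{b}$ records per bag where $a$ and $b$ count the variable- and constraint-vertices; this is better folded directly into the record definition than deferred to a closing remark, since it is exactly what produces the stated base. And Bodlaender's exact linear-time algorithm runs in $2^{O(\textsf{tw}^{*3})}|I|$ time, which does not fit inside $(\textsf{dom}+\textsf{sup}+1)^{O(\textsf{tw}^*)}|I|$; the linear-time constant-factor approximation you mention parenthetically (running in $2^{O(\textsf{tw}^*)}|I|$) is the one you actually need for computing the decomposition. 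Neither issue affects the overall correctness of the approach.
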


In the setting of this article, $\textsf{dom}=2$ and $\textsf{sup}$ is bounded since $\mathcal{F}$ is finite. Consequently,  Theorem \ref{incidence} can induce the following corollary.
   % The function sets $\mathscr{A},\mathscr{P}$ and $\widehat{\mathscr{M}}$ are defined as follows:
   
   \begin{corollary}[\cite{ganian2018sum}]
      For each $k\in \mathbb{N}^+$, if $\mathcal{F}$ is finite, \#CSP$(\mathcal{F})\langle \mathcal{TW}_k \rangle$ is computable in $c^{O(k)}|I|$ time, where $c$ is only parameterized by $\mathcal{F}$.
    \label{CSPK4}
   \end{corollary}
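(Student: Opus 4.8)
The plan is to derive this corollary as an immediate instantiation of Theorem~\ref{incidence}, so the work consists almost entirely of verifying that the three parameters appearing in that theorem are controlled in our setting. First I would observe that, by definition, an instance $I$ of $\text{\#CSP}(\mathcal{F})\langle\mathcal{TW}_k\rangle$ has incidence graph of treewidth at most $k$; since (as noted just after Definition~\ref{defCSP}) ``treewidth'' always means incidence treewidth in this article, this gives $\textsf{tw}^*(I)\le k$. Next, because we work exclusively over the Boolean domain, $\textsf{dom}=2$. Finally, finiteness of $\mathcal{F}$ means the maximum arity $r$ of signatures in $\mathcal{F}$ is a constant depending only on $\mathcal{F}$, so the support of any $f\in\mathcal{F}$ has size at most $2^{r}$; hence $\textsf{sup}\le 2^{r}=:s$, again a quantity parameterized only by $\mathcal{F}$.

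With these bounds in hand, Theorem~\ref{incidence} yields a running time of $(\textsf{dom}+\textsf{sup}+1)^{O(\textsf{tw}^*)}\cdot|I|\le (s+3)^{O(k)}\cdot|I|$, so setting $c:=s+3$ gives the claimed bound $c^{O(k)}|I|$ with $c$ depending only on $\mathcal{F}$. The one point that deserves a sentence of justification is that Theorem~\ref{incidence} is phrased for the more general framework \#CSPD; I would simply remark that an ordinary Boolean-domain instance of $\text{\#CSP}(\mathcal{F})$ is the special case of \#CSPD with trivial domain restrictions, so the theorem applies to it verbatim (in particular, the quoted running time already subsumes the cost of computing a suitable tree decomposition of the incidence graph).

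I do not anticipate any genuine obstacle here: the statement is a bookkeeping corollary, and the only thing one must be slightly careful about is to keep track of the fact that $c$ is permitted to depend on $\mathcal{F}$ (through the maximum arity $r$, hence through $s=2^r$) but must not depend on $k$ or on $|I|$ — which is exactly what the computation above delivers.
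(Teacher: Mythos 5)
Your proposal is correct and matches the paper's derivation exactly: both treat Corollary~\ref{CSPK4} as a direct instantiation of Theorem~\ref{incidence}, observing that $\textsf{dom}=2$ over the Boolean domain, that $\textsf{sup}$ is a constant depending only on $\mathcal{F}$ since $\mathcal{F}$ is finite, and that $\textsf{tw}^*\le k$ by the restriction to $\mathcal{TW}_k$. Your explicit bound $\textsf{sup}\le 2^r$ and the remark about \#CSPD subsuming ordinary Boolean \#CSP are harmless elaborations of the paper's one-line argument.
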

    Furthermore, since $fb(\{K_4\})=\mathcal{TW}_2$, we have that $\text{\#CSP}(\mathcal{F})[K_4]$ can be computed in linear time for arbitrary finite $\mathcal{F}$. This proves the first statement in Theorem \ref{allthm}.

\section{Main results}\label{sec:preprocessing}\label{sec:easyplrdcsp}
In this section, we present  the detailed version of our main results as follows.

\begin{theorem}
For $\text{\#CSP}$ problem over Boolean domain with complex range:
      \begin{enumerate}
          \item For each $k\in \mathbb{N}^+$, \#CSP$(\mathcal{F})\langle \mathcal{TW}_k \rangle$ can be computed in polynomial time for arbitrary finite $\mathcal{F}$; 
          \item Suppose that $H$ is  a single crossing graph. $\text{\#CSP}(\mathcal{F})[H]$ is computable in polynomial time if $\mathcal{F}\subseteq \mathscr{A}$ or $\mathcal{F}\subseteq \mathscr{P}$ or $\mathcal{F}\subseteq \widehat{\mathscr{M}_P}$; otherwise it is $\text{\#P}$-hard. 
          \item  $\text{\#CSP}(\mathcal{F})\langle \mathcal{PLA}\rangle$ is computable in polynomial time if $\mathcal{F}\subseteq \mathscr{A}$ or $\mathcal{F}\subseteq \mathscr{P}$; otherwise it is $\text{\#P}$-hard. 
          \item Suppose that $\mathcal{G}\cap\mathcal{H}\neq \emptyset$, $\mathcal{G}\cap\mathcal{PL}=\emptyset$. Then for any constant $k\in \mathbb{N}^+$,  maximum degree $k$ $\text{\#CSP}(\mathcal{F})[\mathcal{G}]$ is computable in polynomial time if $\mathcal{F}\subseteq \mathscr{A}$ or $\mathcal{F}\subseteq \mathscr{P}$ or $\mathcal{F}\subseteq \widehat{\mathscr{M}_P}$;  otherwise, it is $\text{\#P}$-hard.
          %\item For any constant $k\in \mathbb{N}$, if $\mathcal{G}\cap\mathcal{H}\neq \emptyset,\mathcal{G}\cap\mathcal{PL}=\emptyset$, then maximum degree $k$ sym-$\text{\#CSP}(\mathcal{F})$ on $\mathcal{G}$-minor-free graphs is computable in polynomial time if $\text{\#CSP}(\mathcal{F})\langle \mathcal{PL}\rangle$ is polynomial time computable; otherwise it is $\text{\#P}$-hard. 
          \item Suppose that $\mathcal{G}\cap\mathcal{H}=\emptyset$. Then for any integer $D\ge 3$,  $\#R_D\text{-CSP}(\mathcal{F})[\mathcal{G}]$ is computable in polynomial time if $\mathcal{F}\subseteq \mathscr{A}$ or $\mathcal{F}\subseteq \mathscr{P}$; otherwise, it is $\text{\#P}$-hard.
          %\item For any constant $k\in \mathbb{N}$, if $\mathcal{G}\cap\mathcal{H}=\emptyset$, then maximum degree $k$ sym-$\text{\#CSP}(\mathcal{F})$ is computable in polynomial time if $\text{\#CSP}(\mathcal{F})$ is polynomial time computable; otherwise it is $\text{\#P}$-hard. 
      \end{enumerate}
      \label{allthm2}
  \end{theorem}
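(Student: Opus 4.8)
The easy direction is immediate: if $\mathcal{F}\subseteq\mathscr{A}$ or $\mathcal{F}\subseteq\mathscr{P}$, then by Theorem~\ref{genRDCSP} the problem $\#R_D\text{-CSP}(\mathcal{F})$ is already polynomial-time solvable on \emph{all} graphs, hence on $fb(\mathcal{G})$. So the work is the hardness direction, where I assume $\mathcal{F}\nsubseteq\mathscr{A}$, $\mathcal{F}\nsubseteq\mathscr{P}$, $\mathcal{G}\cap\mathcal{H}=\emptyset$ and $D\ge 3$. The first step is to record the structural consequence of $\mathcal{G}\cap\mathcal{H}=\emptyset$: since no $G\in\mathcal{G}$ is a minor of any shallow vortex grid $H_k$ (Definition~\ref{svg}), every $H_k$ lies in $fb(\mathcal{G})$; by transitivity of the minor relation every graph in $\mathcal{H}$ — and, since every planar graph is a minor of a sufficiently large $H_k$, every planar graph — also lies in $fb(\mathcal{G})$. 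Thus $\mathcal{PL}\cup\mathcal{H}\subseteq fb(\mathcal{G})$.

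Next I would split on whether $\mathcal{F}\subseteq\widehat{\mathscr{M}_P}$. If $\mathcal{F}\nsubseteq\widehat{\mathscr{M}_P}$, then with $\mathcal{F}\nsubseteq\mathscr{A},\mathscr{P}$ Theorem~\ref{plRDCSP} already makes $\#R_D\text{-CSP}(\mathcal{F})\langle\mathcal{PL}\rangle$ $\text{\#P}$-hard, and since $\mathcal{PL}\subseteq fb(\mathcal{G})$ this suffices. So the remaining case is $\mathcal{F}\subseteq\widehat{\mathscr{M}_P}$, i.e.\ $\widehat{\mathcal{F}}\subseteq\mathscr{M}_P$; using that $\mathscr{A}$ is invariant under the $H_2$-transformation, $\mathcal{F}\nsubseteq\mathscr{A}$ gives $\widehat{\mathcal{F}}\nsubseteq\mathscr{A}$. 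By Lemma~\ref{lemcsp=hol} and the $H_2$-holographic transformation of Theorem~\ref{thmHT} — which leaves the underlying graph, hence the graph class, unchanged — we get $\#R_D\text{-CSP}(\mathcal{F})[\mathcal{G}]\equiv_T\text{Holant}(\widehat{\mathcal{F}}\mid\widehat{\mathcal{EQ}_{\le D}})[\mathcal{G}]$, and by Lemma~\ref{transEQ}, $\{[1,0],[1,0,1],[1,0,1,0]\}=\widehat{\mathcal{EQ}_{\le 3}}\subseteq\widehat{\mathcal{EQ}_{\le D}}$ since $D\ge 3$. Picking $F\in\widehat{\mathcal{F}}\cap(\mathscr{M}_P-\mathscr{A})$ and applying Lemma~\ref{lem:AsymtoSym}, a symmetric $g\in\mathscr{M}_P-\mathscr{A}$ is realizable as a planar left-side gadget over $\{F\}\mid\{[1,0],[1,0,1],[1,0,1,0]\}$; so it will be enough to show $\text{Holant}(\{g\}\mid\widehat{\mathcal{EQ}_{\le D}})[\mathcal{G}]$ is $\text{\#P}$-hard, provided the constant-size planar gadget for $g$ can be substituted back without leaving $fb(\mathcal{G})$ — which I will ensure by building the hard instances below with enough room to absorb such a local substitution into a slightly larger shallow vortex grid.

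The heart of the proof is a reduction from counting perfect matchings in $3$-regular graphs, which is $\text{\#P}$-hard. Given a $3$-regular $\Gamma$, I would write $\text{\#PM}(\Gamma)$ as a bipartite Holant instance carrying $[0,1,0,0]$ on the vertices of $\Gamma$ and, after subdividing each edge once, $\widehat{=_2}=[1,0,1]$ on the subdivision vertices (degrees $3$ and $2$). The first task is structural routing: re-embed this instance so that its underlying graph becomes a minor of a shallow vortex grid $H_k$ with $k=\mathrm{poly}(|\Gamma|)$, by laying $\Gamma$ out along the planar grid part of $H_k$ and using the vortex along the cylinder boundary to absorb all edge crossings, in the manner of the $\text{\#PM}[K_8]$-hardness constructions of \cite{curticapean2022parameterizing,thilikos2022killing}; the result stays bipartite, of bounded degree, and — being a minor of $H_k$ — lies in $\mathcal{H}\subseteq fb(\mathcal{G})$. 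The second task is signature simulation: replace each occurring signature ($[0,1,0,0]$, and the plain or weighted edge signatures $[1,0,c]$ used in the routing) by a planar, bipartite gadget over $\{g\}\cup\widehat{\mathcal{EQ}_{\le D}}$, constructed from the non-affine symmetric matchgate signature $g$ and, where necessary, polynomial interpolation (Lemma~\ref{interp}, Corollary~\ref{corointerp}). This requires a case analysis on the shape of $g$: a symmetric matchgate signature satisfies the parity condition and its nonzero entries obey a second-order linear recurrence, and failing to be affine pins it down to a few families, each of which yields the required gadgets; each gadget is inserted locally so the graph remains a minor of a mildly enlarged shallow vortex grid, hence in $fb(\mathcal{G})$, and only $\widehat{=_k}$ with $k\le D$ is used on the $\widehat{\mathcal{EQ}}$-side, keeping the relevant degrees at most $D$. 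Chaining the reductions, counting perfect matchings in $3$-regular graphs $\leq_T\text{Holant}(\{g\}\mid\widehat{\mathcal{EQ}_{\le D}})[\mathcal{G}]\leq_T\text{Holant}(\widehat{\mathcal{F}}\mid\widehat{\mathcal{EQ}_{\le D}})[\mathcal{G}]\equiv_T\#R_D\text{-CSP}(\mathcal{F})[\mathcal{G}]$, then yields the $\text{\#P}$-hardness.

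The main obstacle I expect is keeping the underlying bipartite graph inside the minor-closed class $fb(\mathcal{G})$ throughout: both the global routing into a shallow vortex grid and the local substitutions of planar gadgets must preserve the absence of every $G\in\mathcal{G}$ as a minor — a global property — which is the delicate interplay between the combinatorics of shallow vortex grids \cite{thilikos2022killing} and the gadget constructions. A secondary difficulty is carrying out the signature case analysis for $g$ and the accompanying interpolations while simultaneously respecting bipartiteness and the degree bound $D$.
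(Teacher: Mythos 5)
Your proposal addresses only the fifth statement of Theorem~\ref{allthm2}, but the theorem has five parts, each proved in a different section of the paper with substantial and non-interchangeable content. Part~1 is the bounded-treewidth algorithm (Corollary~\ref{CSPK4}, from \cite{ganian2018sum}). Part~2 requires an algorithm for $\text{Holant}(\mathcal{F})[H]$ with $\mathcal{F}\subseteq\mathscr{M}_P$ on single-crossing-minor-free graphs, built in Section~\ref{secK5} via path gadgets, representative signatures, and recursion on a tree decomposition of bounded $g{-}3$-width (Theorem~\ref{K5trde}). Part~3 requires the $\text{\#P}$-hardness proof on apex graphs (Section~\ref{secK6}), which reduces from counting matchings in planar $3$-regular multigraphs and uses the planar-pairing trick (Lemma~\ref{lemplanar_pairing}) to handle the odd-parity signature $[0,0,0,1,0]$. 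Part~4 is the bounded-degree algorithm on $\mathcal{G}$-minor-free graphs with $\mathcal{G}\cap\mathcal{H}\neq\emptyset$ (Section~\ref{secK7}), which builds boundary mappings over the $ga3$ tree decomposition of Theorem~\ref{K7trde} and is not subsumed by Part~2 because one must additionally ``kill the apex.'' None of this appears in your argument, so as a proof of the stated theorem it has a large gap.

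For Part~5 itself, your strategy is essentially the one in Section~\ref{secK8}: reduce from $\text{Holant}([0,0,1,0])$ (perfect matchings on $3$-regular graphs), route the instance into a graph that lies in $fb(\mathcal{G})$ by pushing all crossings into a ``vortex'' region, simulate the hard signature using $g\in\mathscr{M}_P-\mathscr{A}$ plus $\widehat{\mathcal{EQ}_{\le 3}}$, and observe that everything stays in the degree bound. But two pieces you leave at sketch level are exactly where the paper does real work. First, the routing step: rather than appealing informally to ``absorbing crossings into the cylinder boundary,'' the paper isolates the class $\mathcal{R}$ of ring-blowup (outside-crossing) graphs, proves $\text{Holant}([0,0,1,0])\le_T\text{Holant}([0,0,1,0])\langle\mathcal{R}\rangle$ (Lemma~\ref{movecrossing}) via the explicit crossing gadget — which needs the edge weight $[1,0,-1]$ — and interpolates $[1,0,-1]$ from a planar $3$-regular gadget realizing $[1,0,2]$ via Corollary~\ref{corointerp}; the inclusion $\mathcal{R}\subseteq fb(\mathcal{G})$ is then quoted from \cite{thilikos2022killing}. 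You mention the need to handle $[1,0,c]$ but do not identify the crossing gadget or the interpolation that realizes it, and these are not incidental. Second, the case analysis on $g$ is Lemma~\ref{lem0010simulate}, itself a multi-case argument (forms in Lemma~\ref{M-Aform}, with subcases on $|r|$ and interpolations); you state that the recurrence structure ``pins $g$ down to a few families, each of which yields the required gadgets,'' which is the conclusion rather than the argument. Neither of these is a change of approach — they are the missing content of the approach you chose.

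In short: for Part~5 you have the right skeleton, aligned with the paper's reduction from $3$-regular perfect matchings through a vortex-style planarization and a matchgate case analysis, but you have not supplied the crossing-gadget interpolation or the $g$-case analysis that make it go through. And Parts~1--4 are entirely absent, including two non-trivial algorithms that form the bulk of the paper.
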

  We remark that Theorem \ref{allthm2} can be applied to \#CSP problems that forbid a complete graph as a minor, thereby obtaining Theorem \ref{allthm}. This application is discussed at the beginning of each corresponding section in Section \ref{secK5}-\ref{secK6}.
In addition, all of the graph classes listed in Theorem \ref{allthm2} are minor-closed classes by Theorem \ref{thmminor}, though it might be challenging to specify the exact forbidden minor sets for some of them.  

In the rest part of this paper, we prove Theorem \ref{allthm2}. The first statement follows from Corollary \ref{CSPK4}.%We prove the first statement of Theorem \ref{allthm2} in Section \ref{secK4}. 
In order to prove the second to the fifth statements, we address the following analysis, presented as follows.
\begin{remark}\label{remarCSP}
By Theorem \ref{genCSP}, if $\mathcal{F}\subseteq \mathscr{A}$ or $\mathcal{F}\subseteq \mathscr{P}$, then $\text{\#CSP}(\mathcal{F})\langle \mathcal{C} \rangle$ is computable in polynomial time for arbitrary $\mathcal{C}$. By Theorem \ref{plCSP}, if none of $\mathcal{F}\subseteq \mathscr{A}$, $\mathcal{F}\subseteq \mathscr{P}$ and $\mathcal{F}\subseteq \widehat{\mathscr{M}_P}$ holds and $\mathcal{PL}\subseteq\mathcal{C}$, then $\text{\#CSP}(\mathcal{F})\langle \mathcal{C} \rangle$ is $\text{\#P}$-hard. Moreover, it can be verified that $(\widehat{\mathscr{M}}-\mathscr{A})\cap\mathscr{P}=\emptyset$ \cite{cai2017holographicuni}. Consequently, for a graph class $\mathcal{C}$ satisfying $\mathcal{PL}\subseteq\mathcal{C}$, the only unknown case is when $\mathcal{F}\subseteq \widehat{\mathscr{M}_P}$, but $\mathcal{F}\nsubseteq \mathscr{A}$.
         
          By Lemma \ref{lemcsp=hol} and Theorem \ref{thmHT}, $\text{\#CSP}(\mathcal{F})\langle \mathcal{C} \rangle\equiv_T\text{Holant}(\mathcal{F}|\mathcal{EQ})\langle \mathcal{C} \rangle\equiv_T \text{Holant}(\widehat{\mathcal{F}}|\widehat{\mathcal{EQ}})\langle \mathcal{C} \rangle$. As $\widehat{\mathcal{EQ}}=\{[1,0,1,0,...]_k|k\ge1\}$ by Lemma \ref{transEQ}, both $\widehat{\mathcal{F}},\widehat{\mathcal{EQ}}\subseteq \mathscr{M}_P$. Consequently, to show the unknown case is polynomial-time computable, it is sufficient to give an algorithm for Holant problems defined by permutable matchgate signatures.
          
          Besides, as $\widehat{\mathscr{A}}=\mathscr{A}$, there exists a signature $f\in \widehat{\mathcal{F}}$ satisfying $f\in \mathscr{M}_P-\mathscr{A}$. By Lemma \ref{lem:SymGadRealize}, a symmetric signature $g\in \mathscr{M}_P-\mathscr{A}$ can be realized. The possible forms of $g$ is listed in Lemma \ref{M-Aform} as follows.

          For any integer $D\ge 3$, by replacing $\mathcal{EQ}$ with $\mathcal{EQ}_{\le D}$, the same arguments hold for $\#R_D$-CSP  as well.
      \end{remark}
\begin{lemma}[\cite{guo2020complexity}] \label{M-Aform}
    Suppose $g\in \mathscr{M}-\mathscr{A}$ and is symmetric. Then $g$ has one of the following forms.
    \begin{enumerate}
        \item $[0,1,0,...,0]_k,k\ge3$;
        \item $[0,...,0,1,0]_k,k\ge3$;
        \item $[1,0,r], r^4\neq 0,1$;
        \item $[1,0,r,0,r^2,...]_k,k\ge3, r^2\neq 0,1$;
        \item $[0,1,0,r,0,r^2,...]_k,k\ge3, r^2\neq 0,1$.
    \end{enumerate}
\end{lemma}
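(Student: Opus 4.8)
The plan is to use the parity condition together with the matchgate identities to force a symmetric matchgate signature $g$ into a geometric-type shape, and then to read off which of these shapes actually lie outside $\mathscr{A}$.

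First I would record that every matchgate signature satisfies the parity condition --- a graph on an odd number of vertices has no perfect matching --- as already observed above, so the support of $g$ lies in a single Hamming-weight parity class; I write $g^{\ast}=(g^{\ast}_0,g^{\ast}_1,\dots)$ for the subsequence of values of $g$ at the weights of that class, which begins at weight $0$ in the even case and at weight $1$ in the odd case. The second ingredient is the matchgate identities, which for a symmetric signature reduce to the parity condition together with the multiplicative relations $g^{\ast}_{i}\,g^{\ast}_{i+2}=(g^{\ast}_{i+1})^{2}$ among consecutive active entries (equivalently, the Hankel matrix of $g^{\ast}$ has rank at most $1$); this is the symmetric specialization of the Grassmann--Pl\"ucker relations behind the matchgate characterization, cf.\ \cite{cai2013matchgates}, and for arity at most $3$ it is subsumed by Lemma \ref{MG<3}. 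Since these relations force the pairs $(g^{\ast}_{i},g^{\ast}_{i+1})$ to be pairwise parallel, a short argument (following the position of the first nonzero active entry) shows that, up to a nonzero scalar $\lambda$, either (a) $g$ has a single nonzero entry, which must sit at Hamming weight $0$, $1$, $k-1$ or $k$, or (b) the active entries form a genuine geometric progression $1,\rho,\rho^{2},\dots$ with $\rho\neq0$, so that $g=\lambda[1,0,\rho,0,\rho^{2},\dots]_{k}$ in the even case and $g=\lambda[0,1,0,\rho,0,\rho^{2},\dots]_{k}$ in the odd case.

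It remains to decide which of these shapes lie in $\mathscr{A}$, i.e.\ which should be excluded. The single-point supports $[1,0,\dots,0]_{k}$ and $[0,\dots,0,1]_{k}$ are affine for every $k$, being cut out by a $\mathbb{Z}_2$-linear system that pins all variables. A single nonzero entry at weight $1$ or $k-1$ is affine exactly when $k\le2$: for $k\ge3$ the Hamming level of weight $1$ (resp.\ $k-1$) is not a coset of a $\mathbb{Z}_2$-subspace, because the differences $e_i+e_j$ of its $k$ elements span a subspace of dimension $k-1>\log_2 k$, more than the differences inside any coset of that cardinality can span; so these contribute precisely Cases~1 and~2. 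For the geometric families one asks which ratios $\rho$ are compatible with a symmetric affine signature, i.e.\ with a quadratic phase $\mathfrak{i}^{q(x)}$ on the coset carrying the support. Over a singleton Hamming level the phase is $\mathfrak{i}$ to an integer power --- a fourth root of unity --- so in arity $2$, where both occupied levels (weights $0$ and $2$) are singletons, every $\rho$ with $\rho^{4}=1$ arises and $[1,0,\rho]$ is affine iff $\rho^{4}\in\{0,1\}$; removing these is Case~3. In arity at least $3$ some occupied level contains more than one string, and the requirement that $\mathfrak{i}^{q}$ be constant on that level forces the linear part of $q$ to have all coefficients of the same parity, which pushes the level-to-level ratio down to $\rho\in\{1,-1\}$; hence $[1,0,\rho,0,\rho^{2},\dots]_{k}$ and $[0,1,0,\rho,0,\rho^{2},\dots]_{k}$ with $k\ge3$ are affine iff $\rho^{2}\in\{0,1\}$, and removing these leaves Cases~4 and~5 with the stated conditions $r^{2}\neq0,1$.

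I expect the crux to be this affine-membership analysis: it must be done carefully enough that the exact thresholds come out --- in particular why arity $2$ admits the additional ratios $\pm\mathfrak{i}$ while arity at least $3$ does not, and why the two single-entry families require $k\ge3$ --- and that amounts to pinning down exactly which symmetric value-patterns a quadratic $\mathbb{Z}_2$-form restricted to a coset can produce. The parity reduction and the extraction of the geometric/single-entry shape of $g$ from the matchgate identities are routine once the symmetric matchgate characterization is available.
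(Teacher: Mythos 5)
The paper does not actually prove this lemma --- it cites it directly from \cite{guo2020complexity} --- so your argument cannot coincide with the paper's, but it is a sound self-contained derivation. You in effect re-derive the classification of symmetric matchgate signatures that the paper later records (also without proof) as Lemma~\ref{lemMGform}: the parity condition puts the support on a single weight-parity class, and the symmetric specialization of the matchgate identities, $g^{\ast}_{i}g^{\ast}_{i+2}=(g^{\ast}_{i+1})^{2}$, forces the active subsequence either to vanish, or to be a single nonzero entry at the first or last active position, or to be a genuine geometric progression; that correctly yields the six shapes. The affine-membership filter that follows is the real content of the lemma, and all of your conclusions there are right: the weight-$1$ and weight-$(k-1)$ deltas are non-affine precisely for $k\ge3$ (your argument that the differences $e_i+e_j$ span a $(k-1)$-dimensional space, too large for a coset of size $k$, is clean); $[1,0,r]$ is affine exactly when $r^4\in\{0,1\}$; and the geometric families in arity $\ge3$ are affine exactly when $r^2\in\{0,1\}$. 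Your sketch of the last point --- equal parity of the linear coefficients pushing $\rho$ down to $\pm1$ --- is a little informal and on its own does not rule out $\rho=\pm\mathfrak{i}$, because the quadratic part $2\sum b_{ij}x_ix_j$ also varies across a Hamming level. The cleaner way to close it is the weight-two equation $\mathfrak{i}^{a_i+a_j+2b_{ij}}=r$: once all $a_i$ share a parity, $a_i+a_j$ is even and so the exponent is even mod $4$, which directly gives $r\in\{1,-1\}$. As a presentational matter, you could also have bypassed the Hankel/Grassmann--Pl\"ucker step by invoking the paper's Lemma~\ref{lemMGform} for the matchgate classification and carried out only the affine filtering, which is what distinguishes Lemma~\ref{M-Aform} from Lemma~\ref{lemMGform}.
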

In the following proofs of the dichotomies, for each graph class $\mathcal{C}$ we consider, we claim that when $\mathcal{F}\subseteq \widehat{\mathscr{M}_P}$ but $\mathcal{F}\nsubseteq \mathscr{A}$, $\text{Holant}(\widehat{\mathcal{F}}|\widehat{\mathcal{EQ}})\langle \mathcal{C} \rangle$ or $\text{Holant}(\widehat{\mathcal{F}}|\widehat{\mathcal{EQ}_{\le D}})\langle \mathcal{C} \rangle$ is either polynomial-time computable or \#P-hard for each  $\mathcal{C}$. %Our proof consists of two steps: converting the asymmetric cases into the corresponding symmetric cases, and proving the results in the symmetric cases. The first step  is addressed in Section \ref{secPMG}, while the second step of 2nd-5th statements in Theorem \ref{allthm2} are addressed in Sections \ref{secK5}, \ref{secK6}, \ref{secK7}, \ref{secK8}, respectively.
        
If we claim that $\text{Holant}(\widehat{\mathcal{F}}|\widehat{\mathcal{EQ}})\langle \mathcal{C} \rangle$ or $\text{Holant}(\widehat{\mathcal{F}}|\widehat{\mathcal{EQ}_{\le D}})\langle \mathcal{C} \rangle$ is polynomial-time computable, by  Lemma \ref{lem:AsymtoSym} we know that each signature $g\in\widehat{\mathcal{F}}$ can be realized by a star gadget. Based on this realization, we prove that a polynomial-time algorithm can be obtained by transforming the star gadget into a path gadget and applying a recursive computation based on the specific tree decomposition.

If we claim that $\text{Holant}(\widehat{\mathcal{F}}|\widehat{\mathcal{EQ}})\langle \mathcal{C} \rangle$ or $\text{Holant}(\widehat{\mathcal{F}}|\widehat{\mathcal{EQ}_{\le D}})\langle \mathcal{C} \rangle$ is \#P-hard, by Lemma \ref{lem:AsymtoSym}, for any $f\in \widehat{\mathcal{F}}$ satisfying $f\in \mathscr{M}-\mathscr{A}$, we may use it together with signatures in $\mathcal{EQ}_{\le 3}$ to simulate a symmetric signature $g\in \mathscr{M}-\mathscr{A}$ as a left-side gadget. 
Then, we prove that $\text{Holant}(\{g\}|\widehat{\mathcal{EQ}})\langle \mathcal{C} \rangle$ or $\text{Holant}(\widehat{\mathcal{F}}|\widehat{\mathcal{EQ}_{\le 3}})\langle \mathcal{C} \rangle$ is \#P-hard for each possible form of $g$ presented in Lemma \ref{M-Aform}, by realizing either $[0,0,1,0]$ or $[0,0,0,1,0]$. In the realizations, we always realize left-side gadget in a bipartite planar way,  stated as follows. 

\begin{remark}\label{remGadConstruct}
     In the following, we always construct the left-side gadget of $\text{Holant}(\{g\} |\mathcal{EQ}_{\le 3})$. For future convenience, whenever $uv$ is an edge and both $u$ and $v$ are assigned $g$ in a gadget, we actually mean that we replace $uv$ with $uw,wv$ where $w$ is assigned a $[1,0,1]$ signature in the gadget. Besides, if a gadget is formed by connecting two existing gadgets together, we also automatically replace the connecting edge $uv$ with $uw,wv$, where $w$ is assigned a $[1,0,1]$ signature. These operations would not change the signature of the gadget. Consequently, it can be verified that each obtained gadget always remains a left-side gadget in our following constructions.
\end{remark}

\section{An algorithm for \#CSP on graphs that forbids a single crossing minor}\label{secK5}

%Let $\mathcal{TWG_h}=\{G|\textsf{p}_{g-3}(G)\le h\}$. 
In this section, we prove the second statement in Theorem \ref{allthm2}. We restate it as follows.

\begin{theorem}     
    \par For any single crossing graph $H$, $\text{\#CSP}(\mathcal{F})[H]$ is computable in polynomial time if $\mathcal{F}\subseteq \mathscr{A}$ or $\mathcal{F}\subseteq \mathscr{P}$ or $\mathcal{F}\subseteq \widehat{\mathscr{M}_P}$; otherwise it is $\text{\#P}$-hard. 
    \label{CSPK5}
\end{theorem}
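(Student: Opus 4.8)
The plan is to dispatch the tractable side using the dichotomies already cited and then concentrate all real work on the one remaining case. By Theorem \ref{genCSP}, if $\mathcal{F}\subseteq\mathscr{A}$ or $\mathcal{F}\subseteq\mathscr{P}$ then $\text{\#CSP}(\mathcal{F})$, hence also $\text{\#CSP}(\mathcal{F})[H]$, is in polynomial time for every $H$. By Theorem \ref{plCSP}, if none of $\mathcal{F}\subseteq\mathscr{A}$, $\mathcal{F}\subseteq\mathscr{P}$, $\mathcal{F}\subseteq\widehat{\mathscr{M}_P}$ holds, then already $\text{\#CSP}(\mathcal{F})\langle\mathcal{PL}\rangle$ is $\text{\#P}$-hard; since $\mathcal{PL}\subseteq fb(\{H\})$ for any single crossing graph $H$ (a single crossing graph is nonplanar, so $H$ is not a minor of any planar graph), $\text{\#CSP}(\mathcal{F})[H]$ inherits this hardness. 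As noted in Remark \ref{remarCSP}, $(\widehat{\mathscr{M}}-\mathscr{A})\cap\mathscr{P}=\emptyset$, so the only case left is $\mathcal{F}\subseteq\widehat{\mathscr{M}_P}$ with $\mathcal{F}\nsubseteq\mathscr{A}$, and here we must prove tractability.

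For this case, following Remark \ref{remarCSP}, I would pass through $\text{\#CSP}(\mathcal{F})[H]\equiv_T\text{Holant}(\mathcal{F}\mid\mathcal{EQ})[H]\equiv_T\text{Holant}(\widehat{\mathcal{F}}\mid\widehat{\mathcal{EQ}})[H]$, where now $\widehat{\mathcal{F}},\widehat{\mathcal{EQ}}\subseteq\mathscr{M}_P$. So it suffices to give a polynomial-time algorithm for $\text{Holant}(\mathcal{K})[H]$ where $\mathcal{K}$ is a finite set of permutable matchgate signatures. The structural input is Theorem \ref{K5trde}: for a single crossing graph $H$ there is a constant $c$ with $\textsf{p}_{--3}(G)\le c$ for all $G\in fb(\{H\})$, and a tree decomposition $(T,\beta)$ of $G$ with $--3$-width $<c$ computable in $O(|G|^4)$ time; by Lemma \ref{lemnormaltree} and Lemma \ref{lemsizebound} we may assume it is normal with $|T|\le|G|$. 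Root $T$ arbitrarily. The algorithm computes, bottom-up, for each bag $t$ the Holant contribution of the subgraph hanging below $t$, as a signature on the ``interface'' $\beta(t)\cap\beta(\text{parent})$, which by the $--3$ condition has size at most $3$ (and when it is exactly $3$ it bounds a face). The subtlety, flagged in the introduction, is that the naive ``replace each vertex by its matchgate'' destroys both planarity and the size-$3$ interface bound: a vertex $v\in\beta(t)$ appearing in several children's interfaces would need a high-arity matchgate. The fix is the \emph{path gadget}: using Lemma \ref{lem:SymGadRealize} each $f\in\mathcal{K}$ is realized by a star gadget with a symmetric matchgate center and binary matchgate edge-signatures; a symmetric matchgate signature of arity $n$ decomposes along a path, so one can spread the occurrences of $v$ across the bags containing $v$ (which form a subtree, by the tree-decomposition axiom) along a path of small-arity matchgates, one ``link'' per bag, keeping every interface of size $\le 3$ and keeping everything planar within each torso. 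Then for each bag in post-order one (i) pins/contracts the interface edges to children, which by Lemma \ref{lemPin} keeps all signatures permutable matchgate, (ii) evaluates the resulting bounded-size planar matchgate Holant instance in polynomial time via the FKT algorithm (Lemma \ref{lemMG}, Definition \ref{defMG}), recording the answer as a new arity-$\le 3$ signature (a \emph{representative signature} / \emph{boundary mapping}) placed on a single fresh vertex inserted into the parent bag, and (iii) argues via Lemma \ref{exhaustgad}/Lemma \ref{exhaustsum} that summing over the $\le 2^3$ interface assignments recovers $Z$ exactly and preserves the $--3$-width of the modified decomposition. Iterating to the root yields $Z(G)$ in time $\mathrm{poly}(|G|)$, with the exponent controlled by $c$.

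The main obstacle is step (ii)-(iii) of the recursion done \emph{correctly in a planar, interface-respecting way}: one must verify that after inserting the path gadgets the torso of each bag is still planar with the three (or fewer) interface vertices of each child still on a common face (so the matchgate replacement can be embedded), that contracting a child's interface into a single new vertex does not create a forbidden minor or blow up the interface size of neighbouring bags, and that the representative signature genuinely captures the pinned subproblem for all $2^3$ boundary conditions — this is where the ``not straightforward'' case analysis by signature type occurs, since the behaviour of the path gadget depends on whether the relevant symmetric center signature is degenerate, of $\mathscr{M}-\mathscr{A}$ form from Lemma \ref{M-Aform}, or affine-like. A secondary point is bookkeeping the reduction back: the statement is about (possibly asymmetric) $\mathcal{F}$, so after computing in the permutable-matchgate world one invokes Lemma \ref{lem:SymGadRealize} (and, for the hardness direction of other theorems, Lemma \ref{lem:AsymtoSym}) to bridge the symmetric and asymmetric settings; here only the algorithmic direction is needed, and Lemma \ref{lemPin} guarantees the class is closed under the pinning the recursion performs.
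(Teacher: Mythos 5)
Your approach matches the paper's in all essential respects: the same dichotomy-based dispatch of the tractable and hard regimes, the same passage via holographic transformation to $\text{Holant}$ over permutable matchgates, the same use of Theorem~\ref{K5trde} to obtain a bounded-$--3$-width tree decomposition, and the same path-gadget / representative-signature DP evaluated by FKT on each planar torso, with Lemma~\ref{MG<3} (a parity-respecting signature of arity at most three is a matchgate) closing the recursion. This is the paper's proof in outline.

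What you flag as an obstacle in your last paragraph but leave unresolved is precisely the one idea the paper adds that you are missing. Nothing in the tree-decomposition axioms prevents several children of a node $t$ from sharing the \emph{same} navel $X$ with $|X|=3$. If that happens, several arity-3 representative vertices would all have to be attached to the same three vertices of the torso $G_t$; three or more such vertices already create a $K_{3,3}$ subgraph, which is non-planar, so the torso loses its embedding and the next round of the recursion fails. The paper's fix is an explicit preprocessing step (Lemma~\ref{nosamenavelK5}): re-hang the tree so that no node has two children with the same navel --- whenever siblings $a,b$ of $t$ share a navel, delete the edge $bt$ and add $ab$, which preserves both the tree-decomposition property and the $g-3$-width. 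Only after this modification can one bound by $O(h^3)$ the number of representative vertices introduced into any bag and guarantee that each face of the torso receives at most one new arity-3 vertex, which is what keeps the torso planar across iterations. A secondary, minor point: the case analysis governing the path-gadget construction is by the six forms of symmetric matchgate signatures in Lemma~\ref{lemMGform}, not the $\mathscr{M}-\mathscr{A}$ forms of Lemma~\ref{M-Aform} you cite; the latter is used only on the hardness side of the other theorems.
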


The second statement in Theorem \ref{allthm} can be obtained from Theorem \ref{CSPK5} by setting $H=K_5$. Each planar graph forbids a single crossing minor. By Remark \ref{remarCSP}, to prove Theorem \ref{CSPK5}, it is sufficient to prove the following lemma.

\begin{lemma}
    If $\mathcal{F}\subseteq \mathscr{M}_P$, then for any single crossing graph $H$, there is an algorithm that computes $\text{Holant}(\mathcal{F})[H]$ in polynomial time.
     \label{K5alg}
\end{lemma}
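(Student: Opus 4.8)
The plan is to prove that $\text{Holant}(\mathcal{F})[H]$ is in polynomial time for $\mathcal{F}\subseteq \mathscr{M}_P$ and $H$ a single crossing graph, by exploiting the bounded $\mathsf{p}_{--3}$-width guaranteed by Theorem~\ref{K5trde}. First I would invoke Theorem~\ref{K5trde} to compute, in $O(|G|^4)$ time, a tree decomposition $(T,\beta)$ of the input graph $G$ whose $--3$-width is bounded by a constant $c$ depending only on $H$; by Lemma~\ref{lemnormaltree} and Lemma~\ref{lemsizebound} I may assume it is normal and that $|T|\le |G|$. Then I would root $T$ at an arbitrary node and process the bags from the leaves toward the root, maintaining for each processed subtree a compact summary of its ``effect'' on the rest of the graph, expressed only through the at-most-$3$ shared vertices with its parent bag. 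Since each vertex of $G$ may live in many bags, the naive substitution of a matchgate (Lemma~\ref{lem:SymGadRealize}, Lemma~\ref{lemMG}) for a vertex would blow up the intersection sizes and break the ``$(3)$'' condition; so the key device is the \emph{path gadget} mentioned in the introduction — when a vertex $v$ appears in a bag $t$ and also in descendant bags, I replace the ``local copy'' of $v$ in $t$ by a small gadget whose dangling edges are exactly the edges of $G_t$ incident to $v$ plus one extra edge carrying the accumulated contribution of the descendants, and I reroute so that the adhesion between $t$ and its parent still has size $\le 3$.

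The main technical steps, in order, would be: (i) for each leaf bag, observe that $|G_t|$ is either $\le c$ (so we brute-force the Holant value of the induced gadget by Lemma~\ref{exhaustgad}, since only $\le 3$ boundary edges go up) or $G_t$ is planar with the adhesion structure controlled by the $(3)$ condition; in the planar case each $f_v$ (a permutable matchgate signature, so each $f_{v,\pi}\in\mathscr{M}$) gets replaced by its matchgate via Lemma~\ref{lemMG}, yielding a planar matchgate whose $\text{\#PM}$ value, computable by the FKT algorithm, equals $Z$ of that bag; (ii) record the result of a subtree either as a \emph{representative signature} of arity $\le 3$ on the adhesion (when the adhesion has $\le 3$ vertices off the apex part — here there is no apex, so just $\le 3$ vertices) or, when the adhesion vertices bound a face, as a \emph{boundary mapping} that remembers the planar-embedding data needed to keep later FKT computations valid; (iii) ``absorb'' this summary into the parent by introducing a single new vertex carrying the representative signature, connected to the $\le 3$ adhesion vertices — crucially checking that this introduced vertex, being attachable on the face bounded by the adhesion (guaranteed by the $(3)$ condition), keeps the parent torso planar, hence keeps the single-crossing-minor-free property intact and keeps the bag's matchgate planar; (iv) induct up to the root, where the accumulated value is $Z(G)$; (v) bound the running time: $|T|\le|G|$ bags, each handled in $\mathrm{poly}(|G|)$ time (FKT is polynomial, brute-forcing $\le 3$ boundary edges is $O(1)$ branching times $\mathrm{poly}$), so the total is polynomial.

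I expect the genuine obstacle to be step~(iii): showing that the absorption of a child subtree into a single introduced vertex of the parent bag (a) does not change $Z(G)$ — this needs the self-reduction identity $Z(G)=\sum_\tau Z(G^\tau)$ from Lemma~\ref{exhaustsum} applied to the $\le 3$ adhesion edges, together with the fact that the representative signature correctly encodes $\omega(\tau)$ for every boundary assignment $\tau$ — and (b) preserves both the planarity of each torso and the $--3$-width bound, so that the invariant ``every bag is planar with adhesions of size $\le 3$ bounding faces'' is maintained all the way to the root. The delicacy is entirely in the bookkeeping of the planar embedding: the introduced vertex must be placed in the face bounded by the adhesion, and when a later matchgate replacement is performed in the parent, the external-node ordering around the outer face must be consistent with that placement; this is exactly why the $(3)$ condition demands that a size-$3$ adhesion bounds a face, and I would spend the bulk of the proof verifying that this facial condition propagates under absorption. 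A secondary subtlety is the permutable-matchgate reduction: to replace $f_v$ by an actual matchgate I must fix the clockwise cyclic order of its edges to match the planar embedding of the current torso, and Definition~\ref{def:MP} together with Lemma~\ref{lem:SymGadRealize} is what licenses this, since every rearrangement $f_{v,\pi}$ is again a matchgate signature realizable by a star gadget with a symmetric center.
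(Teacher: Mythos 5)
Your proposal follows the same architecture as the paper's proof — Theorem~\ref{K5trde} to get a tree decomposition with bounded $--3$-width, normalization via Lemma~\ref{lemnormaltree} and Lemma~\ref{lemsizebound}, path gadgets to keep adhesion sizes at most $3$ when vertices are replaced by matchgates, a representative signature of arity $\le 3$ recording the contribution of each subtree, absorption of the subtree into a single introduced vertex in the parent torso, and FKT-type counting per torso. You also correctly flag that the crux is showing the introduced vertex does not break the invariant that torsos remain planar with size-$\le 3$ adhesions bounding faces.

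However, you stop at the flag without supplying the fix, and the fix is a genuine and necessary step, not just bookkeeping. The paper's algorithm inserts, between normalization and the recursive computation, an explicit preprocessing pass (Lemma~\ref{nosamenavelK5}) that rewires the tree so that no two children of any node share the same navel. This is precisely what makes your step~(iii) go through: after rewiring, each face of a torso receives at most one introduced arity-$3$ vertex, so planarity (more generally, bounded genus) is preserved. Without it the invariant can fail — if three or more children share a navel $\{a,b,c\}$ bounding a face, the three introduced degree-$3$ vertices all adjacent to $a,b,c$ would create a $K_{3,3}$ subgraph inside that face, destroying planarity and invalidating the subsequent FKT call. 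You should either cite or reprove this rewiring step before claiming the facial condition "propagates under absorption." A smaller point: you offer a disjunction between "representative signature" and "boundary mapping," but the latter is the device used in the paper's separate apex/bounded-degree algorithm (Section~\ref{secK7}), not here; in the single-crossing case the representative signature (a matchgate signature of arity $\le 3$, guaranteed by the parity condition and Lemma~\ref{MG<3}) suffices, and the computation of each torso's value is done via Lemma~\ref{mggealg}, which is the bounded-genus generalization of FKT rather than FKT itself.
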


%It should be noted that, the algorithm is directly applicable to Holant problems, which can be considered as a more general case than \#CSP, without the use of the property of $\widehat{\mathcal{EQ}}$.
\subsection{Overview of the algorithm}\label{K5ccls}

In order to provide a more comprehensive understanding of the algorithm, we will now introduce a number of additional definitions. We pick a node $r\in T$ as the root of the tree. For two nodes $s,t\in T$, we define $s\le t$ to mean that $t$ is an ancestor of $s$. For a node $t\in T$, let $G_{\le t}$ be the graph induced by $\bigcup_{s\le t}\beta(s)$. For an edge $dt\in T$, if $t$ is a child of $d$, we denote $\beta(t)\cap\beta(d)$ as the \textit{navel} of $G_{\le t}$, denoted as $X_t$. The navel of $G_{\le r}$ is defined as $\emptyset$. The algorithm runs as follows:
\begin{enumerate}
    \item Find a normal tree decomposition $(T',\beta')$ with $g-3$-width no greater than $h$, and arbitrarily decide an $r'\in T'$ as the root;
    \item Modify $(T',\beta')$ to another tree decomposition $(T,\beta)$ with a root $r$ and of $g-3$-width no greater than $h$, such that for each $t\in T$, the children of $t$ do not have the same navel.
    \item Compute the representative functions from leaves to the root.
\end{enumerate}

For Step 1, by Theorem \ref{K5trde}, there exist a constant $h$ such that a tree decomposition $(T',\beta')$ of $G$ with $g-3$-width less than $h$ can be computed in $O(|G|^4)$ time for arbitrary $G$. %\footnote{In fact we compute a tree decomposition $(T',\beta')$ with $--3$-width less than $h$. In order to make the algorithm more general, we have weakened this condition. This aspect will be discussed in detail at the end of this section.}. 
We may further assume the tree decomposition is normal as the size of the tree decomposition is at most $O(|G|^4)$ and we can obtain a normal tree decomposition in polynomial time by Lemma \ref{lemnormaltree}.
%The \textit{boundary edge set} $B_t$ of $G_t$ is all the edges in $G_t$ with one endpoint in $X_t$ and the other one in $V(G_t)-X_t$. 

We noted that we actually compute a tree decomposition $(T',\beta')$ with $--3$-width less than $h$ in Step 1. In contrast, Step 2,3 in the algorithm only need a tree decomposition of bounded $g-3$-width, which means that the algorithm has the potential to solve a wider range of problems. For illustration, let $\mathcal{TWG}_h=\{G|\textsf{p}_{g-3}(G)\le h\}$, and since a normal tree decomposition with bounded $g-3$-width has a polynomial size of expression, our conclusion can be extended to the following form. 

\begin{theorem}     
    For any $h\in \mathbb{N}^+$, $\text{\#CSP}(\mathcal{F})\langle \mathcal{TWG}_h \rangle$ is computable in polynomial time if $\mathcal{F}\subseteq \mathscr{A}$ or $\mathcal{F}\subseteq \mathscr{P}$ ; it is in $P^{NP}$ if $\mathcal{F}\subseteq \widehat{\mathscr{M}_P}$; otherwise it is $\text{\#P}$-hard. 
    \label{CSPTWG}
\end{theorem}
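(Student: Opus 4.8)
The plan is to treat the three regimes of the statement separately, reusing in the interesting one the algorithm developed for Lemma~\ref{K5alg}. If $\mathcal{F}\subseteq\mathscr{A}$ or $\mathcal{F}\subseteq\mathscr{P}$, then $\text{\#CSP}(\mathcal{F})$ is polynomial-time computable on \emph{all} graphs by Theorem~\ref{genCSP}, hence in particular over $\mathcal{TWG}_h$. If none of $\mathcal{F}\subseteq\mathscr{A}$, $\mathcal{F}\subseteq\mathscr{P}$, $\mathcal{F}\subseteq\widehat{\mathscr{M}_P}$ holds, I would first note that $\mathcal{PL}\subseteq\mathcal{TWG}_h$: for planar $G$, the one-node tree decomposition whose single bag is $V(G)$ has its unique torso equal to $G$, which embeds in the sphere, admits the empty apex set, and satisfies the $(3)$-condition vacuously since the tree has no edge; thus $\textsf{p}_{g-3}(G)=0\leq h$. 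Consequently $\text{\#CSP}(\mathcal{F})\langle\mathcal{PL}\rangle\leq_T\text{\#CSP}(\mathcal{F})\langle\mathcal{TWG}_h\rangle$, and $\text{\#P}$-hardness follows from Theorem~\ref{plCSP}.

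The substantive regime is $\mathcal{F}\subseteq\widehat{\mathscr{M}_P}$ with $\mathcal{F}\nsubseteq\mathscr{A}$. Following Remark~\ref{remarCSP} I would pass to $\text{\#CSP}(\mathcal{F})\langle\mathcal{TWG}_h\rangle\equiv_T\text{Holant}(\widehat{\mathcal{F}}|\widehat{\mathcal{EQ}})\langle\mathcal{TWG}_h\rangle$: both the reduction of Lemma~\ref{lemcsp=hol} and the holographic transformation of Theorem~\ref{thmHT} leave the underlying graph unchanged, so membership in $\mathcal{TWG}_h$ is preserved, and $\widehat{\mathcal{F}},\widehat{\mathcal{EQ}}\subseteq\mathscr{M}_P$. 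It is then enough to compute $\text{Holant}(\mathcal{F})\langle\mathcal{TWG}_h\rangle$ for $\mathcal{F}\subseteq\mathscr{M}_P$ --- this is Lemma~\ref{K5alg} with $fb(\{H\})$ replaced by $\mathcal{TWG}_h$ and ``polynomial time'' relaxed to ``$P^{NP}$''. I would run the three-step algorithm of Section~\ref{K5ccls}: Steps~2 and 3 (the modification forcing siblings to have distinct navels, and the leaves-to-root recursion on representative functions) only use that the tree decomposition is normal and of bounded $g-3$-width, and run in time polynomial in its size; the only obstruction is Step~1, since for $\mathcal{TWG}_h$ --- unlike $fb(\{H\})$, handled by Theorem~\ref{K5trde} --- no polynomial-time procedure producing a tree decomposition of bounded $g-3$-width is known.

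To supply Step~1 I would invoke an $NP$ oracle. A tree decomposition of $g-3$-width at most $h$ exists by definition, and by Lemmas~\ref{lemnormaltree} and~\ref{lemsizebound} it may be taken normal with at most $|G|$ nodes, so it is an object of size polynomial in $|G|$ and can be guessed, together with an embedding of each torso of size exceeding $h$. A guess is verifiable in polynomial time: the tree-decomposition axioms are immediate, the Euler-genus of each guessed combinatorial embedding is determined by Euler's formula from its face count and must be at most $h$, and ``every size-$3$ intersection of two adjacent bags bounds a face of the corresponding torso'' is a routine check against the guessed embeddings (the apex sets being empty in the $g-3$ setting). Hence, by the standard search-to-decision reduction, polynomially many calls to an $NP$ oracle produce a valid normal tree decomposition of bounded $g-3$-width, after which Steps~2 and 3 finish the computation in polynomial time; this places $\text{Holant}(\mathcal{F})\langle\mathcal{TWG}_h\rangle$, and therefore $\text{\#CSP}(\mathcal{F})\langle\mathcal{TWG}_h\rangle$ in this regime, in $P^{NP}$.

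The main obstacle is not in the bookkeeping above but in the correctness and polynomiality of the representative-function recursion of Step~3 when the large torsos are only guaranteed to be of bounded Euler-genus rather than planar; this part is inherited from the proof of Lemma~\ref{K5alg}, and is exactly where the path-gadget construction and the modification of the tree decomposition do the real work. The ingredient specific to Theorem~\ref{CSPTWG} --- polynomial-time certification of a guessed tree decomposition of bounded $g-3$-width, which is what forces ``$P$'' up to ``$P^{NP}$'' here --- is comparatively routine.
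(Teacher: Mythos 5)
Your proposal is correct and follows the same approach the paper uses, merely spelling out the details that the paper's one-paragraph remark following Lemma~\ref{nosamenavelK5} glosses over: that $\mathcal{PL}\subseteq\mathcal{TWG}_h$ via a one-bag tree decomposition (giving \#P-hardness by Remark~\ref{remarCSP} and Theorem~\ref{plCSP}), that Steps~2--3 of the algorithm from Section~\ref{secK5} only require a normal tree decomposition of bounded $g-3$-width and run in polynomial time, and that Step~1 can be replaced by polynomially many $NP$ queries because such a decomposition is a polynomial-size object verifiable (together with guessed embeddings of the large torsos) in polynomial time. The one assumption both you and the paper make silently is that the normalization of Lemma~\ref{lemnormaltree} preserves the $g-3$-width bound; this is true since contracting $dt$ with $\beta(t)\subseteq\beta(d)$ yields a torso at the merged node that is a subgraph of $G_d$, but you might state it explicitly.
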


\par We illustrate Step 2 in the following lemma:
\begin{lemma}
    A normal tree decomposition $(T',\beta')$ with $g-3$-width no greater than $h$ and a root $r'$ can be modified to a tree decomposition $(T,\beta)$ with $g-3$-width no greater than $h$ and a root $r$ in polynomial time, such that for each $t\in T$, the children of $t$ do not have the same navel.
    \label{nosamenavelK5}
\end{lemma}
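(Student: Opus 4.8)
The plan is to eliminate "duplicate navels" one parent at a time by a local surgery that merges all children of a given node $t$ that share a common navel $X$ into a single new child, while verifying that all three tree-decomposition axioms survive, that the $g$-$3$-width does not increase, and that the process terminates in polynomial time. Fix a node $t\in T'$ and suppose $t$ has children $d_1,\dots,d_\ell$ (for $\ell\ge 2$) all with the same navel $X=\beta'(d_i)\cap\beta'(t)$. Let $T'_{d_i}$ denote the subtree of $T'$ rooted at $d_i$. The surgery is: introduce a fresh node $c$, set $\beta(c)=X$, attach $c$ as a child of $t$, and re-attach each $d_i$ (together with its whole subtree $T'_{d_i}$) as a child of $c$ instead of $t$; leave everything else untouched. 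One then iterates this over all nodes $t$ that still have colliding children.

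First I would check the three axioms for the modified $(T,\beta)$. Vertex coverage ($\bigcup_t\beta(t)=V(G)$) is immediate since we only added a bag $\beta(c)=X\subseteq V(G)$. For the edge axiom, every edge $uv\in E(G)$ was covered by some bag in $T'$, and no old bag changed. For the connectivity axiom we must confirm that for each vertex $v\in V(G)$ the set of bags containing $v$ still induces a subtree: the only topological change is that the path in the tree from $t$ to $d_i$ now passes through the new node $c$. If $v\in\beta(c)=X$ then $v\in\beta'(t)$ and $v\in\beta'(d_i)$ for every $i$ (since $X$ is their common intersection), so inserting $c$ on each of those paths keeps the induced subgraph connected; if $v\notin X$ then $v$ cannot lie simultaneously in the $t$-side and in two different $d_i$-side subtrees — by connectivity in $T'$ the bags containing $v$ already formed a subtree that met $\beta'(t)\cap\beta'(d_i)=X$ in at most the part of it inside $X$, so $v$'s bag set is confined to $T'_{d_i}\cup\{t\}$ for at most one $i$ (or to the $t$-side), and routing through $c$ does not break connectivity. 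I would write this out carefully, since it is the one place where a naive merge can go wrong.

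Next, the $g$-$3$-width. The torsos $G_t$ for $t\in T'$ can only shrink or stay the same: $G_t$ depends on $\beta(t)$ and on which neighbours share which vertices, and the new neighbour $c$ of $t$ satisfies $\beta(c)\cap\beta(t)=X\subseteq\beta'(d_i)\cap\beta'(t)$, so no new torso edges are introduced at $t$; similarly at each $d_i$. The only genuinely new torso is $G_c$ on the vertex set $X$. Here I invoke the (3)-condition that $(T',\beta')$ already satisfies: $|X\setminus A_t|\le 3$ (with the apex set coming from $G_t$), so $|X|\le 3+h\le$ the width bound, meaning either $|G_c|\le h$ (no condition needed) or $G_c$ is small enough that choosing $A_c=A_t\cap X$, the trivial sphere $\Sigma_c$, and the remaining $\le 3$ vertices bounding a face works; the (3)-condition for edges incident to $c$ reads $|\beta(t)\cap\beta(c)\setminus A_c|=|X\setminus A_c|\le 3$ and $|\beta(d_i)\cap\beta(c)\setminus A_c|=|X\setminus A_c|\le 3$, all inherited from the original (3)-condition at $t$ applied to the edge $td_i$. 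I expect this bookkeeping to be the main obstacle: one has to make the choice of $A_c$, $\Sigma_c$ precise and confirm the "bounds a face" clause, possibly by taking $X\setminus A_c$ to be a triangle face of a trivial two-vertex-or-fewer planar embedding, or handling the $|X\setminus A_c|\le 2$ case separately where the face condition is vacuous.

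Finally, termination and running time. Each application of the surgery adds exactly one node and strictly decreases the number of (parent, duplicated-navel-class) pairs, so after at most $O(|T'|)$ rounds — i.e. $O(|G|)$ by Lemma \ref{lemsizebound} applied to the normal input — no node has two children with the same navel. Detecting a colliding pair at a node of degree $\delta$ costs $O(\delta^2\cdot|G|)$ (compare the $\le h+O(1)$-size navels pairwise), and summing over the tree this is polynomial in $|G|$; each surgery is $O(1)$ pointer manipulation plus $O(|G|)$ to recompute the affected torsos. Hence the whole modification runs in polynomial time, and the root $r$ is carried over unchanged (with $r=r'$, since the surgery never touches the root's parent edge — the root has none). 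This establishes the lemma. Note the output $(T,\beta)$ need not be normal, but that is not required by the statement; Step 3 of the algorithm only uses bounded $g$-$3$-width plus the no-duplicate-navel property, both of which we have secured.
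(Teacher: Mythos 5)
The surgery you propose does not achieve the stated property, and your termination argument is incorrect. After introducing the fresh node $c$ with $\beta(c)=X$ and re-attaching each $d_i$ as a child of $c$, the navel of $G_{\le d_i}$ (with respect to its new parent $c$) is $\beta(c)\cap\beta'(d_i)=X\cap\beta'(d_i)=X$, since $X=\beta'(t)\cap\beta'(d_i)\subseteq\beta'(d_i)$. Hence all $\ell\ge 2$ children $d_1,\dots,d_\ell$ of $c$ share the same navel $X$: you have not removed the collision, you have merely relocated it from $t$ to $c$. Consequently the claim that each application "strictly decreases the number of (parent, duplicated-navel-class) pairs" is false --- destroying the pair $(t,X)$ creates the new pair $(c,X)$ of the same multiplicity, so the count is unchanged. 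Applying the surgery again at $c$ would only introduce yet another node $c'$ with $\beta(c')=X=\beta(c)$, and the process never terminates. (Your tree-decomposition axiom check and the $g$-$3$-width bookkeeping for the new bag are essentially fine, but note that in the $g$-$3$ setting there is no apex set, so $A_t=\emptyset$ and the correct bound is simply $|X|\le\max(3,h)$, not $3+h$.)

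The paper avoids this by \emph{not} introducing any new bag. If $a$ and $b$ are two children of $t$ with the same navel $X$, it deletes the edge $bt$ and adds $ab$, re-parenting $b$ (and its whole subtree) under $a$. One checks that $\beta'(a)\cap\beta'(b)=X$: by the connectivity axiom $\beta'(a)\cap\beta'(b)\subseteq\beta'(t)$, whence $\beta'(a)\cap\beta'(b)=\beta'(a)\cap\beta'(t)\cap\beta'(b)=X$. Thus no torso changes, the $g$-$3$-width is preserved, and no node is added. This rewiring may create a new collision at $a$, but $a$ sits strictly deeper in the tree; processing nodes root-to-leaf in BFS order defers that collision until $a$ itself is handled. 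Termination is clean: each rewiring pushes a subtree strictly deeper, so the sum of node depths increases by at least one per step and is bounded by $|T'|^2$, giving at most $O(|T'|^2)$ surgeries and the claimed polynomial bound.
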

\begin{proof}
    Suppose $(T_1,\beta')$ is a tree decomposition with $g-3$-width no greater than $h$ and a root $r'$. If $a,b\in T_1$ are children of $t$ with the same navel, we delete $bt$ and add $ab$ in $E(T_1)$ to obtain $T_2$.
    %pictureeeeeeeeeeee
    $(T_2,\beta')$ is still a tree decomposition with $g-3$-width no greater than $h$ and a root $r'$ by checking Definition \ref{deftw} and \ref{ga3}.
    \par This transformation causes $O(1)$ time. We do this transformation in $T'$ from the root to the leaves successively in a BFS order, and for each node in $T'$, we need at most $O(|T'|)$ transformations to ensure it no longer has 2 children with the same navel, and as a result the total number of transformations needed is at most $O(|T'|^2)$, which can be done in polynomial time. After all the transformation, we obtain $(T,\beta)$ with root $r$ where $V(T)=V(T'),\beta=\beta'$ and $r=r'$, satisfying that for each $t\in T$, the children of $t$ do not have the same navel.
\end{proof}

\par The following subsections will present the methodology for computing the representative functions, as described in Step 3.
%\par Let $G'$ be the graph obtained by replacing each vertex $v$ in $G$ with the corresponding matchgate, whose signature is the matchgate signature assigned to $v$. Besides, the matchgate is constructed in a way as in Lemma \ref{lemMG}. To avoid any potential confusion, we note that vertices in $G$ will be referred to as "nodes", while vertices in $G'$ or matchgates will be referred to as "vertices". Besides, a matchgate in $G'$ is said to be \textit{corresponding} to a node in $G$ if the matchgate replaces the node in the aforementioned process. By Definition, the output of $G$ is \#PM$(G')$. 
%\par For any $t\in T$, let $\beta'(t)$ denotes the set of all vertices in those matchgates corresponding to those vertices in $\beta(t)$. It can be easily verified that $(T,\beta')$ also forms a tree decomposition of $G'$ by checking Definition \ref{deftw}.
%\par An inspiring idea is to utilise $(T,\beta')$ to execute the algorithm proposed in \cite{thilikos2022killing}. However, since each node in $G$ is replaced by a finite number of vertices, the adhesion also becomes larger, which would introduce new crossings on the surface.
%\par The algorithm in Theorem \ref{K7alg} employs the beneficial aspects of $G$ and $G'$. The algorithm is performed in a bottom up manner along the tree decomposition $(T,\beta)$ of $G$, while a subroutine that counts perfect matchings in each bag is used to compute the intermediate value. 
\subsection{Path gadgets}
This section deals with the path gadget, which plays an important role in the following algorithm. In the following analysis, for each constant $r$, we use $\sqrt{r}$ to denote the constant $c$ satisfying $c^2=r$ \footnote{The choice of $c$ would not affect our proof.}. We start with the analysis of a symmetric matchgate signature.

\begin{lemma}
    A symmetric matchgate signature must be a multiple of the following forms:
    \begin{enumerate}
        \item $[1,0,0,...,0]$;
        \item $[0,0,...,0,1]$;
        \item $[0,1,0,0,...,0]$;
        \item $[0,0,...,0,1,0]$;
        \item $[1,0,r,0,r^2,0,...],r\neq 0$;
        \item $\sqrt{r}[0,1,0,r,0,r^2,0,...],r\neq 0$.
    \end{enumerate}
    For future convenience, we set $r=1$ when $f$ is of the form 1-4.
    \label{lemMGform}
\end{lemma}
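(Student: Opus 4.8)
\textbf{Proof proposal for Lemma \ref{lemMGform}.}

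The plan is to characterize symmetric matchgate signatures by combining two ingredients already available in the literature and the excerpt: the \emph{matchgate identities} (parity plying) that any matchgate signature must satisfy, and the known characterization of symmetric matchgate signatures via recurrence relations. First I would invoke the parity condition established just before the lemma: every matchgate signature satisfies the parity condition, so a symmetric matchgate signature $[f_0, f_1, \dots, f_k]$ either has $f_i = 0$ for all odd $i$ (even parity) or $f_i = 0$ for all even $i$ (odd parity). This immediately splits the analysis into two cases and explains why the target forms 1, 3, 5 are ``even-supported'' and forms 2, 4, 6 are ``odd-supported''.

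Next I would apply the matchgate identities to the nonzero entries. For a symmetric signature, these identities specialize to a two-term linear recurrence among consecutive nonzero values: there exist constants, not all zero, giving a relation of the form $f_{i-1} f_{i+1} = \lambda f_i^2$ on the support (this is the standard ``symmetric matchgate signatures satisfy a second-order recurrence'' fact, e.g. from the theory in \cite{cai2013matchgates} or \cite{valiant2008holographic}). In the even-parity case, writing $g_j = f_{2j}$, the recurrence forces $g_j = g_0 r^j$ for some $r$ once $g_0 \neq 0$ (and if $g_0 = 0$ one shifts the index), yielding either $[0,\dots,0,1,0,\dots,0]$ supported on a single even weight — which, together with the matchgate identities constraining \emph{which} single weight is allowed, collapses to forms 1 and possibly a single interior even support that the identities actually rule out except at the extremes — or the geometric form $[1,0,r,0,r^2,0,\dots]$ of form 5. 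In the odd-parity case the same argument on $h_j = f_{2j+1}$ gives a single odd support (forms 2 and 4, again with the identities pinning it to weight $1$ or weight $k-1$) or the geometric form $\sqrt{r}[0,1,0,r,0,r^2,\dots]$ of form 6; the $\sqrt r$ normalization factor comes from matching the recurrence's cross-term when the first nonzero entry sits at weight $1$ rather than weight $0$.

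The main obstacle I expect is \emph{ruling out ``interior'' single-support signatures} — i.e., showing that $[0,\dots,0,1,0,\dots,0]$ with the $1$ at some weight $2 \le w \le k-2$ is \emph{not} a matchgate signature, so that only weights $0, 1, k-1, k$ survive (forms 1--4), while everything else must be genuinely geometric (forms 5--6). This is exactly where the full strength of the matchgate identities (not just parity) is needed: one checks that a single interior support violates a suitable Grassmann–Plücker relation among the entries. I would handle this by writing out the relevant low-arity identity explicitly and then arguing by the symmetry/homogeneity that it propagates to all arities, or alternatively by citing the known classification of symmetric matchgate signatures directly and merely re-deriving the normalized list. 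A secondary, purely bookkeeping obstacle is the normalization: absorbing scalars so that the leading nonzero coefficient is $1$ (or $\sqrt r$ in the form-6 case), and declaring $r = 1$ in forms 1--4 as the lemma stipulates, so that the subsequent path-gadget constructions can treat all six forms uniformly with a single parameter $r$.
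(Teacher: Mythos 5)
The paper states Lemma~\ref{lemMGform} without proof; this is a standard classification of symmetric matchgate signatures and the paper evidently treats it as known (compare the cited Lemma~\ref{M-Aform} from \cite{guo2020complexity}, and see also \cite{cai2013matchgates,valiant2008holographic}). So there is no in-paper proof to compare against, and simply citing the classification would be acceptable. Your self-contained sketch follows the natural route (parity condition plus matchgate identities), but the form of the recurrence you write down is not what the matchgate identities actually give, and as written it would not yield the geometric form.

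Concretely, set $g_j := f_{p+2j}$ for $j=0,\dots,m$, where $p\in\{0,1\}$ is the parity of the support; the Grassmann--Pl\"ucker/matchgate identities, applied to suitable complementary pairs $(\alpha,\overline{\alpha})$ of Hamming weights of the right parity, collapse on a symmetric signature to
\[
g_i\, g_j \;=\; g_{i+1}\, g_{j-1} \qquad\text{for all } 0\le i<j\le m,
\]
with coefficient exactly~$1$. Your relation $f_{i-1}f_{i+1}=\lambda f_i^2$ with a free $\lambda$ is genuinely different: its general solution is $g_j=c\,a^j\lambda^{j(j-1)/2}$, which is a geometric progression only when $\lambda=1$, so with an unspecified $\lambda$ the deduction of forms~5/6 does not go through. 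With the correct identities the case analysis is immediate and both of the ``obstacles'' you flag evaporate at once: if $g_{j_0}\ne 0$ for some interior index $j_0$, then $g_{j_0-1}g_{j_0+1}=g_{j_0}^2\ne0$ forces both neighbours nonzero, hence inductively all $g_j\ne0$ and $g_j=g_0(g_1/g_0)^j$ is geometric (forms~5/6); if no interior $g_j$ is nonzero and $m\ge2$, then $g_0g_m=g_1g_{m-1}=0$ forbids both endpoints being nonzero, leaving support at a single endpoint (forms~1--4). No arity-by-arity Grassmann--Pl\"ucker check is needed beyond deriving the displayed identity once. Finally, the $\sqrt r$ in form~6 has nothing to do with a ``cross-term'' of the recurrence: the lemma is stated only up to a scalar multiple, and that particular normalization is chosen so that the single formula $f(\alpha)=\sqrt{r}^{\,HW(\alpha)}$ holds uniformly on the support for both forms~5 and~6, which is what the path-gadget analysis immediately after the lemma relies on.
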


By Definition \ref{defHol}, the constant factor can be ignored, as it only contributes a constant multiplier to the output. Suppose $\alpha\in \{0,1\}^q$ and $f$ is a symmetric matchgate signature of arity $k$. We now consider the form of $f^\alpha$ when $f$ has the form 1-6 in Lemma \ref{lemMGform} \footnote{the notation of $b=0,b=1$ would be explained in the later analyses.}.
\begin{enumerate}
    \item If $HW(\alpha)=0$, $f^\alpha$ has the form $[1,0,0,...,0]$($b=0$); otherwise $f^\alpha$ remains constant at $0$. 
    \item If $HW(\alpha)=q$, $f^\alpha$ has the form $[0,0,...,0,1]$($b=1$); otherwise $f^\alpha$ remains constant at $0$. 
    \item If $HW(\alpha)=0$, $f^\alpha$ has the form $[0,1,0,0,...,0]$($b=0$); else if $HW(\alpha)=1$, $f^\alpha$ has the form $[1,0,0,...,0]$($b=1$); otherwise $f^\alpha$ remains constant at $0$. 
    \item If $HW(\alpha)=q$, $f^\alpha$ has the form $[0,0,...,0,1,0]$($b=1$); else if $HW(\alpha)=q-1$, $f^\alpha$ has the form $[0,0,...,0,1]$($b=0$); otherwise $f^\alpha$ remains constant at $0$. 
    \item If $HW(\alpha)$ is even, $f^\alpha$ has the form $[1,0,r,0,r^2,0,...]$($b=0$); otherwise $f^\alpha$ has the form $\sqrt{r}[0,1,0,r,0,r^2,0,...]$($b=1$). Here, the value of $r$ is consistent with that of $f$.
    \item If $HW(\alpha)$ is odd, $f^\alpha$ has the form $[1,0,r,0,r^2,0,...]$($b=1$); otherwise $f^\alpha$ has the form $\sqrt{r}[0,1,0,r,0,r^2,0,...]$($b=0$).  Here, the value of $r$ is consistent with that of $f$.
\end{enumerate}

If $f^\alpha$ does not remain constant at $0$, we call $\alpha$ a \textit{valid} pinning of $f$. Notice that, for each form of $f$, there are at most 2 forms that $f^\alpha$ can take under a valid pinning. The form that $f^\alpha$ takes consequently can be expressed as a single bit $b$, denoted as an \textit{information bit} $b$ of $\alpha$ over $f$. The value of the information bit in each case is listed in the parentheses following the previous analyses. As a result, for a signature $f$ of arity $k$, there exists a function $Q_{(f,q)}:\{0,1\}^q\to\{0,1\}$ for each integer $1\le q\le k-1$ such that it maps a valid pinning $\alpha$ of a symmetric matchgate signature $f$ to the information bit $b$ of $\alpha$ over $f$. 
 
 We can verify that for each $\alpha=\alpha_1\alpha_2$, if $f(\alpha)\neq 0$, then $Q_{(f,q_1)}(\alpha_1)= Q_{(f,q_2)}(\alpha_2)$ if $f$ is of the form 1,2 or 5, and we denote these 3 forms as type 1; $Q_{(f,q_1)}(\alpha_1)\neq Q_{(f,q_2)}(\alpha_2)$ if $f$ is of the form 3,4 or 6, and we denote these 3 forms as type 2.
%\par A signature $f_q$ is said to be a \textit{partial representation} of $f$ if $f_q(Q_{(f,q)}(\alpha),x_{q+1},...,x_k)$ is a multiple of $f^\alpha(x_{q+1},...,x_k)$. $f_q$ is said to be a \textit{normalized partial representation} of $f$ if By definition, $f_q$ is also a matchgate signature.
\par Since for each $\alpha\in\{0,1\}^q$, the form of $f^\alpha$ can be decided by a single bit $Q_{f,q}(\alpha)$, it is evident that replacing the $q$ bits with a single bit in $f$ would result in a decrease in the arity of $f$ while maintaining the form of $f^\alpha$ for each $\alpha$. The following definition is based on this line of reasoning:
\begin{definition}
    A \textit{$q$-information signature} of a symmetric matchgate signature $f$ of arity $k$, denoted as $f_q$, is a matchgate signature of arity $k-q+1$ depending on variables $(x_b,x_{q+1},...,x_k)$, where $x_b$ is denoted as an \textit{information variable} corresponding to the information bit. For each $f$ of the different forms in Lemma \ref{lemMGform},  $f_q$ has the following form:
\begin{enumerate}
    \item $f_q=[1,0,0,...,0]$;
    \item $f_q=[0,0,...,0,1]$;
    \item $f_q=[0,1,0,0,...,0]$;
    \item $f_q=[0,0,...,0,1,0]$;
    \item $f_q^{x_b=0}=[1,0,r,0,r^2,0,...]$, $f_q^{x_b=1}=\frac{1}{\sqrt{r}}[0,r,0,r^2,0,...]$;%adjust
    \item $f_q^{x_b=0}=\sqrt{r}[0,1,0,r,0,r^2,0,...]$, $f_q^{x_b=1}=[1,0,r,0,r^2,0,...]$.%adjust
\end{enumerate}
    \label{definfsig}
\end{definition}

When $f$ is of the form 5(or 6) in Lemma \ref{lemMGform}, $f_q$ is not symmetric; however, it can be realized by symmetric matchgate signatures by connecting a dangling edge of a single $[1,0,\frac{1}{\sqrt{r}}]$ to a dangling edge of the $[1,0,r,0,r^2,0,...]_{k-q+1}$(or $\sqrt{r}[0,1,0,r,0,r^2,0,...]_{k-q+1}$) signature. %The gadget simulates the $x_\alpha$ position of $f_q$.
\par $f_q$ can represent the behaviors of $f^\alpha$ for all valid pinnings $\alpha\in \{0,1\}^q$. It can be verified that, if $f(\alpha)\neq 0$, then $f(\alpha)=\sqrt{r}^{HW(\alpha)}$. Consequently, for each $\gamma\in \{0,1\}^{k-q}$, 
$$f_q(Q_{(f,q)}(\alpha),\gamma)=(\frac{1}{\sqrt{r}})^{HW(\alpha)}f^\alpha(\gamma)$$

The following lemma can also be verified:
\begin{lemma}
%Suppose $f$ is a symmetric matchgate signature of arity $k$ and $f(\alpha)\neq 0$.%$\alpha_1,\alpha_2,...,\alpha_m$ are valid pinnings of $f$. 
Suppose $f$ is a symmetric matchgate signature of arity $k$ and $\alpha\in \{0,1\}^k$ is an assignment string satisfying $f(\alpha)\neq 0$.
Furthermore, $\alpha_a\in \{0,1\}^q,\alpha_b\in \{0,1\}^{k-q}$ and their linkage $\alpha_a\alpha_b=\alpha$. Then the following equation holds:
$$f(\alpha)=f_{q}(Q_{(f,q)}(\alpha_a),\alpha_b)\cdot f_{k-q}(Q_{(f,k-q)}(\alpha_a),\alpha_b)$$

%Furthermore, $\alpha_1,\alpha_2,...,\alpha_m$ are substrings of $\alpha$ and their linkage $\alpha_1\alpha_2...\alpha_m=\alpha\in \{0,1\}^k$.  We use $\widetilde{\alpha_j}$ to denote $\alpha_1\alpha_2...\alpha_{j-1}\alpha_{j+1}...\alpha_m$, which is obtained by  deleting the $\alpha_j$ in $\alpha$. Then the following condition holds:
%$$f(\alpha)=\prod_{1\le j\le m} f_{q_j}(Q_{(f,q_j)}(\widetilde{\alpha_j}),\alpha_j)$$
%where $q_j$ is the length of $\alpha_j$.
    \label{MGrepre}
\end{lemma}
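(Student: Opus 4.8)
The plan is to derive the claimed factorization from two facts already recorded above the statement. The first is that a normalized symmetric matchgate signature evaluates on its support as $f(\alpha)=\sqrt{r}^{HW(\alpha)}$, where $r$ is the parameter attached to the relevant form of $f$ in Lemma~\ref{lemMGform} (with $r=1$ in forms 1--4). The second is the identity $f_q(Q_{(f,q)}(\beta),\gamma)=(\frac{1}{\sqrt{r}})^{HW(\beta)}f^{\beta}(\gamma)$, valid for every valid pinning $\beta\in\{0,1\}^q$ and every $\gamma\in\{0,1\}^{k-q}$, which is precisely the identity displayed just above the statement of this lemma. Since the global scalar of a signature is irrelevant (Definition~\ref{defHol}), I would first reduce to the case that $f$ is literally one of the six forms of Lemma~\ref{lemMGform}, so that $f_q$ and $f_{k-q}$ are exactly as given by Definition~\ref{definfsig}, and I restrict to $1\le q\le k-1$ so that both information signatures are defined.

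For the computation, observe that $f^{\alpha_a}(\alpha_b)=f(\alpha_a\alpha_b)=f(\alpha)\neq 0$, so $\alpha_a$ is a valid pinning of $f$; applying the second fact with $\beta=\alpha_a$, $\gamma=\alpha_b$ gives $f_q(Q_{(f,q)}(\alpha_a),\alpha_b)=(\frac{1}{\sqrt{r}})^{HW(\alpha_a)}f(\alpha)$. For the other factor I would invoke the symmetry of $f$: pinning the last $k-q$ coordinates of $f$ to $\alpha_b$ gives the same signature as pinning the first $k-q$ coordinates to $\alpha_b$ (each depends only on the arity and on $HW(\alpha_b)$), and that signature evaluated at $\alpha_a$ is again $f(\alpha)\neq 0$; hence $\alpha_b$ is a valid pinning and the second fact yields $f_{k-q}(Q_{(f,k-q)}(\alpha_b),\alpha_a)=(\frac{1}{\sqrt{r}})^{HW(\alpha_b)}f(\alpha)$. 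Multiplying the two factors and using $HW(\alpha)=HW(\alpha_a)+HW(\alpha_b)$ produces $(\frac{1}{\sqrt{r}})^{HW(\alpha)}f(\alpha)^2$, and substituting the first fact $f(\alpha)=\sqrt{r}^{HW(\alpha)}$ collapses this to $f(\alpha)$, as claimed.

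The only genuinely delicate point is the bookkeeping forced by the asymmetry of Definition~\ref{definfsig}: $f_q$ is defined by compressing the \emph{first} $q$ arguments of $f$, whereas the second factor must compress the block $\alpha_b$, which occupies the \emph{last} $k-q$ positions of $\alpha$. Making this rigorous requires using the symmetry of $f$ to equate ``compress a prefix'' with ``compress a suffix'' (legitimate because $f_{k-q}$ with its information bit fixed is a symmetric function of its remaining arguments), and verifying that $Q_{(f,k-q)}$ is actually defined at $\alpha_b$, i.e.\ that $\alpha_b$ is a valid pinning --- both of which follow from $f(\alpha)\neq 0$. A secondary check is that the identity from Definition~\ref{definfsig} is applied only at the specific arguments $\gamma=\alpha_b$ and $\gamma=\alpha_a$, for which nothing beyond $f^{\alpha_a}(\alpha_b)=f(\alpha)$ (and its mirror image) is needed. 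Beyond this, the argument is a single uniform computation that does not require separating the six forms of Lemma~\ref{lemMGform} or treating the degenerate $r=1$ cases apart.
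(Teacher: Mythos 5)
Your proof follows the same approach as the paper's: both factors are expanded via the identity $f_q(Q_{(f,q)}(\beta),\gamma)=\bigl(\tfrac{1}{\sqrt{r}}\bigr)^{HW(\beta)}f^\beta(\gamma)$ and then the product is collapsed using $HW(\alpha)=HW(\alpha_a)+HW(\alpha_b)$ together with $f(\alpha)=\sqrt{r}^{HW(\alpha)}$. You also correctly noticed that the second factor as printed, $f_{k-q}(Q_{(f,k-q)}(\alpha_a),\alpha_b)$, is dimensionally inconsistent with Definition~\ref{definfsig} and should read $f_{k-q}(Q_{(f,k-q)}(\alpha_b),\alpha_a)$, which is exactly the form the paper's own calculation passes to in its second line.
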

\begin{proof}
    \begin{align}
        & f_{q}(Q_{(f,q)}(\alpha_a),\alpha_b)\cdot f_{k-q}(Q_{(f,k-q)}(\alpha_a),\alpha_b) \notag \\
        ={}& (\frac{1}{\sqrt{r}})^{HW(\alpha_a)}f^{\alpha_a}(\alpha_b)\cdot(\frac{1}{\sqrt{r}})^{HW(\alpha_b)}f^{\alpha_b}(\alpha_a)\notag \\
        ={}& (\frac{1}{\sqrt{r}})^{HW(\alpha)}(f(\alpha))^2\notag \\
         ={}& f(\alpha)\notag     
    \end{align}
\end{proof}

%\begin{proof}
  % If $f$ is of the form 1 (or 2), the lemma is trivial since $\alpha$ can only contain 0 (or 1). Suppose $f$ is of the form 3. If there exists $a,b$ such that $HW(\alpha_a)=HW(\alpha_b)=1$, then $HW(\alpha)\ge2$ and the left hand side equals 0. $f_{q_a}(Q_{(f,q_a)}(\bar{\alpha_a}),\alpha_a)=f_{q_a}(Q_{(f,q_j)}(\bar{\alpha_j}),\alpha_ja)$ while for all $j\neq p, HW(\alpha_j)=0$. The proof is similar if $f$ is of the form 4.
   %
%\end{proof}
For each symmetric matchgate signature $f$, we can construct a path gadget as follows.
\begin{definition}
    Suppose $f$ is a symmetric matchgate signature of arity $k$. For $1\le q\le k-1$, a $q$-path gadget $PP_q=(V,E,D)$ of $f$ is defined as follows. $PP_q$ is a path $uxwyv$, with $q$ dangling edges connecting to $u$ and $k-q$ edges connecting to $v$. $w$ is assigned $[1,0,1]$ if $f$ is of type 1; otherwise it is assigned $[0,1,0]$. $ux$ is the gadget that realize $f_{k-q}$ while $yv$ is the gadget that realize $f_{q}$. In other words, $x,y$ are assigned $[1,0,\frac{1}{\sqrt{r}}]$ and  $u$ is assigned $f_{k-q}$ if $f$ is of form 1-4, $[1,0,r,0,r^2,0,...]_{q+1}$ (or $\sqrt{r}[0,1,0,r,0,r^2,0,...]_{q+1}$) if $f$ is of the form 5 (or 6). The signature assigned to $v$ is similar to that to $u$, except that it is of arity $k-q+1$.

    We also denote $u$ as the head vertex of $PP_q$ and $v$ as the tail vertex of $PP_q$.
\end{definition}
 See figure \ref{fig:K54} for a visualization of the path gadget. Suppose $f\in\mathscr{M}_P$. By Lemma \ref{lem:SymGadRealize}, $f$ can be realized by a star gadget, and the central signature can further be realized by a path gadget. Each end of the path, together with the edge signatures connecting to it in the star gadget, again form a gadget whose signature is a permutable matchgate signature. Using this observation,  we further extend the concept of the path gadget to $f\in\mathscr{M}_P$.
 
\begin{definition}
    Suppose $f\in\mathscr{M}_P$ is of arity $k$ and $(S_1,S_2)$ is a partition of $\{1,2,\dots,k\}$. Let $ST_f$ be the star gadget that realize $f$ as described in Lemma \ref{lem:SymGadRealize} and $h_f$ be the symmetric central signature. For $q=|S_1|$, let $PP_q=(V,E,D)=uxwyv$ be the $q$-path gadget of $h_f$, and we replace $h_f$ with the $PP_q$ gadget in $ST_f$ to obtain the gadget $ST_f'$, such that variables with indices in $S_1$ are connecting to the head of $PP_q$, while those corresponding to $S_2$ are connecting to the tail.
    
    A $(S_1,S_2)$-path gadget $PP_{(S_1,S_2)}=(V,E,D)$ of $f$ is defined as follows. The underlying graph of $PP_{(S_1,S_2)}$ is same as $PP_q$. $x,w,y$ are assigned the same signature as in $ST_f'$. $u$ is assigned the signature of the gadget formed by the $q$ edge signatures with indices in $S_1$ and the signature assigned to the head vertex of $PP_q$ in $ST_f'$. In other words, it is the signature of the gadget induced by $\{u\}\cup N_u-x$ in the gadget $ST_f'$, where $N_u$ denotes the neighbours of $u$ in $ST_F'$ and each vertex in $N_u-x$ is assigned the edge signature with index belonging to $S_1$. The signature assigned to $v$ is analogous to that assigned to $u$, except that it is of arity $k-q+1$ and the indices of its variables correspond to $S_2$.

    We denote $u$ as the head vertex of $PP_{(S_1,S_2)}$ and $v$ as the tail vertex of $PP_{(S_1,S_2)}$ in this case as well.
    \label{defpath}
\end{definition}
 
 The following lemma can be easily verified by Lemma \ref{MGrepre}.
 \begin{lemma}
     Suppose $f\in\mathscr{M}_P$ is of arity $k$ and $(S_1,S_2)$ is a partition of $\{1,2,\dots,k\}$. Then a $(S_1,S_2)$-path gadget of $f$ realizes $f$. Furthermore, the signatures assigned to the head vertex and the tail vertex of the gadget are both permutable matchgate signatures.
     \label{lempath}
 \end{lemma}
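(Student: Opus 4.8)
The plan is to derive the lemma from Lemma \ref{MGrepre} together with the elementary fact that replacing a sub-gadget of a gadget by an equivalent gadget (or by a single vertex carrying the sub-gadget's signature) does not change the realized signature. I would first fix, via Lemma \ref{lem:SymGadRealize}, a star gadget $ST_f$ realizing $f$ whose central signature $h_f$ is a symmetric matchgate signature and whose $k$ edge signatures are binary matchgate signatures; by Lemma \ref{lemMGform}, $h_f$ is a scalar multiple of one of the six listed forms, which fixes the constant $r$ and whether $h_f$ is of type 1 or type 2. The proof then has two halves: (a) the $q$-path gadget $PP_q$ of $h_f$ (with $q=|S_1|$) realizes $h_f$, from which the $(S_1,S_2)$-path gadget realizes $f$; and (b) the signatures assigned to the head and tail vertices lie in $\mathscr{M}_P$.

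For (a), fix an assignment $\delta=\delta_1\delta_2$ to the dangling edges of $PP_q$, with $\delta_1\in\{0,1\}^q$ at the head $u$ and $\delta_2\in\{0,1\}^{k-q}$ at the tail $v$, and evaluate $Z(PP_q^{\delta})$ by summing over the four internal edges of the path $uxwyv$. The sub-path $ux$ (with the $[1,0,\frac{1}{\sqrt r}]$ carried by $x$) realizes $(h_f)_{k-q}$ and the sub-path $yv$ realizes $(h_f)_{q}$, in the forms given by Definition \ref{definfsig}, while the vertex $w$ --- assigned $=_2$ when $h_f$ is of type 1 and $[0,1,0]$ when it is of type 2 --- forces the information bit fed into the $u$-side and the one fed into the $v$-side to coincide, respectively to be complementary; this is exactly the relation that $Q_{(h_f,k-q)}(\delta_2)$ and $Q_{(h_f,q)}(\delta_1)$ satisfy precisely when $h_f(\delta)\neq 0$. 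Using the explicit forms in Definition \ref{definfsig}, the sum therefore has at most one nonzero contribution, namely $(h_f)_{k-q}\bigl(Q_{(h_f,k-q)}(\delta_2),\delta_1\bigr)\cdot(h_f)_{q}\bigl(Q_{(h_f,q)}(\delta_1),\delta_2\bigr)$, which by Lemma \ref{MGrepre} equals $h_f(\delta)$; and when $h_f(\delta)=0$ --- because one of $\delta_1,\delta_2$ is not a valid pinning or the $Q$-relation fails --- every internal assignment meets a zero factor, so $Z(PP_q^{\delta})=0$ as well. Hence $PP_q$ realizes $h_f$. Since $h_f$ is symmetric, the head/tail split of the dangling edges of $PP_q$ is immaterial to its value, so substituting $PP_q$ for $h_f$ inside $ST_f$ --- routing the $S_1$-edge vertices to the head and the $S_2$-edge vertices to the tail --- yields $ST_f'$, still realizing $f$; the $(S_1,S_2)$-path gadget is obtained from $ST_f'$ by collapsing the sub-gadget around $u$ and the one around $v$ to single vertices carrying their signatures, an operation preserving the realized signature, so $PP_{(S_1,S_2)}$ realizes $f$.

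For (b), the signature $\widehat h$ assigned to the head vertex of $PP_{(S_1,S_2)}$ is, by Definition \ref{defpath}, the signature of the gadget consisting of the head vertex of $PP_q$ --- which carries a \emph{symmetric} matchgate signature $m$ of arity $q+1$ (for forms 5--6 the $\sqrt r$-correction sits on the separate vertex $x$, so $m$ itself is one of $[1,0,r,0,r^2,\dots]_{q+1}$ or $\sqrt r\,[0,1,0,r,0,\dots]_{q+1}$) --- together with the $q$ binary matchgate edge signatures $b_j$, $j\in S_1$, one attached to each of $q$ of the variables of $m$, the last variable becoming the information dangling edge. Padding that information edge with a copy of $=_2$, itself a binary matchgate signature acting as the identity, presents $\widehat h$ as the signature of a star gadget with symmetric matchgate center $m$ and $q+1$ binary matchgate edge signatures. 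For every permutation $\pi$ of its $q+1$ variables, symmetry of $m$ shows $\widehat h_\pi$ is the signature of the same kind of star gadget with its edge signatures permuted; being a planar composition of matchgate signatures it is a matchgate signature, so $\widehat h\in\mathscr{M}_P$. The argument for the tail signature is identical with $q+1$ replaced by $k-q+1$.

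The step I expect to demand the most care is the explicit evaluation in half (a): one must check, essentially case by case over the six forms of Lemma \ref{lemMGform}, that the vertex $w$ enforces exactly the $Q$-relation and that the $\sqrt r$-weights built into $(h_f)_{q}$ and $(h_f)_{k-q}$ make the surviving term reproduce the factorization of Lemma \ref{MGrepre}. This is the content hidden behind ``easily verified by Lemma \ref{MGrepre}''; everything else --- the substitution arguments in (a) and all of (b) --- is bookkeeping resting on Lemmas \ref{MGrepre}, \ref{lem:SymGadRealize} and \ref{lemMGform} and on the closure of matchgate signatures under planar gadget construction.
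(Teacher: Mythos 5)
Your proof is correct and follows the route the paper intends: the realization claim comes from evaluating the path $uxwyv$ edge-by-edge and matching the surviving term against the factorization in Lemma~\ref{MGrepre}, and the membership in $\mathscr{M}_P$ comes from viewing each end-vertex signature as a star gadget with symmetric center, which remains a planar matchgate under any permutation of the legs. The paper itself gives no proof (it merely asserts ``easily verified by Lemma~\ref{MGrepre}''), so your write-up is a faithful expansion of that sketch rather than a different argument; the one place that genuinely demands the case check you flag is confirming, form by form in Lemma~\ref{lemMGform}, that the middle vertex $w$ kills exactly the term whose information bits disagree with the $Q$-relation.
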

 
%\par Besides, we define $f_q'$ to be the signature of the gadget obtained by connecting a dangling edge of a single $[0,1,0]$ signature to the information variable of $f_q$. The dangling edge of $f'$ which is connecting to the $[0,1,0]$ signature in the gadget is still denoted as the information variable of $f_q'$.

\subsection{Computing the leaves}

In this section, for $dt\in T$ where $t$ is a leaf and a child of $d$, we demonstrate the method for replacing $G_{\le t}$ with a single vertex in $G_{\le d}$. 
\begin{definition}
    \par For $G_{\le t}=(V_{\le t},E_{\le t})$ where $t\in T$ is a leaf, the \textit{representative signature} $f_{\le t}$ of it is defined as the signature of $GG_{\le t}$, and $GG_{\le t}$ is constructed as follows: 

    $GG_{\le t}=(V_{\le t},E_{\le t}-E_{X_t},D_{\le t})$, where $E_{X_t}=\{xy\in E_{\le t}|x,y\in X_t\}$ and $D_{\le t}=\{x|x\in X_t\}$.
    %\par Let $E_{\le t}\subseteq G_{\le t}$ be the edge set with at most 1 endpoints in $X_t$. Let $D_{\le t}$ be the set of dangling edges such that for each $x\in X_t$ there is exactly 1 dangling edge in $D_{\le t}$ connecting to it. Let $GG_{\le t}=(V(G_{\le t}),E_{\le t}^{gg})$ be a gadget where $E_{\le t}^{gg}=E_{\le t}\cup D_{\le t}$.    
    The matchgate signatures assigned to vertices in  $GG_{\le t}-X_t$  remain the same as in $G_{\le t}$. For each $x\in X_t$, suppose a permutable matchgate signature $f$ of arity $k$ is assigned to $x$ in $G_{\le t}$, $S_1=\{s|\text{the }s\text{th edge of }x\in E_{\le t}-E_{X_t}\}$, and $(S_1,S_2)$ is a partition of $\{1,\dots,k\}$. Then in $GG_{\le t}$, $x$ is assigned the signature same as that assigned to the head vertex of the $(S_1,S_2)$-path gadget of $f$.
    \label{defrepsig}
\end{definition}
%figure
 %In such a way, we define a gadget of arity $|X_t|\le3$, and the signature of the gadget is the representative signature. Notice that, by construction, each variable of the representative signature correspond to exactly one vertex in $X_t$.(explain by picture)
\begin{figure*}
	\centering
    \subcaptionbox{\label{fig:K51}}{ \includegraphics[width=0.4\textwidth]{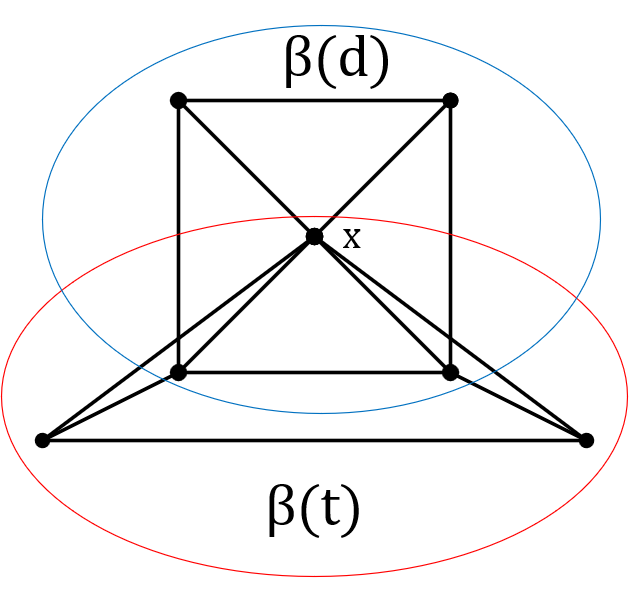}}
		\subcaptionbox{\label{fig:K52}}{\includegraphics[width=0.4\textwidth]{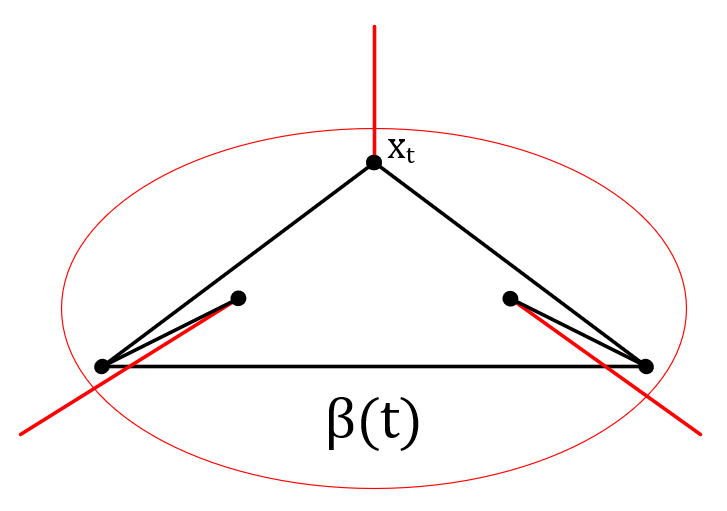}}
  
 \subcaptionbox{\label{fig:K54}}{\includegraphics[width=0.4\textwidth]{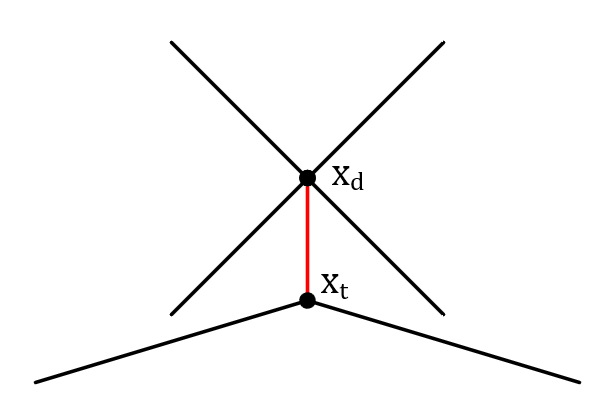}}
 \subcaptionbox{\label{fig:K53}}{\includegraphics[width=0.4\textwidth]{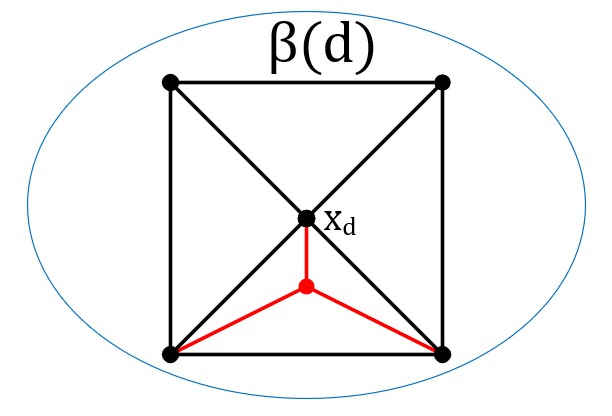}}
	\caption{An example of the procedure of the algorithm. (a) The graph induced by $\beta(t)\cup\beta(d)$ where $t$ is a leaf and a child of $d$. $\beta(t)$ is enclosed by the red ellipse while $\beta(d)$ is enclosed by the blue ellipse. (b) The construction of the gadget $GG_{\le t}$, where the red edges represent the dangling edges in $D_{\le t}$. (c)The path gadget for $x$, where the path is colored in red and 3 vertices of degree 2 on the path are omitted. (d) Visualization of how $G_{\le t}$ is replaced by a single vertex in $G_{\le d}$. The introduced vertex is colored in red as well as the remaining parts of the path gadgets.} 
	\label{fig:K5alg}
\end{figure*}
\par See Figure \ref{fig:K51},\ref{fig:K52} for an example. In the following, we illustrate how to compute the representative signature of $G_{\le t}$. 
%The key point of Step 3 is to compute the representative signature of each leaf, and remove the leaf in $T$ by replacing it with a single vertex. We start with the following lemma: 
\begin{lemma}
    Suppose $G$ is a graph which can be embedded on a surface with genus at most $h$, and each vertex of $G$ is assigned a permutable matchgate signature. Then the value of $G$ can be computed in polynomial time.
    \label{mggealg}
\end{lemma}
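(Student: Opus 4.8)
The plan is to reduce $\mathrm{Holant}(\mathcal{F})$ on bounded-genus graphs to the problem of counting perfect matchings (with complex edge weights) on bounded-genus graphs, and then invoke the known polynomial-time algorithm for $\#\mathrm{PM}$ on surfaces of bounded Euler genus \cite{galluccio1999theory,tesler2000matchings,curticapean2015parameterizing}. The key observation is that every permutable matchgate signature is realized by a \emph{planar} matchgate (Definition \ref{defMG}), so, up to a holographic-type bookkeeping, each vertex of $G$ can be replaced by a small planar gadget without increasing the genus.

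First I would, for each vertex $v$ of $G$ carrying a permutable matchgate signature $f_v$ of arity $k_v$, fix a planar matchgate $M_v$ realizing $f_v$ whose external vertices appear in a prescribed cyclic order on the outer face; by Lemma \ref{lemMG} such $M_v$ has $O(k_v^4)$ vertices and can be constructed in $O(k_v^4)$ time. The subtlety is that a matchgate signature $f_v$ is only guaranteed to be realized by a planar matchgate when the external edges are taken \emph{in a fixed clockwise order}, whereas in the embedding of $G$ the edges around $v$ occur in whatever rotation the surface embedding dictates. This is precisely where \emph{permutability} is used: since $f_v$ is a permutable matchgate signature, every permutation of its arguments is again a matchgate signature, so we may take the cyclic order of the external vertices of $M_v$ to agree with the rotation at $v$ in the given embedding of $G$. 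Replacing each $v$ by $M_v$ inside a small disk around $v$ and routing the $k_v$ external edges of $M_v$ to the $k_v$ edges incident to $v$ in the same rotational order yields a new weighted graph $G'$ that is still embeddable on the same surface (we only substituted planar pieces inside disjoint disks), hence $\mathsf{genus}(G') \le h$, and $G'$ has size $\sum_v O(k_v^4) = O(|G|)$ since all arities are bounded by the maximum arity occurring in $\mathcal{F}$, which is a constant.

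Next I would verify the reduction is value-preserving: by the definition of matchgate realization, summing over $\{0,1\}$-assignments of the edges of $G$ of the product $\prod_v f_v$ equals, after expanding each $f_v(\alpha)=\#\mathrm{PM}(M_v - X)$, a signed/weighted sum over perfect matchings of $G'$, i.e. $Z(G) = \#\mathrm{PM}(G')$ up to an explicit constant. (One standard way to see this: each edge $e=uv$ of $G$ becomes, in $G'$, an edge joining an external vertex of $M_u$ to one of $M_v$; a global object on $G'$ restricts to a perfect matching of each $M_v-X_v$ exactly when the set $X_v$ of matched external vertices is consistent across the shared edges, which is exactly the Holant summation.) Then I would run the polynomial-time $\#\mathrm{PM}$ algorithm for graphs of Euler genus at most $h$ on $G'$; since $h$ is a fixed constant and $|G'| = O(|G|)$, this runs in time polynomial in $|G|$.

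The main obstacle I anticipate is the genus bookkeeping in the vertex-substitution step — specifically, making precise that the planar matchgate $M_v$ can be inserted respecting the cyclic edge-order at $v$ and that the resulting graph still embeds on a surface of Euler genus $\le h$. The permutability hypothesis is exactly what removes this obstacle, so the argument hinges on quoting Definition \ref{def:MP} at the right moment; beyond that, the correctness of the value computation is a routine unfolding of the definitions, and the running time is immediate from Lemma \ref{lemMG} together with the bounded-genus $\#\mathrm{PM}$ algorithm.
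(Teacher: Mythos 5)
Your proposal is correct and follows essentially the same route as the paper: replace each vertex by a planar matchgate whose external vertices are arranged to match the rotation at that vertex in the surface embedding (this is exactly where permutability is invoked), observe that this keeps the Euler genus at most $h$ and that $Z(G)=\#\mathrm{PM}(G')$, and then run the bounded-genus perfect-matching counting algorithm of \cite{galluccio1999theory,tesler2000matchings,curticapean2015parameterizing}. Your write-up is more explicit than the paper's about the size bound via Lemma~\ref{lemMG} and about why the substitution is value-preserving, but the argument is the same.
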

\begin{proof}
    After replacing each vertex in $G$ with the corresponding matchgate, the obtained graph $G'$ can also be embedded on a surface with genus at most $h$. This is because for each permutable matchgate signature, all the edges incident to it always form a proper clockwise order for the embedding. This property ensures that no crossings would be introduced in the replacement. By definition we have $Z(G)=$\#PM$(G')$, which is polynomial time computable by \cite{galluccio1999theory,tesler2000matchings,curticapean2015parameterizing} as mentioned in Section \ref{PMintro}. 
\end{proof}

%It should be noted that Lemma \ref{mggealg} may not be applicable in the event that $f$ is neither symmetric nor of arity no greater than three. See Section 8 for a detailed analysis.
\begin{lemma}
     Suppose $t\in T$ is a leaf, then the representative signature $f_{\le t}$ of $G_{\le t}$ can be computed in polynomial time.
     \label{computerepre}
\end{lemma}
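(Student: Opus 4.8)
The plan is to expand $G_{\le t}$ into an ordinary Holant instance on a bounded-genus graph and then invoke Lemma~\ref{mggealg}. Recall that $f_{\le t}$ is, by Definition~\ref{defrepsig}, the signature of the gadget $GG_{\le t}=(V_{\le t},E_{\le t}-E_{X_t},D_{\le t})$ with $|D_{\le t}|=|X_t|\le h+O(1)$ a constant, since $t$ has bounded $g$-$3$-width and the navel $X_t=\beta(t)\cap\beta(d)$ has size at most a constant (it avoids the apex set $A_t$ in at most $3$ vertices, and $|A_t|\le h$). By Lemma~\ref{exhaustgad}, it therefore suffices to show that for every fixed assignment $\tau:D_{\le t}\to\{0,1\}$ the value $Z(GG_{\le t}^\tau)$ can be computed in polynomial time; there are only $2^{|X_t|}=O(1)$ such $\tau$, so computing all of them and writing down $f_{\le t}$ as a table costs only a constant factor.

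\textbf{Key steps.} First I would fix $\tau$ and consider $GG_{\le t}^\tau$, which is an ordinary (dangling-edge-free) Holant instance whose underlying graph is $G_{\le t}$ with the edges inside $X_t$ removed; this graph is a subgraph of the torso $G_t$ (indeed of $G_{\le t}$ itself), hence still embeddable on a surface of Euler genus at most $h$, so the minor/genus bound from the bounded $g$-$3$-width is preserved. Second, I would check that every vertex of $GG_{\le t}^\tau$ carries a permutable matchgate signature: vertices of $GG_{\le t}-X_t$ keep their original signatures from $\mathcal{F}\subseteq\mathscr{M}_P$; each $x\in X_t$ is assigned the head-vertex signature of a $(S_1,S_2)$-path gadget of a permutable matchgate signature, which is again a permutable matchgate signature by Lemma~\ref{lempath}; and pinning a variable of such a signature to a constant keeps it a permutable matchgate signature by Lemma~\ref{lemPin}. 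Third, with all signatures in $\mathscr{M}_P$ and the graph of bounded genus, Lemma~\ref{mggealg} computes $Z(GG_{\le t}^\tau)$ in polynomial time. Finally, collecting the $O(1)$ values $\{Z(GG_{\le t}^\tau)\}_\tau$ yields the representative signature $f_{\le t}$ in polynomial time.

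\textbf{Main obstacle.} The one point that needs genuine care, rather than bookkeeping, is verifying that the genus bound really survives all the surgeries: deleting the edges $E_{X_t}$ is harmless, pinning dangling edges is harmless, but replacing each $x\in X_t$ by the head vertex of its path gadget in principle inflates the graph. As in the proof of Lemma~\ref{mggealg}, the key is that the path gadget is a small planar piece (a path $uxwyv$ with two attached matchgates) glued to $G_{\le t}$ only through the edges already incident to $x$, and — crucially — the construction in Definition~\ref{defpath} splits the edge set of $x$ according to the partition $(S_1,S_2)$ where $S_1$ collects exactly the edges of $x$ that leave $X_t$, so the cyclic order of edges at $x$ coming from the surface embedding of $G_{\le t}$ is respected and no new crossings are forced. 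I would make this precise by observing that $GG_{\le t}^\tau$ (after expanding path gadgets and then matchgates) is obtained from an embedded graph of Euler genus $\le h$ by planar local substitutions, hence has Euler genus $\le h$, and then apply Lemma~\ref{mggealg}.
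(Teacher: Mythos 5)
Your proposal follows the same overall plan as the paper — reduce to evaluating $Z(GG_{\le t}^{\tau})$ for each pinning $\tau$ of the constant-size dangling set, note via Lemma~\ref{lempath} and Lemma~\ref{lemPin} that all vertices of $GG_{\le t}^{\tau}$ carry permutable matchgate signatures, and hand the rest to Lemma~\ref{mggealg} and Lemma~\ref{exhaustgad}. However, you collapse into a single case something that the paper deliberately splits, and this introduces a small but genuine gap. Definition~\ref{ga3} only guarantees that the torso $G_t$ embeds on a surface of Euler genus $\le h$ \emph{when $|G_t|>h$}; for small torsos the $g$-$3$-width condition imposes nothing at all. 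Your claim that $GG_{\le t}^{\tau}$ is ``still embeddable on a surface of Euler genus at most $h$, so the minor/genus bound from the bounded $g$-$3$-width is preserved'' therefore does not follow for bags with $|\beta(t)|\le h$. The paper handles this by a two-case argument: if $|\beta(t)|\le h$ then $|E_{\le t}|\le h^2$ is a constant and $Z(GG_{\le t}^{\sigma})$ is computed by brute force via Lemma~\ref{exhaustsum}; only in the large-torso case does it invoke the genus bound and Lemma~\ref{mggealg}. Your unified argument can be repaired — any graph on at most $h$ vertices has Euler genus $O(h^2)$, still a constant, so Lemma~\ref{mggealg} would apply — but you would need to say this explicitly rather than attribute the genus bound to the width condition.

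Two smaller points worth noting. First, your bound $|D_{\le t}|=|X_t|\le h+O(1)$ conflates the $g$-$3$ setting of Section~\ref{secK5} (where the apex set $A_t$ is empty and the $(3)$-condition gives $|X_t|\le 3$ directly) with the $ga3$ setting of Section~\ref{secK7}; the paper correctly uses $|D_{\le t}|\le 3$ here. Any constant suffices for Lemma~\ref{exhaustgad}, so this does not break the argument, but it is the paper's sharper bound that is actually needed later in Lemma~\ref{BagtoMG} to keep the genus bounded when reintroducing arity-$3$ representative vertices. Second, your ``main obstacle'' paragraph frames the worry as whether replacing $x$ by the whole path gadget inflates the graph, but by Definition~\ref{defrepsig} the vertex $x$ in $GG_{\le t}$ is simply assigned the head-vertex signature — the path gadget is not inserted here (that happens in the parent bag in Lemma~\ref{BagtoMG}), and the matchgate expansion you describe is internal to the proof of Lemma~\ref{mggealg}. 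The substance of your observation is fine, but it is attached to the wrong step.
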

\begin{proof}
    Since the $g-3$-width of $(T,\beta)$ is less than $h$, either $|\beta(t)|\le h$, or the gadget can be embedded on a surface with genus at most $h$.
   For each assignment $\sigma$ of $D_{\le t}$, if $|\beta(t)|\le h$, $|E_{\le t}|\le h^2$. Since $h$ is a constant, $Z(GG_{\le t}^\sigma)$ can be computed in $O(1)$ time by Lemma \ref{exhaustsum}.
    %computing the values of all assignments of the edges in $E_{\le t}$ and summing them up.
    Otherwise, each vertex in $GG_{\le t}^\sigma$ is assigned a permutable matchgate signature by Lemma \ref{lemPin}. 
   % the information variable of the signature assigned to each $x\in X_t$ is pinned. The value of this assignment is the value of $G_{\le t}^{und}$ with signature assigned to each $x\in  X_t$ becoming $f_{k-p}^{x_\alpha=c}$ for some $c\in \{0,1\}$, where $f_{k-p}$ is the origin signature assigned to $x$ in $GG_{\le t}$. $f_{k-p}^{x_\alpha=c}$ is also symmetric, and 
   Then by Lemma \ref{mggealg}, $Z(GG_{\le t}^\sigma)$ can also be computed in polynomial time. 
   In either case, $Z(GG_{\le t}^\sigma)$ can be computed in polynomial time. As $|D_{\le t}|\le 3$, $f_{\le t}$ can be computed in polynomial time by Lemma \ref{exhaustgad}.
    %After replacing each vertex in $G_{\le t}^{und}$ with the corresponding matchgate, the obtained graph $G_{\le t}^{und'}$ can also be embedded on a surface with genus at most $h$. This is because all the matchgate signatures here, including corresponding $f_{k-p}^{x_\alpha=c}$ assigned to each vertex in $X_t$, are symmetric, which means all the edges incident to each signature always form a proper clockwise order for it, and thus no crossings would be introduced in the replacement. 
    %\par There are at most 8 possible assignments to the dangling edges, corresponding to the different inputs of the representative signature, which means the representative signature can be computed in polynomial time.
    %and the summation of the values under these assignments is the va
\end{proof}
\subsection{Transforming into matchgates}

As the representative signature is a gadget of signatures satisfying the parity condition, itself also satisfies the parity condition and thus is a matchgate signature by Lemma \ref{MG<3}.

%\par Now we show that the representative signature of $G_{\le t}$ does represent  $G_{\le t}$. 
Suppose that $dt\in T$ and $t$ is a leaf of $T$, $X_t$ is the navel of $t$ and $f_{\le t}$ is the representative signature of $G_{\le t}$. In this section, we perform two transformations on $G_{\le d}$, and shows that both transformations would not change the value of the representative signature $f_{\le d}$ of $G_{\le d}$. 

First, for each $x\in X_t$ assigned a permutable matchgate signature $f$ of arity $k$, let $S_1=\{s|\text{the }s\text{th edge of }x\in E_{\le t}-E_{X_t}\}$, and $(S_1,S_2)$ be a partition of $\{1,\dots,k\}$. We replace $x$ with a $(S_1,S_2)$-path gadget $PP_{(S_1,S_2)}$. By Lemma \ref{lempath} this transformation doesn't change the value of $f_{\le d}$. We denote the head vertex of $PP_{(S_1,S_2)}$ as $x_t$ and the tail vertex as $x_d$.

Then, vertices in $(\beta(t)-\beta(d))\cup\{x_t|x\in X_t\}$ induce the gadget $GG_{\le t}$. We replace them with a single vertex assigning $f_{\le t}$, which would not change the value of $f_{\le d}$ as well. Besides, $f_{\le t}$ can be computed in polynomial time by Lemma \ref{computerepre}. See Figure \ref{fig:K5alg} for an example of the transformations. We record this result informally as a lemma for future reference.

\begin{lemma}
    Suppose that $dt\in T$ and $t$ is a leaf of $T$. The effect of $G_{\le t}$ can be represented by introducing a single vertex in $G_{\le d}$.
    \label{BagtoMG}
\end{lemma}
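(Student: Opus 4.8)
Lemma \ref{BagtoMG} is essentially a bookkeeping statement that packages the work of the two preceding subsections, so the plan is to verify carefully that each of the two transformations preserves the quantity we care about, namely the representative signature $f_{\le d}$ of $G_{\le d}$ viewed as a signature on the navel $X_d$ (equivalently, on the dangling edges $D_{\le d}$). The proof should begin by fixing an arbitrary assignment $\sigma$ of the dangling edges $D_{\le d}$, so that both sides of the claimed equality become concrete complex numbers $Z\bigl((\,\cdot\,)^\sigma\bigr)$, and then show these numbers agree after each modification. First I would handle the path-gadget substitution: for each $x\in X_t$, Definition \ref{defpath} together with Lemma \ref{lempath} says that the $(S_1,S_2)$-path gadget $PP_{(S_1,S_2)}$ of the permutable matchgate signature $f$ assigned to $x$ realizes $f$ exactly, i.e. has the same signature on its $k$ dangling edges. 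Since realizing a signature by a gadget does not change the value of any Holant instance containing it (the local sum over the internal edges of the gadget reproduces the original signature value for every boundary configuration), replacing $x$ by $PP_{(S_1,S_2)}$ one vertex at a time leaves $Z\bigl(G_{\le d}^\sigma\bigr)$, and hence $f_{\le d}$, unchanged. The key point to spell out is that after this replacement the $S_1$-indexed dangling edges of $f$ are now incident to the head vertex $x_t$ and the $S_2$-indexed ones to the tail vertex $x_d$, with $x_t$ and $x_d$ joined only through the internal path $uxwyv$; this is exactly the incidence structure needed so that the subsequent step can cleanly separate the ``$\le t$'' part from the rest.

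Second I would handle the contraction of the gadget $GG_{\le t}$ into a single vertex carrying $f_{\le t}$. After the path-gadget substitutions, look at the set $W=(\beta(t)-\beta(d))\cup\{x_t\mid x\in X_t\}$ and the edges among these vertices (together with the dangling edges pointing into $X_t$): by construction this is precisely the gadget $GG_{\le t}$ of Definition \ref{defrepsig}, and by definition its signature on the dangling edge set $D_{\le t}=\{x\mid x\in X_t\}$ is $f_{\le t}$. Because $W$ is attached to the rest of $G_{\le d}$ only through these dangling edges — here one must invoke the properties of the tree decomposition: every vertex of $\beta(t)-\beta(d)$ appears only in bags below $t$ by the subtree condition of Definition \ref{deftw}, and the separator between the ``below $t$'' side and the rest is contained in $X_t$, now routed through the $x_t$'s — the standard Holant locality argument applies: summing over the internal edges of $W$ turns $W$ into a single vertex assigned $f_{\le t}$ without changing $Z\bigl(G_{\le d}^\sigma\bigr)$ for any $\sigma$. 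Finally I would note that $f_{\le t}$ is computable in polynomial time by Lemma \ref{computerepre}, so the replacement vertex is genuinely available to the algorithm; and the resulting graph is still what we want to recurse on, with $x_d$ playing the role of $x$ in the higher bags (its signature is again a permutable matchgate signature by Lemma \ref{lempath}, and its arity $k-q+1$ matches the number of remaining incident edges).

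The main obstacle — or rather the place where care is genuinely needed rather than routine — is making the second step rigorous, i.e. precisely identifying $W$ with $GG_{\le t}$ and checking that $W$ really is separated from the rest of $G_{\le d}$ by $D_{\le t}$ only. This requires simultaneously using (i) the subtree/coherence axioms of the tree decomposition to guarantee no edge of $G_{\le d}$ jumps from $\beta(t)-\beta(d)$ directly into a higher bag, (ii) the fact that the path-gadget substitution has correctly split each $x\in X_t$ so that exactly the $S_1$-edges (the edges of $E_{\le t}-E_{X_t}$) land inside $W$ and the $S_2$-edges stay outside, and (iii) the observation that the edges $E_{X_t}$ among navel vertices are deliberately excluded from $GG_{\le t}$ and instead handled later in the parent bag. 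A clean way to organize this is to fix $\sigma$, invoke $Z(G) = \sum_{\tau} Z(G^\tau)$ (the displayed identity preceding Lemma \ref{exhaustsum}) with $E'$ the internal edge set of $W$, observe that the inner sum factors as $f_{\le t}(\tau|_{D_{\le t}})$ times the contribution of the complementary part, and conclude. Since everything reduces to this locality identity plus the already-proved realization lemmas, the write-up can be kept short, with the tree-decomposition separation point stated explicitly as the one thing the reader should check.
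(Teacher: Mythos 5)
Your proposal is correct and follows essentially the same two-step route the paper takes: first replace each navel vertex $x\in X_t$ by a $(S_1,S_2)$-path gadget (invoking Lemma~\ref{lempath} to see the representative signature $f_{\le d}$ is unchanged), then collapse the gadget induced by $(\beta(t)-\beta(d))\cup\{x_t\mid x\in X_t\}$, which is exactly $GG_{\le t}$, into a single vertex carrying $f_{\le t}$ computed via Lemma~\ref{computerepre}. The only thing you add is an explicit separation/locality argument (that edges out of $\beta(t)-\beta(d)$ must stay within $\beta(t)$ by the tree-decomposition axioms, so $W$ meets the rest of $G_{\le d}$ only through $D_{\le t}$), which the paper leaves implicit; this is a welcome piece of rigor, not a different approach.
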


Now we show that this procedure can be done successively. For a $d\in T$, if $|\beta(d)|\le h$, there is at most $(h+2)^3$ possible navels since the size of the navel is bounded by 3. By Lemma  \ref{nosamenavelK5} and \ref{lempath}, at most $10(h+2)^3$ vertices can be introduced in $G_d$ in the process in Lemma \ref{BagtoMG}, and consequently the size of $G_d$ is bounded by $10(h+2)^3+h$, which is still a constant. 

Otherwise, $G_d$ can be embedded on a surface with genus at most $h$. After the process in Lemma \ref{BagtoMG}, the obtained graph is still able to be embedded on a surface with genus at most $h$. This is because for each introduced matchgate of arity 1 or 2, no crossing is introduced. For each introduced matchgate of arity 3, it is bounded by a face and each face contains at most one such signature by Lemma \ref{nosamenavelK5}. As a result, the obtained graph after the process in Lemma \ref{BagtoMG} is still able to be embedded on a surface with genus at most $h$. Furthermore, a matchgate signature of arity no greater than 3 is always permutable.

Consequently, after the process in Lemma \ref{BagtoMG}, either $|\beta(d)|\le O(h^3)$, or the gadget $GG_{\le d}$ can be embedded on a surface with genus at most $h$. Besides, all the signatures we introduce in the process are permutable matchgate signatures. Lemma \ref{computerepre} can still be adapted by setting $10(h+2)^3+h$ as the constant we need in the proof, and as a result this procedure can be performed iteratively until the desired output is achieved.

\section{An algorithm for bounded degree \#CSP on minor-free graphs}\label{secK7}

In this section, we prove the fourth statement in Theorem \ref{allthm2}, which can be restated as follows.

\begin{theorem}     
   If $\mathcal{G}\cap\mathcal{H}\neq \emptyset$, $\mathcal{G}\cap\mathcal{PL}= \emptyset$, then for any constant $k\in \mathbb{N}^+$, maximum degree $k$ $\text{\#CSP}(\mathcal{F})[ \mathcal{G} ]$ is computable in polynomial time if $\mathcal{F}\subseteq \mathscr{A}$ or $\mathcal{F}\subseteq \mathscr{P}$ or $\mathcal{F}\subseteq \widehat{\mathscr{M}_P}$; otherwise it is $\text{\#P}$-hard. 
    \label{CSPK7}
\end{theorem}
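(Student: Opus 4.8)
I would follow the three-way split of Remark~\ref{remarCSP}. If $\mathcal{F}\subseteq\mathscr{A}$ or $\mathcal{F}\subseteq\mathscr{P}$, the problem is polynomial-time on \emph{every} underlying graph class by Theorem~\ref{genCSP}. If $\mathcal{F}$ lies in none of $\mathscr{A},\mathscr{P},\widehat{\mathscr{M}_P}$, then $\mathcal{G}\cap\mathcal{PL}=\emptyset$ forces $\mathcal{PL}\subseteq fb(\mathcal{G})$ (a planar graph has only planar minors, none of which is in $\mathcal{G}$), so the \#P-hardness is inherited from the planar bounded-degree \#CSP dichotomy (Theorem~\ref{plRDCSP}, via Lemma~\ref{RD=bounded}); for $k\le 2$ the incidence graph has treewidth at most $2$, so Corollary~\ref{CSPK4} settles the algorithmic side directly. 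Everything thus reduces to the case $\mathcal{F}\subseteq\widehat{\mathscr{M}_P}$, $\mathcal{F}\nsubseteq\mathscr{A}$, where by Lemma~\ref{lemcsp=hol} and Theorem~\ref{thmHT} it suffices to prove the bounded-degree analogue of Lemma~\ref{K5alg}: \emph{if $\mathcal{F}'\subseteq\mathscr{M}_P$ and $\mathcal{G}\cap\mathcal{H}\neq\emptyset$, then for every constant $k$, maximum degree $k$ $\text{Holant}(\mathcal{F}')[\mathcal{G}]$ is computable in polynomial time}, since $\widehat{\mathcal{F}},\widehat{\mathcal{EQ}}\subseteq\mathscr{M}_P$ and the holographic transformation preserves both the underlying graph and its maximum degree.

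For the algorithm I would first obtain a tree decomposition. Since some $G^*\in\mathcal{G}$ is a minor of a shallow vortex grid $H_{k_1}$, every $G\in fb(\mathcal{G})$ has bounded $\textsf{svg}$ (otherwise it would contain a large shallow vortex grid, hence $G^*$, as a minor), so $\textsf{p}_{ga3}(G)$ is bounded by a constant by Theorem~\ref{thmmtotw}, and by Theorem~\ref{K7trde} a tree decomposition $(T,\beta)$ of $G$ with $ga3$-width below a constant $c$ is computable in polynomial time. I would make it normal (Lemmas~\ref{lemnormaltree},~\ref{lemsizebound}), root it, and — as in Lemma~\ref{nosamenavelK5} — modify it so that no node has two children with the same navel. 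The core is then a leaf-to-root recursion extending Section~\ref{secK5}: for a leaf $t$ with parent $d$ I replace $G_{\le t}$ by a constant-size gadget inside $G_{\le d}$, recording its effect as a representative signature (Definition~\ref{defrepsig}) together with a boundary mapping on the part of the interface that meets the apices.

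Two new ingredients are needed relative to the single-crossing case. First, a torso $G_t$ is now only bounded-genus \emph{after deleting a bounded apex set} $A_t$ with $|A_t|\le c$; to evaluate it with its interface assignment fixed I would additionally enumerate all $\le 2^{ck}$ assignments of the edges incident to $A_t$ — this is the ``killing an apex'' step, and the constant bound $k$ on the degree is exactly what keeps this enumeration finite — after which each residual instance is bounded-genus with permutable matchgate signatures (Lemma~\ref{lemPin}) and Lemma~\ref{mggealg} evaluates it in polynomial time. Second, a navel $X_t=\beta(t)\cap\beta(d)$ may now have size up to $c+3$: I would split it into its non-apex part (size $\le 3$, and face-bounding when of size $3$ by condition~(3) of Definition~\ref{ga3}) and its apex part (size $\le c$), routing the non-apex part through the path-gadget machinery of Section~\ref{secK5}, where arity $\le 3$ guarantees permutability by Lemma~\ref{MG<3}, and recording the dependence of $G_{\le t}$ on the apex part as an explicit table over the constantly many assignments of the apex-incident edges, rather than as a (possibly non-permutable) signature. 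I would then verify, using the no-repeated-navel property to bound how many children attach to any one face or apex vertex, that after processing all children of $d$ the modified torso at $d$ still embeds on a bounded-genus surface after deleting a still-bounded apex set and with triple face-bounding navels, so Lemma~\ref{computerepre}'s argument applies again with the constant suitably enlarged; since $|T|\le|G|$ and each bag costs polynomial time, the root value equals $Z(G)$ in total polynomial time.

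The step I expect to be the main obstacle is this last one: showing that inserting a child's representative signature — which, having arity as large as $c+3$, need \emph{not} be a permutable matchgate signature — into the parent does not destroy the ``bounded genus modulo bounded apex, triple face-bounding navels'' invariant, in particular controlling how the newly introduced vertices attach to the parent torso's apex set and keeping that set bounded even though a node may have $\Theta(|\beta(d)|)$ children. This is what forces the separate treatment of the apex interface as a boundary mapping, and it is also the one place where the bounded-degree hypothesis is indispensable: without it the apex-edge enumeration, hence the whole ``killing an apex'' step, would be exponential — which, as the fifth statement of Theorem~\ref{allthm2} shows, is not an artifact of the proof but a genuine jump in complexity.
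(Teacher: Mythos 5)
Your overall plan is the same as the paper's: reduce via Remark~\ref{remarCSP}, Lemma~\ref{lemcsp=hol}, and Theorem~\ref{thmHT} to an algorithm for bounded‑degree $\text{Holant}$ over $\mathscr{M}_P$ on $fb(\mathcal{G})$ with $\mathcal{G}\cap\mathcal{H}\neq\emptyset$; get a tree decomposition of bounded $ga3$-width from Theorems~\ref{thmmtotw} and~\ref{K7trde}; exploit the degree bound $k$ to enumerate the $O(1)$ edges incident to the apex set (this is the paper's Lemma~\ref{mggeapalg}); route the size-$\le 3$ non‑apex navel through path gadgets (Lemma~\ref{lempath}) so that the child's contribution is a parity-respecting signature of arity $\le 3$, hence a matchgate signature by Lemma~\ref{MG<3}; record the dependence on apex-incident edges as a small table (the paper's boundary mapping $BM_t$); and recurse up the tree (Lemmas~\ref{computebdm},~\ref{succK7}). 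You also correctly identify where the degree hypothesis is load-bearing.

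There is, however, one step in your plan that does not work as stated and that the paper deliberately avoids. You propose to pre-process the tree decomposition ``as in Lemma~\ref{nosamenavelK5}'' so that no node has two children with the same navel. In Section~\ref{secK5} the navel is the full intersection $\beta(a)\cap\beta(t)$, and rerouting $b$ under $a$ preserves the subtree-connectivity condition because $\beta(b)\cap\beta(t)=\beta(a)\cap\beta(t)\subseteq\beta(a)$. In Section~\ref{secK7} the navel is $\beta(t)\cap\beta(d)-A_d$, i.e.\ it \emph{excludes} the apex set. Two children $a,b$ of $t$ can then share the navel $X$ while $\beta(a)\cap\beta(t)\neq\beta(b)\cap\beta(t)$: there can be an apex vertex $v\in A_t\cap\beta(b)\cap\beta(t)$ with $v\notin\beta(a)$. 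Deleting $bt$ and adding $ab$ then disconnects $\{s:v\in\beta(s)\}$, destroying the tree decomposition. So the rerouting is not sound with the Section~\ref{secK7} notion of navel, and patching it to merge only on equality of the \emph{full} intersection $\beta(\cdot)\cap\beta(t)$ would still leave distinct children sharing the same non-apex navel. The paper therefore does not modify the tree at all here: in Lemma~\ref{MGmulrepre} all children with a common navel $X$ are processed jointly and replaced by a single new vertex $v_l$ of arity $|X|\le 3$, with the differing apex interfaces absorbed into the enumeration of $B_t\cup P_t$. You should adopt that grouping step in place of the tree surgery; the rest of your plan then goes through essentially as you describe and coincides with the paper's proof.
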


By Lemma \ref{RD=bounded}, the following corollary can be obtained from Theorem \ref{CSPK7}.

\begin{corollary}
    If $\mathcal{G}\cap\mathcal{H}\neq \emptyset$, $\mathcal{G}\cap\mathcal{PL}= \emptyset$, then for any constant $D\in \mathbb{N}^+$, $\#R_D\text{CSP}(\mathcal{F})[ \mathcal{G} ]$ is computable in polynomial time if $\mathcal{F}\subseteq \mathscr{A}$ or $\mathcal{F}\subseteq \mathscr{P}$ or $\mathcal{F}\subseteq \widehat{\mathscr{M}_P}$; otherwise it is $\text{\#P}$-hard. 
\end{corollary}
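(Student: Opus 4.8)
The Corollary follows from Theorem~\ref{CSPK7} together with Lemma~\ref{RD=bounded}, which costs only a constant change in the degree bound, so the plan is to prove Theorem~\ref{CSPK7}. I would split on $\mathcal{F}$ as in Remark~\ref{remarCSP}. If $\mathcal{F}\subseteq\mathscr{A}$ or $\mathcal{F}\subseteq\mathscr{P}$, the problem is polynomial-time computable on every graph class by Theorem~\ref{genRDCSP}. If none of $\mathcal{F}\subseteq\mathscr{A}$, $\mathcal{F}\subseteq\mathscr{P}$, $\mathcal{F}\subseteq\widehat{\mathscr{M}_P}$ holds, then $\mathcal{G}\cap\mathcal{PL}=\emptyset$ forces $\mathcal{PL}\subseteq fb(\mathcal{G})$ (a planar graph has no nonplanar minor), so every planar bounded-degree instance is a legitimate input and $\text{\#P}$-hardness is inherited from Theorem~\ref{plRDCSP}. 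The only case that needs a new argument is therefore $\mathcal{F}\subseteq\widehat{\mathscr{M}_P}$ with $\mathcal{F}\nsubseteq\mathscr{A}$, where I would have to produce a genuinely new polynomial-time algorithm.

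For that case I would follow Remark~\ref{remarCSP}: working with $\#R_D\text{-CSP}$ (equivalent to maximum degree $k$ $\text{\#CSP}$ by Lemma~\ref{RD=bounded}), one has $\#R_D\text{-CSP}(\mathcal{F})[\mathcal{G}]\equiv_T\text{Holant}(\widehat{\mathcal{F}}|\widehat{\mathcal{EQ}_{\le D}})[\mathcal{G}]$, and after the $H_2$-transformation every vertex carries a permutable matchgate signature while the underlying graph has degree at most a constant $\Delta$. So it suffices to solve bounded-degree $\text{Holant}(\mathcal{F}')[\mathcal{G}]$ with $\mathcal{F}'\subseteq\mathscr{M}_P$ when $\mathcal{G}\cap\mathcal{H}\neq\emptyset$. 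Fixing $H\in\mathcal{G}\cap\mathcal{H}$, a minor of some $H_k$, Theorem~\ref{K7trde} produces in polynomial time a tree decomposition $(T,\beta)$ of the input $G\in fb(\mathcal{G})\subseteq fb(\{H\})$ of $ga3$-width below a constant $h$; I would normalize it (Lemma~\ref{lemnormaltree}), root it, and --- imitating Lemma~\ref{nosamenavelK5} and checking Definitions~\ref{deftw} and~\ref{ga3} --- rearrange it so that no node has two children with the same navel, the $ga3$-width being preserved. Then I would run the leaves-to-root recursion of Section~\ref{secK5}: at a leaf $t$ with parent $d$, replace each navel vertex by a path gadget (Definition~\ref{defpath}, Lemma~\ref{lempath}) so the ``adjacent bags meet in at most $3$ vertices'' requirement survives, compute a representative of $G_{\le t}$, collapse $GG_{\le t}$ into a single introduced vertex of $G_{\le d}$ (Lemma~\ref{BagtoMG}), and iterate.

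The new ingredient --- and the step that genuinely uses bounded degree --- is the evaluation of a torso $G_t$ of size larger than $h$, which need no longer be near-planar: the $ga3$-condition only supplies an apex set $A_t$ with $|A_t|\le h$ such that $G_t-A_t$ embeds on a surface of Euler-genus at most $h$ and every navel at $t$ meets $G_t-A_t$ in at most $3$ face-bounding vertices. Since $|A_t|\le h$ and all degrees are at most $\Delta$, only $h\Delta=O(1)$ edges are incident to $A_t$, so I would enumerate all $2^{O(1)}$ assignments to them (by Lemma~\ref{exhaustsum}). For each such assignment, pinning those edges leaves all signatures permutable matchgate signatures (Lemma~\ref{lemPin}), keeps $G_t-A_t$ embeddable on a genus-$\le h$ surface, and presents exactly the at-most-$3$ face-bounding navel vertices that the path-gadget machinery of Section~\ref{secK5} handles; so each pinned instance is evaluable in polynomial time by Lemma~\ref{mggealg}, and summing over the $O(1)$ assignments yields the value. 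Collapsing the leaf then inserts into $G_{\le d}$ one new vertex, recorded either as a representative matchgate signature of arity at most $3$ (permutable by Lemma~\ref{MG<3}) or, when its navel also meets $A_t$, as a boundary table of size $2^{O(h)}=O(1)$; as in Section~\ref{secK5} one checks that the size bound on $\beta(d)$, the genus-$\le h$ embeddability of $GG_{\le d}-A_d$, and the face condition on the navels all persist, so the recursion runs up to the root, where the final bag is evaluated directly.

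I expect the main obstacle to be the bookkeeping of this invariant once apices are present. One must argue that the path-gadget substitutions and the bag-collapses create no crossing in the genus-$\le h$ drawing of $G_t-A_t$ and load no face with more than one introduced degree-$3$ gadget --- the point at which ``no two children share a navel'' is used --- and that apex vertices retain degree $O(\Delta)$ so that later apex-enumerations remain $O(1)$. Navel vertices that lie inside $A_t$ have to be ``exposed'' by the enumeration rather than replaced by path gadgets, and keeping the two kinds of navel vertex (the at-most-$3$ face vertices versus those inside $A_t$) cleanly separated is the extra difficulty over the purely bounded-genus case of Section~\ref{secK5}. Finally, one should make explicit the implicit degree threshold in the statement: ``maximum degree $k$'' is the intended regime only for $k$ at least the maximum arity appearing in $\mathcal{F}$, and the $\text{\#P}$-hard half additionally requires $k\ge 3$ for Theorem~\ref{plRDCSP} to apply; Lemma~\ref{RD=bounded} then translates everything back into the $\#R_D$ formulation of the Corollary.
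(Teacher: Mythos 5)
Your derivation of the corollary itself is exactly what the paper does: it is an immediate consequence of Theorem~\ref{CSPK7} together with Lemma~\ref{RD=bounded}. Your high-level plan for Theorem~\ref{CSPK7} also matches the paper's: the case split via Remark~\ref{remarCSP}, the reduction to $\text{Holant}(\widehat{\mathcal{F}}|\widehat{\mathcal{EQ}_{\le D}})$ with $\widehat{\mathcal{F}}\subseteq\mathscr{M}_P$, the $ga3$ tree decomposition of bounded width from Theorem~\ref{K7trde}, the insight that bounded degree makes the number of edges incident to the apex set $A_t$ a constant (so that one can enumerate over all $2^{O(1)}$ assignments and apply Lemma~\ref{mggealg}), and a leaves-to-root recursion using the path gadgets of Section~\ref{secK5}. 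That is the correct substance.

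Where you diverge from the paper's route, and where the bookkeeping you flag as difficult becomes genuinely tangled, is that you import the Section~\ref{secK5} architecture --- normalize the tree so no node has two same-navel children (Lemma~\ref{nosamenavelK5}), then collapse each child to a single representative signature of arity $\le 3$. The paper's Section~\ref{secK7} deliberately abandons both parts. It never normalizes: Lemma~\ref{MGmulrepre} collapses an \emph{entire group} of same-navel children into one new vertex $v_l$ of degree $|X|\le 3$, after first pinning all of $B_t\cup P_t$ by Lemma~\ref{exhaustsum}; the "one degree-3 vertex per face" invariant is preserved by grouping rather than by rearranging the tree. And it does not use an arity-$\le 3$ representative signature; instead it records a \emph{boundary mapping} $BM_t:\{0,1\}^{|B_t|}\to\mathbb{C}$ indexed by \emph{all} boundary edges $B_t$, including those leaving apex vertices in $\beta(t)\cap\beta(d)$. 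Bounded degree is exactly what makes $|B_t|\le k(h+3)=O(1)$ so this table stays constant-sized.

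This resolves cleanly the point at which your write-up becomes confused: you speak of the navel "also meeting $A_t$" and of "navel vertices that lie inside $A_t$", but the navel $X_t$ is defined as $\beta(t)\cap\beta(d)-A_d$ and therefore never meets the relevant apex set by construction. What you are reaching for is that $\beta(t)\cap\beta(d)$ may contain apex vertices \emph{outside} the navel, whose incident boundary edges still have to be propagated upward. The uniform boundary mapping carries that information automatically with no case distinction, whereas your hybrid "representative signature of arity $\le 3$ or a boundary table when apices intrude" leaves the interface between the two representations unspecified. Separately, you would also need to prove that the nosamenavel rearrangement preserves $ga3$-width, not only $g-3$-width; Lemma~\ref{nosamenavelK5} only checks the latter, and although the extension is plausible (using $\beta(a)\cap\beta(b)\subseteq\beta(t)$ and monotonicity of conditions (a) and (3)), it is not automatic and the paper's grouping sidesteps it. Adopting the boundary-mapping viewpoint uniformly and dropping the normalization step would bring your plan in line with the paper's and close those gaps.
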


The fourth statement in Theorem \ref{allthm} can be obtained from Theorem \ref{CSPK7} as $K_7$ is a minor of $H_{18}$ and can not be embedded on the plane \cite{thilikos2022killing}. As $\mathcal{G}\cap\mathcal{PL}= \emptyset$, all planar graphs are $\mathcal{G}$-minor-free. By Remark \ref{remarCSP}, to prove Theorem \ref{CSPK7}, it is sufficient to prove the following lemma.

\begin{lemma}
     \par Suppose that $\mathcal{F}\subseteq \mathscr{M}_P$ and all the signatures in $\mathcal{F}$ are of arity at most $k$, where $k\in \mathbb{N}^+$ is a constant. Then for any graph class $\mathcal{G}$ satisfying $\mathcal{G}\cap\mathcal{H}\neq \emptyset$, there is an algorithm that computes $\text{Holant}(\mathcal{F})[ \mathcal{G} ]$ in polynomial time.
     \label{K7alg}
\end{lemma}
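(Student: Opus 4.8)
The plan is to extend the "killing a vortex" algorithm of Thilikos and Wiederrecht for $\text{\#PM}$ on graph classes $\mathcal{G}$ with $\mathcal{G}\cap\mathcal{H}\neq\emptyset$ to an algorithm for $\text{Holant}(\mathcal{F})$ with $\mathcal{F}\subseteq\mathscr{M}_P$ and all arities bounded by the constant $k$, using the same scaffold already developed in Section \ref{secK5} for the single-crossing case. First I would invoke Theorem \ref{thmmtotw}: since $\mathcal{G}\cap\mathcal{H}\neq\emptyset$, some $H\in\mathcal{G}$ is a minor of $H_\ell$, so $\textsf{svg}$ is bounded on $fb(\{H\})\supseteq\mathcal{G}$, hence $\textsf{p}_{ga3}$ is bounded there; by Theorem \ref{K7trde} we can, for a given $G\in\mathcal{G}$, compute in polynomial time a tree decomposition $(T',\beta')$ of $ga3$-width at most some constant $c$. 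Normalize it via Lemma \ref{lemnormaltree}, and fix a root. Then run the analogue of Lemma \ref{nosamenavelK5} to ensure no node has two children with the same navel (the navel being $\beta(t)\cap\beta(d)$ for a child $t$ of $d$); the same BFS re-hanging argument works since it only moves subtrees and the $ga3$ conditions (g), (a), (3) are preserved. This reduces everything to computing representative signatures from the leaves to the root, exactly as in Section \ref{secK5}.

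The second and main block is the leaf-to-parent contraction. For a leaf $t$ with parent $d$ and navel $X_t$ (of size $\le 3$ by condition (3)), I would build $GG_{\le t}$ exactly as in Definition \ref{defrepsig}: delete the edges inside $X_t$, put a dangling edge at each vertex of $X_t$, and replace each $x\in X_t$ by the head vertex of a $(S_1,S_2)$-path gadget of its signature, where $S_1$ indexes the edges going "down" into $G_{\le t}$. By Lemma \ref{lempath} this preserves the value of the parent's representative signature, and the head/tail signatures stay permutable matchgate signatures. The new ingredient relative to Section \ref{secK5} is that the torso $G_t$ of a large bag is not bounded-genus but rather bounded-genus \emph{plus an apex set} $A_t$ of size $\le c$. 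Here is where bounded degree $k$ enters: since every vertex has degree at most $k$ in the original instance, I can afford to exhaustively pin, by Lemma \ref{exhaustsum}, all edges incident to the $\le c$ apex vertices — there are at most $2^{ck}$ such boundary assignments, a constant. After this pinning each remaining vertex still carries a permutable matchgate signature by Lemma \ref{lemPin}, and $G_t - A_t$ embeds on a bounded-genus surface, so Lemma \ref{mggealg} computes $Z(GG_{\le t}^{\sigma})$ in polynomial time for each of the $\le 2^{|D_{\le t}|}\cdot 2^{ck}$ relevant assignments; combining with Lemma \ref{exhaustgad} yields $f_{\le t}$ in polynomial time, which by the parity-closure argument and Lemma \ref{MG<3} is a (permutable) matchgate signature of arity $\le 3$. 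This is the analogue of Lemmas \ref{computerepre} and \ref{BagtoMG}.

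Finally I would argue the iteration goes through, as in the end of Section \ref{secK5}: after contracting all leaf children of a node $d$, if $|\beta(d)|$ was bounded then by Lemma \ref{nosamenavelK5} (bounded number of navels, each introducing a bounded path gadget) $\beta(d)$ stays bounded; otherwise the torso $G_d$ still decomposes as bounded-genus-plus-$A_d$ and the introduced arity-$\le 3$ matchgates of a navel of size $3$ sit on a face, so conditions (g),(a),(3) survive, and crucially the \emph{maximum degree stays bounded by a new constant} (each navel vertex is split into a path whose endpoints have bounded degree), which is exactly what we need for the apex-pinning step to remain a constant-size enumeration at the next level. Recursing to the root computes $Z(G)$. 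The main obstacle I anticipate is precisely the interaction between the apex set and the degree bound: one must verify that after the path-gadget surgeries and the leaf contractions the degree bound is not merely "still finite" but uniformly bounded by a constant depending only on $k$ and $c$, so that $2^{O(\text{degree}\cdot|A_t|)}$ stays constant through all $O(|G|)$ levels of the recursion; a secondary subtlety is checking that replacing an apex vertex's down-edges is compatible with it being an apex (it is, since apex vertices are simply pinned, not embedded), and that the face-boundary requirement in condition (3) is genuinely inherited by the contracted graph. Everything else is bookkeeping carried over verbatim from Section \ref{secK5}.
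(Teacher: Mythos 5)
Your overall plan---invoke Theorems~\ref{thmmtotw} and \ref{K7trde} to get a normal tree decomposition of bounded $ga3$-width, then recurse from leaves to root, and use the degree bound $k$ to exhaustively pin the at most $ck$ edges incident to the apex set---is the right idea and matches the paper's. But the recursive object you propose to carry up the tree is wrong, and this is a genuine gap, not just a presentational one.

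You write that the navel $X_t = \beta(t)\cap\beta(d)$ has size at most $3$ ``by condition (3),'' and accordingly you build a representative signature of arity $\le 3$ exactly as in Definition~\ref{defrepsig}, relying on Lemma~\ref{MG<3} to conclude it is a matchgate. This is inconsistent: condition (3) only bounds $|\beta(t)\cap\beta(d) - A_\cdot|\le 3$, so the \emph{full} boundary $\beta(t)\cap\beta(d)$ can contain up to $h$ additional apex vertices. A vertex $a$ in the apex set of the parent's torso that also lies in $\beta(t)$ may have edges into the interior of $G_{\le t}$; those edges are part of what $G_{\le t}$ contributes to the parent, and an arity-$3$ representative signature pointing only at the non-apex navel vertices simply cannot depend on them. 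Pinning the apex-incident edges of $A_t$ (the child's own apex set) for the purpose of computing $Z(GG_{\le t}^\sigma)$ via Lemma~\ref{mggealg} is fine, but it does not make the child's contribution an arity-$3$ matchgate: it depends on all of the up-to-$k(h+3)$ boundary edges. The paper's fix is precisely to abandon the arity-$3$ representative signature here and instead record a \emph{boundary mapping} $BM_t:\{0,1\}^{|B_t|}\to\mathbb{C}$ over the whole boundary edge set $B_t$; this is a constant-size table, not a matchgate, and the arity-$\le 3$ matchgate (via Lemma~\ref{MG<3}) is introduced only \emph{after} pinning $B_t\cup P_t$, when the contracted vertex $v_l$ really is adjacent to just the $\le 3$ non-apex navel vertices. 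Your proposal never makes this separation, and as written the step ``$f_{\le t}$ is a matchgate of arity $\le 3$'' fails.

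A second, smaller issue: the analogue of Lemma~\ref{nosamenavelK5} does not transfer cleanly. If you re-hang based on the full intersection $\beta(t)\cap\beta(d)$, the tree-decomposition axioms are preserved, but two children can still share the same \emph{reduced} navel $\beta(t)\cap\beta(d)-A_d$ while differing on the apex part; contracting each of them to a separate arity-$3$ vertex sitting on the same triangular face can then create a $K_{3,m}$ ($m\ge 3$) inside that face and increase the genus, breaking the invariant you need for the iteration. If you instead re-hang based on the reduced navel, the ``runs over a subtree'' property of the tree decomposition is no longer guaranteed for the apex vertices, since $\beta(b)\cap\beta(t)\cap A_t$ need not be contained in $\beta(a)$. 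The paper sidesteps this entirely: it does not re-hang, but in Lemma~\ref{MGmulrepre} it merges \emph{all} children sharing one reduced navel into a single new vertex $v_l$ of arity $\le 3$, computing $f_{\le l}$ by exhausting the $O(k)$ edges $E_{X_l}$ and consulting the children's boundary mappings. That merge is the mechanism that keeps exactly one degree-$3$ vertex per face and preserves the genus bound; your proposal is missing it.
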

%It should be noted that, the algorithm is directly applicable to Holant problems, which can be considered as a more general case than \#CSP, without the use of the property of $\widehat{\mathcal{EQ}}$.
\subsection{Overview of the algorithm}
\par The algorithm proceeds as follows:
\begin{enumerate}
    \item Find a tree decomposition $(T,\beta)$ with $ga3$-width $\le h$, decide an $r\in T$ as the root;
    \item Compute boundary mappings from leaves to the root.
\end{enumerate}
\par %We now turn our attention to the $\text{Holant}(\mathcal{F})[ \mathcal{G} ]$ problem. 

%\par The algorithm is similar as that in Section 3, except that we record a finite number of representative signatures for each node in $T$ instead of single one in Step 3. The definitions are also similar, and it can be checked that the definitions giving in this section is a generalization of that in Section 3. The key idea is that, as the degree of each vertex in $G$ is bounded, we can now deal with graphs with finite apex vertices in each bag now, and a more generalized algorithm can be achieved. The hardness for graphs with unbounded degree and a single apex vertex would be discussed in the next section.
\par The algorithm is analogous to that described in Section 3. The algorithm was developed on the basis of the following principle: given that the degree of each vertex in $G$ is bounded, it is now possible to deal with graphs with finite apex vertices, which enables the development of a more generalised algorithm.

We now focus on Step 1. For a graph $H\in\mathcal{G}\cap\mathcal{H}$, let $h(G)$ be the minimum integer $t$ such that $H$ is a minor of $H_t$. Let $h'=\min_{H\in\mathcal{G}\cap\mathcal{H}} h(G)$, which is computable. For each graph $G\in fb(\mathcal{G})$, $G$ does not contain a $H_{h'}$ minor, otherwise a contradiction would ensue. Consequently, $\text{svg}(G) \le h'$. By Theorem \ref{thmmtotw}, $\textsf{p}_{ga3}(G)\le h$ where $h=f(h')$ and  $f$ is a computable function. A tree decomposition $(T,\beta)$ of $G$ with $ga3$-width less than $h$ can also be computed in polynomial time by Theorem \ref{K7trde}. We may further assume the tree decomposition is normal as the size of the tree decomposition we obtain is of polynomial size, and thus we can use polynomial time to obtain a normal tree decomposition by Lemma \ref{lemnormaltree}.
 The following subsections will present the algorithm that computes the boundary mappings, as described in Step 2.
\subsection{Starting from the leaves}

%For an edge $dt\in T$, if $t$ is the child of $d$, we denote the adhesion $\beta(t)\cap \beta(d)$ as $N_t$. the \textit{navel} of $G_t$, denoted as $X_t$. The size of $X_t$ is bounded by $f(h)+3$. The \textit{boundary edge set} $B_t$ of $G_t$ is all the edges in $G_t$ with one endpoint in $X_t$ and the other one in $V(G_t)-X_t$.  The size of $B_t$ is bounded by $k(f(h)+3)$.  

%\par For two nodes $s,t\in T$, we define $s\le t$ to mean that $t$ is an ancestor of $s$. For a node $t\in T$, let $G_{\le t}$ be the graph induced by $\bigcup_{s\le t}\beta(s)$. For an edge $dt\in T$, if $t$ is the child of $d$, we call the adhesion $\beta(t)\cap\beta(d)$ the \textit{navel} of $G_{\le t}$, denoted as $X_t$. The navel of $G_{\le r}$ is $\emptyset$. The \textit{boundary edge set} $B_t$ of $G_t$ is all the edges in $G_t$ with one endpoint in $X_t\cap A_d$ and the other one in $\beta(t)$. 
\par Again, we introduce some more definitions to illustrate Step 2. 
%For two nodes $s,t\in T$, we define $s\le t$ to mean that $t$ is an ancestor of $s$. For a node $t\in T$, let $G_{\le t}$ be the graph induced by $\bigcup_{s\le t}\beta(s)$.
For an edge $dt\in T$, if $t$ is a child of $d$, we call $\beta(t)\cap\beta(d)-A_d$ the \textit{navel} of $G_{\le t}$, denoted as $X_t$ \footnote{The definition of navel in Section \ref{secK5} can be seen as a special case in which $A_d=\emptyset$.}. The \textit{boundary edge set} $B_t$ of $G_{\le t}$ is all the edges in $G_{\le t}$ with one endpoint in $\beta(t)\cap \beta(d)$ and the other one in $V(G_{\le t})-\beta(t)\cap \beta(d)$. In particular, both $X_r$ and $B_r$ are defined as $\emptyset$.  We also use $Q_t$ to denote the set of all the edges incident to some $a\in A_t$ and use $P_t$ to denote $Q_t-B_t$. As the maximum degree is bounded by $k$ and the size of $X_t\cup A_t$ is bounded by $h+3$, $|B_t\cup P_t|\le k(h+3)$. 

For $dt\in T$ and $t$ is a child of $d$, we want to record the value of $G_{\le t}$ when the assignment of edges in $B_t$ is given. This would allow us to care only about the value of edges in $B_t$ instead of the inner structure of $G_{\le t}$. 

\begin{definition}
    \par For $dt\in T$ and $t$ is a leaf, the boundary gadget of $G_{\le t}$ is denoted as $GG_{\le t}$ which is the gadget induced by $V(G_{\le t})-\beta(t)\cap \beta(d)$. The \textit{boundary mapping} of $G_{\le t}$ is a mapping $BM_t:\{0,1\}^{|B_t|}\to \mathbb{C}$ such that for each $\sigma\in \{0,1\}^{|B_t|}$, $BM_t(\sigma)=Z(GG_{\le t}^\sigma)$.
    %\par For each $\alpha\in \{0,1\}^{|B_t|}$ which is an assignment of edges in $B_t$, $BM_t(\alpha)$ is the value of  under the pinning $\alpha$. 
    %That is, for each $x\in G_{\le t}-\beta(d)$, if $x$ is assigned the signature $f$ in the origin graph $G_{\le t}$, then the signature assigned to $x$ becomes $f^{\alpha'}$ in $G_{\le t}-\beta(d)$ under $\alpha$, where $\alpha'$ is an assignment of edges belongs to $B_t$ and incident to $x$ based on $\alpha$.
    %In such a way, we define a gadget of arity $|X_t|\le3$, and the signature of the gadget is the representative signature. Notice that, by construction, each variable of the representative signature correspond to exactly one vertex in $X_t$.(explain by picture)
\end{definition}
%For each $t\in T$, a \textit{representative mapping} of $G_{\le t}$ is defined as follows:
%\begin{definition}
 %   \par For each $t\in T$, a representative mapping of $G_{\le t}$ is a mapping $Rep$ that maps each 0-1 assignment of variables in $B_t$ to a corresponding representative signature. The corresponding representative signature is obtained from Let $E_{\le t}'\subseteq G_{\le t}$ be the edge set with at most 1 endpoints in $X_t$. $G_{\le t}'=(\beta(t),E_{\le t}')$ with each $x\in X_t$ having a dangling edge connecting to it . The matchgate signature assigned to vertices in $G_{\le t}'$ keeps the same as in $G$, except those assigned to vertices in $X_t$ are replaced. If $x\in X_t$ and $f$ of arity $k$ is assigned to $x$, and the degree of $x$ in $G_{\le t}'$ is $p$, we replace $f$ with $f_{k-p}$ whose information variable corresponds to the dangling edge. If $f$ is of type 2, we further add a $[0,1,0]$ signature to the dangling edge. In such a way, we define a gadget of arity $|X_t|\le3$, and the signature of the gadget is the representative signature. Notice that, by construction, each variable of the representative signature correspond to exactly one vertex in $X_t$.(explain by picture)
    
%\end{definition}
The computation of the boundary mapping is based on the following lemma:
\begin{lemma}
    Suppose $G$ is a graph with maximum degree $k$, and there exists $A\subseteq V(G)$ with $|A|\le h$, such that $G-A$ can be embedded on a surface with genus at most $h$, and each vertex of $G$ is assigned a permutable matchgate signature, then the value of $G$ can be computed in polynomial time. 
    \label{mggeapalg}
\end{lemma}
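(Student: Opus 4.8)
The plan is to reduce to Lemma~\ref{mggealg} by ``killing the apex set'' through exhaustive enumeration, exploiting the bounded-degree hypothesis. Let $A\subseteq V(G)$ with $|A|\le h$ be the apex set witnessing the embedding of $G-A$, and let $B\subseteq E(G)$ be the set of all edges incident to some vertex of $A$. Since every vertex has degree at most $k$, we have $|B|\le hk$, which is a constant because both $h$ and $k$ are fixed. This is the crucial point: although an individual apex vertex may have unbounded degree in the unrestricted setting (which is precisely why the vertex-deletion argument of ``Killing a vortex'' breaks for general $\text{\#CSP}$, as noted in Section~\ref{mainresult}), here the degree bound turns $B$ into a set of constant size, so we can afford to branch over all of its assignments.

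First I would apply Lemma~\ref{exhaustsum} with $E'=B$. For each $\tau\in\{0,1\}^{|B|}$, consider the pinned instance $G^\tau=(V,E-B)$. Every vertex $a\in A$ has all of its incident edges in $B$, so in $G^\tau$ it carries the $0$-arity signature $f_a^{\tau_a}$, i.e.\ a complex scalar; these scalars contribute only a constant multiplicative factor (possibly $0$) to $Z(G^\tau)$. Every other vertex $v\in V-A$ retains the signature $f_v^{\tau_v}$, which is again a permutable matchgate signature by Lemma~\ref{lemPin}. The underlying graph of $G^\tau$, after discarding the now-isolated vertices of $A$, is exactly the induced subgraph $G[V-A]=G-A$, since no edge with both endpoints in $V-A$ lies in $B$; by hypothesis this graph embeds on a surface of Euler genus at most $h$.

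Next I would invoke Lemma~\ref{mggealg} on this residual instance: since $G-A$ is embeddable on a surface of genus at most $h$ and every vertex is assigned a permutable matchgate signature, its value can be computed in polynomial time, and $Z(G^\tau)$ is the product of this value with the $|A|$ apex scalars, hence also polynomial-time computable. Finally, Lemma~\ref{exhaustsum} (or just the identity $Z(G)=\sum_{\tau\in\{0,1\}^{|B|}}Z(G^\tau)$) gives $Z(G)$ as a sum of at most $2^{hk}$ polynomial-time computable terms, and therefore $Z(G)$ is polynomial-time computable.

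I do not expect a serious obstacle here; the lemma is essentially the ``killing an apex'' step announced in Section~\ref{mainresult}, and the only things to verify are that pinning preserves the permutable matchgate property (Lemma~\ref{lemPin}), that removing all edges incident to $A$ leaves precisely $G-A$ on the remaining vertices, and that the number of branches stays bounded by a constant — all of which follow directly from $|A|\le h$ and the degree bound $k$. The one place requiring a little care is the bookkeeping of the $0$-arity signatures produced at the apex vertices (some of which may vanish, in which case $Z(G^\tau)=0$ for that branch), but this is harmless and does not affect the running time.
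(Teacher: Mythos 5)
Your proof is correct and takes essentially the same route as the paper: enumerate the at most $hk$ edges incident to $A$, pin them, use Lemma~\ref{lemPin} to keep the residual signatures permutable, reduce the remaining bounded-genus instance to Lemma~\ref{mggealg}, and sum the at most $2^{hk}$ branches via Lemma~\ref{exhaustsum}. The only small omission is that you take the apex set $A$ as given, whereas the paper also notes in one sentence that $A$ can be found algorithmically by enumerating all $\binom{|V(G)|}{\le h}$ candidate subsets and testing the embeddability of each complement; this is worth a line since the lemma asserts only the existence of $A$.
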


\begin{proof}
    By exhausting all the subsets of $V(G)$ with size no more than $h$, we can find the expected $A$ in polynomial time. We denote all edges incident to a vertex in $A$ as an edge set $D$, and we have $|D|\le hk$.
    
    For each assignment $\sigma\in \{0,1\}^{|D|}$ of the edges in $B$, $G^\sigma$ is the graph $G-A$ with $|A|$ isolated vertices, and consequently can be embedded on a surface with genus at most $h$. By Lemma \ref{lemPin}, the signature assigned to each vertex in $G^\sigma$ is still assigned a permutable matchgate signature. By Lemma \ref{mggealg}, the value of $G^\sigma$ can be computed in polynomial time. Since $|D|\le hk$, the value of $G$ can also be computed in polynomial time by Lemma \ref{exhaustsum}. 
\end{proof}
\par We now deal with computing the boundary mapping of a leaf:
\begin{lemma}
     Suppose $t\in T$ is a leaf, then the boundary mapping of $G_{\le t}$ can be computed in polynomial time.
     \label{computebdm}
\end{lemma}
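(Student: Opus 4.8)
The plan is to reduce the computation of $BM_t$ to $O(1)$ invocations of Lemma~\ref{mggeapalg}. First I would use the fact that $t$ is a leaf: then $\bigcup_{s\le t}\beta(s)=\beta(t)$, so $G_{\le t}$ is just the subgraph of $G$ induced by $\beta(t)$, and hence the boundary gadget $GG_{\le t}$ is the gadget induced by $\beta(t)-(\beta(t)\cap\beta(d))$, where $d$ is the parent of $t$; its dangling edges are exactly the edges of $B_t$. The underlying non-dangling edges of $GG_{\le t}$ form a subgraph of the torso $G_t$ (the only edges of $G_t$ not already in $G$ join two vertices of $\beta(t)\cap\beta(d)$, both of which have been deleted in $GG_{\le t}$). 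Since the maximum degree is at most $k$ and $|X_t\cup A_t|\le h+3$, we have $|B_t|\le|B_t\cup P_t|\le k(h+3)$, a constant, so there are only $O(1)$ assignments $\sigma\in\{0,1\}^{|B_t|}$; it therefore suffices to show that $Z(GG_{\le t}^{\sigma})$ can be computed in time polynomial in $|G|$ for each fixed $\sigma$.

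Next I would verify that each such $GG_{\le t}^{\sigma}$ satisfies the hypotheses of Lemma~\ref{mggeapalg}. Its maximum degree is at most $k$ because it arises from a subgraph of $G$ by pinning edges. Every vertex of $GG_{\le t}$ carries a signature from $\mathcal{F}\subseteq\mathscr{M}_P$, and after pinning the $B_t$-edges according to $\sigma$ each vertex still carries a permutable matchgate signature by Lemma~\ref{lemPin}. For the apex-plus-bounded-genus structure I would split on the $ga3$-width bound of Definition~\ref{ga3}: if $|G_t|\le h$, then $GG_{\le t}^{\sigma}$ has at most $h$ vertices and the whole vertex set can serve as the apex set (with the remainder empty, of Euler-genus $0$); otherwise $G_t$ has an apex set $A_t$ with $|A_t|\le h$ such that $G_t-A_t$ embeds on a surface $\Sigma_t$ of Euler-genus at most $h$, and then $A:=A_t\cap V(GG_{\le t})$ has size at most $h$ while $GG_{\le t}^{\sigma}-A$, being a subgraph of $G_t-A_t$, embeds on $\Sigma_t$ as well. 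In either case Lemma~\ref{mggeapalg} applies and computes $Z(GG_{\le t}^{\sigma})$ in polynomial time; in fact, since the proof of Lemma~\ref{mggeapalg} already enumerates all vertex subsets of size at most $h$, we need not even exhibit $A_t$ explicitly. Iterating over the constantly many $\sigma$ then produces the whole table $BM_t$ in polynomial time.

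The only step needing genuine care — and the closest thing to an obstacle here — is checking that deleting the separator $\beta(t)\cap\beta(d)$ and then pinning by $\sigma$ preserves both the permutability of the vertex signatures (handled cleanly by Lemma~\ref{lemPin}) and the ``bounded apex over a bounded-genus surface'' property (which is inherited automatically by passing to subgraphs). I would also remark explicitly that the $(3)$ condition of Definition~\ref{ga3} plays no role in this lemma; it will only be needed later, when the child gadgets are merged back into the parent bag while keeping the decomposition valid.
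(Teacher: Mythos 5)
Your proposal is correct and follows essentially the same route as the paper: exhaust the $O(1)$ assignments of $B_t$, use Lemma~\ref{lemPin} to keep the pinned signatures permutable, and invoke Lemma~\ref{mggeapalg} on the resulting graph. The paper's own proof is terser and leaves implicit both the check that $GG_{\le t}^\sigma$ inherits the apex-plus-bounded-genus structure from the torso $G_t$ (via the $ga3$-width bound) and the trivial case $|G_t|\le h$; you make these explicit, which is a harmless expansion rather than a new argument.
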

\begin{proof}
    For each $\sigma\in \{0,1\}^{|B_t|}$ which is an assignment of edges in $B_t$, the signature assigned to each vertex in $GG_{\le t}^\sigma$ is a permutable matchgate signature by Lemma \ref{lemPin}.
   %\par We define $G_{\le t}^{pin}=G_{\le t}-\beta(d)$. For each $x\in G_{\le t}^{pin}$, if $x$ is assigned the signature $f$ in the origin graph $G_{\le t}$, then the signature assigned to $x$ in $G_{\le t}$ becomes $f^{\alpha'}$ in $G_{\le t}^{pin}$  under $\alpha$, where $\alpha'$ is the assignment based on $\alpha$ of edges that belong to $B_t$ and are incident to $x$. $f^{\alpha'}$ can be seen as $f$ being pinned on some variables, and thus remains symmetric or of arity at most 3. 
   By Lemma \ref{mggeapalg},  $BM_t(\sigma)$ can be computed in polynomial time. Thus, the boundary mapping of $G_{\le t}$ can be computed in polynomial time by Lemma \ref{exhaustgad} since $|B_t|\le k(h+3)$.
\end{proof}

\subsection{Transforming into matchgates}
The objective of this section is to demonstrate the proof of the following lemma, which serves to illustrate that the boundary mappings can be computed in a successive manner. %This lemma implies that the algorithm can finally output the value of the instance. 
\begin{lemma}
    Suppose $t\in T$ and $l_1,...,l_p$ are all of its children. Given the boundary mappings of $G_{\le l_1},...,G_{\le l_p}$, the boundary mapping of $G_{\le t}$ can be computed in polynomial time. 
    \label{succK7}
\end{lemma}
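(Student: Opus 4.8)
The plan is to follow the blueprint of Section~\ref{secK5}, now augmented with an enumeration over the edges incident to the apex set $A_t$. Fix the child--parent edge $dt\in T$. Since $|B_t|$ is bounded by the constant $k(h+3)$, it suffices to evaluate $Z(GG_{\le t}^\sigma)$ for each of the $2^{|B_t|}$ assignments $\sigma$ of $B_t$ and tabulate the results. The boundary gadget $GG_{\le t}^\sigma$ splits into a \emph{core} part---the vertices of $\beta(t)$ not shared with $d$, carrying their original permutable matchgate signatures, with the edges of $B_t$ pinned by $\sigma$---glued along the boundary edge sets $B_{l_1},\dots,B_{l_p}$ to the child boundary gadgets $GG_{\le l_1},\dots,GG_{\le l_p}$. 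Because each $BM_{l_i}$ is a table of only $2^{|B_{l_i}|}=2^{O(kh)}$ values, the first step is to replace each $GG_{\le l_i}$ by a single super-vertex $w_i$ carrying the signature $BM_{l_i}$ and retaining the edges of $B_{l_i}$; by gadget composition this does not change the value, so $Z(GG_{\le t}^\sigma)=Z(\hat G_t^\sigma)$ for the resulting finite graph $\hat G_t^\sigma$.

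The next step kills the apices. Every vertex of $A_t$ has degree at most $k$, so the set $P_t$ of edges incident to $A_t$ but not in $B_t$ has size at most $hk$; enumerating its $2^{hk}$ assignments $\pi$ and invoking Lemma~\ref{exhaustsum}, it is enough to compute $Z((\hat G_t^\sigma)^\pi)$ for each $\pi$. Once $\pi$ is fixed, the vertices of $A_t$ contribute only a scalar, and what remains is $G_t-A_t$, which embeds on a surface of Euler-genus at most $h$, together with the super-vertices $w_i$, each now attached only along the navel edges of $B_{l_i}$, i.e.\ to the at-most-three vertices of the navel $X_{l_i}$.

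Finally, collapse the navel interface with path gadgets exactly as in Section~\ref{secK5}: for each $x\in X_{l_i}$ let $S_1$ index the edges of $x$ running into $G_{\le l_i}$ and replace $x$ by its $(S_1,S_2)$-path gadget (Definition~\ref{defpath}), which by Lemma~\ref{lempath} preserves the value and keeps a consistent clockwise order around every permutable matchgate vertex. For each child $l_i$, absorbing the head vertices of these path gadgets together with $w_i$ yields a gadget whose dangling edges are one information-bit channel per navel vertex---at most three---plus the apex edges of $B_{l_i}$, already pinned by $\pi$; since summing over the at-most-$3k$ navel-edge assignments is an $O(1)$ computation, this produces an arity-$\le 3$ signature $g_i$, which satisfies the parity condition and is therefore a permutable matchgate signature by Lemma~\ref{MG<3}. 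After first arranging, as in Lemma~\ref{nosamenavelK5} (restated for $ga3$-width), that distinct children of $t$ do not share a navel, place a single vertex with signature $g_i$ in the face bounded by the blown-up navel $X_{l_i}$---legitimate by the $(3)$ condition of Definition~\ref{ga3} when $|X_{l_i}|=3$ and trivial otherwise---so that the whole graph still embeds on a genus-$\le h$ surface with every signature a permutable matchgate signature. Lemma~\ref{mggealg} then computes its value in polynomial time; summing over the constantly many $\pi$ and tabulating over the constantly many $\sigma$ gives $BM_t$ in polynomial time.

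The main obstacle I expect is geometric rather than algebraic: one must guarantee that every object spliced in---the super-vertices $w_i$, the path-gadget pieces, and the final arity-$\le 3$ absorptions---can be drawn inside its designated face without crossings or extra genus, and that the clockwise incidence order forced by each permutable matchgate signature stays compatible with this embedding, so that Lemma~\ref{mggealg} still applies (i.e.\ $Z$ equals a perfect-matching count after replacing each vertex by its matchgate). The delicate points are the simultaneous routing of the up-to-$3k$ navel edges between $w_i$ and the blown-up navel vertices within a single face, the case of a child whose navel is empty (so the whole interface runs through $A_t$ and is dealt with only by the enumeration over $\pi$), and carrying out the ``distinct children, distinct navels'' preprocessing in the $ga3$-width setting.
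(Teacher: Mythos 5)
Your overall strategy---pin $B_t\cup P_t$, hand each child's interface to a table signature, compress the navel attachment through path gadgets into an arity-$\le 3$ parity signature, then invoke Lemma~\ref{mggealg} on the bounded-genus remainder---matches the skeleton of the paper's proof. Where you diverge is how you cope with several children sharing the same navel. You propose to run a ``distinct children, distinct navels'' preprocessing, an analogue of Lemma~\ref{nosamenavelK5} restated for $ga3$-width, and then introduce one arity-$\le 3$ vertex $g_i$ per child. The paper does the opposite: Lemma~\ref{MGmulrepre} groups all children of $t$ with the same navel $X$ together and builds a \emph{single} arity-$\le |X|$ replacement signature for the whole group, computed by a joint enumeration over $E_{X_l}$ using all the relevant $BM_{l_i}$. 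This grouping is exactly what guarantees ``at most one arity-$3$ vertex per face'' without ever touching the tree decomposition, and is the reason the $K_7$ algorithm has no analogue of Step~2 from Section~\ref{secK5}.

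The gap is the preprocessing step you lean on. Reparenting a child $b$ from $t$ to a same-navel sibling $a$ is benign for $g{-}3$-width because there $\beta(a)\cap\beta(b)$ is exactly the navel $X$, which already behaves like a torso clique. In the $ga3$ setting the navel is computed modulo $A_t$, so $\beta(a)\cap\beta(b)$ can contain vertices of $A_t$; after reparenting, the torso $G_a$ acquires the clique on $\beta(a)\cap\beta(b)\subseteq X\cup A_t$, and it is not immediate that $G_a$ still admits an apex set of bounded size with the $(3)$-condition intact at $a$'s \emph{other} tree edges---one can argue it by enlarging $A_a$ to $A_a\cup X\cup A_t$, but that changes the width bound and needs to be carried through the rest of the recursion, none of which is present in your sketch. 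You flag this as a ``delicate point'' but do not resolve it, and as written the claim that the preprocessing goes through ``as in Lemma~\ref{nosamenavelK5}'' is unjustified. Replacing that step by the grouping of Lemma~\ref{MGmulrepre}---compute one combined signature per navel class, rather than one per child---closes the gap and makes the rest of your argument go through verbatim.
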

    
    Lemma \ref{succK7} can be seen as an extension of Lemma \ref{computebdm}. To prove this lemma, we first deal with children of $t$ with the same navel $X$.

\begin{figure*}
	\centering
    \subcaptionbox{\label{fig:K71}}{\includegraphics[width=0.35\textwidth]{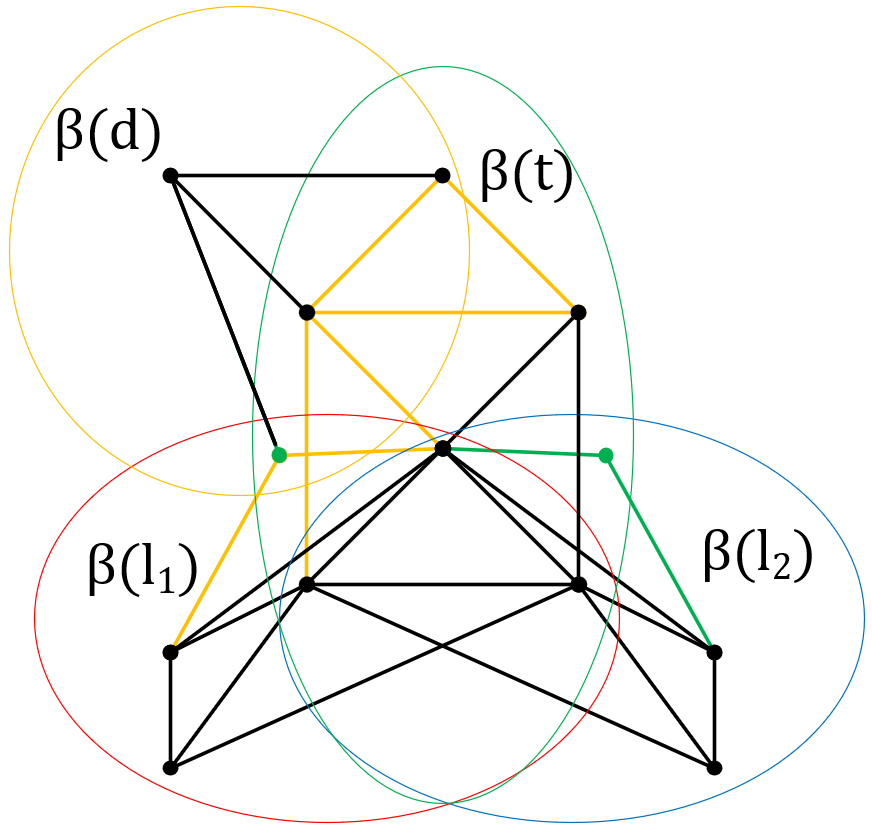}}
		\subcaptionbox{\label{fig:K72}}{\includegraphics[width=0.35\textwidth]{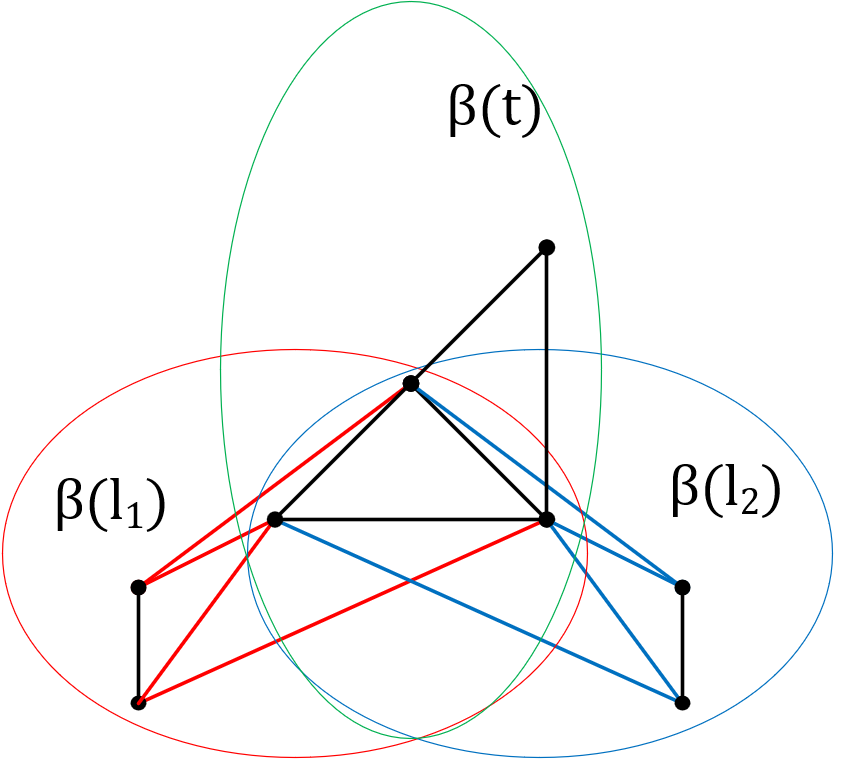}}
  \subcaptionbox{\label{fig:K74}}{\includegraphics[width=0.2\textwidth]{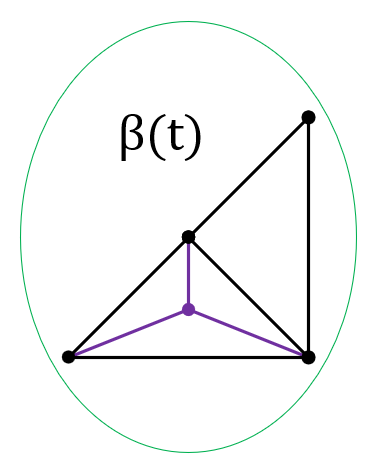}}
  \quad
 
 \subcaptionbox{\label{fig:K73}}{\includegraphics[width=0.5\textwidth]{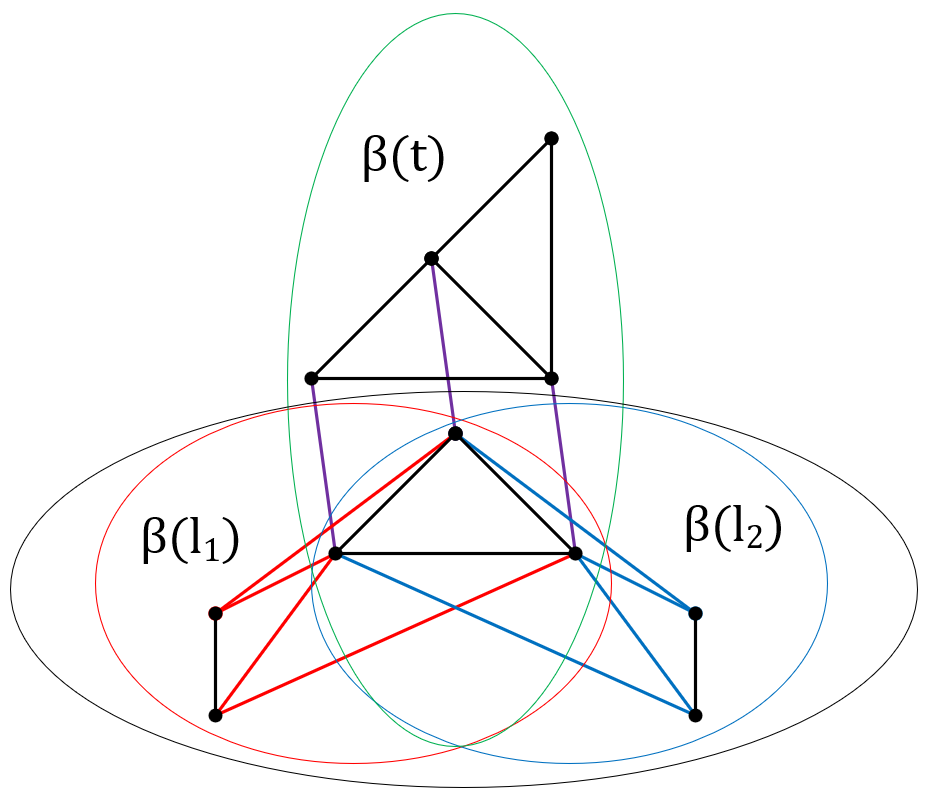}}
 \subcaptionbox{\label{fig:K75}}{\includegraphics[width=0.4\textwidth]{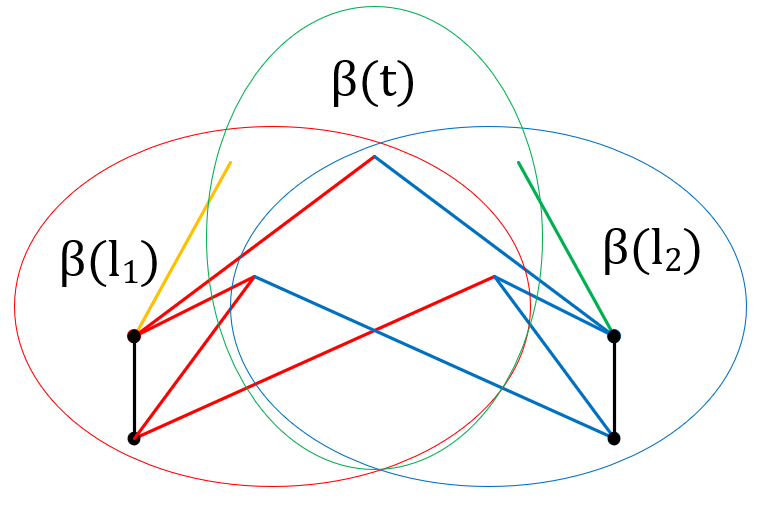}}
 \caption{An example of the procedure of the algorithm. (a) The graph induced by $\beta(d)\cup\beta(t)\cup\beta(l_1)\cup\beta(l_2)$. $\beta(d),\beta(t),\beta(l_1),\beta(l_2)$ are enclosed by the respective orange, green, red, and blue ellipses. Edges in $B_t$ are coloured in orange, while vertices in $A_t$ and edges in $P_t$ are coloured in green.(b) The figure of $GG_{\le t}^\sigma$. Edges belonging to $B_{l_1}$ are coloured in red while those belonging to $B_{l_2}$ are coloured in blue.(c) The figure of $HH_{\le t}^\sigma$. The introduced $v_l$ is coloured in violet.  (d) The figure of $GH_{\le t}^\sigma$. The gadget $LL_{\le l}^\sigma$ is enclosed by the black ellipse and the path gadgets are coloured in violet. (e) Visualization of the fact that $B_{l_1}\cup B_{l_2}\subseteq B_t\cup P_t\cup E_{X_l}$. } 
	\label{fig:K7alg1}
\end{figure*}
    \begin{lemma}
         Suppose that $t$ is a child of $d$ and $l_1,l_2,...,l_q$ are all of $t$'s children with the navel $X$ in $(T,\beta)$. The boundary mappings $BM_{l_1},...,BM_{l_q}$ of $G_{\le l_1},...,G_{\le l_q}$ are given. 
         %For a given $\sigma\in \{0,1\}^{|B_t|}$ which is an assignment of edges in $B_t$, for each $x\in GG_{\le t}^\sigma$, if $x$ is assigned the signature $f$ in the origin graph $G_{\le t}$, then the signature assigned to $x$ in $G_{\le t}$ becomes $f^{\alpha'}$ in $G_{\le t}^{pin}$  under $\alpha$, where $\alpha'$ is the assignment based on $\alpha$ of edges that belong to $B_t$ and are incident to $x$. 
         For each $\sigma\in \{0,1\}^{|B_t\cup P_t|}$ which is an assignment of edges in $B_t\cup P_t$, $HH_{\le t}^\sigma$ is obtained in the following way: delete all the vertices in $\beta(l_i)-\beta(t)$ for each $1\le i\le q$ in $GG_{\le t}^\sigma$. Add a new vertex $v_l$ and an edge $(v_l,x)$ for each $x\in X$. The signatures assigned to $HH_{\le t}^\sigma-X-\{v_l\}$ keeps the same as those in $GG_{\le t}^\sigma$. For each vertex $x\in X$, suppose a permutable matchgate signature $f$ of arity $k$ is assigned to $x$ in $GG_{\le t}^\sigma$, $S_1=\{s|\text{the }s\text{th edge has another endpoint in }\bigcup_{1\le i\le q}\beta(l_i)-\beta(t)\}$ and $(S_1,S_2)$ is a partition of $\{1,\dots,k\}$. Then in $HH_{\le t}^\sigma$, $x$ is assigned the signature same as that assigned to the tail vertex of the $(S_1,S_2)$-path gadget of $f$.

         A matchgate signature $f_{\le l}$ of arity $|X|$ can be computed in polynomial time, such that if $v_l$ is assigned $f_{\le l}$, then $Z(HH_{\le t}^\sigma)=Z(GG_{\le t}^\sigma)$.
         \label{MGmulrepre}
    \end{lemma}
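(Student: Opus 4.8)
The plan is to realize $f_{\le l}$ as the signature of a single sub-gadget obtained from $GG_{\le t}^\sigma$ by isolating the same-navel children $l_1,\dots,l_q$, to check that collapsing that sub-gadget to $v_l$ produces exactly $HH_{\le t}^\sigma$ (so the two values agree), and then to evaluate $f_{\le l}$ in polynomial time from the precomputed boundary mappings $BM_{l_1},\dots,BM_{l_q}$.

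\textbf{Structure.} First I would record that, since $(T,\beta)$ has $ga3$-width at most $h$, condition $(3)$ forces $|X|\le 3$, so $f_{\le l}$ has arity at most $3$. Next I would show that in $GG_{\le t}^\sigma$ the set of interior vertices $\bigcup_{i}(\beta(l_i)\setminus\beta(t))$ is joined to the rest of the gadget only through $X$: since $\beta(l_i)\cap\beta(t)$ is a separator of $(T,\beta)$, every edge leaving those interiors lies in $\bigcup_i B_{l_i}$, and since the $\beta(l_i)\cap\beta(t)$-endpoint of such an edge lies either in $X$ or in $A_t$, one gets $\bigcup_i B_{l_i}\subseteq E_X\cup Q_t\subseteq E_X\cup B_t\cup P_t$, where $E_X$ denotes the edges joining $X$ to those interiors. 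As $\sigma$ pins every edge of $B_t\cup P_t$, the only live cut edges after pinning are the unpinned edges of $E_X$, of which there are at most $k|X|\le 3k$.

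\textbf{From $GG_{\le t}^\sigma$ to $HH_{\le t}^\sigma$.} For each $x\in X$, letting $S_1$ be the indices of the edges of $x$ that land in the children interiors and $S_2$ the rest, I would replace the permutable matchgate signature at $x$ by its $(S_1,S_2)$-path gadget (Definition \ref{defpath}). By Lemma \ref{lempath} this leaves $Z(GG_{\le t}^\sigma)$ unchanged, puts at $x$ exactly the tail-vertex signature prescribed for $HH_{\le t}^\sigma$, and exposes a sub-gadget, call it $LL$, formed by the $q$ children interiors together with the head-vertex sides of the $|X|$ path gadgets, whose dangling edges are precisely the $|X|$ edges that become the edges $(v_l,x)$. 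Setting $f_{\le l}$ to be the signature of $LL$ and collapsing $LL$ to the single vertex $v_l$ turns the transformed $GG_{\le t}^\sigma$ into $HH_{\le t}^\sigma$, whence $Z(HH_{\le t}^\sigma)=Z(GG_{\le t}^\sigma)$. Every vertex of $LL$ carries a permutable matchgate signature (for the children) or one of the symmetric matchgate signatures used by the path gadget, all of which satisfy the parity condition; the parity condition being closed under gadget construction, $f_{\le l}$ satisfies it too, and since $f_{\le l}$ has arity at most $3$, Lemma \ref{MG<3} shows it is a (permutable) matchgate signature.

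\textbf{Computing $f_{\le l}$, and the main obstacle.} It remains to evaluate $f_{\le l}(\gamma)=Z(LL^{\gamma})$ for each of the at most $2^{|X|}\le 8$ inputs $\gamma$. Fixing $\gamma$ fixes the information bit of each head vertex, after which $LL^{\gamma}$ is a constant-size hub — the $O(k|X|)$ head-side path vertices and the at most $k|X|$ edges of $E_X$ — glued to the $q$ children interiors, child $l_i$ being attached along $E_X\cap B_{l_i}$ while the remaining (apex-incident) edges of $B_{l_i}$ are already fixed by $\sigma$. Summing over the $2^{O(k|X|)}$ assignments of the hub's edges, $Z(LL^{\gamma})$ becomes a sum of a constant number of terms, each a brute-force hub value times $\prod_i BM_{l_i}(\cdot)$, a product of look-ups into the given boundary mappings; as there are $q\le|T|\le|G|$ factors, the whole computation is $poly(|G|)$. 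I expect the main difficulty to be precisely the two structural claims above — that after pinning $\sigma$ the children interiors meet the rest only in $X$, and that each child attaches to the hub through a constant-size edge set — since these are simultaneously what makes the identity $Z(HH_{\le t}^\sigma)=Z(GG_{\le t}^\sigma)$ correct and what prevents the computation of $f_{\le l}$ from blowing up exponentially in the number $q$ of same-navel children.
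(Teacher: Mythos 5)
Your proposal is correct and follows essentially the same route as the paper's proof: replace each $x\in X$ by its $(S_1,S_2)$-path gadget (leaving $Z$ unchanged by Lemma~\ref{lempath}), identify the sub-gadget consisting of the children interiors together with the head-side path vertices, define $f_{\le l}$ as its signature, argue matchgate-ness from the parity condition plus $|X|\le 3$ via Lemma~\ref{MG<3}, and evaluate $f_{\le l}$ by exhausting the $O(k|X|)$ edges incident to the head side and reading off $\prod_i BM_{l_i}(\cdot)$ from the given boundary mappings. You additionally spell out the separator argument showing the children interiors meet the rest only through $X$ (and $A_t$, whose incident edges are already pinned by $\sigma$), which the paper leaves implicit but uses in the same way.
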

\begin{proof}
     For each $\sigma\in \{0,1\}^{|B_t\cup P_t|}$ which is an assignment of edges in $B_t\cup P_t$, the signature assigned to each vertex in $GG_{\le t}^\sigma$ is a permutable matchgate signature by Lemma \ref{lemPin}. 
     
     We obtain the graph $GH_{\le t}^\sigma$ by replacing each $x\in X$ in $GG_{\le t}^\sigma$ with the $(S_1,S_2)$-path gadget $PP_{(S_1,S_2)}^x$ introduced in Definition \ref{defpath}. By Lemma \ref{lempath}, $Z(GH_{\le t}^\sigma)=Z(GG_{\le t}^\sigma)$. We denote the head vertex of $PP_{(S_1,S_2)}^x$ as $x_l$ and the tail vertex as $x_t$.

Notice that the signature assigned to each $v\in HH_{\le t}^\sigma-X-\{v_l\}$ in $HH_{\le t}^\sigma$ is same as that assigned to $v$ in $GH_{\le t}^\sigma$.
    Each signature assigned to $x\in X$ in $HH_{\le t}^\sigma$ is same as that assigned to $x_t$ in $GH_{\le t}^\sigma$. Besides, $v_l\in HH_{\le t}^\sigma$ corresponds to a gadget in $GH_{\le t}^\sigma$. We denote the gadget as $LL_{\le l}^\sigma$.
    
    We define $X_l=\bigcup_{x\in X} (PP_{(S_1,S_2)}^x-x_t)\subseteq V(GH_{\le t}^\sigma)$. The gadget $LL_{\le l}^\sigma$ is a subgraph of $GH_{\le t}^\sigma$, induced by vertices in $ \bigcup_{1\le i \le q} GH_{\le l_i}^\sigma \cup X_l$. As $LL_{\le l}^\sigma$ only consists of matchgate signatures, it satisfies the parity condition. The signature $f_{\le l}$ of the gadget $LL_{\le l}^\sigma$ is of arity no greater than 3, and consequently is also a matchgate signature by Lemma \ref{MG<3}. As a result, if the matchgate signature $f_{\le l}$ is assigned to $v_l$ in $HH_{\le t}^\sigma$, then $Z(HH_{\le t}^\sigma)$ would equal $Z(GH_{\le t}^\sigma)$, and $Z(GG_{\le t}^\sigma)$ as well. 

    The final step is to demonstrate how to compute $f_{\le l}$ in polynomial time. We use $E_{X_l}$ to denote all the edges with one endpoint in $X_l$ in $GH_{\le t}^\sigma$. For each assignment $\tau\in \{0,1\}^{|E_{X_l}|}$, the value of the gadget under the assignment $\tau$ can be computed in the following way: for each $y\in X_l$, the value of the signature $f_y$ assigned to $y$ can be computed in $O(1)$ time, since the value of all its variables are determined. For each $1\le i\le q$, $\sigma$ and $\tau$ together form an assignment of the boundary edge set $B_{l_i}$ since each edge in $B_{l_i}$ are incident to a vertex in $\beta(l_i)\cap \beta(t)\subseteq X_l\cup A_t$. Consequently, $Z(GG_{\le l_i}^{\sigma\tau})=BM_{l_i}(\sigma\tau)$ can also be computed in $O(1)$ time since $BM_{l_i}$ is already given. The value of the assignment can be written as $\prod_{y\in X_l}f_y(\sigma\tau)\cdot\prod_{1\le i\le q}BM_{l_i}(\sigma\tau)$ and can be computed in $O(n)$ time as $q\le n$. 
    \par Since $|E_{X_l}|\le 3k+9$,  by Lemma \ref{exhaustsum} and \ref{exhaustgad}, $f_{\le l}$ can also be computed in polynomial time.
\end{proof}
 See Figure \ref{fig:K7alg1} for a visualization of the algorithm. Now we prove Lemma \ref{succK7}. 
 \begin{proof}
        For each $\sigma\in \{0,1\}^{|B_t\cup P_t|}$, we sequentially replace children with the same navel $X_i$ with a vertex $v_i$ assigning a matchgate signature, as described in Lemma \ref{MGmulrepre}. The value of $\omega(\sigma)$ remains unchanged under these transformations.
        \par Remember that $G_t-A_t$ can be embedded on a surface with genus at most $h$. After the replacement, the obtained graph can also be embedded on a surface with genus at most $h$, since the transformations only introduce at most one vertex of arity 3 in each face, and other introduced vertices are of arity 1 or 2. Consequently, $Z(GG_{\le t}^\sigma)$ can be computed in polynomial time by Lemma \ref{mggealg}. Since $|B_t\cup P_t|\le k(h+3)$, $BM_t$ can also be computed in polynomial time by Lemma \ref{exhaustsum} and \ref{exhaustgad}.
    \end{proof}
   With Lemma \ref{succK7}, the algorithm can compute the boundary mappings in the reverse order of BFS and finally output the value of $G$ in polynomial time, and thus Theorem \ref{K7alg} is proved.

\section{Hardness for bounded degree \#CSP on minor-free graphs}\label{secK8}

In this section, we prove the \#P-hardness for maximum degree $k$ \#CSP that forbids a minor set. If $k=2$, then obviously it's computable in polynomial time. Thus, we may assume $k\ge 3$. In the rest of this section, we prove the fifth statement in Theorem \ref{allthm2}. We restate it as follows.
\begin{theorem}     
     Suppose that $\mathcal{G}\cap\mathcal{H}=\emptyset$. Then for any integer $D\ge 3$,  $\#R_D\text{-CSP}(\mathcal{F})[\mathcal{G}]$ is computable in polynomial time if $\mathcal{F}\subseteq \mathscr{A}$ or $\mathcal{F}\subseteq \mathscr{P}$; otherwise, it is $\text{\#P}$-hard.
    \label{CSPK8}
\end{theorem}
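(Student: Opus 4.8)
The two directions are very asymmetric, so I would dispatch them separately. For tractability: if $\mathcal{F}\subseteq\mathscr{A}$ or $\mathcal{F}\subseteq\mathscr{P}$ then $\#R_D\text{-CSP}(\mathcal{F})$ is already polynomial-time computable on \emph{all} graphs by Theorem~\ref{genRDCSP}, and restricting the underlying graph to $fb(\mathcal{G})$ only makes the problem easier; this needs nothing about the hypothesis $\mathcal{G}\cap\mathcal{H}=\emptyset$. So the entire content is the $\text{\#P}$-hardness side: assuming $\mathcal{F}\nsubseteq\mathscr{A}$, $\mathcal{F}\nsubseteq\mathscr{P}$, $\mathcal{G}\cap\mathcal{H}=\emptyset$ and $D\ge 3$, show $\#R_D\text{-CSP}(\mathcal{F})[\mathcal{G}]$ is $\text{\#P}$-hard. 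The first step is to record what the hypothesis buys us: since no graph in $\mathcal{G}$ is a minor of any shallow vortex grid $H_k$ (Definition~\ref{svg}), every $H_k$ is $\mathcal{G}$-minor-free, hence $\mathcal{H}\subseteq fb(\mathcal{G})$; and since every planar graph is a minor of a large enough grid, which is a subgraph of some $H_k$, in fact $\mathcal{PL}\subseteq\mathcal{H}\subseteq fb(\mathcal{G})$. Thus $fb(\mathcal{G})$ contains \emph{all} planar graphs and \emph{all} minors of shallow vortex grids; this is exactly the feature separating statement~5 from statement~4.

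Following Remark~\ref{remarCSP}: if moreover $\mathcal{F}\nsubseteq\widehat{\mathscr{M}_P}$, then $\#R_D\text{-CSP}(\mathcal{F})\langle\mathcal{PL}\rangle$ is already $\text{\#P}$-hard by Theorem~\ref{plRDCSP}, and since $\mathcal{PL}\subseteq fb(\mathcal{G})$ this transfers verbatim to $[\mathcal{G}]$. So the only remaining case is $\mathcal{F}\subseteq\widehat{\mathscr{M}_P}$ but $\mathcal{F}\nsubseteq\mathscr{A}$. Here I would apply Lemma~\ref{lemcsp=hol} and Theorem~\ref{thmHT} to pass to $\text{Holant}(\widehat{\mathcal{F}}\mid\widehat{\mathcal{EQ}_{\le D}})[\mathcal{G}]$, where by Lemma~\ref{transEQ} both sides consist of permutable matchgate signatures and $\widehat{\mathcal{F}}\nsubseteq\mathscr{A}$. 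Picking $f\in\widehat{\mathcal{F}}\cap(\mathscr{M}_P-\mathscr{A})$ and using $\widehat{\mathcal{EQ}_{\le 3}}=\{[1,0],[1,0,1],[1,0,1,0]\}$ (available since $D\ge 3$), Lemma~\ref{lem:AsymtoSym} realizes a \emph{symmetric} $g\in\mathscr{M}-\mathscr{A}$ as a planar bipartite left-side gadget; so it suffices to prove $\text{Holant}(\{g\}\mid\widehat{\mathcal{EQ}_{\le D}})[\mathcal{G}]$ is $\text{\#P}$-hard for each of the five forms of $g$ in Lemma~\ref{M-Aform}.

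For each form I would carry out two steps. (i) From $\{g\}$ and $\widehat{\mathcal{EQ}_{\le D}}$, respecting the bipartite-planar discipline of Remark~\ref{remGadConstruct}, realize the perfect-matching constraint on a bounded-degree graph, concretely $[0,0,1,0]$ or $[0,0,0,1,0]$: for the two ``exactly'' forms this is just pinning, while for the three forms carrying a non-trivial parameter $r$ it uses a small gadget together with polynomial interpolation (Lemma~\ref{interp}, Corollary~\ref{corointerp}), the hypotheses $r^4\ne 0,1$ resp.\ $r^2\ne 0,1$ being precisely what makes the interpolation go through. (ii) Reduce counting perfect matchings on $3$-regular graphs ($\text{\#P}$-hard, and not trivialised by planarity, since unlike counting all matchings, planar $\text{\#PM}$ is easy) to the resulting $\text{Holant}$ problem \emph{restricted to} $fb(\mathcal{G})$. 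Because $fb(\mathcal{G})\supseteq\mathcal{H}$, this amounts to first transporting an arbitrary $3$-regular instance into $\mathcal{H}$ -- producing an instance that is a minor of some shallow vortex grid and has the same perfect-matching count up to easily computable factors -- which is exactly the routing underlying the $\text{\#P}$-hardness half of Theorem~\ref{PMalg}; I would reuse that construction and then substitute the $O(1)$-size planar gadgets from step~(i) for the signature $g$ and for the $\widehat{\mathcal{EQ}_{\le D}}$-vertices, checking that the variable-side degree stays $\le D$ (shallow vortex grids already have bounded degree).

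The main obstacle is this last point: after locally replacing each $g$-vertex (and each path gadget / equality vertex) by a bounded-size planar gadget inside an instance that was a minor of some $H_k$, one must verify the result is still a minor of some $H_{k'}$. Membership in a minor-closed class is not in general preserved under such local surgery; it works here only because shallow vortex grids are locally grid-like and have enough ``room'' to absorb bounded planar pieces, and making this inflation argument precise -- in a form that simultaneously preserves the bipartite structure, the degree bound, and $\mathcal{G}$-minor-freeness -- together with the intricate per-form gadget designs (several of which must be planar \emph{and} bipartite \emph{and} degree-bounded), is where essentially all the work lies; the rest is bookkeeping against the already-established dichotomies of Theorems~\ref{genRDCSP}, \ref{plRDCSP} and \ref{PMalg}.
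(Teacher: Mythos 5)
Your high-level plan matches the paper's skeleton: dispose of tractability via Theorem~\ref{genRDCSP}, reduce to $\mathcal{F}\subseteq\widehat{\mathscr{M}_P}$, $\mathcal{F}\nsubseteq\mathscr{A}$ via Remark~\ref{remarCSP}, pass to $\text{Holant}(\widehat{\mathcal{F}}\mid\widehat{\mathcal{EQ}_{\le D}})$, extract a symmetric $g\in\mathscr{M}-\mathscr{A}$ via Lemma~\ref{lem:AsymtoSym}, realize $[0,0,1,0]$ case by case over the five forms (the paper's Lemma~\ref{lem0010simulate}), and reduce from $\text{\#PM}$ on $3$-regular graphs. But the step you flag as ``the main obstacle'' is genuinely unresolved in your sketch, and the paper does not address it with anything like the ``inflation argument'' you gesture at.

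Concretely, you propose to transport a $3$-regular instance into $\mathcal{H}$ (a minor of some $H_k$) and then substitute bounded planar gadgets, and you correctly worry that the result need not stay in $\mathcal{H}$. The paper sidesteps this entirely by targeting a different intermediate class: the class $\mathcal{R}$ of \emph{ring blowup graphs}, together with the black-box inclusion from \cite{thilikos2022killing} that $\mathcal{R}\subseteq fb(\mathcal{G})$ whenever $\mathcal{G}\cap\mathcal{H}=\emptyset$. The crossing-moving procedure of \cite{curticapean2022parameterizing} (Lemma~\ref{movecrossing}) produces an \emph{outside crossing graph}: all crossings are pushed outside a fixed circle in a standardized pattern, and such graphs are ring blowup graphs. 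The decisive point is that all the signature vertices live \emph{inside} the circle, so substituting $O(1)$-size planar gadgets there keeps the graph an outside crossing graph, hence in $\mathcal{R}$, hence in $fb(\mathcal{G})$ --- no argument about re-embedding into a shallow vortex grid is needed, and indeed it is unclear how one would make your inflation argument work. You would also need the auxiliary $[1,0,-1]$ that the crossing gadget leaves behind, which the paper removes by interpolating from a planar $3$-regular $[1,0,2]$ gadget while checking that all interpolation instances remain outside crossing graphs. Finally, for this particular theorem only $[0,0,1,0]$ is needed; the $[0,0,0,1,0]$ realization with the planar-pairing trick belongs to the apex-graph reduction of Section~\ref{secK6}, not here.
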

By Lemma \ref{RD=bounded}, the following corollary can be obtained from Theorem \ref{CSPK8}.

\begin{corollary}
     Suppose that $\mathcal{G}\cap\mathcal{H}= \emptyset$, then for any integer $k\ge 3$, maximum degree $k$ $\text{\#CSP}(\mathcal{F})[ \mathcal{G} ]$ is computable in polynomial time if $\mathcal{F}\subseteq \mathscr{A}$ or $\mathcal{F}\subseteq \mathscr{P}$; otherwise it is $\text{\#P}$-hard.
     \label{CSPK8coro}
\end{corollary}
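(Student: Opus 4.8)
The plan is to treat the two directions of the dichotomy separately and, for the hard direction, to follow the reduction scheme set up in Remark~\ref{remarCSP}. For tractability there is nothing new to do: if $\mathcal{F}\subseteq\mathscr{A}$ or $\mathcal{F}\subseteq\mathscr{P}$, Theorem~\ref{genRDCSP} already solves $\#R_D$-CSP$(\mathcal{F})$ in polynomial time on \emph{every} graph, in particular on $fb(\mathcal{G})$. So from now on assume $\mathcal{F}\nsubseteq\mathscr{A}$ and $\mathcal{F}\nsubseteq\mathscr{P}$, and the goal is $\text{\#P}$-hardness.

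First I would record the two structural consequences of $\mathcal{G}\cap\mathcal{H}=\emptyset$ that the hardness rests on. Since each $H_k$ of Definition~\ref{svg} contains a large planar (cylindrical) grid as a subgraph, and every planar graph is a minor of a sufficiently large grid, $\mathcal{PL}\subseteq\mathcal{H}$; hence every member of $\mathcal{G}$ is non-planar, so $\mathcal{PL}\subseteq fb(\mathcal{G})$, and likewise every minor of some $H_k$ avoids all of $\mathcal{G}$, so $\mathcal{H}\subseteq fb(\mathcal{G})$. With this in hand, split into two cases. If in addition $\mathcal{F}\nsubseteq\widehat{\mathscr{M}_P}$, then Theorem~\ref{plRDCSP} makes $\#R_D$-CSP$(\mathcal{F})\langle\mathcal{PL}\rangle$ already $\text{\#P}$-hard, and since $\mathcal{PL}\subseteq fb(\mathcal{G})$ the same holds for $\#R_D$-CSP$(\mathcal{F})[\mathcal{G}]$. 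Otherwise $\mathcal{F}\subseteq\widehat{\mathscr{M}_P}$ with $\mathcal{F}\nsubseteq\mathscr{A}$; by Lemma~\ref{lemcsp=hol} and Theorem~\ref{thmHT}, $\#R_D$-CSP$(\mathcal{F})[\mathcal{G}]\equiv_T\text{Holant}(\widehat{\mathcal{F}}\mid\widehat{\mathcal{EQ}_{\le D}})[\mathcal{G}]$ with $\widehat{\mathcal{F}},\widehat{\mathcal{EQ}_{\le D}}\subseteq\mathscr{M}_P$ and some $f\in\widehat{\mathcal{F}}\cap(\mathscr{M}_P-\mathscr{A})$. By Lemma~\ref{lem:AsymtoSym} a \emph{symmetric} $g\in\mathscr{M}-\mathscr{A}$ is realizable as a planar left-side gadget over $\{f\}\cup\{[1,0],[1,0,1],[1,0,1,0]\}$, and by Lemma~\ref{M-Aform} such a $g$ has one of five explicit shapes. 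Thus it suffices to prove $\text{Holant}(\{g\}\mid\{[1,0],[1,0,1],[1,0,1,0]\})[\mathcal{G}]$ is $\text{\#P}$-hard for each of the five shapes.

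For each shape of $g$ the plan is: build, using $g$ and $\{[1,0],[1,0,1],[1,0,1,0]\}$ in the bipartite, planar, right-side-degree-$\le 3$ fashion of Remark~\ref{remGadConstruct} together with polynomial interpolation (Lemma~\ref{interp}, Corollary~\ref{corointerp}), a gadget realizing one of $[0,0,1,0]$ or $[0,0,0,1,0]$; these, placed on the vertices of a cubic graph and combined with $\widehat{=_2}=[1,0,1]$ on edge-subdivision vertices, express $\#$PM (or a weighted matching count) on cubic graphs. Since $\#$PM on cubic graphs and counting matchings on planar cubic graphs are $\text{\#P}$-hard, it then remains to carry this hardness into $fb(\mathcal{G})$. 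Rather than feeding cubic graphs directly (which need not be $\mathcal{G}$-minor-free), I would route through an \emph{intermediate} Holant instance whose underlying graph is a bounded local modification of a large shallow vortex grid $H_k\in\mathcal{H}\subseteq fb(\mathcal{G})$, relying on the grid-like structural reductions that underlie the $\#$PM dichotomy of Thilikos and Wiederrecht (Theorem~\ref{PMalg}) and the $\#$PM$[K_8]$ hardness; one then substitutes the realized $[0,0,1,0]$/$[0,0,0,1,0]$-gadgets for the vertices of that intermediate graph and chains the reductions to conclude.

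I expect the bulk of the work, and the only genuinely delicate part, to be twofold. First, the five-way case analysis on the shape of $g$: for each shape one must exhibit a gadget over $\{g\}\cup\{[1,0],[1,0,1],[1,0,1,0]\}$ that is simultaneously bipartite, planar and degree-$\le 3$ on the right and that realizes $[0,0,1,0]$ or $[0,0,0,1,0]$ --- the interpolation being especially fussy for the shapes $[1,0,r,0,r^2,\dots]$ and $[0,1,0,r,0,r^2,\dots]$ with $r^2\neq 0,1$, where a family of sub-gadgets with pairwise distinct evaluations must be produced. Second, and more conceptually, verifying that after substituting these gadgets for the vertices of the intermediate shallow-vortex-grid-based graph, the resulting $\text{Holant}$ instance still lies in $fb(\mathcal{G})$ and still has variable-degree at most $D$: local planar blow-ups do not preserve minor-freeness in general, so this step must tie the precise shape of the gadgets to the structure of $H_k$.
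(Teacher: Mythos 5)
You have overlooked that this corollary is, in the paper, a one-line consequence of Theorem~\ref{CSPK8}: since $\#R_D$-CSP$(\mathcal{F})[\mathcal{G}]$ has already been classified for all $D\ge 3$, Lemma~\ref{RD=bounded} (the two-sided Turing reduction between maximum degree $D$ \#CSP and $\#R_D$-CSP) immediately transfers the dichotomy to maximum degree $k$ \#CSP$(\mathcal{F})[\mathcal{G}]$ for $k\ge 3$. Your proposal instead re-derives the whole dichotomy from scratch, and while the outline does reproduce the skeleton of the paper's proof of Theorem~\ref{CSPK8} and Lemma~\ref{K8hard} --- tractability from Theorem~\ref{genRDCSP}, the case split via Remark~\ref{remarCSP}, realization of a symmetric $g\in\mathscr{M}-\mathscr{A}$ from Lemma~\ref{lem:AsymtoSym}, the five-shape case analysis of Lemma~\ref{M-Aform}, and realizing $[0,0,1,0]$ --- this is a much longer road than the statement asks for.

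More importantly, the one step you flag as ``the only genuinely delicate part'' is precisely where your proposal leaves a gap. You propose to ``route through an intermediate Holant instance whose underlying graph is a bounded local modification of a large shallow vortex grid $H_k$,'' and then you concede you have not worked out how the substituted gadgets stay inside $fb(\mathcal{G})$. The paper resolves this by a concrete and different construction: starting from an arbitrary $3$-regular instance of $\text{Holant}([0,0,1,0])$, it embeds the graph with crossings, pushes all crossings outside a bounding circle (the Curticapean--Xia technique of Lemma~\ref{movecrossing}), and replaces each crossing with the planar crossing gadget of Figure~\ref{fig:crossgg}, realizing $[1,0,-1]$ by interpolation from $[1,0,2]$. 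The resulting \emph{outside crossing graph} is a ring blowup graph, and Thilikos--Wiederrecht proved that the class $\mathcal{R}$ of ring blowup graphs is contained in $fb(\mathcal{G})$ whenever $\mathcal{G}\cap\mathcal{H}=\emptyset$. The reason gadget substitution then preserves membership in $fb(\mathcal{G})$ is exactly that all the $g$-based gadgets of Lemma~\ref{lem0010simulate} are planar and are substituted only inside the planar region, while the only non-planarity lives in the outer ring; your ``bounded local modification of $H_k$'' framing does not capture this, and without it the substitution step does not go through. (Also note that $[0,0,0,1,0]$ and counting matchings on planar cubic graphs belong to the apex-graph argument of Section~\ref{secK6}, not to this corollary; the reduction here uses only $[0,0,1,0]$ and $\text{\#PM}$ on cubic graphs.)
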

The fifth statement in Theorem \ref{allthm} can be obtained from Corollary \ref{CSPK8coro} by the fact that $K_8\notin \mathcal{H}$ \cite{thilikos2022killing}. Each planar graph is a minor of $H_k$ for some $k\in \mathbb{N}^+$, and thus belongs to $\mathcal{H}$. As $\mathcal{G}\cap\mathcal{H}= \emptyset$ and $\mathcal{H}$ is closed under taking minors, each graph in $\mathcal{H}$ is also $\mathcal{G}$-minor-free. Consequently by Remark \ref{remarCSP}, it is sufficient to prove the following lemma.

\begin{lemma}     
    \par If the graph class $\mathcal{G}$ satisfies $\mathcal{G}\cap\mathcal{H}= \emptyset$, and the signature set $\mathcal{F}$ satisfies that $\mathcal{F}\subseteq \mathscr{M}_P$, $\mathcal{F}\nsubseteq \mathscr{A}$, then for any constant $k\ge 3$, maximum degree $k$ $\text{Holant}(\mathcal{F}|\widehat{\mathcal{EQ}})[ \mathcal{G} ]$ is $\text{\#P}$-hard. 
    \label{K8hard}
     \end{lemma}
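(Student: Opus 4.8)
The plan is to first pass, via the holographic transformation already set up in Remark~\ref{remarCSP}, to the case of a single \emph{symmetric} constraint. Since $\mathcal{F}\subseteq\mathscr{M}_P$ and $\mathcal{F}\nsubseteq\mathscr{A}$, pick $f\in\mathcal{F}\cap(\mathscr{M}_P-\mathscr{A})$; by Lemma~\ref{lem:AsymtoSym} a symmetric $g\in\mathscr{M}_P-\mathscr{A}$ is realized by $\{f\}\mid\{[1,0],[1,0,1],[1,0,1,0]\}$ as a planar left-side gadget, and the three right-hand signatures all lie in $\widehat{\mathcal{EQ}}$ and have arity $\le 3$. So it suffices to prove that $\text{Holant}(\{g\}\mid\widehat{\mathcal{EQ}_{\le 3}})[\mathcal{G}]$ is $\#\mathrm{P}$-hard on bounded-degree graphs (degree $3$ on the equality side being the binding constraint) for each of the five symmetric forms of $g$ in Lemma~\ref{M-Aform}. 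The only use I make of the hypothesis $\mathcal{G}\cap\mathcal{H}=\emptyset$ is structural: since $\mathcal{H}$ is minor-closed, no member of $\mathcal{G}$ is a minor of any shallow vortex grid $H_m$, hence every $H_m$ — and therefore every minor of an $H_m$, in particular every planar graph — is $\mathcal{G}$-minor-free. Thus it is enough to build $\#\mathrm{P}$-hard instances whose underlying bipartite graphs are minors of shallow vortex grids with maximum degree bounded by a constant.

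For the source problems I would take the two classical $\#\mathrm{P}$-hard problems flagged in the introduction — counting perfect matchings on $3$-regular graphs and counting matchings on planar $3$-regular graphs — reducing from whichever suits the form of $g$. The difficulty at this stage is that $3$-regular graphs are not in general minors of shallow vortex grids (and our problem is tractable on planar graphs, so genuinely non-planar instances are required), so one cannot encode the source graph directly. I would follow the routing idea behind the $\#\mathrm{PM}[\mathcal{G}]$ hardness of Thilikos--Wiederrecht~\cite{thilikos2022killing} and Curticapean--Xia~\cite{curticapean2022parameterizing}: draw the input $3$-regular graph in the plane, gather all edge crossings, and route the crossing wires through a single shallow vortex, which realizes an arbitrary permutation of its ports while the whole drawing becomes a minor of a large enough $H_m$; standard bounded-degree splitting gadgets keep the maximum degree at $3$, and replacing each edge by a length-two path puts everything into bipartite Holant form with right vertices carrying $\widehat{=_2}=[1,0,1]$. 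This yields \emph{intermediate} problems — of the shape $\text{Holant}(\{[0,1,0,0]_3\}\mid\widehat{\mathcal{EQ}_{\le 3}})[\mathcal{H}]$, with a four-regular and a matching-generating-function variant — that are $\#\mathrm{P}$-hard and supported on minors of shallow vortex grids with maximum degree $\le 3$.

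It remains, for each of the five forms of $g$, to realize the signature the intermediate problem needs — $[0,1,0,0]_3$, its reversal $[0,0,1,0]$, or the four-ary $[0,0,0,1,0]$ — from $\{g\}\cup\widehat{\mathcal{EQ}_{\le 3}}$ by a planar, bipartite, bounded-degree left-side gadget (by Remark~\ref{remGadConstruct} these properties survive insertion). The toolkit is: pinning a variable to $0$ via $\widehat{=_1}=[1,0]$; self-looping through a $\widehat{=_2}=[1,0,1]$ vertex, which on a symmetric signature adds adjacent entries of its value list (sending $[0,\dots,0,1,0]_m$ to $[0,\dots,0,1,0,0]_{m-2}$ and $[1,0,r,0,r^2,\dots]_m$ to $(1+r)[1,0,r,0,r^2,\dots]_{m-2}$); transforming by the ternary parity $\widehat{=_3}=[1,0,1,0]$, which sends a binary $[1,0,c]$ on each edge to $[1,0,c^2,0]_3$; and polynomial interpolation (Lemma~\ref{interp}, Corollary~\ref{corointerp}) applied to $\ell$-fold chains of binary gadgets built from $g$. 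Forms~1 and~2 are then immediate: pinning reduces $[0,1,0,\dots,0]_k$ to $[0,1,0,0]_3$, and repeated self-loops reduce $[0,\dots,0,1,0]_k$ to $[0,0,1,0]_3$ or $[0,0,0,1,0]_4$ according to the parity of $k$. Forms~3, 4 and~5 funnel, via self-loops and (for form~3) one transformation by $\widehat{=_3}$, to a canonical arity-$3$ signature $[1,0,r',0]_3$ with $r'{}^2\neq 0,1$ (even parity) or $[0,1,0,r']_3$ (odd parity); from $[1,0,c,0]_3$, realizable with $c$ ranging over the family $\{r'^{2\ell}\}$, one interpolates against the readily realized $[1,0,0,0]_3$ and reads off $[0,0,1,0]_3$ as a top-degree coefficient (and handles the odd-parity case analogously). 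Feeding the realized signature into the intermediate instance of the previous paragraph, and checking that the enlarged underlying graph is still a bipartite minor of a shallow vortex grid of bounded degree, gives Lemma~\ref{K8hard}.

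I expect the main obstacle to be the gadget-construction step, where each gadget must be \emph{simultaneously} bipartite (left vertices from $\{g\}$, right vertices from $\widehat{\mathcal{EQ}}$), planar — indeed a shallow-vortex-grid minor once inserted — and of maximum degree at most $3$, while realizing a prescribed signature exactly; this leaves almost no slack, particularly in the binary case $g=[1,0,r]$, where an arity-$\ge 3$ signature must be manufactured essentially out of $\widehat{=_3}$ alone. A related difficulty is a handful of degenerate sub-cases: values of $r$ that are roots of unity, for which the chain $\{r^\ell\}$ is finite and the naive interpolation of Corollary~\ref{corointerp} fails; and forms such as $g=[0,1,0,\mathfrak{i}]_3$ where pinning a variable escapes $\mathscr{M}-\mathscr{A}$ (e.g. $[1,0,\mathfrak{i}]=\chi_{x_1=x_2}\cdot\mathfrak{i}^{x_1^2}\in\mathscr{A}$), so that a form-specific argument is needed. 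Finally, one must verify carefully, using the precise structure of Definition~\ref{svg}, that the crossing-through-a-vortex routing together with the degree-reduction gadgets really keeps the instances inside $\mathcal{H}$ and within the degree bound; this is the step that genuinely exploits $\mathcal{G}\cap\mathcal{H}=\emptyset$ rather than mere planarity.
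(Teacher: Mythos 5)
Your proposal follows essentially the same route as the paper's: pass to a symmetric $g\in\mathscr{M}_P-\mathscr{A}$ by Lemma~\ref{lem:AsymtoSym}, observe that shallow-vortex-grid minors (the paper works with the closely related class $\mathcal{R}$ of ring blowup graphs) lie in $fb(\mathcal{G})$ whenever $\mathcal{G}\cap\mathcal{H}=\emptyset$, push all crossings outside a disc as in Curticapean--Xia and Thilikos--Wiederrecht, and then realize a perfect-matching constraint from $g$ by a planar, bipartite, degree-bounded left-side gadget. You also correctly flag the degenerate subcase $r^4=1$ of Form~4, which the paper handles separately (Subcase~4.2). So the skeleton is right.

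There are, however, two concrete weaknesses. First, for Form~2 with \emph{even} arity $k$, your only moves are pinning to $0$ (which sends $[0,\dots,0,1,0]_k$ into $[0,\dots,0,1]\in\mathscr{A}$, a dead end) and self-looping through $[1,0,1]$ (which drops arity by $2$ and therefore lands you on $[0,0,0,1,0]_4$, never on $[0,0,1,0]_3$). The paper sidesteps this entirely: the gadget of Figure~\ref{fig:k-3hand} in Case~2 of Lemma~\ref{lem0010simulate} combines \emph{several} copies of $f$ with $[1,0]$ and $[1,0,1]$ so as to yield $[0,0,1,0]_3$ for every $k\ge3$, uniformly in the parity of $k$. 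Your plan instead has to introduce a second intermediate problem ($\text{Holant}([0,0,0,1,0])$ on $4$-regular graphs, i.e.\ $\#\mathrm{PM}$ on $4$-regular graphs) and re-run the crossing-routing argument for it; that is not fatal, but it is extra work that the paper's gadget makes unnecessary, and you do not actually carry it out.

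Second, and more seriously, you propose "reducing from whichever suits the form of $g$" between $\#\mathrm{PM}$ on $3$-regular graphs and \emph{counting matchings} on planar $3$-regular graphs. The latter is not available here. The only Holant encoding of counting matchings using signatures from $\mathscr{M}\cup\widehat{\mathcal{EQ}}$ that the paper (or the literature it cites) offers is the one used for Lemma~\ref{thmPLA}: attach a fresh apex vertex to every vertex of $G$, assign $[0,0,0,1,0]_4$ to the original vertices, and give the apex an even-parity equality-like signature $[1,0,1,0,\dots]$ so that unmatched vertices can dump their spare edge onto the apex. That apex has degree $|V(G)|$, which is unbounded. In the bounded-degree setting of Lemma~\ref{K8hard} this is exactly the trick one cannot use, and it is precisely why the paper keeps $\#\mathrm{PM}$ on $3$-regular graphs as the sole source here while reserving counting matchings for the unbounded-degree apex-graph lemma. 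Without an alternative bounded-degree encoding of counting matchings (which you do not supply), the mention of that source is a gap, not an option.
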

    
%      \par Now, we use $\widehat{\mathcal{F}}$ to denote the function set $\mathcal{F}$ after a holographic transformation under a Hardmard matrix . As we've mentioned, $\text{\#CSP}(\mathcal{F})$ is exactly $\text{Holant}(\mathcal{F}|\mathcal{EQ})$. By holographic transformation we have $\text{Holant}(\mathcal{F}|\mathcal{EQ})\equiv_T Holant(\widehat{\mathcal{F}}|\widehat{\mathcal{EQ}})$. 
%\par $\widehat{\mathcal{EQ}}=\{[1,0,1,0,1,0...]_k|k\ge1\}$ and $\widehat{(\widehat{\mathscr{M}}-\mathscr{A})}=\mathscr{M}-\mathscr{A}$ since $\mathscr{A}$ is closed under the transformation.

Since $\mathcal{F}\subseteq \mathscr{M}_P,\mathcal{F}\nsubseteq \mathscr{A}$, there exists a $g\in \mathcal{F}$ satisfying $g\in \mathscr{M}_P-\mathscr{A}$. By Lemma \ref{lem:AsymtoSym}, a symmetric $f\in \mathscr{M}-\mathscr{A}$ can be realized by a planar left-side gadget.
    
    Now, we give a proof outline of Lemma \ref{K8hard}. We start from the fact that $\text{Holant}([0,0,1,0])$ is $\text{\#P}$-hard\cite{xia2007computational}. Our proof consists of 3 parts:
    \begin{enumerate}
        \item Reduce $\text{Holant}([0,0,1,0])$ to $\text{Holant}([0,0,1,0])\langle\mathcal{R}\rangle$;
        \item Realize $[0,0,1,0]$ with signatures in $\{f,[1,0],[1,0,1],[1,0,1,0]\}$ in a planar way;
        \item Reduce  $\text{Holant}([0,0,1,0])$ to $\text{Holant}(\mathcal{F}|\widehat{\mathcal{EQ}})[\mathcal{G}]$.
    \end{enumerate}
    The following theorem is our starting point.
\begin{theorem}[\cite{xia2007computational}]
    $\text{Holant}([0,0,1,0])$ is \#P-hard.
    \label{K8origin}
\end{theorem}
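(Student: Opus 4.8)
The plan is to recognize $\text{Holant}([0,0,1,0])$ as, up to a trivial bit-flip, the problem of counting perfect matchings in cubic graphs, which is already known to be $\text{\#P}$-hard. Since $[0,0,1,0]$ has arity $3$, in every instance $G=(V,E)$ of $\text{Holant}([0,0,1,0])$ each vertex has degree exactly $3$, so $G$ is a cubic (multi)graph with every vertex carrying the symmetric ternary signature $[0,0,1,0]$. First I would pass to the bit-flipped problem: the map sending an assignment $\sigma:E\to\{0,1\}$ to its complement $\overline{\sigma}$ is a weight-preserving bijection, because a three-bit string of Hamming weight $2$ complements to one of Hamming weight $1$; hence $Z_{[0,0,1,0]}(G)=Z_{[0,1,0,0]}(G)$ for every cubic $G$. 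Equivalently, this is the holographic transformation of Theorem~\ref{thmHT} by $X=\begin{pmatrix}0&1\\1&0\end{pmatrix}$, which sends $[0,0,1,0]$ to $[0,1,0,0]$ and satisfies $X^{-1}=X$.

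Next I would observe that $\text{Holant}([0,1,0,0])$ on a cubic graph $G$ counts exactly the perfect matchings of $G$: for $\sigma:E\to\{0,1\}$ the product $\omega(\sigma)=\prod_{v\in V}[0,1,0,0](\sigma)$ equals $1$ when every vertex has precisely one incident edge set to $1$ and $0$ otherwise, and the former condition says exactly that $\{e:\sigma(e)=1\}$ is a perfect matching of $G$. Thus $Z_{[0,1,0,0]}(G)=\#\mathrm{PM}(G)$, which combined with the previous paragraph gives $Z_{[0,0,1,0]}(G)=\#\mathrm{PM}(G)$ for all cubic $G$. This is merely the degree-$3$ specialization of the identity recalled in Section~\ref{preCSP} that standard \#PM is $\text{Holant}(\{[0,\dots,0,1,0]_k\mid k\in\mathbb{N}^+\})$.

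Finally, given a cubic graph $G$, producing the $\text{Holant}([0,0,1,0])$ instance with underlying graph $G$ and all vertices assigned $[0,0,1,0]$ is a trivial polynomial-time (indeed parsimonious) operation whose value is $\#\mathrm{PM}(G)$; so an oracle for $\text{Holant}([0,0,1,0])$ computes $\#\mathrm{PM}$ on cubic graphs, which is $\text{\#P}$-hard. Since the statement is the cited result of \cite{xia2007computational}, the only content actually deferred is the $\text{\#P}$-hardness of \#PM on cubic graphs — a classical fact also invoked in the reductions later in this paper — and the reduction above carries no real obstacle. The one place that would need care in a fully self-contained treatment is deriving cubic-graph \#PM hardness from Valiant's general-graph hardness \cite{valiant1979complexity} through local degree-reduction gadgets that preserve the number of perfect matchings (and keep the graph simple if desired), but this lies outside what is needed here.
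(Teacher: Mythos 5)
Your proposal is correct and matches the paper's treatment: the paper gives no independent proof of Theorem~\ref{K8origin}, but simply identifies $\text{Holant}([0,0,1,0])$ with counting perfect matchings on 3-regular graphs (exactly your bit-flip observation, already noted in Section~\ref{preCSP}) and cites \cite{xia2007computational} for the hardness of that problem. The only content you defer --- \#P-hardness of \#PM on cubic graphs --- is precisely what the citation covers, so nothing is missing.
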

    % \par 1. Move the crossings to the outer-face as in \cite{curticapean2022parameterizing}.
    % \par 2. Realize $[0,0,1,0]$ with $\widehat{\mathcal{F}}\cup\widehat{\mathcal{EQ}}$.
     %\par 3. Reduce $\text{Holant}(\widehat{\mathcal{F}}\cup\widehat{\mathcal{EQ}})$ to $\text{Holant}(\widehat{\mathcal{F}}|\widehat{\mathcal{EQ}})$, while keep the property of the underlying graph.
   \subsection{Move the crossings}
    This section addresses Part 1. We commence with the definition of the graph class $\mathcal{R}$:
\begin{definition}\cite{thilikos2022killing}
    A \textit{ring blowup} of a planar graph $G=(V,E)$ is a graph $G'=(V',E')$ obtained from a planar embedding of $G$. We denote the set of vertices incident to the outer face in the embedding as $V_f\subseteq V$. $V'=V\cup V_f'$, where $V_f'=\{u_v|v\in V_f\}$, i.e. $V_f'$ is a copy of $V_f$. $E'=E\cup \{(u_v,w)|(v,w)\in E\}\cup\{(u_v,u_w)|(v,w)\in E\}\cup\{(u_v,v)|v\in V_f\}$.
    \par A graph $G$ is called a ring blowup graph if it is a subgraph of a ring blowup of some planar graph. We use $\mathcal{R}$ to denote the class of ring blowup graphs.
\end{definition}
\begin{figure}
            \centering
            \includegraphics[height=0.2\textheight]{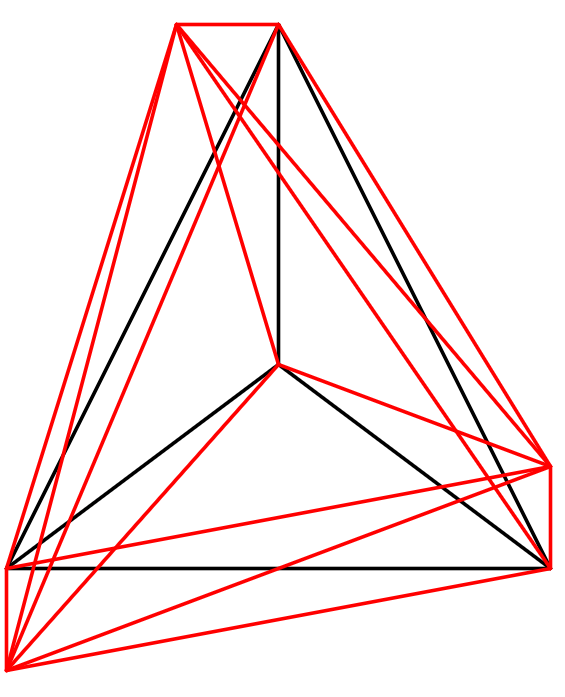}
            \caption{A ring blowup of $K_4$. $K_4$ is coloured in black.}
            \label{fig:ringblow}
        \end{figure}
See Figure \ref{fig:ringblow} for an example.
To begin with, counting perfect matchings on a 3-regular graph, which can also be denoted as $\text{Holant}([0,0,1,0])$, is $\text{\#P}$-hard\cite{xia2007computational}. Using the method in \cite{curticapean2022parameterizing}, we can achieve a similar conclusion:
\begin{lemma}
    $\text{Holant}([0,0,1,0])\le_{T}\text{Holant}([0,0,1,0])\langle \mathcal{R} \rangle$
    \label{movecrossing}
\end{lemma}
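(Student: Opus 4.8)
The plan is to reduce the computation of $Z(G)$ for an arbitrary $3$-regular instance $G$ of $\text{Holant}([0,0,1,0])$ — equivalently, the number of perfect matchings of a $3$-regular graph, which is $\#\text{P}$-hard by Theorem~\ref{K8origin} — to the same problem on instances whose underlying graph lies in $\mathcal{R}$, following the ``move the crossings'' strategy of \cite{curticapean2022parameterizing}. First I would put the input into a convenient form: we may assume $G$ is connected, since perfect matchings (and the value of $\text{Holant}([0,0,1,0])$) factor over connected components, and we fix a drawing of $G$ in the plane in general position, so that it has only finitely many crossings, each a transversal intersection of exactly two edges away from the vertices; the number of crossings $c$ is polynomially bounded.

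The heart of the argument is to redraw $G$ so that all of its crossings are organized into a single ``ring''. One natural way to do this, mirroring the definition of a ring blowup, is: choose a connected planar spanning subgraph $G_0 \subseteq G$, fix a planar embedding of $G_0$, let $V_f$ be the vertices on its outer face, and reroute each edge of $E(G)\setminus E(G_0)$ through a collar around the outer face, using only edges of the form $(u_v,w)$, $(u_v,u_w)$, and $(u_v,v)$ as in the definition of $\mathcal{R}$. After this rerouting the only crossings of the drawing live inside the collar, so the resulting object is (a subgraph of) a ring blowup of a planar graph. The combinatorial content — that every $3$-regular graph admits such a one-collar drawing, and that the ring-blowup edge set is rich enough to carry the rerouted edges — is exactly the lemma we borrow from \cite{curticapean2022parameterizing}, now executed while tracking the Holant semantics.

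Next I would eliminate the remaining collar crossings and repair degrees. Each collar crossing is replaced by a crossover gadget built from $[0,0,1,0]$ (together with auxiliary degree-$2$ subdivision vertices and, if necessary, weighted binary signatures $[1,0,x]$), chosen so that (i) it simulates a crossing for the perfect-matching count, (ii) it is internally $3$-regular, and (iii) when all collar crossings are replaced simultaneously the global graph $G'$ is still a ring blowup graph, i.e.\ $G'\in\mathcal{R}$ (using that $\mathcal{R}$ is closed under subgraphs). Any free parameter $x$ introduced by the gadgets is then cleared by polynomial interpolation in the style of Lemma~\ref{interp}: one runs the construction for polynomially many distinct values of $x$, reads off $Z(G'_x)$ from the $\text{Holant}([0,0,1,0])\langle\mathcal{R}\rangle$ oracle, and solves the resulting Vandermonde (hence nonsingular) linear system, invoking Lemmas~\ref{exhaustsum} and \ref{exhaustgad} for the bounded number of dangling edges of each gadget. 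Since every step is polynomial-time and constructive, this yields $\text{Holant}([0,0,1,0])\le_{T}\text{Holant}([0,0,1,0])\langle \mathcal{R}\rangle$.

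\textbf{Main obstacle.} The delicate point is the combination of the second and third steps: pushing \emph{all} crossings into a single collar while (a) maintaining $3$-regularity — so that the instance stays inside $\text{Holant}([0,0,1,0])$ and not some larger Holant class — and (b) preserving an exact, interpolation-friendly correspondence between the perfect matchings of $G$ and those of $G'$. In particular one must guarantee that rerouting a non-tree edge through a copy vertex $u_v$ never forces a vertex to acquire degree larger than $3$ in a way that cannot be fixed by a local, $\mathcal{R}$-preserving gadget, and that the parity/matching behaviour of the spokes $(u_v,v)$ and of the copies is consistent with the $[0,0,1,0]$ constraints. This is precisely the technical heart of \cite{curticapean2022parameterizing}, and adapting it from the permanent setting to the $3$-regular Holant setting is where the real work lies.
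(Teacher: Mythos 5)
Your high-level plan — move all crossings into a single collar following \cite{curticapean2022parameterizing}, replace each crossing with a gadget, then clear any auxiliary binary weight by interpolation — is the same route the paper takes. But there are two concrete gaps in the middle and last steps that the paper's proof fills and yours does not.

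First, you describe the crossover gadget as ``internally $3$-regular'' and built from $[0,0,1,0]$ with, ``if necessary,'' weighted binaries $[1,0,x]$. In fact the crossover gadget (Figure~\ref{fig:crossgg}) is \emph{not} internally $3$-regular: it consists of six degree-$3$ vertices carrying $[0,0,1,0]$ plus one degree-$2$ vertex carrying $[1,0,-1]$, and the $[1,0,-1]$ is not optional — it is exactly what makes the gadget simulate a planar crossing in the matching semantics. So what you actually produce at this stage is an instance of $\text{Holant}(\{[0,0,1,0],[1,0,-1]\})\langle\mathcal{R}\rangle$ (in outside-crossing form), not yet of $\text{Holant}([0,0,1,0])\langle\mathcal{R}\rangle$.

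Second, and more importantly, your interpolation step as written cannot run. You say ``one runs the construction for polynomially many distinct values of $x$,'' but $x$ is not a free dial: to evaluate $Z(G'_x)$ you must physically \emph{realize} each $[1,0,x]$ by a gadget built from $[0,0,1,0]$ alone, and this gadget must be planar and $3$-regular so that the modified graph stays in $\mathcal{R}$. This is the real enabling observation the paper supplies: $[1,0,2]$ is realizable as a planar $3$-regular gadget $GG_1$ over $[0,0,1,0]$ (Figure~\ref{fig:gad1,0,2}); chaining $k$ copies gives $[1,0,2^k]$, and since $2$ is not a root of unity Corollary~\ref{corointerp} then lets one simulate $[1,0,-1]$. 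Without exhibiting some such $[1,0,a]$ with $a$ not a root of unity (realizable by a planar $3$-regular $[0,0,1,0]$-gadget), the Vandermonde system you invoke has no evaluation points to query, so the reduction does not close. Your appeal to Lemmas~\ref{exhaustsum} and~\ref{exhaustgad} is also misplaced here — those compute gadget values given an algorithm, they do not supply the realizability needed by Lemma~\ref{interp}.
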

\begin{proof}
     Firstly, we prove that $\text{Holant}([0,0,1,0])\le_{T}\text{Holant}([0,0,1,0],[1,0,-1])\langle \mathcal{R} \rangle$. The method is exactly the same as that in \cite[Lemma 3.1]{curticapean2022parameterizing}, and here we only present a sketch of the proof. See \cite[Lemma 3.1]{curticapean2022parameterizing} for more details. 
     
     Give any instance of $\text{Holant}([0,0,1,0])$ with underlying graph $G$, we embedded $G$ on the plane, possibly with crossings. We also draw a circle $C$ on the plane, such that each vertex and each edge in $G$ is enclosed by $C$. By the process in  \cite[Lemma 3.1]{curticapean2022parameterizing}, we can obtain a graph $G_{oc}$, such that $Z(G_{oc})=Z(G)$, each vertex in $G_{oc}$ is still enclosed by $C$, and there is no crossings inside $C$. In contrast, all the crossings in $G_{oc}$ are situated outside $C$ and have the sole form such that each line segment crosses exactly 1 other line segment, as depicted in Figure \ref{fig:outcross}. We call such $G_{oc}$ an \textit{outside crossing graph}. It can be verified that an outside crossing graph is always a ring blowup graph.
    %figure out crossing
    %figure crosssing gadget
    %figure interpolation gadget
\begin{figure*}
	\centering
    \subcaptionbox{\label{fig:outcross}}{ \includegraphics[width=0.3\textwidth]{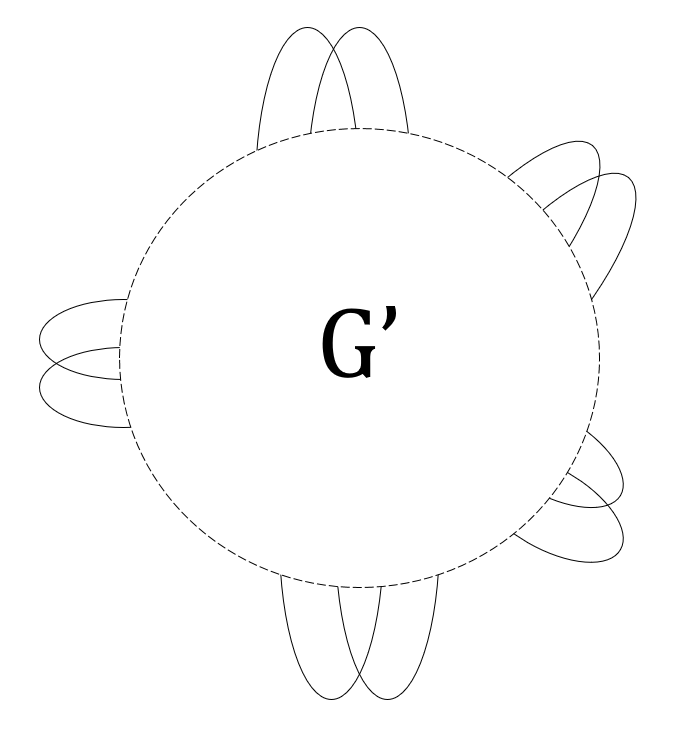}}
		\subcaptionbox{\label{fig:crossgg}}{\includegraphics[width=0.3\textwidth]{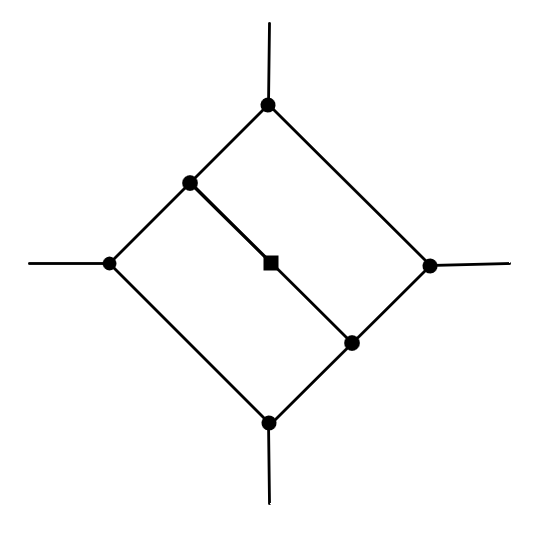}}
	\subcaptionbox{\label{fig:gad[1,0,2]}}{\includegraphics[width=0.3\textwidth]{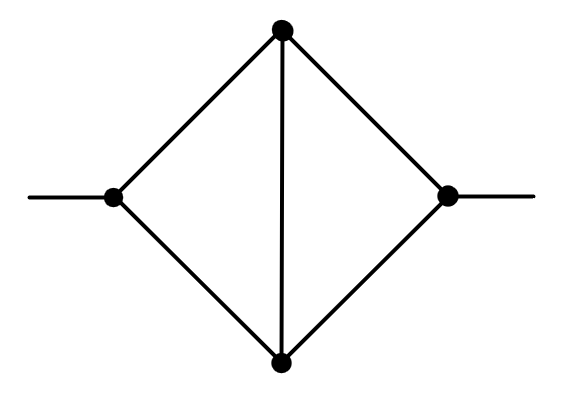}}
	\caption{(a) A visualization of $G_{oc}$ in Lemma \ref{movecrossing}. $G_{oc}$ has no crossing inside the dotted circle. (b) The construction of the crossing gadget. Each vertex of degree 3 represented by a circle is assigned $[0,0,1,0]$, while the vertex in the middle of degree 2 represented by a square is assigned $[1,0,-1]$. (c) The construction of the gadget $GG_1$. Each vertex is assigned $[0,0,1,0]$.} 
	\label{fig:movecross}
\end{figure*}

 For each $v\in G$, the degree of $v$ is equal to that of the corresponding vertex $v'\in G_{oc}$ by the aforementioned process. Besides, each vertex introduced in $G_{oc}$ belongs to some crossing gadget(Figure \ref{fig:crossgg}), which consists of 6 vertices of degree 3 and 1 vertices of degree 2. Thus $G_{oc}$ is also a 3-regular graph with each vertex assigned a $[0,0,1,0]$ signature, except vertices of degree 2 assigned the $[1,0,-1]$ signatures introduced in the crossing gadgets. 
 
 Secondly, we prove $\text{Holant}(\{[0,0,1,0],[1,0,-1]\})\langle \mathcal{R} \rangle\le_{T}\text{Holant}([0,0,1,0])\langle \mathcal{R} \rangle$. Figure \ref{fig:gad[1,0,2]} shows the construction of the planar 3-regular gadget $GG_1$ of the signature $[1,0,2]$. By Corollary \ref{corointerp}, we may interpolate $[1,0,-1]$ using $[1,0,2]$. It can also be verified that $G_{oc}$, as well as graphs introduced in the interpolation step, are still outside crossing graphs. Consequently, Lemma \ref{movecrossing} is proved.
\end{proof}

As proved in \cite{thilikos2022killing}, $\mathcal{R}$ is a subset of $fb(\mathcal{G})$ for any $\mathcal{G}\cap\mathcal{H}= \emptyset$, thus Lemma \ref{movecrossing} also shows that $\text{Holant}([0,0,1,0])\le_{T}\text{Holant}([0,0,1,0])[ \mathcal{G} ], \forall \mathcal{G}\cap\mathcal{H}= \emptyset$.
%\par Since we can exchange 0 and 1, or we can use a holographic transformation $X=$, we can replace $[0,1,0,0]$ with $[0,0,1,0]$, i.e. $\text{Holant}([0,1,0,0])\le_{T}Holant([0,0,1,0])\langle \mathcal{R} \rangle$

\subsection{Realizing [0,1,0,0]}
In this section, we want to prove the following lemma:
\begin{lemma}
    %If $\widehat{F}\subseteq \widehat{\mathscr{M}}-\mathscr{A}$ and $\widehat{F}$ is symmetric, then $\text{Holant}([0,0,1,0])\le_{T}Holant(\widehat{F}\cup\widehat{\mathcal{EQ}})\langle \mathcal{R} \rangle$
    If $f\in\mathscr{M}-\mathscr{A}$ and is symmetric, then $[0,0,1,0]$ can be simulated by $\{f\}|\{[1,0],[1,0,1],[1,0,1,0]\}$ in a planar left-side manner.
    \label{lem0010simulate}
\end{lemma}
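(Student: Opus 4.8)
By Lemma~\ref{lem:AsymtoSym} this lemma is the last ingredient needed for the hardness side, so I would prove it by a case analysis over the five shapes of a symmetric $f\in\mathscr M-\mathscr A$ given in Lemma~\ref{M-Aform} (realizing $[0,0,1,0]$ and realizing its reverse $[0,1,0,0]$ are interchangeable after complementing all edges). The elementary toolkit: $[1,0]$ pins an edge of $f$ to $0$; $[1,0,1]=\widehat{=_2}$ is an identity wire which, via the subdivision convention of Remark~\ref{remGadConstruct}, keeps every intermediate gadget bipartite, planar, of degree $\le 3$, with all dangling edges on $f$-vertices; $[1,0,1,0]=\widehat{=_3}$ is a parity-$3$ check, yielding by pinning the bit-flip $[0,1,0]=[1,0,1,0]^{x_3=1}$ (once a pin-to-$1$ is at hand); self-loops lower arity; and polynomial interpolation (Lemma~\ref{interp}, Corollary~\ref{corointerp}) removes unwanted scalars. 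In each case I would first extract a small symmetric non-affine signature and then reshape it into $[0,0,1,0]_3$.

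\textbf{The perfect-matching families (forms 1 and 2).} For $f=[0,1,0,\dots,0]_k$: pinning $k-3$ edges to $0$ gives $[0,1,0,0]_3$; pinning $k-1$ edges to $0$ gives the unary $[0,1]$, hence a pin-to-$1$, hence $[0,1,0]$; routing each leg of $[0,1,0,0]_3$ through $[0,1,0]$ flips all three bits and gives $[0,0,1,0]_3$. For $f=[0,\dots,0,1,0]_k$ the base case $k=3$ is $[0,0,1,0]_3$ itself, and larger $k$ descends to arity $3$ via $[0,1]$-pinnings of $f$ or self-loops. Planarity, degree $\le 3$, and the left-side property are immediate here.

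\textbf{The obstacle: the $[1,0,r,\dots]$ families (forms 3, 4, 5).} These are the crux. Such an $f$ is supported on a single Hamming parity with support an affine subspace, as is every signature of $\widehat{\mathcal{EQ}_{\le 3}}$, and for, say, a positive real $r$ all entries in sight are nonnegative; since Holant gadget values then involve no cancellation, any signature \emph{realizable by a plain gadget} from $\{f\}\mid\widehat{\mathcal{EQ}_{\le 3}}$ again has affine support, whereas the support of $[0,0,1,0]$ (the weight-$2$ strings of $\{0,1\}^3$) is a non-affine proper subset of the even-weight subspace. Hence for these $f$ one cannot build $[0,0,1,0]$ as a gadget outright: the proof must first use interpolation to introduce a cancelling sign --- for $f$ of form~4 the guiding identity is $[0,0,1,0]_3\propto f-[1,0,1,0]_3$, and the gadget family realizing $[1,0,r^k,0]_3$ lets one interpolate the weight away --- before a (Turing-reduction) realization of $[0,0,1,0]$ becomes available, or else one derives the required hardness from $\#\mathrm{PM}$ for these $f$ without passing through $[0,0,1,0]$ at all. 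Settling which route works, handling the sub-case where $r$ is a root of unity (where path-interpolation by itself stalls), and keeping every gadget planar, degree-$3$, bipartite and left-side throughout, is the main difficulty; forms~1 and~2 are bookkeeping by comparison.
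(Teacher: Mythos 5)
Your high-level structure -- case analysis over the five shapes from Lemma~\ref{M-Aform}, pinning with $[1,0]$ to extract small symmetric signatures, and polynomial interpolation against the families $[1,0,r^{2k},0]_3$ for the $[1,0,r,\dots]$ shapes -- is exactly the structure of the paper's proof. Forms~1 and~2 you handle correctly (though your route to $[0,1,0]$ through $[0,1]$ and $[1,0,1,0]^{x_3=1}$ is a detour: the paper gets $[0,1,0]$ by directly pinning $k-2$ legs of $[0,1,0,\dots,0]_k$ to $0$, which keeps everything manifestly a left-side gadget). Your observation that for a \emph{positive real} $r$ a plain gadget from $\{f\}\cup\widehat{\mathcal{EQ}_{\le 3}}$ cannot produce $[0,0,1,0]$ because there is no cancellation is sound, and correctly motivates why interpolation is unavoidable in forms~3--5.

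However, the proposal contains a genuine gap: you explicitly punt on the subcases where $|r|=1$, writing that ``handling the sub-case where $r$ is a root of unity\dots is the main difficulty'' and even floating the alternative of bypassing $[0,0,1,0]$ entirely. But those subcases are precisely where the paper has to do real work, and it does not bypass anything. The resolution is concrete: when $f=[1,0,r]$ with $|r|=1$ but $r^4\neq 1$, the gadget of Figure~\ref{fig:edge4} (two $[1,0,r]$'s through a $[1,0,1,0]$) realizes $[r^2+1,0,2r^2]$, and $|2r^2|=2$ while $0<|r^2+1|<2$, so this is a scalar multiple of some $[1,0,r']$ with $|r'|\neq 0,1$, whereupon the non-unimodular interpolation of Subcase~3.1 applies. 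Similarly, when $f=[1,0,r,0,\dots]_k$ with $r=\pm\mathfrak{i}$, pinning yields $[1,0,\pm\mathfrak{i},0]$, and the gadget of Figure~\ref{fig:vertex4} produces $[-1\mp\mathfrak{i},0,-1\mp 3\mathfrak{i}]$, again a multiple of a non-unimodular $[1,0,r']$. Without these escape gadgets the case analysis does not close. Also note that your parenthetical claim that $[0,0,1,0]$ and $[0,1,0,0]$ ``are interchangeable after complementing all edges'' is not available here: the right-side set $\{[1,0],[1,0,1],[1,0,1,0]\}$ is not stable under the bit-flip $x\mapsto 1-x$ ($[1,0]$ becomes $[0,1]$, $[1,0,1,0]$ becomes $[0,1,0,1]$), so the symmetry you invoke does not hold in this bipartite Holant; the paper instead flips bits locally with an explicitly realized $[0,1,0]$. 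Finally, your proposed identity $[0,0,1,0]\propto f-[1,0,1,0]$ is not itself a gadget construction (you cannot realize a difference of signatures); what actually works is the Vandermonde interpolation with evaluation points $r^{2k}$, which is why distinguishing $|r|\neq 1$ from $|r|=1$ is the crux.
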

 \begin{figure*}
	\centering
    \subcaptionbox{\label{fig:k-3pin}}{ \includegraphics[width=0.25\textwidth]{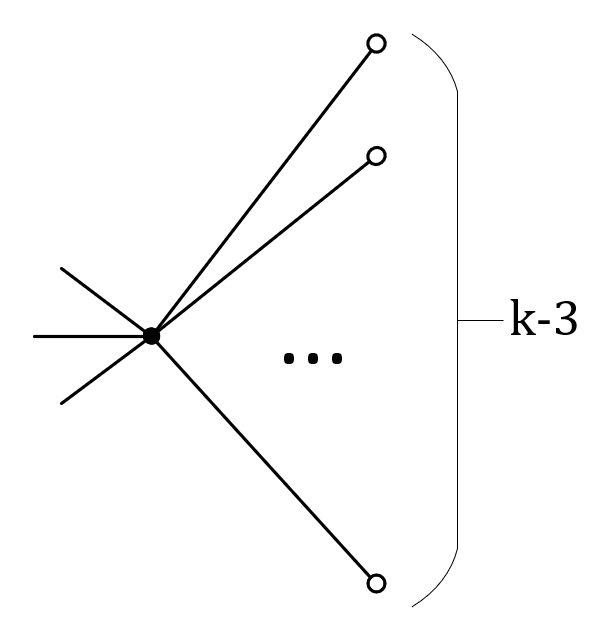}}
		\subcaptionbox{\label{fig:k-2pin}}{\includegraphics[width=0.25\textwidth]{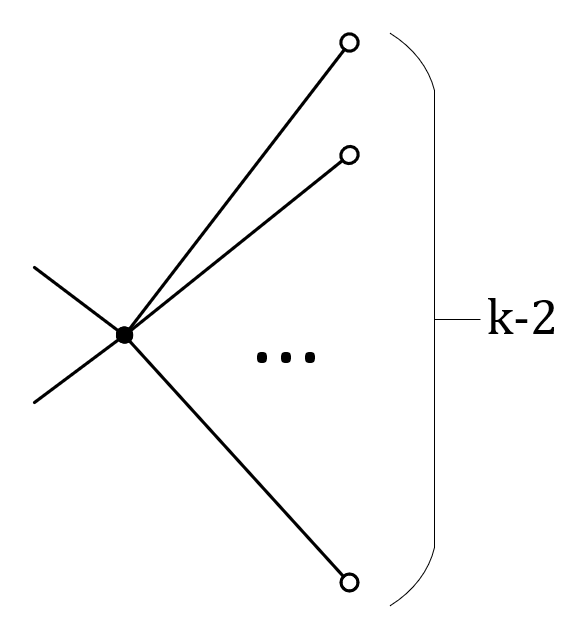}}
	\subcaptionbox{\label{fig:k-3hand}}{\includegraphics[width=0.4\textwidth]{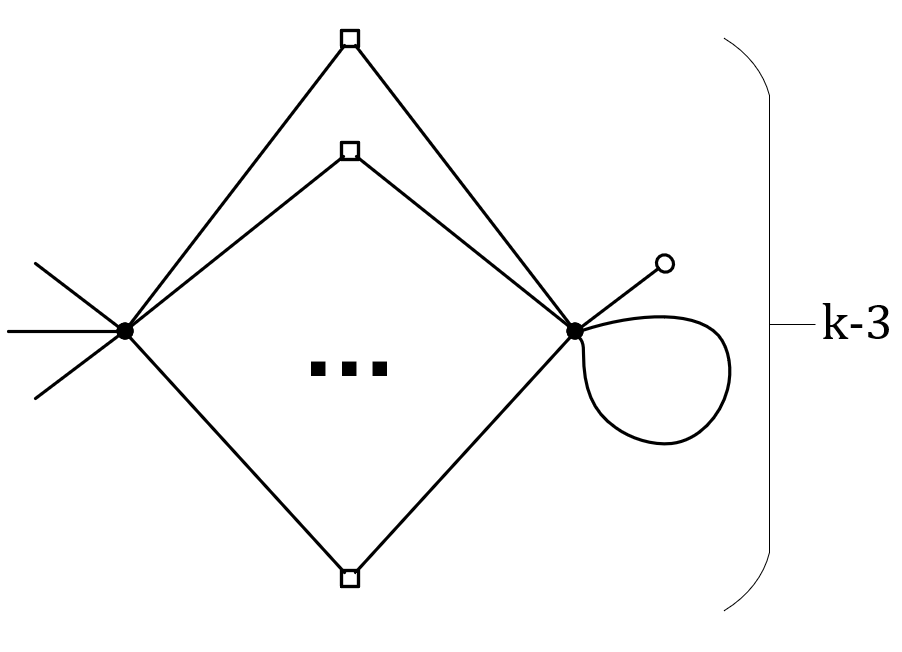}}
    \subcaptionbox{\label{fig:modify3}}{ \includegraphics[width=0.25\textwidth]{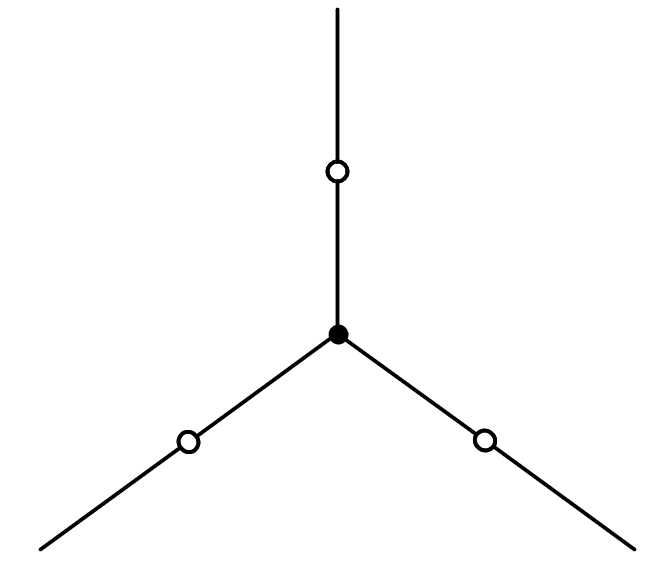}}
		\subcaptionbox{\label{fig:edge4}}{\includegraphics[width=0.35\textwidth]{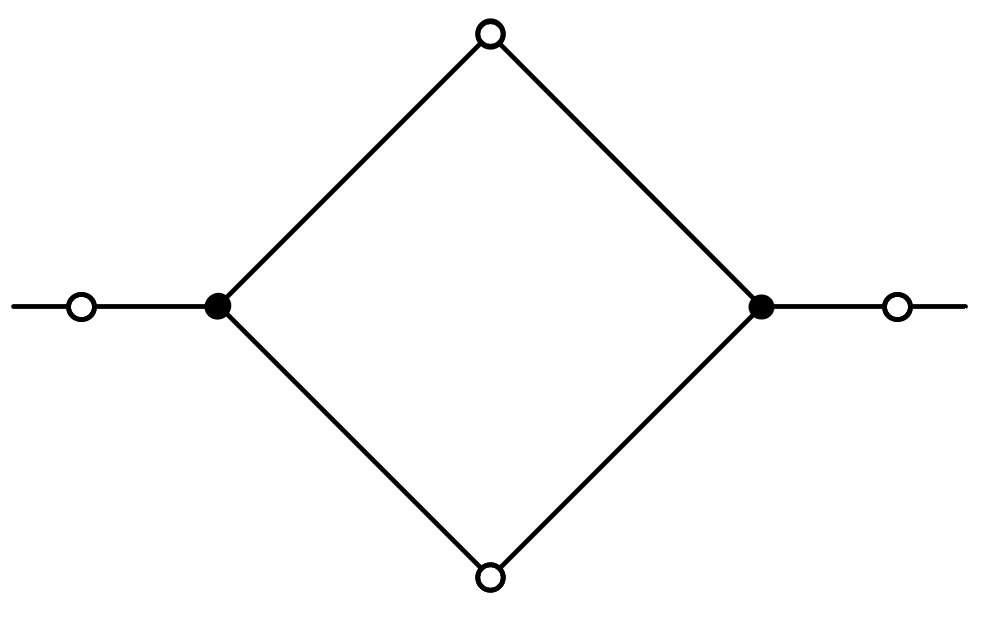}}
	\subcaptionbox{\label{fig:vertex4}}{\includegraphics[width=0.35\textwidth]{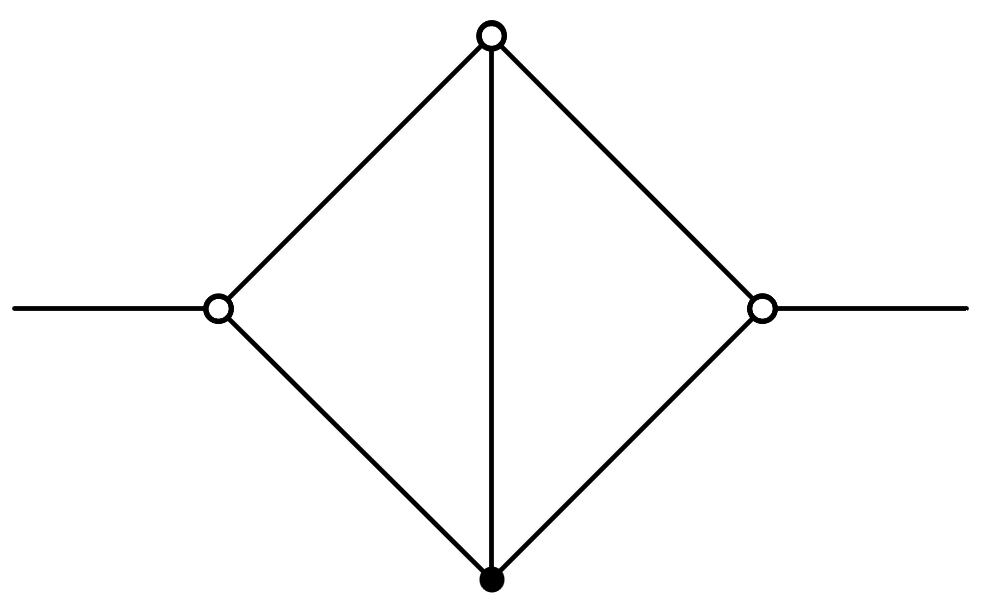}}
	\caption{Gadgets used in the proof of Lemma \ref{lem0010simulate}. The signature assigned to each vertex is context-specific.} 
	\label{fig:gadgets}
\end{figure*}
%We always assume $\widehat{\mathcal{F}}\subseteq \widehat{\mathscr{M}}-\mathscr{A}$ and $\widehat{\mathcal{F}}$ is symmetric in this section for convenience. 
    It should be noted that the gadget constructions presented here are in accordance with the instructions in Remark \ref{remGadConstruct}.
    
\begin{proof}
    By Lemma \ref{M-Aform}, $f$ have 5 distinct forms. We deal with each form respectively.
    
    \subparagraph{Case 1:} $f=[0,1,0,...,0]_k,k\ge 3$. 
    
    By assigning a $[1,0]$ signature to each vertex of degree 1 labelled by a hollow circle, and $f$ to the vertex of degree $k$ labelled by a solid circle in Figure \ref{fig:k-3pin} and \ref{fig:k-2pin}, we obtain the gadget of signature $[0,1,0,0]$ and $[0,1,0]$ respectively. By assigning a $[0,1,0]$ signature to each vertex of degree 2 labelled by a hollow circle, and a $[0,1,0,0]$ signature to the vertex of degree 3 labelled by a solid circle in Figure \ref{fig:modify3}, we realize the $[0,0,1,0]$ signature.

    \subparagraph{Case 2:} $f=[0,...,0,1,0]_k,k\ge 3$. 
    
    By assigning a $[1,0]$ signature to the vertex of degree 1 labelled by a hollow circle, an $f$ to each vertex of degree $k$ labelled by a solid circle, and $[1,0,1]$ to each vertex of degree $2$ labelled by a hollow square in Figure \ref{fig:k-3hand}, we realize the $[0,0,1,0]$ signature.

    \subparagraph{Case 3:}$f=[1,0,r], r^4\neq 0,1$.
    \subparagraph{Subcase 3.1:} $|r|\neq 0,1$.
    
     By the same method in the proof of Corollary \ref{corointerp}, we can obtain a series of gadgets of signatures $g_1,...,g_k,...$ where $g_k=[1,0,r^k]$ for each $k\ge 1$. For each $k\ge 1$, by assigning a $g_k$ signature to each vertex of degree 2 labelled by a hollow circle, and a $[1,0,1,0]$ signature to the vertex of degree 3 labelled by a solid circle in Figure \ref{fig:modify3}, we realize the signature $h_k=[1,0,r^{2k},0]=\frac{1}{r^{2k}}[r^{2k},0,1,0]$. Using $h_1,...,h_k,...$, we can interpolate the $[0,0,1,0]$ signature.
    \subparagraph{Subcase 3.2:} $|r|=1$ but $ r^4\neq 1$. 
    
    By assigning a $[1,0,r]$ signature to each vertex of degree 2 labelled by a hollow circle, and a $[1,0,1,0]$ signature to the vertex of degree 3 labelled by a solid circle in Figure \ref{fig:edge4}, we realize the  $[r^2+1,0,2r^2]$ signature. Since $|r|=1$ but $ r^4\neq 1$, we have $|2r^2|=2$ and $0<|r^2+1|<2$, and consequently $[r^2+1,0,2r^2]$ is a multiple of $[1,0,r']$ where $|r'|\neq 0,1$. Then we are done by Subcase 3.1.
    \subparagraph{Case 4:} $f=[1,0,r,0,r^2,0,...]_k,k\ge 3, r^2\neq 0,1$.
     \subparagraph{Subcase 4.1:} $r^4\neq 0,1$. 
     
     By assigning a $[1,0]$ signature to each vertex of degree 1 labelled by a hollow circle, and $f$ to the vertex of degree $k$ labelled by a solid circle in Figure \ref{fig:k-2pin}, we obtain the gadget of signature $[1,0,r]$. We are done by Case 3. 
     \subparagraph{Subcase 4.2:} $r=\mathfrak{i}$(or $r=-\mathfrak{i}$). 
     
    By assigning a $[1,0]$ signature to each vertex of degree 1 labelled by a hollow circle, and $f$ to the vertex of degree $k$ labelled by a solid circle in Figure \ref{fig:k-3pin}, we obtain the gadget of signature $[1,0,\mathfrak{i},0]$(or $[1,0,-\mathfrak{i},0]$). By assigning a $[1,0,\mathfrak{i},0]$(or $[1,0,-\mathfrak{i},0]$) signature to each vertex of degree 3 labelled by a hollow circle, and a $[1,0,1,0]$ signature to the vertex of degree $3$ labelled by a solid circle in Figure \ref{fig:vertex4}, we obtain the gadget of signature $[-1-\mathfrak{i},0,-1-3\mathfrak{i}]$(or $[-1+\mathfrak{i},0,-1+3\mathfrak{i}]$). Again we are done by Case 3. 

    \subparagraph{Case 5:} $f=[0,1,0,r,0,r^2,...]_k,k\ge 3, r^2\neq 0,1$

    By assigning a $[1,0]$ signature to each vertex of degree 1 labelled by a hollow circle, and $f$ to the vertex of degree $k$ labelled by a solid circle in Figure \ref{fig:k-3pin} and \ref{fig:k-2pin}, we obtain the gadget of signature $[0,1,0,r]$ and $[0,1,0]$ respectively. By assigning a $[0,1,0]$ signature to each vertex of degree 2 labelled by a hollow circle, and a $[0,1,0,r]$ signature to the vertex of degree 3 labelled by a solid circle in Figure \ref{fig:modify3}, we realize the $[r,0,1,0]=\frac{1}{r}[1,0,\frac{1}{r},0]$ signature. We are done by Case 4.

    All possible cases are analyzed, and in each case only planar gadgets are introduced. Consequently, Lemma \ref{lem0010simulate} is proved.
\end{proof}

\subsection{Reduction after realizing}
\par In every realization in Lemma \ref{lem0010simulate}, $[0,0,1,0]$ is simulated by some planar gadget, either through gadget construction or polynomial interpolation. Thus for any outside crossing graph $G_{oc}$, if we replace each vertex in $G_{oc}$ with a planar gadget that simulates $[0,0,1,0]$ as described in the proof of Lemma \ref{lem0010simulate} and Lemma \ref{lem:AsymtoSym}, and each edge in $G_{oc}$ with a $[1,0,1]$ signature, the obtained graph $G_{oc}'$ is still an outside crossing graph, and consequently a ring blowup graph and we have $G_{oc}'\in fb(\mathcal{G})$.

Furthermore, $G_{oc}''$ can be seen as an instance of $\text{Holant}(\mathcal{F}|\widehat{\mathcal{EQ}})[\mathcal{G}]$ by Remark \ref{remGadConstruct} since $[1,0],[1,0,1],[1,0,1,0]\in \widehat{\mathcal{EQ}_{\le 3}}$ by Lemma \ref{transEQ}. Besides, we only introduce signatures from $\widehat{\mathcal{EQ}_{\le 3}}$, and consequently by Theorem \ref{K8origin} the proof of Lemma \ref{K8hard} is completed.

%Thus, if we manage to make that no 2 signatures from $\mathcal{EQ}$ are adjacent, we are done. As long as any 2 signatures from $\mathcal{EQ}$ are adjacent, we simply replace them with a single signature from $\mathcal{EQ}$. This is exactly the contraction operation, and thus if the original graph excludes $\mathcal{G}$ minor, the obtained graph also excludes $\mathcal{G}$ minor. Hence, we are done by turning the instance into a $\text{\#CSP}(F)[ \mathcal{G} ]$ instance, and thus prove the hardness.
%This together with the analysis of Part 1 imply that $\text{Holant}([0,0,1,0])\le_T\text{Holant}(\mathcal{F}\cup \widehat{\mathcal{EQ}})\langle R\rangle$ where $\mathcal{F}$ satisfies that $\mathcal{F}\subseteq \mathscr{M}$, $\mathcal{F}\nsubseteq \mathscr{A}$, and all signatures in $\mathcal{F}$ are symmetric.

%Besides, by the construction of the gadget in  Lemma \ref{lem0010simulate}, the crossings in $G_{oc}'$ are now formed by 2 signatures in $\widehat{\mathcal{EQ}}$, and no new crossings are formed. Thus, the graph we obtain is still a ring blowup graph. Thus, we prove the lemma. 

%\input{6unboundhard}

\section{Hardness for \#CSP on apex graphs}\label{secK6}

In this section, we prove the third statement in Theorem \ref{allthm2}. We restate it as follows.
\begin{theorem}     
  $\text{\#CSP}(\mathcal{F})\langle \mathcal{PLA}\rangle$ is computable in polynomial time if $\mathcal{F}\subseteq \mathscr{A}$ or $\mathcal{F}\subseteq \mathscr{P}$; otherwise it is $\text{\#P}$-hard. 
    \label{CSPK6}
     \end{theorem}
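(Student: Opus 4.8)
The tractability half is immediate: if $\mathcal{F}\subseteq\mathscr{A}$ or $\mathcal{F}\subseteq\mathscr{P}$ then $\text{\#CSP}(\mathcal{F})$ is already polynomial-time computable on \emph{all} graphs by Theorem~\ref{genCSP}, hence in particular on $\mathcal{PLA}$. For the hardness half I would argue exactly along the lines of Remark~\ref{remarCSP}. Since $\mathcal{PL}\subseteq\mathcal{PLA}$, Theorem~\ref{plCSP} together with $(\widehat{\mathscr{M}}-\mathscr{A})\cap\mathscr{P}=\emptyset$ reduces everything to the single case $\mathcal{F}\subseteq\widehat{\mathscr{M}_P}$, $\mathcal{F}\nsubseteq\mathscr{A}$. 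By Lemma~\ref{lemcsp=hol} and Theorem~\ref{thmHT} (neither of which changes the underlying graph) it suffices to prove that $\text{Holant}(\widehat{\mathcal{F}}\mid\widehat{\mathcal{EQ}})\langle\mathcal{PLA}\rangle$ is $\text{\#P}$-hard, where $\widehat{\mathcal{F}}\subseteq\mathscr{M}_P$, $\widehat{\mathcal{F}}\nsubseteq\mathscr{A}$, and $\widehat{\mathcal{EQ}}$ is the set of parity signatures $[1,0,1,0,\dots]_k$. Using Lemma~\ref{lem:AsymtoSym} I would realize a symmetric $g\in\mathscr{M}-\mathscr{A}$, and then, \emph{verbatim as in Lemma~\ref{lem0010simulate}}, realize $[0,0,1,0]$ (or $[0,0,0,1,0]$) as a planar left-side gadget over $\{g\}\mid\widehat{\mathcal{EQ}_{\le 3}}$. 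Thus the whole theorem comes down to showing that $\text{Holant}(\{[0,0,1,0]\}\mid\widehat{\mathcal{EQ}})\langle\mathcal{PLA}\rangle$ is $\text{\#P}$-hard.

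For that last step the plan is to reduce from $\text{Holant}([0,0,1,0])$ on general (not necessarily planar) $3$-regular graphs, which is $\text{\#P}$-hard by Theorem~\ref{K8origin}. Given such an instance $G$, fix a plane drawing of a maximal planar subgraph of $G$ together with the remaining edges, and introduce one new \emph{hub} vertex $w$ carrying a parity signature from $\widehat{\mathcal{EQ}}$ whose arity grows with the number of crossings. Each crossing is then replaced by a crossover-type gadget, built only from $[0,0,1,0]$ and the parity signatures in $\widehat{\mathcal{EQ}}$, that is planar except for a bounded number of dangling edges, \emph{all of which are attached to the one common hub $w$}; this is the analogue of the ``move the crossings'' step of Section~\ref{secK8}, except that here the non-planarity is pushed into a single apex vertex rather than around a ring. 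If this is done correctly the resulting instance $G'$ satisfies $G'-w$ planar, so $G'\in\mathcal{PLA}$, and $Z(G')=\lambda\cdot Z(G)$ for a computable nonzero $\lambda$; polynomial interpolation (Lemma~\ref{interp}, Corollary~\ref{corointerp}) would be used wherever a gadget's exact value is unavailable, and Remark~\ref{remGadConstruct} keeps everything bipartite and in left-side form. Finally, substituting the planar realization of $[0,0,1,0]$ from Lemma~\ref{lem0010simulate} (and of $g$ from Lemma~\ref{lem:AsymtoSym}) preserves apexness, since $w$ is never touched and each $[0,0,1,0]$-vertex is only blown up into a planar disc-gadget.

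The main obstacle is the design and correctness analysis of this shared-hub crossover gadget. Because $[0,0,1,0]$ and every signature in $\widehat{\mathcal{EQ}}$ are matchgate signatures, there can be \emph{no} purely planar crossover among them — otherwise $\text{Holant}(\{[0,0,1,0]\}\mid\widehat{\mathcal{EQ}})\langle\mathcal{PL}\rangle$ would be $\text{\#P}$-hard, contradicting its FKT-based tractability; in particular the gadget cannot factor as a crossover on its real legs times a function of its hub legs alone. Hence the hub must be used essentially: short-circuiting a single gadget's hub legs must make it degenerate, while the global parity constraint carried by $w$ must nevertheless force all of the individual crossovers to act simultaneously as genuine wire crossings on their real legs. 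Producing a gadget with exactly this behaviour, proving that the interference among all the hub-connected copies reconstructs $Z(G)$ up to a computable factor, and checking that after the planar substitutions the underlying graph is still (a subgraph of) an apex graph, is the technically delicate core; I would expect to split into cases according to the form of $g$ in Lemma~\ref{M-Aform}, much as in the proof of Lemma~\ref{lem0010simulate}, and to treat the rerouted edges class by class.
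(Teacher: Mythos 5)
Your tractability half and the reduction, via Remark~\ref{remarCSP}, Lemma~\ref{lemcsp=hol}, Theorem~\ref{thmHT}, and Lemmas~\ref{lem:AsymtoSym} and~\ref{lem0010simulate}, to showing hardness of a Holant problem over $\mathscr{M}_P$ signatures on $\mathcal{PLA}$ all match the paper. But your final hardness step takes a genuinely different and, as you present it, incomplete route, and the gap is real.

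You propose starting from $\text{Holant}([0,0,1,0])$ on \emph{general} $3$-regular graphs (Theorem~\ref{K8origin}), drawing the graph with crossings, and then eliminating every crossing by a ``shared-hub crossover gadget'' whose extra legs all attach to one apex vertex $w$ carrying a $\widehat{\mathcal{EQ}}$ parity signature. This gadget is never constructed, and you explicitly flag it as the technically delicate core. The difficulty is not merely one of bookkeeping. You yourself observe that no individual gadget can act as a planar crossover of matchgate signatures. Now write the hub signature as $[1,0,1,0,\dots]_n=\tfrac12\bigl([1,1]^{\otimes n}+[1,-1]^{\otimes n}\bigr)$ and split the Holant value accordingly. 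In the $[1,1]^{\otimes n}$ branch, each gadget simply has its hub legs free-summed and reduces to a planar matchgate signature on its four real legs, so by your own argument this branch cannot encode crossovers. All the intended non-planarity must therefore be concentrated in the $[1,-1]^{\otimes n}$ branch, and the reduction must show that this single branch exactly reproduces $Z(G)$ (up to a computable factor) after also subtracting the tractable branch. There is no construction in the paper, and none that I can see, of a gadget over $\{[0,0,1,0]\}\cup\widehat{\mathcal{EQ}}$ (planar, bipartite, left-side, with legs in a fixed cyclic order) that behaves as a crossover under a $[1,-1]^{\otimes k}$ pinning of its hub legs, let alone a proof that $\Theta(n^2)$ copies of it sharing one parity hub reassemble into $Z(G)$. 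Without it the reduction does not go through.

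The paper sidesteps all of this by choosing a different \#P-hard base problem: counting \emph{ordinary} matchings on a \emph{planar} $3$-regular multigraph $G$ (Theorem~\ref{K6origin}), rather than perfect matchings on a general graph. Since $G$ is already planar, no crossing elimination is needed at all. One adjoins a single apex vertex $a$ adjacent to every vertex of $G$, assigns the parity signature $[1,0,1,\dots,1]$ to $a$ and $[0,0,0,1,0]$ to every original vertex, and a short combinatorial argument shows $Z(G_a)$ equals the number of matchings of $G$: each vertex ``uses up'' exactly one incident $0$-edge, which is either a matching edge of $G$ or its apex edge, and the apex parity constraint is automatically satisfied because $G$ has an even number of vertices. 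Lemma~\ref{lem3to4simulate} then simulates $[0,0,0,1,0]$ (note: arity $4$, not the arity-$3$ $[0,0,1,0]$ you target) by planar gadgets while preserving apexness, using the planar pairing of Lemma~\ref{lemplanar_pairing} from \cite{guo2020complexity} to pair up the leftover odd-parity $[0,1]$ pins. Your route would demand a substantial new crossover technology that the paper's approach never needs; you should replace the crossing-elimination plan with the planar-matchings-plus-single-apex construction.
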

     Suppose that $G$ is an apex graph with an apex vertex $a$. Since $G-a$ is planar, $G-a$ is $\{K_5,K_{3,3}\}$-minor-free and consequently $G$ is $\{K_6,K_{3,3,1}\}$-minor-free. This gives $\mathcal{PLA}\subseteq fb(\{K_6,K_{3,3,1}\})$. The third statement in Theorem \ref{allthm} then can be obtained from Theorem \ref{CSPK6}. Obviously, $\mathcal{PL}\subseteq \mathcal{PLA}$, and by Remark \ref{remarCSP}, it is sufficient to prove the following lemma.
%%

%In this part, we will show that for general symmetric minor-excluding \#CSP, if the degree of the vertices are not bounded, the hardness result can be different. 

\begin{lemma}
     If $\mathcal{F}\subseteq \mathscr{M}_P$ but $\mathcal{F}\nsubseteq \mathscr{A}$, then $\text{Holant}(\mathcal{F}|\widehat{\mathcal{EQ}})\langle \mathcal{PLA} \rangle$ is $\text{\#P}$-hard.
    \label{thmPLA}
\end{lemma}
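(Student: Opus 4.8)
The plan is to mirror the three–part structure of the $K_8$–case (Lemma~\ref{K8hard}): first produce a usable non‑affine symmetric matchgate signature, then realize the ternary signature $[0,0,1,0]$, and finally reduce from the \#P‑hardness of $\text{Holant}([0,0,1,0])$. Since $\mathcal{F}\subseteq\mathscr{M}_P$ and $\mathcal{F}\nsubseteq\mathscr{A}$, fix $g\in\mathcal{F}\cap(\mathscr{M}_P-\mathscr{A})$. By Lemma~\ref{lem:AsymtoSym} a symmetric $f\in\mathscr{M}-\mathscr{A}$ is realized as a planar left‑side gadget over $\{g\}\mid\{[1,0],[1,0,1],[1,0,1,0]\}$, and by Lemma~\ref{lem0010simulate} (the case analysis over the five forms of $f$ in Lemma~\ref{M-Aform}) the signature $[0,0,1,0]$ is then simulated in a planar left‑side manner. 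Composing the two, $[0,0,1,0]$ is available on the left while $\widehat{\mathcal{EQ}}$ (including $[1,0,1]$, $[1,0,1,0]$, and the high‑arity $\widehat{=_k}=[1,0,1,0,\dots]_k$) is available on the right, so by the reductions in Remark~\ref{remarCSP} it suffices to prove that $\text{Holant}(\{[0,0,1,0]\}\mid\widehat{\mathcal{EQ}})\langle\mathcal{PLA}\rangle$ is \#P‑hard.

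The core step is the reduction $\text{Holant}([0,0,1,0])\le_T\text{Holant}(\{[0,0,1,0]\}\mid\widehat{\mathcal{EQ}})\langle\mathcal{PLA}\rangle$, the left side being \#P‑hard by Theorem~\ref{K8origin} (equivalently, by complementation, counting perfect matchings on $3$‑regular graphs). Given a $3$‑regular instance $G$, I would fix a drawing of $G$ in the plane, so that the only obstruction to planarity is a set of edge crossings, and eliminate every crossing using a planar ``crossover'' gadget built from $[0,0,1,0]$, $[1,0,1]$ and $[1,0,1,0]$ that is planar except for a fixed number of dangling edges, all of which attach to one common global vertex $a$; this $a$ is the apex and carries the signature $\widehat{=_N}=[1,0,1,0,\dots]_N\in\widehat{\mathcal{EQ}}$, with $N$ the total number of such dangling edges (hence $N$ grows with $G$). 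The subtlety — and the step I expect to be the main obstacle — is that a single apex carrying a \emph{symmetric} matchgate signature cannot simply reroute independent wires: routing a crossed edge through $a$ would require $a$ to carry $(=_2)^{\otimes m}$ for a \emph{crossing} pairing of its ports, which is not a matchgate signature because a matchgate signature is constrained by the cyclic order of its external nodes; this is precisely why $\text{\#PM}\langle\mathcal{PLA}\rangle$ is tractable. So the crossover gadget must instead leak a controlled ``defect bit'' per crossing to $a$, and I would need the global parity constraint imposed by $\widehat{=_N}$ (supplemented, if necessary, by polynomial interpolation via Lemma~\ref{interp} and Corollary~\ref{corointerp}, applied either to auxiliary constants inside the gadget or along a path of copies of it) to force the surviving contributions to reproduce $Z(G)$ up to a computable scalar. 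Pinning down the exact gadget and the cancellation argument — and, if pure parity does not suffice, choosing the right combination of apex signature and interpolation family to kill all defect terms — is the real content of the proof.

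Finally I would verify the two structural requirements. Deleting the apex $a$ leaves the planar drawing of $G$ with each crossing replaced by a planar crossover gadget, hence a planar graph; so every instance produced lies in $\mathcal{PLA}$. Moreover, following Remark~\ref{remGadConstruct}, each left vertex is a left‑side gadget over $\{g\}\mid\{[1,0],[1,0,1],[1,0,1,0]\}\subseteq\{g\}\mid\widehat{\mathcal{EQ}_{\le 3}}$, while the only right‑side signatures introduced (the pass‑through $[1,0,1]$'s and the apex $\widehat{=_N}$) lie in $\widehat{\mathcal{EQ}}$, so the instance is a legitimate instance of $\text{Holant}(\mathcal{F}\mid\widehat{\mathcal{EQ}})[\mathcal{PLA}]$. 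Combining with Theorem~\ref{K8origin} this gives the \#P‑hardness asserted in Lemma~\ref{thmPLA}, and hence Theorem~\ref{CSPK6}.
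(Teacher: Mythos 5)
Your setup (extract a symmetric $f\in\mathscr{M}-\mathscr{A}$ via Lemma~\ref{lem:AsymtoSym}, realize $[0,0,1,0]$ via Lemma~\ref{lem0010simulate}) matches the paper, but the core reduction is exactly where the proposal has a genuine gap, and you flag it yourself. Starting from $\text{Holant}([0,0,1,0])$ on arbitrary $3$-regular graphs and eliminating crossings by routing ``defect bits'' to a symmetric apex is the entire difficulty, and you give no construction of the crossover gadget nor of the cancellation argument. The obstacle you name is real: a symmetric signature at $a$ cannot distinguish which of its incident ports carries which wire, so $a$ cannot act as a crossbar, and it is not clear why the even-parity constraint of $\widehat{=_N}$ (even after interpolation) should make the unwanted terms vanish. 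As written, this route does not close.

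The paper sidesteps the crossing problem altogether by reducing from a \emph{planar} hard problem: counting matchings (not perfect matchings) on planar $3$-regular multigraphs, which is $\text{\#P}$-hard by Theorem~\ref{K6origin}. Given planar $3$-regular $G=(V,E)$, one forms $G_a$ by adding a universal apex $a$ adjacent to every $v\in V$; each $v\in V$ is assigned the arity-$4$ signature $[0,0,0,1,0]$, and $a$ is assigned $\widehat{=_{|V|}}\in\widehat{\mathcal{EQ}}$. Each $v$ then has exactly one incident $0$-edge; the $0$-edges inside $E$ form a matching, and the apex's even-parity constraint is automatically met since a matching leaves an even number of vertices uncovered, so $Z(G_a)$ counts matchings of $G$. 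The apex plays no wire-routing role whatsoever --- it is a slack vertex. Finally, realizing $[0,0,0,1,0]$ requires an odd-parity $[0,1]$ pin that $\{f,[1,0],[1,0,1],[1,0,1,0]\}$ cannot produce at a single site, and the paper handles this by planar pairings (Lemma~\ref{lemplanar_pairing}), pairing up these defects and replacing each pair by the even-parity $[0,0,1]=[0,1]^{\otimes 2}$ while keeping $G_a-a$ planar. Your proposal would need a comparable device for the odd parity and, more seriously, a working crossover construction; the paper's choice of starting problem makes both issues tractable and is the point you are missing.
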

   
In the following, we reduce the problem of counting matchings on planar 3-regular multigraphs, which is \#P- hard by \cite{xia2007computational}, to a specific problem on $\mathcal{PLA}$ in Section 6.1. Then, we reduce this problem to a \#CSP problem on $\mathcal{PLA}$ by simulating $[0,0,0,1,0]$ in a planar way in Section 6.2. The following theorem is our starting point.
\begin{theorem}[\cite{xia2007computational}]
    Counting matchings on a planar 3-regular multigraph is \#P-hard.
    \label{K6origin}
\end{theorem}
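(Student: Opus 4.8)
\textbf{Proof proposal for Theorem~\ref{K6origin}.}
The plan is the classical three-stage strategy for monomer--dimer hardness: establish $\text{\#P}$-hardness of counting matchings in the general setting, reduce the maximum degree to $3$, and then planarize via a crossover gadget. Throughout, recall that counting matchings of a graph $G=(V,E)$ is exactly $\text{Holant}(\{[1,1,0,\dots,0]_k:k\ge 1\})$ on $G$, since the signature $[1,1,0,\dots,0]_k$ enforces that at most one edge incident to a vertex is selected; on a $3$-regular multigraph only $[1,1,0,0]$ occurs. More generally, if we attach to each vertex a small gadget contributing a tunable multiplicative weight whenever that vertex is left unmatched, the resulting quantity is, up to a monomial, the \emph{matching polynomial} $\sum_M x^{|V|-2|M|}$, whose coefficient of $x^0$ is the number of perfect matchings of $G$.

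First I would show that counting matchings in general graphs is $\text{\#P}$-hard. By Valiant's theorem, counting perfect matchings is $\text{\#P}$-hard, so it suffices to recover the matching polynomial of an arbitrary $G$ from an oracle for the number of matchings. This is polynomial interpolation in the spirit of Lemma~\ref{interp}: replace each vertex $v$ of $G$ by a short pendant gadget parametrized by an integer $j$, so that leaving (a copy of) $v$ unmatched picks up a weight depending on $j$; query the oracle on the polynomially many modified graphs; and solve the resulting Vandermonde system for all coefficients of the matching polynomial. Reading off the coefficient of $x^0$ gives $\text{\#PM}(G)$, so counting matchings in general graphs is $\text{\#P}$-hard.

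Next I would reduce to $3$-regular multigraphs. Each vertex of degree $d\ge 4$ is replaced by a gadget built entirely from degree-$3$ vertices carrying $[1,1,0,0]$ (parallel edges are allowed, since we work with multigraphs), designed so that, together with a polynomial-interpolation cleanup as in Corollary~\ref{corointerp}, it realizes the arity-$d$ At-Most-One signature up to a computable constant; vertices of degree $1$ or $2$ are raised to degree $3$ by analogous pendant/series gadgets. After this stage the instance is a $3$-regular multigraph, still possibly non-planar.

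The final stage, which I expect to be the main obstacle, is planarization. Unlike the perfect-matching signature $[0,0,1,0]$ --- for which no planar crossover gadget can exist, consistently with the FKT algorithm --- the monomer--dimer signature $[1,1,0,0]$ \emph{does} admit a planar crossover gadget: a planar $3$-regular gadget with four dangling edges whose signature equals a constant multiple of the effect of two crossing edges, possibly accompanied by auxiliary weighted binary signatures that are then removed by polynomial interpolation exactly as in the proof of Lemma~\ref{movecrossing} following \cite{curticapean2022parameterizing}. Drawing the $3$-regular instance in the plane and replacing each crossing by this gadget produces a planar $3$-regular multigraph whose matching count determines that of the original instance; composing the three reductions yields $\text{\#P}$-hardness. (Alternatively, one can first herd all crossings into a single annular region using the ``ring blowup'' construction of \cite{thilikos2022killing} invoked in Lemma~\ref{movecrossing}, and then resolve the crossings there with the crossover gadget.) This establishes Theorem~\ref{K6origin}.
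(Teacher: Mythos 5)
You should first note that the paper does not prove Theorem~\ref{K6origin} at all: it is imported as a black box from \cite{xia2007computational} and serves only as the starting point of the reduction in Section~\ref{secK6}. So there is no internal proof to compare against; the only question is whether your sketch would itself constitute a proof, and in its current form it would not.

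The genuine gap is that the two gadgets your reduction hinges on are asserted rather than constructed, and neither can exist in the ``built entirely from degree-$3$ vertices carrying $[1,1,0,0]$'' form you first describe. The support of every at-most-one signature $[1,1,0,\dots,0]_k$ is closed downward in the coordinatewise order, and this is preserved under gadget composition with nonnegative weights: if $Z(GG^{\tau})>0$ and $\tau'\le\tau$, then restricting a witnessing extension of $\tau$ to $\tau'$ again yields a term in which every vertex still sees at most one selected edge, so $Z(GG^{\tau'})>0$. Consequently any nonnegatively weighted monomer--dimer gadget realizes a signature that is strictly positive at every point below a support point. The crossover signature $[x_1=x_3][x_2=x_4]$ violates this (nonzero at $1111$, zero at $1100$), and the arity-$d$ At-Most-One signature for $d\ge 4$ violates it too: it must vanish on weight-$2$ inputs, yet when two dangling edges of the gadget do not share an endpoint (unavoidable for $d\ge4$ with internal degree $\le 3$), the all-zero internal extension contributes positively. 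So both the degree-reduction and the planarization require cancellation, i.e., negative or complex weights, and those must themselves be interpolated from planar $3$-regular monomer--dimer structures --- you gesture at this with ``auxiliary weighted binary signatures,'' but supplying such a crossover gadget and verifying it is precisely the technical core of \cite{xia2007computational} (and of Jerrum's earlier monomer--dimer hardness proof, whose planarization step famously required an erratum). Your first stage --- recovering the matching polynomial from a matchings oracle by pendant-path interpolation --- is standard and fine; the other two stages are the theorem, and the proposal defers them.
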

%First, we show a special case, that counting perfect matchings on planar graph with a single apex vertex assigned a symmetric matchgate signature can be $\text{\#P}$-hard, and this would be our starting point. Then, we assume the planar part to be 3-regular, and our claim still holds. Finally, we need to simulate $[0,0,0,1,0]$ to give a general dichotomy about this case. 
\subsection{Reducing to a problem on apex graphs}

We first reduce the problem of counting matchings on planar 3-regular graphs to the following form:
\par For a 3-regular planar multigraph $G=(V,E)$, we construct $G_a=(V_a,E_a)$ where $V_a=V\cup \{a\}$ and $E_a=E\cup \{va|v\in V\}$. 
    %\item Add 2 copies of $G$ to $G_a$, namely $G_1$ and $G_2$;
    %\item Add an apex vertex $a$ to $G_a$;
    %\item For each $v\in V$, add an edge $(a,v)$ to $G_a$;
    %\item For each edge $(u,v)\in G$, we replace it with a path $uwv$, where $w$ is a vertex of degree 2.
We also assign signatures from $(\{[0,0,0,1,0]\}\cup \widehat{\mathcal{EQ}})$ to vertices in $G_a$ as follows: we assign a $[0,0,0,1,0]$ signature to each vertex in $V$, and a $[1,0,1,0,1,...,1]$ signature to $a$. 

Now we show that the value of $Z(G_a)$ is exactly the number of matchings
 in $G$.
 \begin{lemma}
     The value of $Z(G_a)$, as previously defined, is exactly the number of matchings in $G$.
 \end{lemma}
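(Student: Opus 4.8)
The plan is to exhibit an explicit bijection between the assignments $\sigma:E_a\to\{0,1\}$ with nonzero weight $\omega(\sigma)$ and the matchings of $G$, under which every such $\sigma$ contributes weight exactly $1$; summing then gives $Z(G_a)$ equal to the number of matchings of $G$. (That $G_a\in\mathcal{PLA}$ is immediate since $G_a-a=G$ is planar, but that is not needed for the value computation itself.)

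First I would record the local behaviour at each $v\in V$. Since $G$ is $3$-regular, $v$ has its three $G$-edges plus the extra edge $va$, hence degree $4$ in $G_a$, and its signature $[0,0,0,1,0]$ equals $1$ exactly when three of these four incident edges are assigned $1$, and $0$ otherwise. Thus for any $\sigma$ with $\omega(\sigma)\ne0$, at every $v$ precisely one incident edge is assigned $0$. Given such a $\sigma$, I would set $M'=\{e\in E:\sigma(e)=0\}$, the set of $G$-edges assigned $0$. If $\sigma(va)=0$ then all three $G$-edges at $v$ are assigned $1$, so $v$ is not covered by $M'$; if $\sigma(va)=1$ then exactly one $G$-edge at $v$ is assigned $0$, so $v$ is covered by exactly one edge of $M'$. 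Either way $v$ lies on at most one edge of $M'$, so $M'$ is a matching, and moreover $v\in V(M')$ iff $\sigma(va)=1$. Consequently $\sigma$ is completely determined by $M'$ (it is the indicator of $E\setminus M'$ on $E$, and the indicator of $V(M')$ on the edges $va$), so $\sigma\mapsto M'$ is injective on nonzero-weight assignments.

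Conversely, I would check that every matching $M'$ of $G$ arises this way with weight $1$: define $\sigma$ by $\sigma(e)=0$ for $e\in M'$, $\sigma(e)=1$ for $e\in E\setminus M'$, and $\sigma(va)=1$ iff $v\in V(M')$. Then each $v\in V(M')$ has two $G$-edges and $va$ assigned $1$, and each $v\notin V(M')$ has its three $G$-edges assigned $1$ and $va$ assigned $0$; in both cases exactly three incident edges are assigned $1$, so the $[0,0,0,1,0]$ factor equals $1$. The apex $a$ sees exactly $|V(M')|=2|M'|$ of its incident edges assigned $1$, an even number; and since $G$ is $3$-regular $|V|$ is even, so the signature $\widehat{=_{|V|}}=[1,0,1,0,\dots,1]$ assigned to $a$ evaluates to $1$ on inputs of even Hamming weight. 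Hence $\omega(\sigma)=1$.

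Combining the two directions, $Z(G_a)=\sum_\sigma\omega(\sigma)$ counts exactly the matchings of $G$. There is no serious obstacle in this argument; the only points requiring care are that one must read the matching off the edges assigned $0$ (the \emph{complement} of the chosen edges), rather than directly off the edges assigned $1$, and that the parity bookkeeping at $a$ goes through precisely because a matching's vertex set has even size and $|V|$ is even in a $3$-regular graph.
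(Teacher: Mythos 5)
Your proposal is correct and follows essentially the same route as the paper: both establish a weight-preserving bijection between matchings of $G$ and nonzero-weight assignments of $E_a$, using the observation that $[0,0,0,1,0]$ forces exactly one incident edge to $0$ at each $v\in V$ and that the parity constraint at $a$ is automatically satisfied because a matching covers an even number of vertices. Your write-up is somewhat more explicit about injectivity (recovering $\sigma$ from $M'$), but this is the same argument.
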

 \begin{proof}
      For each assignment $\sigma$ of $E_a$, $\omega(\sigma)$ can only be either 0 or 1 since the value of each signature in $\{[0,0,0,1,0]\}\cup \widehat{\mathcal{EQ}}$  can only provide a multiplier with value 0 or 1. Furthermore, if $\omega(\sigma)=1$, then for each $v\in V$, only one edge incident to $v$ can be assigned the value 0.  As a result, all the edges that belong to $E$ and are assigned the value 0 form a matching of $G$.
      
    % If the value is 1, we call such an assignment a \textit{1-assignment}.
On the other hand, if $M$ is a matching in $G$, we can construct an assignment $\sigma$ of $E_a$ such that $\omega(\sigma)=1$. For each edge $e\in M$, $\sigma$ assigns the value 0 to $e$ in $G_a$. For each vertex $v\in V$ not covered by $M$, $\sigma$ assigns the value 0 to the edge $va$ in $G_a$. All other edges in $G_a$ are assigned the value 1. For each $v\in V$, exactly one edge incident to it is assigned 0. Furthermore, since $M$ covers an even number of vertices, the number of edges incident to $a$ and assigned the value 1 is even. Consequently, $\omega(\sigma)=1$.
 %\par If there is a 1-assignment of $I$, we can also find a matching $M$ in $G$. An edge $(u,v)\in G$ belongs to $M$, if and only if the values of $(u,w),(w,v)$ are both 0 in the 1-assignment. Since all vertices in $G$ are assigned $[0,0,0,1,0]$, $M$ cannot contain 2 edges that share a common end, and thus it is a matching.

It can also be verified that, the previous construction in fact builds a bijective mapping between matchings in $G$ and assignments of $E_a$ whose value is 1. 
Consequently, we have the value of $Z(G_a)$ is exactly the number of matchings in $G$.
\end{proof}
\subsection{Realizing [0,0,0,1,0] with [0,0,1,0]}
%In this part, we first show that for each symmetric signature $f\in \mathscr{M}-\mathscr{A}$, we can simulate the $[0,0,0,1,0]$ signature in $G_a$ as defined in the last part. By Lemma \ref{lem0010simulate}, $[0,0,1,0]$ can be simulated in a planar way. 
In this part, we first prove the following lemma:
\begin{lemma}
   For each signature $f\in \mathscr{M}_P-\mathscr{A}$, each $[0,0,0,1,0]$ signature assigned to a vertex in $G_a$ can be simulated by $\{f,[1,0],[1,0,1],[1,0,1,0]\}$, such that after replacing each vertex in $G$ with the corresponding gadget in $G_a$, the obtained graph $G_a'$ is an apex graph with the apex vertex $a$.
   \label{lem3to4simulate}
\end{lemma}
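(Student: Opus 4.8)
The plan is to carry out, one arity higher, the strategy that Lemma~\ref{lem0010simulate} used for $[0,0,1,0]$, and then to argue that the resulting substitution keeps the host graph apex. First I would apply Lemma~\ref{lem:AsymtoSym}: it replaces the given $f\in\mathscr{M}_P-\mathscr{A}$ by a \emph{symmetric} $g\in\mathscr{M}-\mathscr{A}$ realized from $\{f\}\mid\{[1,0],[1,0,1],[1,0,1,0]\}$ by a planar left-side gadget, and by Lemma~\ref{M-Aform} such a $g$ is (a scalar multiple of) one of five shapes. Thus it suffices, for each shape, to build a planar left-side gadget of $\text{Holant}(\{g\}\mid\{[1,0],[1,0,1],[1,0,1,0]\})$ whose signature is $[0,0,0,1,0]$ of arity~$4$ and whose four dangling edges sit on the outer face in a prescribed cyclic order $(d_1,d_2,d_3,d_a)$; then, substituting a copy of this gadget for each $v\in V$ of the auxiliary apex instance $G_a$, attaching $d_1,d_2,d_3$ along the rotation of $v$ in the fixed planar embedding of $G$ and $d_a$ to the edge $va$, will finish the argument.

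For the concrete gadget I would first reuse the gadgets already built in the proof of Lemma~\ref{lem0010simulate} to recover $[0,0,1,0]$ of arity~$3$, and pin one input of $g$ by $[1,0]$ to get the binary crossover $[0,1,0]$ (for the rooted shapes $[1,0,r,0,r^2,\dots]_k$ and $[0,1,0,r,0,\dots]_k$ one first applies, as in Lemma~\ref{lem0010simulate}, the vertex/edge gadgets of Figures~\ref{fig:vertex4},~\ref{fig:edge4} and interpolation through the powers $r^k$ via Corollary~\ref{corointerp} to reduce to a matching-indicator shape). The arity-$4$ target is then the path gadget $u\!-\!m\!-\!w$ with $u,w$ carrying $[0,0,1,0]_3$ and two dangling edges each ($d_1,d_2$ at $u$, $d_3,d_a$ at $w$) and $m$ carrying $[0,1,0]$, for which a one-line computation gives
\[
\sum_{y,y'}[0,0,1,0](x_1,x_2,y)\,[0,1,0](y,y')\,[0,0,1,0](x_3,x_4,y')=[0,0,0,1,0](x_1,x_2,x_3,x_4);
\]
a planar drawing of it lists $d_1,d_2,d_3,d_a$ on the outer face in that cyclic order (when $g$ is already of the form $[0,\dots,0,1,0]_k$ one instead imitates the ``hand'' gadget of Figure~\ref{fig:k-3hand} while leaving four dangling edges). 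By Remark~\ref{remGadConstruct} every internal $g$--$g$ edge gets an inserted $[1,0,1]$, so the whole gadget, and hence $G_a'$, uses only signatures from $\{f,[1,0],[1,0,1],[1,0,1,0]\}$; and deleting $a$ from $G_a'$ leaves $G$ with each vertex blown up into the planar gadget minus its $d_a$ leg, inserted compatibly with the rotation system, which is planar. Hence $G_a'$ is an apex graph with apex $a$.

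I expect the main obstacle to be the shapes $[1,0,r]$ and $[1,0,r,0,r^2,\dots]_k$ (and $[0,\dots,0,1,0]_k$ with $k$ odd), which are \emph{even-parity} signatures. Because $[1,0],[1,0,1],[1,0,1,0]$ are even-parity as well while $[0,0,0,1,0]$ of arity~$4$ is odd-parity, no standalone gadget over $\{g,[1,0],[1,0,1],[1,0,1,0]\}$ can have signature $[0,0,0,1,0]$, so a purely local substitution into $G_a$ cannot work for these $g$; one is forced to work with the edge to $a$ itself and to re-choose the equality function carried by $a$ inside $\widehat{\mathcal{EQ}}$ so that the global parity balances --- effectively reducing, for these shapes, from a parity-invariant counting problem (Eulerian subgraphs / the Ising partition function on apex graphs) rather than directly from the matching count --- all while keeping the host graph apex and planar off $a$. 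Making this apex ``parity bookkeeping'' work uniformly, and keeping every gadget planar with the right cyclic order of dangling edges, is the delicate part; the odd-parity shapes are otherwise a routine upgrade of Lemma~\ref{lem0010simulate}.
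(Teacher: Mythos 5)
You correctly identify the obstacle --- for even-parity $g$ (i.e., $[1,0,r]$, $[1,0,r,0,r^2,\dots]_k$, and $[0,\dots,0,1,0]_k$ with $k$ odd), every signature in $\{g,[1,0],[1,0,1],[1,0,1,0]\}$ has even parity, so no standalone planar left-side gadget over these can realize the odd-parity $[0,0,0,1,0]$, and your path gadget (which needs the odd-parity $[0,1,0]$ or $[0,0,1,0]$) cannot be built in those cases. But you do not resolve it: your suggestion to ``re-choose the equality at $a$'' or to ``reduce instead from Eulerian subgraphs / Ising'' is not carried out, would change the overall reduction rather than prove the stated lemma, and is not obviously repairable since changing the apex signature is a single global choice while the parity deficit occurs locally at every $v\in V$. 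That leaves a genuine gap in roughly half the cases of Lemma~\ref{M-Aform}.

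The paper closes exactly this gap with a different idea you do not consider: the \emph{planar pairing} trick of \cite[Def.~5.3, Lem.~5.4]{guo2020complexity}. Having obtained $[0,0,1,0]$ from Lemma~\ref{lem0010simulate} (this is allowed to use interpolation), the paper realizes $[0,0,0,1,0]$ from $[0,0,1,0]$'s together with a single $[0,1]$ pin; the pin itself is odd-parity and cannot be made locally, but $[0,0,1]=[0,1]^{\otimes 2}$ \emph{can} be made (it is even-parity, obtained by pinning one leg of $[0,0,1,0]$ with $[1,0]$). Since $G_a-a$ is $3$-regular it has an even number of vertices, so there is a planar pairing $M$; placing one $[0,0,1]$ gadget per pair in $M$ supplies both needed $[0,1]$'s simultaneously, and because each $[0,0,0,1,0]$ is symmetric the local gadget at each $v$ can be rotated so that its $[0,1]$ leg points into the face containing the pairing edge. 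This keeps the construction planar away from $a$ and hence $G_a'$ apex. This pairing step is the key idea missing from your proposal; without it, the lemma is false as a claim about standalone per-vertex gadgets, which is why the paper's notion of ``simulate'' must be read as the above collective, planarity-preserving substitution.
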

    In fact, $[0,0,0,1,0]$ may not be simulated directly by signatures in $\{f,[1,0],[1,0,1],[1,0,1,0]\}$, since it is of odd parity and all the signatures from $\{f,[1,0],[1,0,1],[1,0,1,0]\}$ can be of even parity. However, since a 3-regular graph has an even number of vertices, the number of vertices assigned the $[0,0,0,1,0]$ signature in $G_a$ is also even,  and the following definition and lemma from \cite{guo2020complexity} would help us overcome this difficulty: 
   
    \begin{definition}\cite[Definition 5.3]{guo2020complexity}
        A planar pairing $M$ in a planar graph $G=(V,E)$ is a set of pairs of vertices, such that $G'=(V,E\cup M)$ is still planar and $M$ is a perfect matching in $G'$. 
        %Here, $\Cup$ denotes the disjoint union.
    \end{definition}
    \begin{lemma}{\cite[Lemma 5.4]{guo2020complexity}}
    Any 3-regular planar multigraph has a planar pairing, and it can be found in polynomial time.
        \label{lemplanar_pairing}
    \end{lemma}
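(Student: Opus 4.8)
The plan is to prove the existence of a planar pairing by induction on $|V(G)|$, and to observe that the inductive construction runs in polynomial time.

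First I reduce to the connected case: a connected component of a $3$-regular planar multigraph is again such a multigraph, hence has an even number of vertices (its degree sum $3|V|$ is even), so by induction each component has a planar pairing and their union is a planar pairing of $G$ (a disjoint union of planar graphs is planar). So assume $G$ is connected with a fixed planar embedding; when $|V(G)|\le 2$ the only possibilities (the $3$-fold dipole, or an edge with a loop at each end) admit the obvious pairing, realized by one extra parallel edge. For the inductive step I use that, since $2|E|=3|V|$, the average face length is $6|V|/(|V|+4)<6$, so there is a face $F$ with $\ell(F)\le 5$. If $\ell(F)\le 2$ then $G$ has two parallel edges or a loop, and a short direct surgery deletes two vertices (an endpoint pair, reconnecting the stubs) to give a smaller $3$-regular plane multigraph $G'$. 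If $3\le\ell(F)\le 5$ I \emph{contract} $F$: delete the $\ell(F)$ vertices and boundary edges of $F$ and insert a gadget on $\ell(F)-2$ new vertices carrying the $\ell(F)$ edges leaving $F$, distributed consistently with the cyclic order so the drawing stays planar; again $|V(G')|=|V(G)|-2$.

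By induction $G'$ has a planar pairing $M'$, which I lift to $G$ as follows. Away from the contracted region the embeddings of $G$ and $G'$ agree up to the renaming of the new vertices, so every matching chord of $M'$ not incident to a new vertex is kept verbatim. Each of the at most $\ell(F)-2$ chords of $M'$ incident to a new vertex leaves through one of its three face-corners, and every face of $G'$ at a new vertex corresponds, after undoing the contraction, to a face of $G$ incident to a specific vertex on $F$, so that chord is slid to terminate at that vertex of $F$. The two vertices of $F$ not reached by these re-routed chords are then matched by one more chord drawn inside $F$ itself, which is a face of $G$ containing no $M'$-chord, hence meets nothing (the re-routed chords lie outside $F$). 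A short case analysis shows the re-routing can always be arranged so that all endpoints stay distinct, and the resulting $M$ is a perfect matching of $V(G)$ with $G\cup M$ planar. For the running time, a planar embedding is computed in linear time, and there are $|V(G)|/2$ reductions, each together with its lifting costing $O(|V(G)|)$ time, so $M$ is produced in polynomial time.

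The step I expect to be the main obstacle is the lifting: one must verify that undoing each reduction is a purely local surgery confined to one face, maintain the exact correspondence between the faces of $G'$ at the new vertices and the faces of $G$, and check — for each configuration ($\ell(F)\in\{3,4,5\}$ and the parallel-edge/loop cases) — that the re-routed chords together with the extra chord inside $F$ are pairwise non-crossing and cover every vertex exactly once (in particular, that a system of distinct re-routing targets always exists). This is routine but requires a careful case-by-case treatment.
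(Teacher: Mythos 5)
First, a point of comparison: the paper does not prove this lemma at all — it is imported verbatim as Lemma~5.4 of \cite{guo2020complexity} — so your argument can only be judged on its own merits. Your skeleton (strong induction on $|V|$, reduction to the connected case, locating a face of length at most $5$ via Euler's formula, a local surgery removing two vertices, then lifting the pairing) is plausible, and the Euler count is correct. But the two steps you defer as ``routine'' are where the entire content of the lemma sits, and as written they contain genuine gaps. The first: in a plane \emph{multigraph}, a face of length $\ell(F)\le 5$ need not be bounded by a simple cycle. The boundary walk may traverse a bridge twice or revisit a vertex (this is forced whenever a degree-$3$ vertex is visited twice by the same face, and it also occurs around loops and cut vertices), so a face of length $4$ may have only two or three distinct vertices. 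Then ``delete the $\ell(F)$ vertices of $F$ and insert a gadget on $\ell(F)-2$ new vertices'' is not even well-defined, and your $\ell(F)\le 2$ escape hatch does not cover these configurations. The case analysis has to be restructured around bridges and loops, which you have not done.

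The second gap is in the lifting. You need the chords of $M'$ to lie in the faces of the \emph{specific} embedding of $G'$ inherited from $G$ through the surgery, since that is the embedding in which ``every face of $G'$ at a new vertex corresponds to a face of $G$''. But the induction hypothesis only delivers that $G'\cup M'$ is planar, i.e., admits \emph{some} planar embedding; its restriction to $G'$ may be a different embedding (plane multigraphs of low connectivity have many). To make the argument sound you must strengthen the statement being proved to: for every plane embedding of $G$, there is a perfect pairing $M$ whose chords can be drawn inside faces of that embedding. The same strengthening is what legitimizes keeping the untouched chords ``verbatim''. Finally, the deferred re-routing analysis — up to three chords attached to the gadget vertices, including the case where $M'$ matches two gadget vertices to each other, with all lifted endpoints distinct and all chords pairwise non-crossing, for each $\ell(F)\in\{3,4,5\}$ and each distribution of the leaving edges among the gadget vertices — is the actual proof of the lemma, and it is asserted rather than carried out. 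As submitted, the proposal establishes the easy outer shell and leaves the hard part unproven.
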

    Now we are ready to prove Lemma \ref{lem3to4simulate}:
    \begin{figure*}
	\centering
    \subcaptionbox{\label{fig:gad[0,0,0,1,0]}}{ \includegraphics[width=0.4\textwidth]{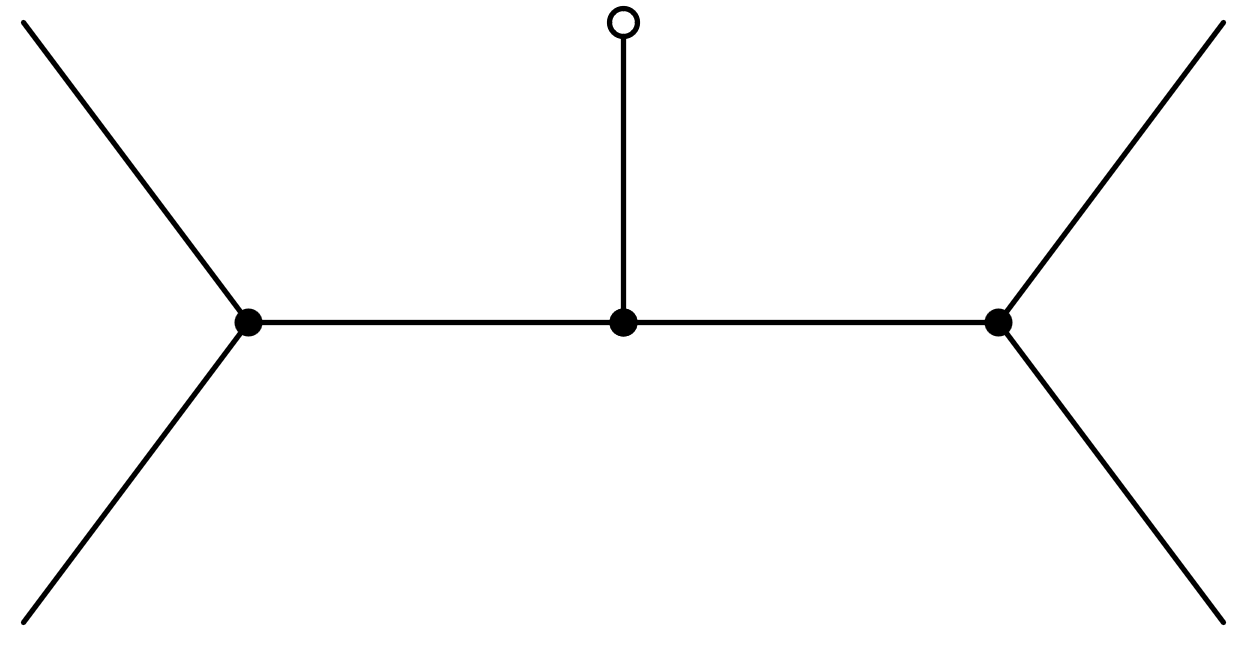}}
		\subcaptionbox{\label{fig:gad[0,0,1]}}{\includegraphics[width=0.2\textwidth]{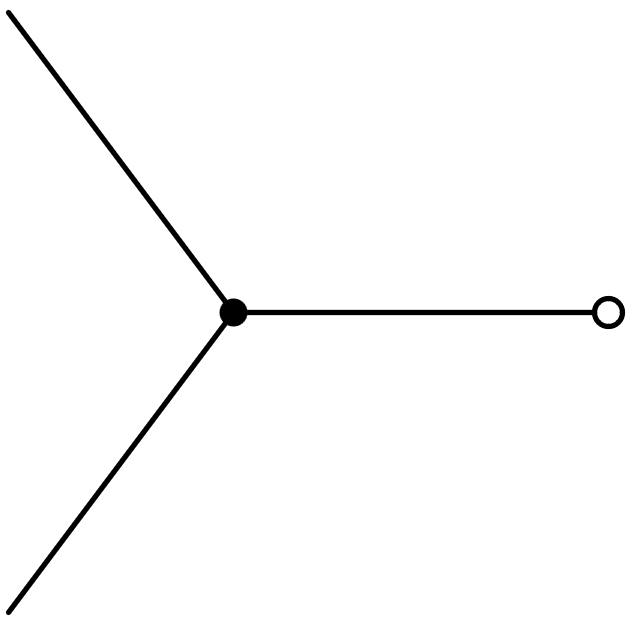}}
	\caption{Gadgets used in the proof of Lemma \ref{lem3to4simulate}.} 
	\label{fig:gadgets6}
\end{figure*}
\begin{proof}
   By Lemma \ref{lem0010simulate}, the signature $[0,0,1,0]$ can be simulated by signatures in $\{f,[1,0],$ $[1,0,1],[1,0,1,0]\}$ in a planar left-side manner. By assigning a $[0,1]$ signature to the vertex of degree 1 labelled by a hollow circle, and $[0,0,1,0]$ to each vertex of degree 3 labelled by a solid circle in Figure \ref{fig:gad[0,0,0,1,0]}, we realize the signature $[0,0,0,1,0]$\footnote{Realizations in this proof also follows the instructions in Remark \ref{remGadConstruct}.}. Now we only need to realize a $[0,1]$ signature for each $[0,0,0,1,0]$. By assigning a $[1,0]$ signature to the vertex of degree 1 labelled by a hollow circle, and $[0,0,1,0]$ to the vertex of degree 3 labelled by a solid circle in Figure \ref{fig:gad[0,0,1]}, we realize the signature $[0,0,1]$, which equals $[0,1]^{\otimes 2}$.  Since $G_a-a$ is a 3-regular planar multigraph, $G_a-a$ has a planar pairing $M$ by Lemma \ref{lemplanar_pairing}.  We can use gadgets of the $[0,0,1]$ signature to replace all the $[0,1]$ signatures that appears in the gadgets of the $[0,0,0,1,0]$ signature, based on the planar pairing $M$. 
   Notice that for each vertex $v\in G_a-a$, since the $[0,0,0,1,0]$ signature assigned to it is symmetric, when replacing $v$ with the gadget of $[0,0,0,1,0]$ in Figure \ref{fig:gad[0,0,0,1,0]}, the $[0,1]$ signature may appear in any face incident to $v$ in $G_a-a$ by rotating the gadget properly. Thus after replacing all the $[0,1]$ signatures with gadgets of $[0,0,1]$ signature, the obtained graph, which we denoted as $G_a'$, is still an apex graph with the same apex vertex $a$.
\end{proof}

Again, by Remark \ref{remGadConstruct}, $G_a'$ is an instance of $\text{Holant}(\mathcal{F}|\widehat{\mathcal{EQ}})\langle \mathcal{PLA} \rangle$ since $[1,0], [1,0,1],$ $[1,0,1,0], [1,0,1,0,1,...,1] \in \widehat{\mathcal{EQ}}$ by Lemma \ref{transEQ}, and consequently by Theorem \ref{K6origin} the proof of Lemma \ref{thmPLA} is completed.

Combining Theorems \ref{CSPK4}, \ref{CSPK5}, \ref{CSPK6}, \ref{CSPK7} and \ref{CSPK8}, our main result Theorem \ref{allthm2}, along with Theorem \ref{allthm}, has been proved.

\section{Conclusions and future directions}\label{conclusion}
In this article, we prove dichotomies for \#CSP and bounded degree \#CSP on a selection of typical and pivotal graph classes, and thus obtain the complete \#CSP and bounded degree \#CSP dichotomies on graphs that forbids an arbitrary clique as a minor.

We remark that several minor-closed graph classes have been introduced in this article, such as $\mathcal{PLA}$ and  $\mathcal{TWG}_h$. Identifying the corresponding forbidden minor set for these classes would be of independent interest.

%Though our theorem gives a dichotomy for sym-\#CSP on a number of graph classes, there are still untypical graph classes that are not concluded in our theorem yet. 
It is worth study to give a dichotomy for \#CSP and bounded degree \#CSP on arbitrary minor-closed graph class. In particular, we wonder whether there exists any minor-closed graph class $\mathcal{C}$ satisfying $\mathcal{PL}\nsubseteq \mathcal{C}$, such that new nontrivial tractable cases other than those in \cite{cai2017holographicuni} can be found in \#CSP or bounded degree \#CSP on $\mathcal{C}$.

Extending our results to sym-Holant is also worth study. For sym-Holant, there exists a tractable case that is not able to be subsumed by FKT algorithm on planar graphs \cite{cai2022fkt}. This may present certain challenges in the study. Besides, an algorithm for sym-Holant on graphs with bounded treewidth is of independent interest and will be involved in another paper of ours. 

%As mentioned in \cite{curticapean2022parameterizing} and \cite{thilikos2022killing}, an interesting direction after this result is

%new article?

%\newpage
%%
%% Bibliography
%%

%% Please use bibtex, 

\bibliography{ref}

\begin{thebibliography}{10}

\bibitem{backens2017complete}
Miriam Backens.
\newblock A complete dichotomy for complex-valued $\text{Holant}^{c}$.
\newblock {\em arXiv preprint arXiv:1704.05798}, 2017.

\bibitem{cai2007valiant}
Jin-Yi Cai and Vinay Choudhary.
\newblock Valiant’s {Holant} theorem and matchgate tensors.
\newblock {\em Theoretical Computer Science}, 384(1):22--32, 2007.

\bibitem{cai2023planar}
Jin-Yi Cai and Austen~Z Fan.
\newblock Planar 3-way edge perfect matching leads to a {Holant} dichotomy.
\newblock {\em arXiv preprint arXiv:2303.16705}, 2023.

\bibitem{cai2017holographicuni}
Jin-Yi Cai and Zhiguo Fu.
\newblock Holographic algorithm with matchgates is universal for planar \#{CSP} over {Boolean} domain.
\newblock In {\em Proceedings of the 49th Annual ACM SIGACT Symposium on Theory of Computing}, pages 842--855, 2017.

\bibitem{cai2022fkt}
Jin-Yi Cai, Zhiguo Fu, Heng Guo, and Tyson Williams.
\newblock {FKT} is not universal—a planar {Holant} dichotomy for symmetric constraints.
\newblock {\em Theory of Computing Systems}, 66(1):143--308, 2022.

\bibitem{cai2021new}
Jin-Yi Cai, Zhiguo Fu, and Shuai Shao.
\newblock New planar {P-time} computable six-vertex models and a complete complexity classification.
\newblock In {\em Proceedings of the 2021 ACM-SIAM Symposium on Discrete Algorithms (SODA)}, pages 1535--1547. SIAM, 2021.

\bibitem{cai2013matchgates}
Jin-Yi Cai and Aaron Gorenstein.
\newblock Matchgates revisited.
\newblock {\em arXiv preprint arXiv:1303.6729}, 2013.

\bibitem{cai2013complete}
Jin-Yi Cai, Heng Guo, and Tyson Williams.
\newblock A complete dichotomy rises from the capture of vanishing signatures.
\newblock In {\em Proceedings of the forty-fifth annual ACM symposium on Theory of computing}, pages 635--644, 2013.

\bibitem{cai2012spin}
Jin-Yi Cai and Michael Kowalczyk.
\newblock Spin systems on k-regular graphs with complex edge functions.
\newblock {\em Theoretical Computer Science}, 461:2--16, 2012.

\bibitem{cai2012gadgets}
Jin-Yi Cai, Michael Kowalczyk, and Tyson Williams.
\newblock Gadgets and anti-gadgets leading to a complexity dichotomy.
\newblock In {\em Proceedings of the 3rd Innovations in Theoretical Computer Science Conference}, pages 452--467, 2012.

\bibitem{cai2009holant}
Jin-Yi Cai, Pinyan Lu, and Mingji Xia.
\newblock {Holant} problems and counting {CSP}.
\newblock In {\em Proceedings of the forty-first annual ACM symposium on Theory of computing}, pages 715--724, 2009.

\bibitem{cai2011dichotomy}
Jin-Yi Cai, Pinyan Lu, and Mingji Xia.
\newblock Dichotomy for {Holant}* problems of {Boolean} domain.
\newblock In {\em Proceedings of the twenty-second annual ACM-SIAM symposium on Discrete Algorithms}, pages 1714--1728. SIAM, 2011.

\bibitem{cai2014complexity}
Jin-Yi Cai, Pinyan Lu, and Mingji Xia.
\newblock The complexity of complex weighted {Boolean \#CSP}.
\newblock {\em Journal of Computer and System Sciences}, 80(1):217--236, 2014.

\bibitem{cai2017holographic}
Jin-Yi Cai, Pinyan Lu, and Mingji Xia.
\newblock Holographic algorithms with matchgates capture precisely tractable planar \#{CSP}.
\newblock {\em SIAM Journal on Computing}, 46(3):853--889, 2017.

\bibitem{cai2018dichotomy}
Jin-Yi Cai, Pinyan Lu, and Mingji Xia.
\newblock Dichotomy for real $\text{Holant}^c$ problems.
\newblock In {\em Proceedings of the Twenty-Ninth Annual ACM-SIAM Symposium on Discrete Algorithms}, pages 1802--1821. SIAM, 2018.

\bibitem{cai2023complexity}
Jin-Yi Cai and Ashwin Maran.
\newblock The complexity of counting planar graph homomorphisms of domain size 3.
\newblock In {\em Proceedings of the 55th Annual ACM Symposium on Theory of Computing}, pages 1285--1297, 2023.

\bibitem{creignou2001complexity}
Nadia Creignou, Sanjeev Khanna, and Madhu Sudan.
\newblock {\em Complexity classifications of {Boolean} constraint satisfaction problems}.
\newblock SIAM, 2001.

\bibitem{curticapean2014counting}
Radu Curticapean.
\newblock Counting perfect matchings in graphs that exclude a single-crossing minor.
\newblock {\em arXiv preprint arXiv:1406.4056}, 2014.

\bibitem{curticapean2015parameterizing}
Radu Curticapean and Mingji Xia.
\newblock Parameterizing the permanent: Genus, apices, minors, evaluation mod 2k.
\newblock In {\em 2015 IEEE 56th Annual Symposium on Foundations of Computer Science}, pages 994--1009. IEEE, 2015.

\bibitem{curticapean2022parameterizing}
Radu Curticapean and Mingji Xia.
\newblock Parameterizing the permanent: Hardness for fixed excluded minors.
\newblock In {\em Symposium on Simplicity in Algorithms (SOSA)}, pages 297--307. SIAM, 2022.

\bibitem{eppstein2019nc}
David Eppstein and Vijay~V Vazirani.
\newblock Nc algorithms for computing a perfect matching, the number of perfect matchings, and a maximum flow in one-crossing-minor-free graphs.
\newblock In {\em The 31st ACM Symposium on Parallelism in Algorithms and Architectures}, pages 23--30, 2019.

\bibitem{galluccio1999theory}
Anna Galluccio and Martin Loebl.
\newblock On the theory of {Pfaffian} orientations. i. perfect matchings and permanents.
\newblock {\em the electronic journal of combinatorics}, pages R6--R6, 1999.

\bibitem{ganian2018sum}
Robert Ganian, Eun~Jung Kim, Friedrich Slivovsky, and Stefan Szeider.
\newblock Sum-of-products with default values: Algorithms and complexity results.
\newblock In {\em 2018 IEEE 30th International Conference on Tools with Artificial Intelligence (ICTAI)}, pages 733--737. IEEE, 2018.

\bibitem{guo2020complexity}
Heng Guo and Tyson Williams.
\newblock The complexity of planar {Boolean} \#{CSP} with complex weights.
\newblock {\em Journal of Computer and System Sciences}, 107:1--27, 2020.

\bibitem{kasteleyn1967graph}
Pieter Kasteleyn.
\newblock Graph theory and crystal physics.
\newblock {\em Graph theory and theoretical physics}, pages 43--110, 1967.

\bibitem{kasteleyn1961statistics}
Pieter~W Kasteleyn.
\newblock The statistics of dimers on a lattice: I. the number of dimer arrangements on a quadratic lattice.
\newblock {\em Physica}, 27(12):1209--1225, 1961.

\bibitem{kasteleyn1963dimer}
Pieter~W Kasteleyn.
\newblock Dimer statistics and phase transitions.
\newblock {\em Journal of Mathematical Physics}, 4(2):287--293, 1963.

\bibitem{meng2025matchgate}
Boning Meng and Yicheng Pan.
\newblock Matchgate signatures under variable permutations.
\newblock {\em arXiv preprint arXiv:2503.21194}, 2025.

\bibitem{robertson2004graph}
Neil Robertson and Paul~D Seymour.
\newblock Graph minors. xx. wagner's conjecture.
\newblock {\em Journal of Combinatorial Theory, Series B}, 92(2):325--357, 2004.

\bibitem{ROBERTSON1986309}
Neil Robertson and P.D Seymour.
\newblock Graph minors. ii. algorithmic aspects of tree-width.
\newblock {\em Journal of Algorithms}, 7(3):309--322, 1986.
\newblock URL: \url{https://www.sciencedirect.com/science/article/pii/0196677486900234}, \href {https://doi.org/10.1016/0196-6774(86)90023-4} {\path{doi:10.1016/0196-6774(86)90023-4}}.

\bibitem{straub2016counting}
Simon Straub, Thomas Thierauf, and Fabian Wagner.
\newblock Counting the number of perfect matchings in ${K_5}$-free graphs.
\newblock {\em Theory of Computing Systems}, 59:416--439, 2016.

\bibitem{temperley1961dimer}
Harold~NV Temperley and Michael~E Fisher.
\newblock Dimer problem in statistical mechanics-an exact result.
\newblock {\em Philosophical Magazine}, 6(68):1061--1063, 1961.

\bibitem{tesler2000matchings}
Glenn Tesler.
\newblock Matchings in graphs on non-orientable surfaces.
\newblock {\em Journal of Combinatorial Theory, Series B}, 78(2):198--231, 2000.

\bibitem{thilikos2022killing}
Dimitrios~M Thilikos and Sebastian Wiederrecht.
\newblock Killing a vortex.
\newblock In {\em 2022 IEEE 63rd Annual Symposium on Foundations of Computer Science (FOCS)}, pages 1069--1080. IEEE, 2022.

\bibitem{valiant1979complexity}
Leslie~G Valiant.
\newblock The complexity of computing the permanent.
\newblock {\em Theoretical computer science}, 8(2):189--201, 1979.

\bibitem{valiant2008holographic}
Leslie~G Valiant.
\newblock Holographic algorithms.
\newblock {\em SIAM Journal on Computing}, 37(5):1565--1594, 2008.

\bibitem{vazirani1989nc}
Vijay~V Vazirani.
\newblock {NC} algorithms for computing the number of perfect matchings in k3, 3-free graphs and related problems.
\newblock {\em Information and computation}, 80(2):152--164, 1989.

\bibitem{vertigan2005computational}
Dirk Vertigan.
\newblock The computational complexity of tutte invariants for planar graphs.
\newblock {\em SIAM Journal on Computing}, 35(3):690--712, 2005.

\bibitem{xia2007computational}
Mingji Xia, Peng Zhang, and Wenbo Zhao.
\newblock Computational complexity of counting problems on 3-regular planar graphs.
\newblock {\em Theoretical Computer Science}, 384(1):111--125, 2007.

\end{thebibliography}

\appendix

\end{document}